\numberwithin{equation}{section}
\titleformat*{\section}{\centering \large\bfseries}
\titleformat*{\subsection}{\centering \large\itshape}
\DeclareMathOperator{\Tr}{Tr}
\DeclareMathOperator{\adj}{adj}
\DeclareMathOperator{\rank}{rank}
\DeclareMathOperator{\dist}{dist}
\DeclareMathOperator{\cof}{cof}
\DeclareMathOperator{\II}{II}
\DeclareMathOperator{\spt}{spt}
\DeclareMathOperator{\spn}{span}
\DeclareMathOperator{\co}{co}
\newcommand{\be}{\begin}
\newcommand{\e}{\end}
\newcommand{\beq}{\begin{equation}}
\newcommand{\eeq}{\end{equation}}
\newcommand{\beqs}{\begin{equation*}}
\newcommand{\eeqs}{\end{equation*}}
\newcommand{\bal}{\begin{align}}
\newcommand{\eal}{\end{align}}
\newcommand{\bals}{\begin{align*}}
\newcommand{\eals}{\end{align*}}
\renewcommand{\l}{\left}
\renewcommand{\r}{\right}
\newcommand{\set}[1]{\mathbb{#1}}
\newcommand{\R}{\set{R}}
\renewcommand{\S}{\set{S}}
\newcommand{\om}{\omega}
\newcommand{\Om}{\Omega}
\newcommand{\de}{\delta}
\newtheorem{theorem}{Theorem}[section]
\newtheorem{corollary}[theorem]{Corollary}
\newtheorem{lemma}[theorem]{Lemma}
\newtheorem{proposition}[theorem]{Proposition}
\theoremstyle{definition}
\newtheorem{definition}[theorem]{Definition}
\newtheorem{remark}[theorem]{Remark}
\title{Actuation of thin nematic elastomer sheets with controlled heterogeneity}
\author{Paul Plucinsky} 
\author{Marius Lemm}
\author{Kaushik Bhattacharya}
\affil{California Institute of Technology, Pasadena, California 91125, USA}
\date{\today}
\begin{document}
\maketitle

\begin{abstract}
Nematic elastomers and glasses deform spontaneously when subjected to temperature changes. This property can be exploited in the design of heterogeneously patterned thin sheets that deform into a non-trivial shape when heated or cooled. In this paper, we start from a variational formulation for the entropic elastic energy of liquid crystal elastomers and we derive an effective two-dimensional metric constraint, which links the deformation and the heterogeneous director field. Our main results show that satisfying the metric constraint is both necessary and sufficient for the deformation to be an approximate minimizer of the energy. We include several examples which show that the class of deformations satisfying the metric constraint is quite rich.
\end{abstract}
\addtocontents{toc}{\protect\setcounter{tocdepth}{2}}
\tableofcontents

\section{Introduction and main results}

\subsection{Actuation in heterogeneously patterned nematic elastomers}
Nematic elastomers are rubbery solids made of cross-linked polymer chains which have liquid crystals (rod-like molecules) either incorporated into the main chain or pendent from them.  Their structure enables a coupling between the entropic (mechanical) elasticity of the polymer network and the ordering of liquid crystals.  This underlies the  dramatic shape changing response to temperature change in these elastomers \cite{btw_pre_93,detal_12_ACIE, mbw_prsa_11, mw_phystoday_16,tt_01_LCE,wetal_science_15,wt_lceboox_03}.  At low temperatures, the rod-like liquid crystals within the solid tend to align themselves, giving rise to a local {\it nematic} orientational order described by a director (a unit vector on $\mathbb{R}^3$).  As the temperature is increased, thermal fluctuations thwart the attempt to order, driving a nematic to isotropic transition in the solid whereby the liquid crystals become randomly oriented. Due to the intrinsic coupling of the liquid crystals to the soft polymer network, the solid distorts to accommodate this temperature driven transition\textemdash typically by a large spontaneous contraction along the director and expansion transverse to it.

Bladon et al.\ \cite{btw_pre_93} proposed a free energy based on entropic elasticity of the chains in the presence of nematic order to describe the elasticity of nematic elastomers.  This has been used by Tajbakhsh and Terentjev \cite{tt_01_LCE} to explain the shape changing response to thermal actuation in mono-domain sheets, i.e., sheets with spatially uniform director. (Also, many other features inherent to these elastomers have been explained with this theory; see Warner and Terentjev \cite{wt_lceboox_03} for a comprehensive introduction and review.)    More recently, it has been recognized by Modes et al.\ \cite{mbw_prsa_11} and others \cite{ask_prl_14,m_pre_15,metal_16_pra,PLB} that   heterogeneously programing the sheet, so that the director varies spatially in the plane of the sheet, could result in complex three dimensional shape upon thermal actuation.  That is, since sheets are characteristically thin compared to their lateral dimensions, the non-uniform shape changing response of patterned sheets to thermal actuation could induce out-of-plane {\it buckling}.  Indeed, based on a membrane idealization of the free energy, Modes et al.\ \cite{mbw_prsa_11} predicted that a sheet with programmed azimuthal or radial heterogeneity would actuate into a conical or saddle-like three-dimensional shape upon temperature change.  Such heterogeneity was later realized experimentally\textemdash first by de Haan et al.\ \cite{detal_12_ACIE} for nematic glass sheets, and then by Ware et al.\ \cite{wetal_science_15} in nematic elastomer sheets\textemdash and the actuation of these sheets agreed with the prediction.   Since then, a range of Gaussian curvature has been explored theoretically and achieved experimentally \cite{m_pre_15,metal_16_pra}, and following the formalism of non-Euclidean plate theory (i.e., Efrati et al.\ \cite{esk_sm_13}), a metric constraint was proposed by Aharoni et al.\ \cite{ask_prl_14} to govern shape changing actuation in these sheets.  All these results suggest an intimate connection between the microscopic physics of nematic elastomers and the geometry of a thin sheet.  However, to our knowledge, this has not yet been illuminated with mathematical precision and rigor.  Our work here addresses this point.

We start from a variational formulation for the entropic elastic energy of nematic elastomers and we derive the effective two dimensional metric constraint, which links the deformation and the heterogeneous director field.  This constraint (equation (\ref{eq:2DMetric}) below) arises in the context of energy minimization due the interplay of stretching, bending and heterogeneity in these sheets.  It is also a generalization of the constraint proposed by Aharoni et al.\ \cite{ask_prl_14} in two directions in that (i) it extends the constraint to three dimensional programming of the director field (where the director can tilt out of the plane of the sheet) and (ii) it relaxes the smoothness requirement asserted there.  These generalizations admit a rich class of examples under the metric constraint. 

This metric constraint first appeared in our earlier short paper \cite{PLB} with a view towards applications. 

\subsection{The model and the metric constraint}

We consider a thin sheet of nematic elastomer of thickness $h\ll 1$. Initially, the sheet occupies a flat region in space, 
$$
\Om_h:=\om\times (-h/2,h/2),\qquad \om\subset\R^2,
$$
where $\om$ is an open, connected and bounded Lipschitz domain which we call the midplane of the sheet.
We envision that the elastomer sheet is patterned heterogeneously by a director field $N_0^h \colon \Omega_h \rightarrow \mathbb{S}^2$ at the initial temperature $T_0$.  Upon changing the temperature from $T_0$ to the final temperature $T_f$, the sheet will spontaneously deform by a deformation $Y^h:\Om_h\to \R^3$ which we assume minimizes the entropic elastic energy 
\begin{align}\label{eq:StrainEnergy}
\mathcal{I}^h_{N^h_0}(Y^h) := \int_{\Omega_h} W^e\l(\nabla Y^h, \frac{(\nabla Y^h) N^h_0}{|(\nabla Y^h) N^h_0|},N^h_0\r) dx.
\end{align}
Following Bladon et al.\ \cite{btw_pre_93} (see also Warner and Terentjev \cite{wt_lceboox_03}), we take the \emph{entropic elastic energy density} $W^e:\mathbb{R}^{3 \times 3} \times \mathbb{R}^3 \times \mathbb{R}^3 \rightarrow \mathbb{R} \cup \{ +\infty\}$ as
\begin{align}\label{eq:We} 
W^e(F,\nu,\nu_0) :=& 
\frac{\mu}{2}
\begin{dcases}
\Tr\Big(F^T (\ell_{\nu}^f)^{-1} F (\ell_{\nu_0}^0)\Big) - 3 &\text{ if } \det F = 1, \text{ and } \nu,\nu_0 \in \mathbb{S}^2 \\
+\infty &\text{ otherwise}.
\end{dcases}
\end{align}
Here, $\mu>0$ is the shear modulus, $F$ is the deformation gradient, and $F^T$ denotes the transpose matrix of $F$. Moreover, $\ell_{\nu_0}^{0}, \ell_{\nu}^f \in \mathbb{R}^{3 \times 3}_{sym}$ are the step length tensors at the initial temperature $T_0$ and final temperature $T_f$ respectively. They are defined by
\begin{equation}
\begin{aligned}\label{eq:StepLength}
\ell_{\nu_0}^0 :=& r_0^{-1/3} \left( I_{3\times3} + (r_0 - 1) \nu_0 \otimes \nu_0\right),\\
\ell_\nu^f :=& r_f^{-1/3} \left( I_{3\times3} + (r_f - 1) \nu \otimes \nu \right).
\end{aligned}
\end{equation}
The parameters $r_0,r_f\geq 1$ quantify the degree of anisotropy at the initial and final temperature respectively. They describe the extent to which the material tends to spontaneously deform in the directions $\nu_0$ and $\nu$ respectively.

\be{remark}\label{RemarkModel}
\begin{enumerate}[(i)]
\item The energy density (\ref{eq:We}) is a purely entropic Helmholtz free energy density which captures the configurational entropy of polymer chains in the presence of nematic order \cite{btw_pre_93,wt_lceboox_03}.  It has been used to explain many novel features inherent  to nematic elastomers including soft elasticity, material microstructure, and the dramatic shape changing response to temperature change \cite{cdd_pre_02,cdd_jmps_02,dd_arma_02,mbw_prsa_11,tt_01_LCE}.  The constant $-3$ in this energy density is chosen so that $\min W^e=0$.
\item  The elastic energy $\mathcal{I}^h_{N^h_0}$ is defined without any displacement or traction boundary conditions as we are dealing with actuation only.  
\item  We envision that $r_0,r_f$ arise from evaluating some underlying monotone decreasing function $\bar r(T) \geq 1$ (which is equal to 1 above a critical temperature) at the temperatures $T=T_0$ and $T=T_f$.  Note that in setting $r_0=r_f=1$ in the formula above, one recovers the standard incompressible neo-Hookean energy for isotropic materials.
\item
In the definition \eqref{eq:StrainEnergy} of $\mathcal{I}^h_{N^h_0}(Y^h)$, we imposed the kinematic constraint $N^h=\frac{(\nabla Y^h) N^h_0}{|(\nabla Y^h) N^h_0|}$.   The constraint is similar to one that was imposed by Modes et al.\ \cite{mbw_prsa_11} in their prediction for conical and saddle like actuation in nematic glass sheets with radial and azimuthal heterogeneity (in fact, both constraints are equivalent for zero energy/stress free states; see Proposition \ref{EquivalenceProp}). 

There are nematic elastomers which do not satisfy this kinematic constraint (i.e., where the director $N^h$ is allowed to vary more freely). Those materials can show macroscopic deformations which arise from the fine-scale microstructure produced by oscillations of $N^h$ \cite{cpk_arma_15,cdd_pre_02,cdd_jmps_02,dd_arma_02} (see also the experiments by Kundler and Finkelmann \cite{kf_mrc_95}). 

In the present paper, we are interested in actuating complex, yet predictable, shape by programming an initial heterogeneous anisotropy $N_0^h$ in the nematic elastomer. It would be difficult to control actuation for a material that is capable of freely forming microstructure which competes with the shape change driven by the programmed anisotropy, even at low energy. For simplicity, we have chosen the hard kinematic constraint $N^h=\frac{(\nabla y^h) N^h_0}{|(\nabla Y^h) N^h_0|}$ here in order to exclude the free formation of microstructure.  The results that we prove for this energy (i.e.,  $\mathcal{I}_{N_0^h}^h$ with this kinematic constraint) can also be proven for a more realistic energy in which the sharp constraint is replaced by a non-ideal energy contribution penalizing deviations from the constraint.  In fact, we use this more realistic model when deriving the metric constraint as a necessary condition; this is discussed in Section \ref{ssec:metricNecessary}.

\item We have neglected Frank elasticity (an elasticity thought to play a critical role in the behavior of liquid crystal {\it fluids}, see for example de Gennes and Prost \cite{dgp_book_95}) and related effects in our model, as these are expected to be small in comparison to the entropic elasticity (see discussion in Chapter 3 in Warner and Tarentjev \cite{wt_lceboox_03}). However, to derive the key metric constraint (introduced below) as a necessary feature of low energy deformations, we add a small contribution from Frank elasticity for technical reasons. This is also discussed in Section \ref{ssec:metricNecessary}.
\end{enumerate}
\e{remark}

Our goal is to characterize designable actuation in nematic elastomer sheets.  By this, we mean a classification of the director fields $N_0^h$ and corresponding deformations $Y^h$ which yield small elastic energy $\mathcal{I}_{N_0^h}^h(Y^h)$ under the assumption of a desired director field design $n_0 \colon \omega \rightarrow \mathbb{S}^2$ (i.e., varying only in $(x_1,x_2)$). To be precise:

\be{assumption}\label{ass:N}We assume
\begin{equation}
\begin{aligned}\label{eq:n0tomidn0}
&N^h_0(x_1, x_2, x_3) = n_0(x_1, x_2) + O(h) ,\quad \text{for a.e. } (x_1, x_2 ,x_3)  \in \Omega_h, \\
&\text{i.e., } \quad \|N_0^h - n_0\|_{L^{\infty}(\Omega_h)} \leq \tau h \text{ for some $\tau>0$}.
\end{aligned}
\end{equation}
\end{assumption}
The $O(h)$ term accounts for the following two possible deviations from the desired design. 
For definiteness, we have fixed the maximum tolerance $\tau >0$ for these non-idealities.
\be{enumerate}
\item[(a)] The assumption accounts for deviations of the director field through the thickness which are of the same order as the thickness. Note that this excludes 
twisted or splay-bend nematic sheets \cite{fwbrvwj_sm_2015,wmc_prsa_10}, for which one prescribes the director field on the top surface of the sheet and then differently then on the bottom surface, so that the director field has to vary by an $O(1)$ amount through the thickness.

\item[(b)] The assumption also accounts for the possibility of planar deviations.  In the synthesis techniques employed by Ware et al.\ \cite{wetal_science_15}, the director field is prescribed in voxels or cubes whose characteristic length is similar to the thickness and we expect the experimental error to be of this order.
\e{enumerate}

\noindent Under Assumption \ref{ass:N}, the characterization of designable actuation comes in the form of a two-dimensional effective metric constraint (\ref{eq:2DMetric}).  The intuition is expressed in Figure \ref{fig:SketchArg}.

\begin{figure}[t!]
\centering
\includegraphics[width=12 cm]{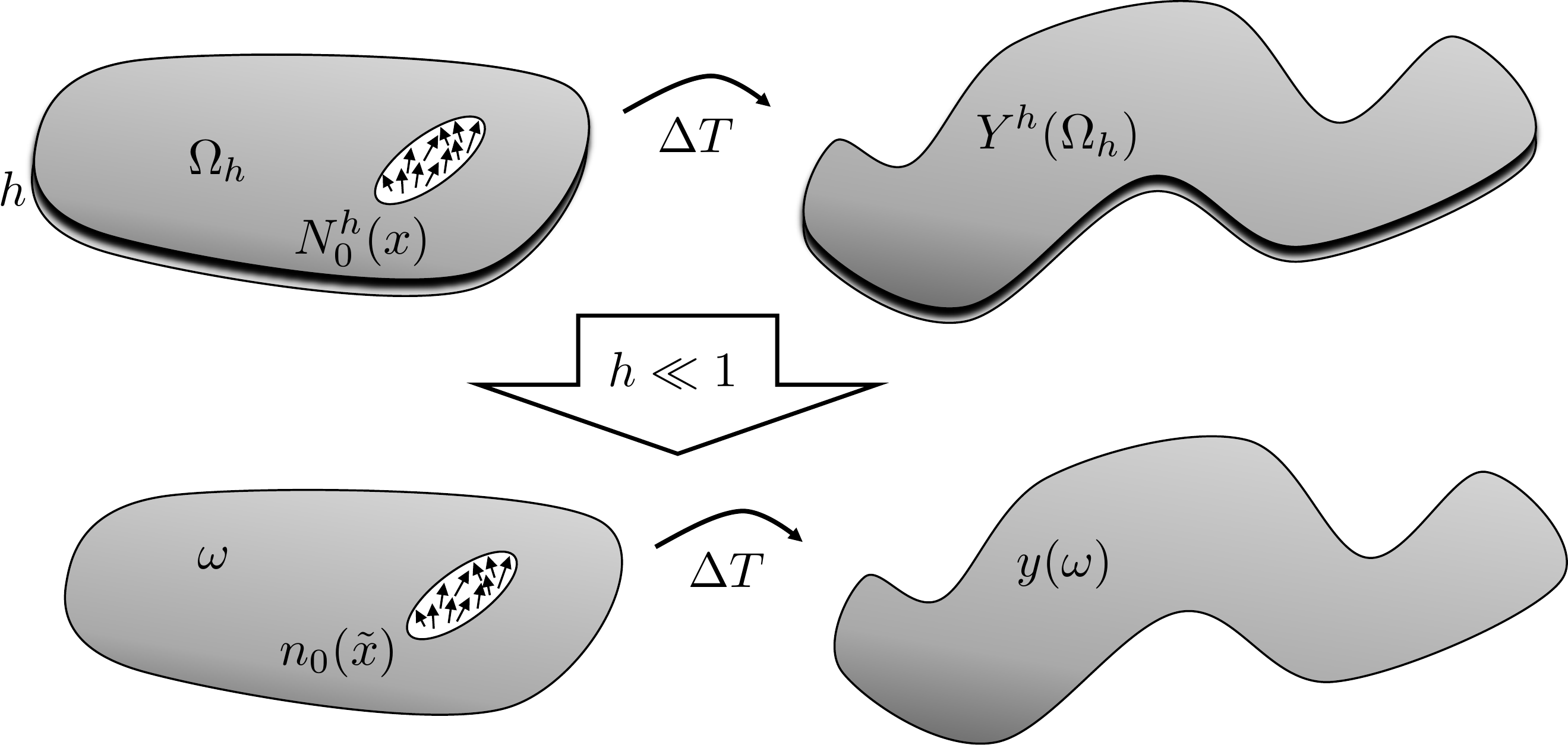} 
\caption{Actuation for thin sheets is characterized by the midplane fields.}
\label{fig:SketchArg}
\end{figure}  

To see how the metric constraint arises, we first consider a naive approach by requiring $\mathcal{I}^h_{N_0^h}(Y^h) = 0$ (recall that $\min W^e = 0$).  By Proposition \ref{EquivalenceProp}, $\mathcal{I}^h_{N^h_0}(Y^h)=0$ is equivalent to 
\begin{align}\label{eq:3DMetric}
(\nabla Y^h)^T \nabla Y^h = r^{-1/3}(I_{3\times3} + (r-1) N^h_0 \otimes N^h_0 ) = : \ell_{N^h_0} \quad \text{ a.e. on } \Omega_h,
\end{align}
where $r = r_f/r_0$ so that $r \in (0,1)$ for heating and $r> 1$ for cooling. However, \eqref{eq:3DMetric} is too strong of a condition to be useful, meaning that there are only few choices of $N_0^h$ for which a $Y^h$ satisfying \eqref{eq:3DMetric} exists. 
\begin{remark}\label{rmk:3D}
Assuming that $N_0^h$ is sufficiently smooth, there exists a $Y^h$ satisfying (\ref{eq:3DMetric}) if and only if the components of the Riemann curvature tensor of $\ell_{N^h_0}$ vanish. This condition is well-known in the physics literature (e.g., Efrati et al.\ \cite{esk_sm_13}), and in the language of continuum mechanics, it gives compatibility of the Right Cauchy-Green deformation tensor (e.g., Blume \cite{b_jelast_89}). As a consequence, $N_0^h$ has to satisfy a certain nonlinear partial differential equation, and so it must come from a very restricted set of functions.  The non-smooth case is treated in Lewicka and Pakzad \cite{lp_esaim_11}, and it is similar.
\e{remark}

Given that \eqref{eq:3DMetric} is too restrictive, we relax the problem and study \emph{approximate minimizers of the elastic energy} $\mathcal{I}^h_{N^h_0}(Y^h)$. The key observation is that by making use of the thinness of the sheet $\Om_h$ and the assumption that $N_0^h$ does not vary too much in $x_3$, we show that approximate minimizers are characterized (in a sense to be made precise) by the following \emph{effective metric constraint} \eqref{eq:2DMetric}. It is a two-dimensional reduction of the three-dimensional constraint \eqref{eq:3DMetric} and reads 
\begin{align}\label{eq:2DMetric}
(\tilde{\nabla} y)^T \tilde{\nabla} y = r^{-1/3}( I_{2\times 2} + (r-1)\tilde{n}_0 \otimes \tilde{n}_0) =: \tilde{\ell}_{n_0} \quad \text{ a.e. on } \omega.
\end{align}
 
\paragraph{Notation.}
Here and throughout, we denote vector fields which are mappings $\Om_h\to \R^3$ by capital letters (e.g., $N_0^h, Y^h$) and vector fields defined on the midplane $\om\subset\R^2$ by lowercase letters (e.g., $n_0,y:\om \to \R^3$). Moreover, we use $\tilde{(\cdot)}$ to distinguish two dimensional quantities from  three-dimensional quantities. For instance, 
$$
x:=(x_1,x_2,x_3),\qquad \tilde x:=(x_1,x_2),\qquad \nabla = (\partial_1,\partial_2, \partial_3), \qquad \tilde \nabla = (\partial_{1},\partial_{2}),
$$
and $\tilde{n}_0 \in B_1(0) \subset \mathbb{R}^2$ is the projection of $n_0$ onto $\omega$.

\begin{remark}
\begin{enumerate}[(i)]
\item
If there exists a deformation $y$ which satisfies (\ref{eq:2DMetric}) for a given $n_0$, then there may be, in general, multiple such deformations (e.g., the sheet can actuate upward or downwards in different places). We imagine that one can distinguish between these by appropriately breaking additional symmetries, but we do not investigate this further.
\item
The constraint \eqref{eq:2DMetric} generalizes a metric constraint that has been proposed by Aharoni et al.\ \cite{ask_prl_14} for actuation of nematic sheets. Indeed, \eqref{eq:2DMetric} is more general in that (a) it need only hold almost everywhere, allowing for piecewise constant director designs and (b) the director can be programmed out of plane. At the same time, it is easy to see that \eqref{eq:2DMetric} reduces to the constraint \cite{ask_prl_14}  for smooth planar director fields. (With $n_0 \equiv \tilde{n}_0$, we can write $\tilde{n}_0 \cdot \tilde{e}_1 = \cos(\theta)$ and $\tilde{n}_0 \cdot \tilde{e}_2 = \sin(\theta)$ for a Cartesian basis $\{\tilde{e}_1, \tilde{e}_2\} \subset \mathbb{R}^2$ on the plane.  It follows that $(\tilde{\nabla} y)^T \tilde{\nabla} y = \tilde{\ell}_{n_0} = \tilde{R}(\theta) \text{diag }(r^{2/3}, r^{-1/3}) \tilde{R}(\theta)^T $ for $\tilde{R}(\theta) \in SO(2)$ a rotation of $\theta$ about the normal to the initially flat sheet as required by \cite{ask_prl_14}.)
\end{enumerate}
\end{remark}

We justify the use of the metric constraint as a characterization of approximate minimizers of the strain energy through a series of main results summarized as follows.  We consider two classes of designs: (a) Nonisometric origami and (b) smooth designs.  For the former, we show that if the metric constraint holds, then the energy of actuation is  $O(h^2)$ and this $h^2$ scaling is optimal.  For the latter, we show that the metric constraint is both a necessary and sufficient condition for the energy of actuation to be $O(h^3)$.

\subsection{Nonisometric origami constructions under the metric constraint}\label{ssec:introNonisometricOrigami}
For our first main result, we consider \textit{nonisometric origami} under the metric constraint, and show that their strain energy scales at most like $h^2$.

\begin{definition}[Nonisometric origami]\label{def:nonIsoDef}
These are characterized by the following assumptions on the {\it design} and {\it deformation} respectively:
\begin{enumerate}[(i)] 
\item (The design). $\omega \subset \mathbb{R}^2$ is the union of a finite number of polygonal regions $\omega_{\alpha}$ which each have constant director field, i.e., 
\begin{equation}
\begin{aligned}
&\omega = \bigcup_{\alpha = \{1,\ldots,N\}} \omega_{\alpha}, \qquad \omega_{\alpha} \text{ mutually disjoint and polygonal,} \\
&n_0 \colon \omega \rightarrow \mathbb{S}^2 \quad \text{satisfies } \quad n_0(\tilde{x}) \equiv n_{0\alpha}, \qquad (\tilde{x} \in \omega_{\alpha}, \forall\alpha \in\{1,\ldots,N\}), \label{eq:piecewiseConst}  \\
&\tilde{n}_{0\alpha} \neq \pm \tilde{n}_{0\beta} \quad \text{when there is an interface between $\omega_{\alpha}$ and $\omega_{\beta}, \alpha \neq \beta$. }
\end{aligned}
\end{equation}
\item (The deformation). $y \in W^{1,\infty}(\omega,\mathbb{R}^3)$ is a piecewise affine and continuous midplane deformation that satisfies the metric constraint (\ref{eq:2DMetric}), i.e., 
\begin{align}\label{eq:ybarAffine}
y(\tilde{x}) = \tilde{F}_{\alpha} \tilde{x} + c_{\alpha} \quad \text{and} \quad  (\tilde{F}_{\alpha})^T \tilde{F}_{\alpha} = \tilde{\ell}_{n_{0\alpha}}
\end{align}
for all $\tilde{x} \in \omega_{\alpha}$ and all $\alpha = \{1,\ldots, N\}$.  
\end{enumerate}
\end{definition}
\noindent Note, the last condition in (\ref{eq:piecewiseConst}) is only there to ensure that each interface corresponds to a non-trivial change of the director (otherwise that interface would be superfluous). 

For a nonisometric origami design (i.e., $\omega, n_0$ as in (i)) and deformation (i.e., $y$ as in (ii)), we show that we can construct a map $Y^{h} \colon \Omega_h \rightarrow \mathbb{R}^3$ which approximately extends $y$ to $\Omega_h$ and has strain energy $\mathcal{I}_{N_0^h}^{h}(Y^h) = O(h^2)$.  In order to do so, we first smooth $y$. This relies on a technical hypothesis that $y$ have a $\delta$-smoothing:
\be{definition}\label{defn:delta}
We say that $y: \om\to\R^3$ has a $\de$-smoothing if for any $\delta > 0$ sufficiently small, there exists a map $y^\de\in C^3(\bar{\omega},\mathbb{R}^3)$ and a subset $\om_\de\subset\om$ of area less than $C \de$ such that
\beq
\begin{aligned}\label{eq:KeySmoothingHyp}
 \quad y^\delta =y \quad  \text{ on } \omega \setminus \omega_\delta, \qquad |\partial_1y^\delta \times \partial_2 y^\delta| \geq c > 0 \quad \text{ on } \omega, \\
   \|\tilde{\nabla} y^\delta\|_{L^{\infty}} \leq C, \qquad
   \|\tilde{\nabla} \tilde{\nabla} y^\delta\|_{L^{\infty}} \leq C\delta^{-1},
    \qquad \| \tilde{\nabla}^{(3)} y^\delta\|_{L^{\infty}} \leq C\delta^{-2}
\end{aligned}
\eeq
for some constants $C,c>0$ which can depend on $y$ and $\omega$ but not on $\de$.
\e{definition}
\noindent We have the following theorem:
\begin{theorem}\label{NonIsoTheorem}
Let $\omega$ and $n_0$ be as in Definition \ref{def:nonIsoDef}(i), let $y$ as in \ref{def:nonIsoDef}(ii), and let $N^h_0:\Om_h\to \S^2$ be any vector field that is close to $n_0$ in the sense of  \eqref{eq:n0tomidn0}. Suppose further that for all small enough $\de>0$, $y$ has a $\de$-smoothing $y^\de$ in the sense of Definition \ref{defn:delta} above.

Then, there exists an $m>0$ such that if we set $\delta_h=mh$, then for all small enough $h>0$ there exists a map $Y^h:\Om_h\to\R^3$ with
\begin{equation}
\begin{aligned}\label{eq:Oh2Properties}
Y^h(\tilde{x},0) =& y^{\delta_h} (\tilde{x}),\quad \tilde{x} \in \omega,\\
\mathcal{I}^h_{N_0^h}(Y^h) \leq& O(h^2). 
\end{aligned}
\end{equation}
Moreover, $Y^h$ is an approximate extension of $y$ in the sense that $\| y^{\delta_h} - y\|_{W^{1,2}(\omega,\mathbb{R}^3)} = O(h)$.
\end{theorem}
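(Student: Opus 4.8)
The plan is to construct $Y^h$ explicitly as a Kirchhoff–Love-type ansatz built on the smoothed midplane map $y^{\delta_h}$, add a through-thickness correction that fixes the director kinematics and the incompressibility constraint, and then estimate the resulting energy term by term, optimizing the choice $\delta_h = mh$ at the end. The key structural observation we exploit is that on each polygonal piece $\omega_\alpha$ the original $y$ is affine with $(\tilde F_\alpha)^T \tilde F_\alpha = \tilde\ell_{n_{0\alpha}}$, so away from the $O(\delta_h)$-neighborhood $\omega_{\delta_h}$ of the interfaces the metric constraint holds exactly; only on the small set $\omega_{\delta_h}$ does the smoothing introduce curvature, and there the second- and third-derivative bounds $\|\tilde\nabla\tilde\nabla y^{\delta_h}\|_{L^\infty}\le C\delta_h^{-1}$, $\|\tilde\nabla^{(3)}y^{\delta_h}\|_{L^\infty}\le C\delta_h^{-2}$ control the error.

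First, I would build the ansatz. Set $b^{\delta}(\tilde x) := \dfrac{\partial_1 y^\delta \times \partial_2 y^\delta}{|\partial_1 y^\delta \times \partial_2 y^\delta|}$, the unit normal to the smoothed sheet (well-defined since $|\partial_1 y^\delta\times\partial_2 y^\delta|\ge c>0$). A natural first guess is $Y^h(\tilde x, x_3) = y^{\delta_h}(\tilde x) + x_3\, g(\tilde x)\, b^{\delta_h}(\tilde x)$ for a scalar $g$ chosen to enforce $\det\nabla Y^h = 1$ to leading order, i.e. $g = (\det(\tilde\nabla y^{\delta_h})^T\tilde\nabla y^{\delta_h})^{-1/2}$, which off $\omega_{\delta_h}$ equals $(\det\tilde\ell_{n_{0\alpha}})^{-1/2} = r^{1/3}$, matching the transverse stretch demanded by $\ell_{N_0^h}$. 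To get $\det\nabla Y^h \equiv 1$ exactly (recall $W^e = +\infty$ otherwise) and to correct the director kinematics $N^h = (\nabla Y^h)N_0^h/|(\nabla Y^h)N_0^h|$, I would add higher-order-in-$x_3$ terms $x_3^2 d_2(\tilde x) + \dots$ and possibly an in-plane shift, solving the pointwise algebraic constraints; this is the standard recursive construction (as in the geometrically nonlinear plate-theory literature, e.g. Friesecke–James–Müller-style ansätze and Lewicka–Pakzad). Then I would Taylor expand $W^e(\nabla Y^h, \cdot, N_0^h)$ about the exact solution: on $\Omega_h \setminus (\omega_{\delta_h}\times(-h/2,h/2))$ the integrand is $O(x_3^2)$ times bounded curvature-type quantities, contributing $O(h^3)\cdot|\omega| = O(h^3)$; on the bad set $\omega_{\delta_h}\times(-h/2,h/2)$, which has volume $O(\delta_h h) = O(h^2)$, the integrand is bounded by $C(1 + x_3^2\|\tilde\nabla\tilde\nabla y^{\delta_h}\|_\infty^2 + \dots) \le C(1 + h^2\delta_h^{-2})$, so with $\delta_h = mh$ this is $O(1)$ and the contribution is $O(\delta_h h) = O(h^2)$. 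Adding the two gives $\mathcal I^h_{N_0^h}(Y^h) \le O(h^2)$. The $O(h)$ closeness of $y^{\delta_h}$ to $y$ in $W^{1,2}$ follows since they differ only on $\omega_{\delta_h}$ (area $O(\delta_h) = O(h)$) where both gradients are bounded by $C$, so $\|y^{\delta_h}-y\|_{W^{1,2}}^2 \le C\cdot|\omega_{\delta_h}| = O(h)$.

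The main obstacle is the construction and estimation on the bad set $\omega_{\delta_h}$, i.e. ensuring that the exact constraints $\det\nabla Y^h = 1$ and $N_0^h\in\mathbb S^2$, $N^h\in\mathbb S^2$ can be met there by correction terms whose derivatives do not blow up faster than the scaling $\delta_h^{-1}, \delta_h^{-2}$ already permits — one must check that dividing by $|\partial_1 y^{\delta_h}\times\partial_2 y^{\delta_h}|$ and by $|(\nabla Y^h)N_0^h|$ is harmless (lower bounds $c>0$ from the $\delta$-smoothing hypothesis) and that the error between $N_0^h$ and $n_0$, of size $\tau h$, interacts with the $r$-dependent tensors to produce only $O(h)$-type perturbations in the energy density, hence $O(h^2)$ after integration. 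A secondary technical point is verifying $C^3$-regularity of $g$, $b^{\delta_h}$, and the correction terms so that all the indicated derivative bounds are legitimate; this is why the hypothesis asks for $y^\delta \in C^3$. Once the bad-set estimate is in hand, balancing $\delta_h h$ (bad set) against $h^3\delta_h^{-2}$ (curvature blow-up on the bad set, were it larger) pins $\delta_h \sim h$, and the optimal $h^2$ scaling drops out.
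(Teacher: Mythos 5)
Your overall architecture matches the paper's: a Kirchhoff--Love ansatz on the smoothed midplane, an exact incompressibility correction, and the good-set/bad-set splitting with $|\omega_{\delta_h}| = O(\delta_h)$ and $\delta_h = mh$, giving $O(h^3)$ off the interfaces and $O(1)\cdot O(\delta_h h) = O(h^2)$ on them. The final energy bookkeeping and the $W^{1,2}$ estimate are exactly right. However, there are two genuine gaps in the construction itself.

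First, your transverse vector is wrong for out-of-plane directors. You take $b^{\delta}$ to be a scalar multiple of the unit normal $\partial_1 y^{\delta}\times\partial_2 y^{\delta}/|\partial_1 y^{\delta}\times\partial_2 y^{\delta}|$. But $W^e(\nabla Y^h,\cdot,n_0)$ vanishes at the midplane iff $(\tilde\nabla y^{\delta}|b^{\delta})^T(\tilde\nabla y^{\delta}|b^{\delta}) = \ell_{n_0}$ (Proposition \ref{EquivalenceProp}), and the off-diagonal entries $(\ell_{n_0})_{13} = r^{-1/3}(r-1)(n_0\cdot e_1)(n_0\cdot e_3)$, $(\ell_{n_0})_{23}$ are nonzero whenever $n_0$ tilts out of the plane. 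A $b^{\delta}$ orthogonal to $\tilde\nabla y^{\delta}\tilde e_1,\tilde\nabla y^{\delta}\tilde e_2$ forces these entries to vanish, so on the good set the stretching energy density is a positive constant rather than zero, and your construction only yields $\mathcal I^h_{N_0^h}(Y^h) = O(h)$. (Your leading-order thickness stretch $g = r^{1/3}$ is also off: $(\det\tilde\ell_{n_0})^{-1/2} = r^{1/3}(1+(r-1)|\tilde n_0|^2)^{-1/2}$.) The fix, which is what the paper does in Proposition \ref{bhProp} via the explicit parameterization \eqref{eq:rewritb}, is to solve $(\tilde F|b)^T(\tilde F|b) = \ell_{n_0}$ for $b$ directly; the resulting $b$ has in-plane components $\bar b_1\tilde F\tilde e_1 + \bar b_2\tilde F\tilde e_2$ determined by $(\tilde\ell_{n_0})^{-1}I_{2\times3}\ell_{n_0}e_3$. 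Your parenthetical ``possibly an in-plane shift'' gestures at this but does not identify that it is forced by the metric, nor that without it the theorem fails for precisely the three-dimensional programming the result is meant to cover.

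Second, exact incompressibility cannot be reached by the recursive polynomial correction $x_3^2 d_2(\tilde x)+\dots$ you propose: any finite truncation leaves $\det\nabla Y^h = 1 + O(x_3^K)$, and since $W^e = +\infty$ off the constraint set this is not good enough. You correctly flag this as the main obstacle, but the resolution is not a formal recursion; the paper replaces $x_3$ by a function $\xi^h(\tilde x,x_3)$ solving the ODE $\partial_3\xi^h = \big(\det(\nabla V^h\circ\Xi^h)\big)^{-1}$, $\xi^h(\tilde x,0)=0$, produced by a contraction-mapping argument (Lemma \ref{IncompressibleLemma} and Proposition \ref{ContractionProp}), and it is precisely here that the $\delta$-dependent derivative bounds and the choice of $m$ in $\delta_h = mh$ enter (for origami one needs $m \geq \max\{C(M),1\}$, not merely $m=1$). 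With these two repairs your argument coincides with the paper's proof.
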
 

\noindent Theorem \ref{NonIsoTheorem} is proved in Section \ref{ssec:nonIsoConst}.

The existence of such a $\de$-smoothing (of the Lipschitz continuous/origami midplane deformation $y$) is an important technical tool. It is needed because the global deformation $Y^h$ has to satisfy the \emph{incompressibility constraint} $\det  \nabla Y^h=1$. (Essentially, the non-degeneracy of the derivatives of $y^\de$ allows one to employ the inverse function theorem to derive a sufficiently well-behaved ordinary differential equation as described in section 2.)

 This technical issue has appeared in previous works on incompressibility in thin sheets (also, a $\det F >0$ constraint). It was first appreciated by Belgacem \cite{b_french_97} and later addressed in some generality by Trabelsi \cite{t_aa_06} and Conti and Dolzmann \cite{cd_incomp_06}. However, their methods are very geometrical in nature (they are largely based on Whitney's ideas on the singularities of functions $\R^n\to \R^{2n-1}$) and it is not obvious how to extract from them the $\delta-$dependent control of the higher derivatives which we need in the present context.
 
Importantly though, we prove that several examples of nonisometric origami (detailed below) \emph{indeed have a $\de$-smoothing}, in the sense of Definition \ref{defn:delta}. We do this by first showing that the existence of a $\delta$-smoothing can be reduced to a linear algebra constraint on the sets of deformation gradients associated to the origami deformation (Theorem \ref{DSmoothTheorem}), and then by explicitly verifying that this constraint holds for all nonisometric origami considered.  All this is developed in Section \ref{sec:SmoothingSection}.  

Finally, we discuss examples of nonisometric origami in  Section \ref{sec:Applications} on applications. The examples are depicted in Figure \ref{fig:ExNonisometricOrigami} and include a construction which will fold into a box, originally due to \cite{mw_phystoday_16}, as well as further examples which previously appeared in a short companion paper to this one \cite{PLB}.   We also discuss in some detail an equivalent formulation of the metric constraint \eqref{eq:ybarAffine} for nonisometric origami in terms of \emph{compatibility conditions}. These are akin to the rank-one condition studied in the context of fine-scale twinning during the austenite martensite phase transition (also actuation of active martensitic sheets) \cite{balljames_analcontmech_89,bhatta_03,bhattajames_jmps_99} and to the recently studied compatibility conditions for the actuation for nematic elastomer and glass sheets using planar programming of the director \cite{mw_PRE_11,mw_phystoday_16}. 

\subsection{On the optimality of nonisometric origami}

From Theorem \ref{NonIsoTheorem}, we can construct approximations to nonisometric origami (under the hypothesis (\ref{eq:KeySmoothingHyp})) with energy $O(h^2)$.  Thus, it is natural to ask whether these constructions are energetically optimal for a prescribed director field. 

For our second main result, we prove that this is the case (not for $\mathcal{I}_{N_0^h}^h$, but) for a two-dimensional analogue of the three-dimensional entropic strain energy,
\begin{align}\label{eq:2DEnergy}
\tilde{\mathcal{I}}_{n_0}^h(y) = h \int_{\omega} \left( | (\tilde{\nabla} y)^T \tilde{\nabla} y - \tilde{\ell}_{n_0}|^2 + h^2 |\tilde{\nabla} \tilde{\nabla} y |^2 \right) d \tilde{x}.
\end{align}
The first term here represents membrane stretching part and is minimized exactly when the metric constraint (\ref{eq:2DMetric}) is satisified.  The second term approximates bending. Such a two-dimensional energy is a widely used proxy to describe the elasticity of non-Euclidean plates (e.g., Efrati et al.\ \cite{esk_sm_13} and Bella and Kohn \cite{bk_jns_14}).  In a broader context, these proxies often agree in $h$-dependent optimal energy scaling with that of the three dimensional elastic energy, and deformations which achieve this scaling in this two dimensional setting tend to form the midplane deformations for optimal three dimensional constructions (e.g., Bella and Kohn \cite{bk_cpam_14} and the single fold approximation of Conti and Maggi \cite{cm_arma_08}). 

\begin{theorem}\label{LBTheoremTrue}
Let $r>0$ and $\neq 1$ and  let $\omega$ and $n_0$ as in Definition \ref{def:nonIsoDef}(i).  For $h> 0$ sufficiently small 
\begin{align*}
\inf \left\{\tilde{\mathcal{I}}_{n_0}^h(y) \colon y \in W^{2,2}(\omega,\mathbb{R}^3) \right\} \geq c_L h^2.
\end{align*}
Here, $c_L = c_L(n_0,r,\omega) > 0$ is independent of $h$.  
\end{theorem}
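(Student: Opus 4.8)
The plan is to establish the lower bound $\tilde{\mathcal{I}}_{n_0}^h(y) \geq c_L h^2$ by showing that the membrane and bending terms cannot be simultaneously small. Since $n_0$ is piecewise constant on the polygonal pieces $\omega_\alpha$ with $\tilde{n}_{0\alpha}\neq\pm\tilde{n}_{0\beta}$ across each interface, the target metric $\tilde{\ell}_{n_0}$ is piecewise constant but \emph{jumps} across interfaces. Fix an interface $\Gamma$ between $\omega_\alpha$ and $\omega_\beta$, with unit normal $\nu$ and tangent $t$. If $y$ satisfied the metric constraint exactly on both sides, then $(\tilde\nabla y)^T\tilde\nabla y$ would jump from $\tilde\ell_{n_{0\alpha}}$ to $\tilde\ell_{n_{0\beta}}$; but the tangential derivative $\partial_t y$ is continuous across $\Gamma$ (since $y\in W^{2,2}$, its trace and tangential derivative are well-defined and match), so $|\partial_t y|^2 = t^T(\tilde\nabla y)^T\tilde\nabla y\, t$ would have to equal both $t^T\tilde\ell_{n_{0\alpha}}t$ and $t^T\tilde\ell_{n_{0\beta}}t$. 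A short computation with \eqref{eq:2DMetric} shows $t^T\tilde\ell_{n_{0\alpha}}t = r^{-1/3}(1+(r-1)(\tilde n_{0\alpha}\cdot t)^2)$, and because $\tilde n_{0\alpha}\neq\pm\tilde n_{0\beta}$ one can choose the interface (or rather, since the interface direction $t$ is fixed by the geometry, one uses that for at least one interface direction these two quantities differ — more carefully, one argues that there must be \emph{some} interface along which $(\tilde n_{0\alpha}\cdot t)^2\neq(\tilde n_{0\beta}\cdot t)^2$, which holds whenever $\tilde n_{0\alpha}\neq\pm\tilde n_{0\beta}$ unless $t$ bisects them, and a polygonal interface has a definite direction so generically this is fine; one should handle the degenerate bisector case separately, e.g. by using a second interface or a corner). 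Thus there is a fixed positive gap $g>0$: in a neighborhood of $\Gamma$, the membrane energy density cannot be small on both sides unless $(\tilde\nabla y)^T\tilde\nabla y$ transitions across a gap of size $g$.

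The quantitative argument is then a one-dimensional interpolation / rigidity estimate transverse to the interface. Introduce coordinates $(s,\tau)$ with $s$ along $\nu$ (transverse) and $\tau$ along $t$. Consider the scalar function $f(s,\tau) := |\partial_t y(s,\tau)|^2$. For a.e. $\tau$, $f(\cdot,\tau)$ equals $t^T\tilde\ell_{n_{0\alpha}}t$ up to the membrane error on the $\alpha$-side and $t^T\tilde\ell_{n_{0\beta}}t$ up to the membrane error on the $\beta$-side, so $f$ must vary by roughly $g$ over a transverse distance. But $\partial_s f = 2\,\partial_s\partial_t y\cdot\partial_t y$, which is controlled by $|\tilde\nabla\tilde\nabla y|\cdot|\tilde\nabla y|$; and $|\tilde\nabla y|$ is itself controlled in $L^2$ on a neighborhood of $\Gamma$ by the membrane term (since $|(\tilde\nabla y)^T\tilde\nabla y|\leq |\tilde\ell_{n_0}| + \text{(membrane error)}$, giving an $L^2$, in fact essentially $L^\infty$-on-average, bound on $\tilde\nabla y$). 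Running the standard argument — on a transverse segment of length $\ell$, $g \lesssim \int_0^\ell |\partial_s f|\,ds + \text{(membrane terms)} \lesssim \ell^{1/2}\|\tilde\nabla\tilde\nabla y\|_{L^2}\|\tilde\nabla y\|_{L^\infty,\text{loc}} + \text{(membrane)}$, then integrating over $\tau$ in a strip of width $w$ around $\Gamma$ and optimizing over the strip geometry — yields $g^2 w \lesssim$ (membrane energy on strip)$\,+\, w\,\ell\,\|\tilde\nabla\tilde\nabla y\|_{L^2}^2$ times local bounds. Balancing $\ell \sim h$ (so that the bending contribution $h^2\|\tilde\nabla\tilde\nabla y\|_{L^2}^2$ and the membrane contribution enter on equal footing) produces the floor $\tilde{\mathcal{I}}_{n_0}^h(y) \gtrsim g^2\cdot(\text{length of }\Gamma)\cdot h$, wait — recall $\tilde{\mathcal{I}}_{n_0}^h$ carries an overall factor $h$, so the bracket $\int_\omega(\cdots)$ must be $\gtrsim h$, i.e. the total energy is $\gtrsim h^2$. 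One must be slightly careful: the membrane term has no $h$ inside, so the tradeoff is between a membrane cost (paid if $y$ stays near the two incompatible metrics) and a bending cost $h^2|\tilde\nabla\tilde\nabla y|^2$ (paid if $y$ interpolates quickly through the gap over scale $\ell$); the membrane cost of interpolating over scale $\ell$ near $\Gamma$ is $\gtrsim \ell \cdot (\text{length of }\Gamma)\cdot$(something), wait, more precisely the region where $f$ is far from both target values has area $\gtrsim$ (length of $\Gamma$)$\cdot$(transition width), and there membrane density is $\gtrsim g^2$; balancing $(\text{width})\cdot g^2$ against $h^2 g^2/(\text{width})$ gives width $\sim h$ and floor $\sim h\cdot g^2\cdot(\text{length})$ for the bracket, hence $\tilde{\mathcal{I}}_{n_0}^h \gtrsim h^2$.

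I expect the \textbf{main obstacle} to be making the transverse interpolation estimate fully rigorous without assuming more regularity than $W^{2,2}$ and without pointwise control of $\tilde\nabla y$. Specifically, one cannot assume $\tilde\nabla y\in L^\infty$; instead one must run the argument through a Poincar\'e-type / slicing inequality on transverse segments, using that for a.e. slice $\tilde\nabla y(\cdot,\tau)\in W^{1,2}$ of the segment with $\int|\partial_s\tilde\nabla y|^2$ integrable, and then handle the product $|\partial_s\tilde\nabla y||\tilde\nabla y|$ by splitting into the region where $|\tilde\nabla y|$ is large (there the membrane term $|(\tilde\nabla y)^T\tilde\nabla y - \tilde\ell_{n_0}|^2 \gtrsim |\tilde\nabla y|^4$ is huge and dominates) and the region where it is bounded (there the clean estimate applies). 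A secondary subtlety is dealing with the corners of the polygons and the possibility that a single interface direction happens to bisect the two director projections; this is resolved by noting that the polygonal decomposition has at least one interface, and if it is degenerate one passes to an adjacent interface or a vertex where several constant-director regions meet, at which point not all pairwise tangent-directions can be bisectors — so a nondegenerate interface always exists, giving a fixed $g=g(n_0,r)>0$. Collecting constants, $c_L$ depends only on $g$, on the minimal length of a nondegenerate interface, on $r$, and on $\omega$, as claimed.
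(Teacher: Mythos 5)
Your overall strategy -- localize to a single interface, observe that the target value of a directional stretch $|\partial_i y|^2$ jumps across it, and then run a one-dimensional transverse rigidity estimate that balances a membrane cost $\sim g^2\ell$ against a bending cost $\sim h^2 g^2/\ell$ to get a floor of order $h$ for the bracket and hence $h^2$ for $\tilde{\mathcal{I}}^h_{n_0}$ -- is exactly the mechanism in the paper, and your scaling heuristic is correct. But there is a genuine gap in how you dispose of the degenerate ("bisector") case, and it cannot be patched the way you suggest. Definition \ref{def:nonIsoDef}(i) allows a design consisting of exactly two polygonal regions separated by a single straight interface with $\tilde{n}_{01},\tilde{n}_{02}$ mirror images of each other across the interface tangent $\tilde{t}$. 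Then $(\tilde{n}_{01}\cdot\tilde{t})^2=(\tilde{n}_{02}\cdot\tilde{t})^2$ \emph{and} $(\tilde{n}_{01}\cdot\tilde{t}^{\perp})^2=(\tilde{n}_{02}\cdot\tilde{t}^{\perp})^2$, yet $\tilde{n}_{01}\neq\pm\tilde{n}_{02}$: both diagonal entries of $\tilde{\ell}_{n_0}$ in the interface-adapted frame agree across $\Gamma$ (only the off-diagonal entry jumps), and there is no second interface and no junction to fall back on. So "a nondegenerate interface always exists" is false, and your argument produces $g=0$ precisely in this admissible configuration. The paper's proof is split into two cases for this reason: in the degenerate case (its Case 2) one must differentiate and slice along the oblique direction $\tilde{e}_2=\tilde{n}_{02}/|\tilde{n}_{02}|$, for which the diagonal entry $\tilde{e}_2^T\tilde{\ell}_{n_{0\alpha}}\tilde{e}_2$ does jump (this uses $|\tilde n_{01}|=|\tilde n_{02}|$ together with strict Cauchy--Schwarz), and the localization region becomes a parallelogram crossed obliquely by the interface rather than a square; this is a genuinely different geometric setup, not a minor footnote.

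A secondary remark: your worry about the lack of $L^{\infty}$ control on $\tilde{\nabla} y$, and the proposed large/small-gradient splitting to handle $|\partial_s\partial_t y|\,|\partial_t y|$, is a real issue for the interpolation route you sketch, but the paper avoids it entirely. On a.e.\ one-dimensional slice transverse (or oblique) to the interface one sets $u=|\partial_i y|\in W^{1,2}$ of the slice, uses $(|w|')^2\le|w'|^2$, and applies the Modica--Mortola inequality $\frac{1}{h}(u^2-\sigma)^2+c_1h(u')^2\ge 2\sqrt{c_1}\,|u^2-\sigma|\,|u'|$ followed by the fundamental theorem of calculus between a point where $u^2$ is near $\sigma_1$ and one where it is near $\sigma_2$ (such points must exist for small $h$, else the membrane term alone blows up). This gives an $h$-independent positive lower bound per slice with no pointwise gradient bounds needed, and is the cleaner way to close your argument at $W^{2,2}$ regularity.
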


\noindent Theorem \ref{LBTheoremTrue} is proved in Section \ref{sec:LBOrigami}.

\begin{remark}
\be{enumerate}[(i)]
\item 
Theorem 1.8 shows that the best possible energy scale for the modified energy $\tilde{\mathcal{I}}_{n_0}$ is $h^2$. Conversely, we may observe that the modified energy of nonisometric origami lives on this optimal scale, at least after smoothing out the interfaces. Indeed, if we assume that, in addition to the assumptions of the theorem, there exists a $y$ as in Definition \ref{def:nonIsoDef}(ii), then for $h > 0$ sufficiently small, there exists a $y^{h} \in C^{3}(\bar{\omega},\mathbb{R}^3)$ such that
\begin{align}\label{eq:upBound2D}
\|y^h - y\|_{W^{1,2}} \leq  O(h) \quad \text{ and } \quad \tilde{\mathcal{I}}_{n_0}^h(y^h) = O(h^2).
\end{align}
It is precisely in this sense that we have \textit{optimality} of nonisometric origami.

The proof of (\ref{eq:upBound2D}) is straightforward.  Indeed, the estimates (\ref{eq:KeySmoothingHyp}), with exception to the full-rank condition, can be obtained by standard mollification (for more details, see Section \ref{sec:SmoothingSection}).  Setting $\delta = h$ for these estimates yields a $y^h$ satisfying (\ref{eq:upBound2D}). 

\item Let us discuss some of the heuristics behind the lower bound in Theorem \ref{LBTheoremTrue}. At an interface separating two regions of distinct constant director, an energetic penalty associated with membrane stretching at $O(h)$ drives the deformation to be piecewise affine with a fold precisely at the interface connecting the two regions, whereas an energetic penalty associated with bending at $O(h^3)$ cannot accommodate sharp folds, and thus a smoothing is necessitated.   This interplay gives rise to an intermediate energetic scaling between $O(h)$ and $O(h^3)$.  For isometric origami, folds can be smoothed to mostly preserve the isometric condition, leading to approximate constructions and (under suitable hypothesis) lower bounds which scale as $O(h^{8/3})$ (see, for instance, Conti and Maggi \cite{cm_arma_08}).  For nonisometric origami, the preferred metric jumps across a possible fold and this leads to a larger membrane stretching term.

\item  For the proof, we show that it is possible to reduce this estimate to a canonical problem localized at a single interface. Further, we show that a lower bound for this canonical problem is described by a one-dimensional Modica-Mortola type functional.  In their result, Modica and Mortola \cite{mm_it_77} (see also Modica \cite{m_arma_87}) prove that such functionals (under suitable hypothesis) $\Gamma$-converge to functionals which are proportional to the number of jumps of their argument.  In our setting, these jumps correspond to the jump in the preferred metric over the interface.  That these "jumps" have finite energy in the $\Gamma$-convergence setting implies the estimate in the theorem.

\end{enumerate}
\end{remark}

\subsection{Examples of pure bending actuation under the metric constraint}\label{ssec:smoothDesign}
We turn now to the case of smooth or sufficiently smooth surfaces and programs satisfying the metric constraint (\ref{eq:2DMetric}).  For these configurations, we show the actuation is pure bending, i.e., $O(h^3)$ in the entropic strain energy after actuation. 

\begin{theorem}[Smooth Surfaces]\label{SmoothSurfTheorem}
Let $r \in (0,1)$ or $ r> 1$.  Let $n_0$  and $N_0^h$ satisfy (\ref{eq:n0tomidn0}). If $y \in C^3(\bar{\omega},\mathbb{R}^3)$ and $n_0 \in C^2(\bar{\omega},\mathbb{S}^2)$ such that $(\tilde{\nabla} y)^T \tilde{\nabla} y = \tilde{\ell}_{n_0}$ everywhere on $\omega$, then for $h > 0$ sufficiently small, there exists a $Y^h \in C^{1}(\overline{\Omega}_h,\mathbb{R}^3)$ such that 
\begin{align*}
Y^h(\tilde{x},0) = y(\tilde{x}), \;\; \tilde{x} \in \omega \quad \quad \mathcal{I}_{N^h_0}^h(Y^h) = O(h^3).  
\end{align*}
\end{theorem}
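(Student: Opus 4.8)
The plan is to construct $Y^h$ explicitly as a suitable ``Kirchhoff-Love-type'' ansatz built over the midplane deformation $y$, and then to verify that the entropic energy density evaluates to $O(h^2)$ pointwise (so that, integrating over the $O(h)$-thick sheet $\Omega_h$, the total energy is $O(h^3)$). Because $(\tilde\nabla y)^T\tilde\nabla y = \tilde\ell_{n_0}$, the vectors $\partial_1 y,\partial_2 y$ have prescribed Gram matrix, and in particular $|\partial_1 y\times\partial_2 y|^2 = \det\tilde\ell_{n_0} = r^{-2/3}\det(I_{2\times2}+(r-1)\tilde n_0\otimes\tilde n_0)$; one checks this is bounded away from $0$ since $|\tilde n_0|\le 1$ and $r\neq 1$, so a unit normal $\nu(\tilde x)$ to the surface $y(\omega)$ is well defined and $C^2$. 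The natural first guess $Y^h(\tilde x,x_3)=y(\tilde x)+x_3\,b(\tilde x)$ for an appropriate director field $b$ will not satisfy $\det\nabla Y^h=1$ on the nose, so I would instead take
\begin{align*}
Y^h(\tilde x,x_3) = y(\tilde x) + g(\tilde x,x_3)\,\frac{\partial_1 y\times\partial_2 y}{|\partial_1 y\times\partial_2 y|}(\tilde x),
\end{align*}
where $g(\tilde x,\cdot)$ is chosen, for each fixed $\tilde x$, to solve an ODE in $x_3$ (with $g(\tilde x,0)=0$, $\partial_3 g(\tilde x,0)>0$) that enforces $\det\nabla Y^h=1$ identically. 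Since $\det\nabla Y^h = \partial_3 g\,\big(|\partial_1 y\times\partial_2 y| + x_3(\dots) + x_3^2(\dots)\big)$ — a polynomial in $g,x_3$ with $C^1$ coefficients whose $x_3=0$, $g=0$ value is $|\partial_1 y\times\partial_2 y|\ge c>0$ — the implicit/inverse function theorem gives a unique $C^1$ solution $g$ on $\Omega_h$ for $h$ small, with $g(\tilde x,x_3)=x_3/|\partial_1 y\times\partial_2 y| + O(x_3^2)$, hence $Y^h\in C^1(\overline\Omega_h,\R^3)$ and $\det\nabla Y^h=1$ and the first condition $Y^h(\tilde x,0)=y(\tilde x)$ holds.

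With incompressibility and $N_0^h\in\S^2$ built in, the finite branch of $W^e$ applies, so I must estimate $\Tr\big((\nabla Y^h)^T(\ell^f_{N^h})^{-1}(\nabla Y^h)\,\ell^0_{N_0^h}\big)-3$ where $N^h=(\nabla Y^h)N_0^h/|(\nabla Y^h)N_0^h|$. The strategy is to Taylor-expand everything in $x_3$ around the midplane and use Assumption~\ref{ass:N} to replace $N_0^h$ by $n_0+O(h)$. On the midplane, $\nabla Y^h(\tilde x,0)$ has columns $\partial_1 y,\partial_2 y, b:=\nu/|\partial_1 y\times\partial_2 y|$; since $N_0^h\to n_0$ with $\tilde n_0\otimes\tilde n_0$ encoded in $\tilde\ell_{n_0}=(\tilde\nabla y)^T\tilde\nabla y$, a direct computation — using the Sherman-Morrison formula for $(\ell^f_\nu)^{-1} = r_f^{1/3}(I - \tfrac{r_f-1}{r_f}\,\nu\otimes\nu)$ and the identity $(\nabla Y^h)N_0^h|_{x_3=0} = (\tilde\nabla y)\tilde n_0 + (n_0\cdot e_3)\,b$, which one shows has squared length exactly $r^{2/3}$ (the largest eigenvalue of $\tilde\ell_{n_0}$-augmented-by-$b$), so that $N^h$ is the correctly stretched image of $n_0$ — should yield that the integrand vanishes to leading order, i.e.\ $W^e(\nabla Y^h,N^h,N_0^h)=O(x_3^2)+O(h)$. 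Actually one needs to be a little more careful: the naive ansatz gives an $O(h)+O(x_3)$ error, which would only give $O(h^2)$ total; to get $O(h^3)$ one must choose the \emph{first-order-in-$x_3$} behavior of $Y^h$ (equivalently the director $b$ and the next correction) to be the Cosserat/normal vector that makes the energy quadratically small in $x_3$ — this is the standard ``the optimal through-thickness warping makes the first variation vanish'' phenomenon in plate theory, and here it is automatic because $b$ is normal to the surface and the metric is matched, so that the $O(x_3)$ term in the expansion of the Cauchy-Green tensor is orthogonal (in the relevant quadratic form) to the zeroth-order term.

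The main obstacle I anticipate is precisely this second point: verifying that the linear-in-$x_3$ term of $W^e(\nabla Y^h(\tilde x,x_3),\ldots)$ vanishes, so that $W^e = O(x_3^2) + O(h)$ and hence $\int_{\Omega_h} W^e = O(h\cdot h^2) + O(h\cdot h) $ — wait, the $O(h)$ term times volume $O(h)$ only gives $O(h^2)$, so in fact one also needs the $O(h)$ deviation of $N_0^h$ from $n_0$ to enter the energy only \emph{quadratically}, i.e.\ $W^e = O(x_3^2) + O(h^2)$, which holds because $n_0$ exactly minimizes the relevant term (the energy is stationary in the director variable at the matched configuration, as $\min W^e=0$ is attained there). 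Combining, $W^e(\nabla Y^h,N^h,N_0^h) = O(x_3^2+h^2)$ pointwise on $\Omega_h$, and since $|x_3|\le h/2$, integrating over $\Omega_h$ (volume $\sim h$) gives $\mathcal I_{N_0^h}^h(Y^h)=O(h^3)$. The remaining work — controlling the $C^1$ dependence of $g$ and its derivatives on $x_3$ and $\tilde x$ via the implicit function theorem, and bookkeeping the Taylor remainders uniformly using $y\in C^3$, $n_0\in C^2$ — is routine but must be done with care to ensure all the implied constants are $h$-independent.
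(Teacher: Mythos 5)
Your overall strategy is the paper's: extend $y$ into the thickness along a fiber field $b$, enforce $\det\nabla Y^h=1$ by solving an ODE in $x_3$ for the thickness profile (this is exactly Lemma \ref{IncompressibleLemma}), and then show the energy density is $O(h^2)$ pointwise. But there is a genuine gap in your choice of fiber direction. You take $b$ parallel to the surface normal $\partial_1 y\times\partial_2 y$. The metric constraint \eqref{eq:2DMetric} only fixes the $2\times 2$ block of the Cauchy--Green tensor; for the energy density to \emph{vanish} at the midplane one needs the full $3\times3$ relation $(\tilde\nabla y\,|\,b)^T(\tilde\nabla y\,|\,b)=\ell_{n_0}$ (Proposition \ref{EquivalenceProp}), whose off-diagonal block requires
\begin{align*}
(\tilde\nabla y)^T b \;=\; I_{2\times3}\,\ell_{n_0}e_3 \;=\; r^{-1/3}(r-1)(n_0\cdot e_3)\,\tilde n_0,
\end{align*}
which is nonzero whenever the director is genuinely tilted out of plane (i.e.\ $n_0\cdot e_3\neq0$ and $\tilde n_0\neq0$). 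A normal fiber gives $(\tilde\nabla y)^T b=0$, so on the set where $n_0$ is tilted the midplane deformation gradient is a bounded distance from the zero set of $W^e$, the density is bounded below by a positive constant there, and your construction yields only $\mathcal{I}^h_{N_0^h}(Y^h)=O(h)$. (Your claimed identity $|(\nabla Y^h)N_0^h|^2=r^{2/3}$ at $x_3=0$ also fails for the normal fiber; it holds precisely for the corrected $b$.) The fix is Proposition \ref{bPropDef}: $b$ must carry a tangential component $\bar b_1\,\partial_1 y+\bar b_2\,\partial_2 y$ with $(\bar b_1,\bar b_2)=(\tilde\ell_{n_0})^{-1}I_{2\times3}\ell_{n_0}e_3$, in addition to the normal component you wrote.

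A secondary remark: your worry about the linear-in-$x_3$ term is resolved more simply than by any ``optimal warping'' argument. Once $b$ is chosen correctly, $(\ell_{n}^f)^{-1/2}(\tilde\nabla y\,|\,b)(\ell_{n_0}^0)^{1/2}=R\in SO(3)$, so by frame invariance the density is $W_{nH}(I_{3\times3}+G)$ with $G=O(x_3)+O(h)$ and $\det(I_{3\times3}+G)=1$; the incompressibility constraint forces $\Tr G=-\Tr(\cof G)-\det G=O(|G|^2)$, so $W_{nH}(I_{3\times3}+G)=\tfrac{\mu}{2}(|G|^2+2\Tr G)\le C(|G|^2+|G|^3)=O(h^2)$ with no tuning of the first-order correction (Proposition \ref{DumbProp}). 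Integrating over the $O(h)$ thickness then gives $O(h^3)$.
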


Notice that for this theorem we assume $y$ and $n_0$ are $C^3$ and $C^2$ respectively.  Such smoothness is not always necessary. To highlight this, we introduce a large class of $y,n$ which \emph{automatically satisfy the two-dimensional metric constraint} \eqref{eq:2DMetric}. These surfaces are given as the graph of a function,  combined with an appropriate contraction (here we consider cooling, so $r>1$). We call these ``lifted surfaces''. They are defined by
\begin{align}\label{eq:ansatz}
y(\tilde{x}) = r^{-1/6} (x_1 e_1 + x_2 e_2) + \varphi(r^{-1/6}\tilde{x}) e_3,
\end{align}
where the function $\varphi$ is from the following set
\begin{align}\label{eq:graphVarphi}
\big\{ \phi \in W^{2,\infty}(r^{-1/6}\omega, \mathbb{R})  \colon  \|\tilde{\nabla} \phi\|_{L^{\infty}} < \lambda_{r} := r - 1, \quad \mathrm{supp} \phi \subset r^{-1/6} \omega_m \big\}.
\end{align}  
Here, we set $\omega_m := \{ \tilde{x} \in \omega \colon \dist(\tilde{x}, \partial \omega) > m > 0\}$ (recall that $\omega \subset \mathbb{R}^2$ is the midplane of the sheet, a bounded Lipschitz domain). The corresponding director field of a lifted surface is
\begin{align}\label{eq:designRef}
n_0(\tilde{x}) = \frac{1}{\lambda_{r}^{1/2}}  \left(\begin{array}{c} \partial_1 \varphi(r^{-1/6} \tilde{x} )  \\
\partial_2 \varphi(r^{-1/6}\tilde{x}) \\
(\lambda_{r} - |\tilde{\nabla} \varphi(r^{-1/6}\tilde{x})|^2)^{1/2} \end{array}\right).
\end{align}

We emphasize that any such choice of $y,n_0$ satisfies \eqref{eq:2DMetric}. This fact can be proved by rewriting \eqref{eq:2DMetric} in an equivalent form, which is in fact more practical from the perspective of design and we discuss this in Section \ref{sec:Applications}, which has a focus towards applications. These lifted surfaces have entropic energy of $O(h^3)$ (and therefore they are good candidates for designable actuation).

\begin{corollary}[Lifted Surfaces]\label{SmoothSurfCor}
Let $r > 1$ and $m >0$.  Given a midplane deformation $y$ as in (\ref{eq:ansatz}) with $\varphi$ taken from the set (\ref{eq:graphVarphi}), define the director field $n_0$ as in (\ref{eq:designRef}).  Let $N^h_0$ be close to $n_0$ in the sense of (\ref{eq:n0tomidn0}).

Then, for every $h > 0$ sufficiently small, there exists a $y^h \in C^{3}(\bar{\omega},\mathbb{R}^3)$ and an extension $Y^h \in C^1(\overline{\Omega}_h,\mathbb{R}^3)$ such that 
\begin{align*}
 Y^h(\tilde{x},0) =  y^h(\tilde{x}), \;\; \tilde{x} \in \omega, \quad \quad  \|y^h - y\|_{W^{1,\infty}(\omega)} = O(h), \quad \quad \mathcal{I}_{N^h_0}^h(Y^h) = O(h^3).
\end{align*}
\end{corollary}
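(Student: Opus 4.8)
The plan is to deduce the corollary from Theorem~\ref{SmoothSurfTheorem} by smoothing the generating function $\varphi$, exploiting that the ansatz \eqref{eq:ansatz}--\eqref{eq:designRef} is arranged so that the metric constraint \eqref{eq:2DMetric} holds for \emph{every} admissible profile, not just for the given $\varphi$. A simplification relative to the origami case (Theorem~\ref{NonIsoTheorem}) is that a lifted surface automatically obeys a uniform nondegeneracy bound: from \eqref{eq:ansatz} one computes $\partial_1 y\times\partial_2 y = r^{-1/3}(-\partial_1\varphi,-\partial_2\varphi,1)$ (with $\partial_i\varphi$ evaluated at $r^{-1/6}\tilde x$), whence $|\partial_1 y\times\partial_2 y|\ge r^{-1/3}>0$. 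So there is no need for the delicate $\de$-smoothing of Definition~\ref{defn:delta}; plain mollification suffices, and --- unlike for origami --- the smoothing preserves the metric constraint \emph{exactly}, which is what distinguishes $O(h^3)$ here from $O(h^2)$ there.

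First I would record the identity behind the claim that every lifted surface satisfies \eqref{eq:2DMetric}. For any $\phi\in W^{2,\infty}(r^{-1/6}\om)$ with $\|\tilde\nabla\phi\|_{L^\infty}$ below the threshold in \eqref{eq:graphVarphi} and $\supp\phi$ compactly contained in $r^{-1/6}\om$, let $y_\phi,n_\phi$ be given by \eqref{eq:ansatz}, \eqref{eq:designRef} with $\varphi$ replaced by $\phi$. A direct computation of $\partial_i y_\phi\cdot\partial_j y_\phi$ gives $(\tilde\nabla y_\phi)^T\tilde\nabla y_\phi = r^{-1/3}(I_{2\times2}+\tilde\nabla\phi\otimes\tilde\nabla\phi)$, while $\tilde n_\phi=\lambda_r^{-1/2}\tilde\nabla\phi$ and $|n_\phi|^2=\lambda_r^{-1}(|\tilde\nabla\phi|^2+\lambda_r-|\tilde\nabla\phi|^2)=1$ yield $n_\phi\in\mathbb{S}^2$ and $\tilde\ell_{n_\phi}=r^{-1/3}(I_{2\times2}+\lambda_r\tilde n_\phi\otimes\tilde n_\phi)=r^{-1/3}(I_{2\times2}+\tilde\nabla\phi\otimes\tilde\nabla\phi)$. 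Hence $(\tilde\nabla y_\phi)^T\tilde\nabla y_\phi=\tilde\ell_{n_\phi}$ on $\om$ for every such $\phi$; here I use that the threshold in \eqref{eq:graphVarphi} is such that $\lambda_r-|\tilde\nabla\phi|^2$ stays bounded below on $\overline{r^{-1/6}\om}$, so $n_\phi$ is a Lipschitz function of $\tilde\nabla\phi$ there ($C^\infty$, in particular $C^2$, when $\phi$ is smooth).

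Next, mollify: extend $\varphi$ by zero to $\R^2$ (licit since $\supp\varphi$ is compactly contained in $r^{-1/6}\om$) and set $\varphi^h:=\varphi*\rho_{mh}$ for a standard mollifier $\rho$ and a constant $m>0$ to be fixed. For $h$ small, $\varphi^h\in C^\infty$, $\supp\varphi^h$ is compactly contained in $r^{-1/6}\om$, $\|\tilde\nabla\varphi^h\|_{L^\infty}\le\|\tilde\nabla\varphi\|_{L^\infty}$, and --- crucially, because $\varphi\in W^{2,\infty}$ --- we have the \emph{uniform} bound $\|\tilde\nabla\tilde\nabla\varphi^h\|_{L^\infty}\le C$ together with $\|\tilde\nabla^{(3)}\varphi^h\|_{L^\infty}\le Ch^{-1}$ and $\|\varphi^h-\varphi\|_{W^{1,\infty}}\le Ch$. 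Set $y^h:=y_{\varphi^h}$ and $n^h:=n_{\varphi^h}$: then $y^h\in C^3(\bar{\om},\R^3)$, $n^h\in C^2(\bar{\om},\mathbb{S}^2)$, the identity above gives the exact constraint $(\tilde\nabla y^h)^T\tilde\nabla y^h=\tilde\ell_{n^h}$ on $\om$, and $\|y^h-y\|_{W^{1,\infty}(\om)}\le C\|\varphi^h-\varphi\|_{W^{1,\infty}}=O(h)$. Finally, since $n_\phi$ is Lipschitz in $\tilde\nabla\phi$, $\|n^h-n_0\|_{L^\infty}\le C\|\tilde\nabla\varphi^h-\tilde\nabla\varphi\|_{L^\infty}\le Ch$, so by \eqref{eq:n0tomidn0} also $\|N^h_0-n^h\|_{L^\infty(\Om_h)}\le(\tau+C)h$; i.e.\ $N^h_0$ is close to $n^h$ in the sense of \eqref{eq:n0tomidn0}, with a slightly enlarged tolerance.

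I would then apply Theorem~\ref{SmoothSurfTheorem} to the data $(y^h,n^h,N^h_0)$ to obtain $Y^h\in C^1(\overline{\Om}_h,\R^3)$ with $Y^h(\tilde x,0)=y^h(\tilde x)$ and $\mathcal{I}^h_{N^h_0}(Y^h)=O(h^3)$; with $\|y^h-y\|_{W^{1,\infty}}=O(h)$ this is the corollary. The hard part --- the one step that is not a black-box invocation --- is \emph{uniformity}: Theorem~\ref{SmoothSurfTheorem} is stated for a fixed pair, whereas $(y^h,n^h)$ moves with $h$, so one must check that both the implicit constant in $O(h^3)$ and the threshold ``$h$ sufficiently small'' can be chosen uniform over the family $\{(y^h,n^h)\}_h$. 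This is exactly where the $W^{2,\infty}$-regularity of $\varphi$ and the choice of $m$ matter: in the construction behind Theorem~\ref{SmoothSurfTheorem} the extension $Y^h$ corrects the strain through second order in $x_3$ using the metric constraint and $|\partial_1 y^h\times\partial_2 y^h|\ge r^{-1/3}$, so the residual energy is bounded by $h^3$ times quantities built only from $\|\tilde\nabla y^h\|_{L^\infty},\|\tilde\nabla\tilde\nabla y^h\|_{L^\infty},\|n^h\|_{L^\infty},\|\tilde\nabla n^h\|_{L^\infty}$ --- all uniformly bounded since $\|\varphi^h\|_{W^{2,\infty}}\le C$ --- plus error terms in which the only potentially dangerous factors $\|\tilde\nabla^{(3)}y^h\|_{L^\infty},\|\tilde\nabla\tilde\nabla n^h\|_{L^\infty}\le Ch^{-1}$ come multiplied by a sufficiently high power of $h$ (from the powers of $x_3$ in the ansatz), so that, with $m$ fixed, the total stays $O(h^3)$. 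Making this last bookkeeping precise is the real work; the rest is routine mollification and the algebraic identity above.
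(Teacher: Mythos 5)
Your proposal is correct and follows essentially the same route as the paper: mollify $\varphi$ on scale $h$, observe that the lifted-surface ansatz satisfies the metric constraint \emph{exactly} for the smoothed profile (with the correspondingly smoothed director), and then rerun the smooth-surface construction for the $h$-dependent data, checking that the $O(h^{-1})$ blow-up of third derivatives of $y^h$ and second derivatives of $n^h$ is absorbed by powers of $x_3$. The ``bookkeeping'' you flag as the real work is exactly what the paper packages into the $\alpha=0$ case of Lemma~\ref{IncompressibleLemma} (giving $|\tilde\nabla\xi^h|\le Ch^{-1}|x_3|^2=O(h)$ on $\Omega_h$) together with Propositions~\ref{VarphihLiftProp}--\ref{bhLiftProp}, so your outline matches the paper's proof in substance.
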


The key reason why the lifted surface configurations satisfy the $O(h^3)$ scaling is that they satisfy the metric constraint, they are sufficiently smooth and (for our proof technique) they can be  approximated by even smoother configurations which satisfy the metric constraint (see Remark \ref{remBendConst}(ii)). Thus, we can generalize the proof of Theorem \ref{SmoothSurfTheorem} to obtain this result. 

The results stated here are proved in Section \ref{ssec:liftSurf}.

\be{remark}\label{remBendConst}
\be{enumerate}[(i)]
\item
The surfaces of revolution in Aharoni et al.\ \cite{ask_prl_14} and the designs exploring Gaussian curvature in Mostajeran \cite{m_pre_15} satisfy the conditions of Theorem \ref{SmoothSurfTheorem}.  Thus, these designs and their predicted actuation are pure bending configurations in that they have entropic energy of $O(h^3)$ (which justifies that they are good candidates to be realized in actuation).  
\item
To arrive at the results presented in this section (detailed in Section \ref{sec:Constructions}), we employ techniques of Conti and Dolzmann \cite{cd_incomp_06,cd_cvpd_09} to construct incompressible three dimensional deformations $Y^h \in C^{1}(\overline{\Omega}_h,\mathbb{R}^3)$.  These techniques rely on the ability to approximate Sobolev functions by sufficiently smooth functions (see Section \ref{ssec:incompSec}-\ref{ssec:liftSurf}).  In this direction, an important feature of lifted surfaces is that given any $y$ as in (\ref{eq:ansatz}) with $\varphi$ as in (\ref{eq:graphVarphi}), there exists a smooth $y^h$ approximating $y$ in the  $W^{2,2}(\omega, \mathbb{R}^3)$ norm which additionally satisfies $\tilde{\nabla} y^h \in \mathcal{D}_{r}$ on $\omega$ (see Theorem \ref{prop:equivalent} for the definition of $\mathcal{D}_r$).   The space $\mathcal{D}_{r}$ can be thought of as the appropriate generalization to nematic anisotropy of the space of matrices representing isometries.  Specifically, in the isotropic case $r = 1$, $\mathcal{D}_{r}$ reduces to $\mathcal{D}_{1} = \{\tilde{F} \in \mathbb{R}^{3\times2} \colon \tilde{F}^T \tilde{F} = I_{2\times2} \}$.  The corresponding function space $$W^{2,2}_{iso}(\omega, \mathbb{R}^3) := \{ y \in W^{2,2}(\omega,\mathbb{R}^3) \colon (\tilde{\nabla} y)^T \tilde{\nabla} y  = I_{2\times2}\;\; \text{ a.e.} \}$$ has been studied extensively in the literature as this is the space of all bending deformations for isotropic sheets (as detailed rigorously by Friesecke et al.\ \cite{fjm_cpam_02}).  For instance, Pakzad \cite{p_jdg_04} showed that smooth isometric immersions are dense in $W^{2,2}_{iso}$ as long as the initially flat sheet $\omega$ is a convex regular domain.  This was later generalized by Hornung \cite{h_arma_11} for flat sheets which belong to a much larger class of bounded and Lipschitz domains.  For nematic elastomers, an appealing analogue to these results would be a similar density result for the space 
$$
W^{2,2}_{r}(\omega, \mathbb{R}^3) := \{ y \in W^{2,2}(\omega,\mathbb{R}^3) \colon \tilde{\nabla} y \in \mathcal{D}_{r} \;\; \text{ a.e.}\}.
$$  
For instance, this space arises in compactness at the bending scale for the combined entropic, non-ideal and Frank energy studied in section \ref{ssec:metricNecessary}.  It does not appear that a result of this type has been considered so far.  Our result for non-smooth midplane deformations satisfying  $\tilde{\nabla} y \in \mathcal{D}_{r}$ a.e. is only stated for lifted surfaces, as these are the examples we can explicitly construct and approximate.  
\end{enumerate}
\end{remark}

\subsection{The metric constraint as a necessary condition for bending}\label{ssec:metricNecessary}
We come to our last main result. So far, we exhibited constructions (nonisometric origami and smooth surfaces) which satisfy the metric constraint (\ref{eq:2DMetric}) and this guarantees that they have small entropic strain energy ($O(h^2)$ and $O(h^3)$ respectively).  Now, we assume that the strain energy of a sequence of $Y^h$ is of order $h^3$ (i.e., is small) and we prove a suitable rescaling of $Y^h$ converges to a map $y:\om\to \R^3$ satisfying the metric constraint.  For this, we augment the entropic elastic energy from before. \

We no longer require the deformed director $N^h$ to be constrained as $N^h = (\nabla Y^h)N^h_0/|(\nabla Y^h N^h_0|$ (see the discussion in Remark \ref{RemarkModel}(iii)).  Instead, we introduce the \textit{non-ideal} elastic energy associated to nematic elastomers.  Following Biggins et al.\ \cite{bwb_prl_09,bwb_jmps_12} and others \cite{cdd_pre_02,ns_arxiv_16,vwt_french_96,vw_macro_97}, we take $W^{ni} \colon \mathbb{R}^{3 \times 3} \times \mathbb{R}^3 \times \mathbb{R}^3 \rightarrow \mathbb{R} \cup\{+\infty\}$ to be 
\begin{align}\label{eq:Wni}
W^{ni}(F,\nu,\nu_0) = \frac{\mu \alpha}{2} \begin{cases}
|(I_{3 \times 3}- \nu_0 \otimes \nu_0)F^T \nu|^2 & \text{ if } \nu, \nu_0 \in \mathbb{S}^2 \\
 +\infty & \text{ otherwise}
 \end{cases}
\end{align}  
(see Remark \ref{remarkNonIdeal} below). Moreover, we set $\widehat{W} := (\mu/2)^{-1}(W^e + W^{ni})$, and study the combined energy 
\begin{align}\label{eq:bigEnergy}
\mathcal{I}_{N_0^h}^{h, \varepsilon}(Y^h,N^h) := \int_{\Omega_h} \Big( \widehat{W}(\nabla Y^h, N^h,N_0^h) +\varepsilon^2|\nabla N^h|^2 \Big) dx.
\end{align}
Here, we also introduce a Frank elastic term (see Remark \ref{FrankRemark} below).

For the compactness result, we rescale the $x_3$ variable via a change of coordinates $z(x) = (\tilde{x},h^{-1} x_3)$. This allows us to consider sequences on the fixed domain $\Omega = \omega \times (-1/2,1/2)$, i.e., 
\begin{align}\label{eq:chgVar2}
V^h(z(x)) = Y^h(x),\qquad \quad M_0^h(z(x)) = N_0^h(x),\qquad h^{-3} \mathcal{I}_{N_0^h}^{h,\varepsilon} (Y^h,N^h) \equiv \mathcal{J}_{M_0^h}^{h,\varepsilon}(V^h, M^h),
\end{align}
where the rescaled energy $\mathcal{J}_{M_0^h}^{h, \varepsilon} \colon W^{1,2}(\Omega,\mathbb{R}^3) \times W^{1,2}(\Omega,\mathbb{S}^2) \rightarrow \mathbb{R} \cup \{ +\infty\}$ is given by
\begin{align}
\mathcal{J}_{M_0^h}^{h,\varepsilon}(V^h,M^h) :=  \int_{\Omega}\left( \frac{1}{h^2}\widehat{W} (\nabla_h V^h,M^h, M_0^h) + \frac{\varepsilon^2}{h^2}| \nabla_h M^h|^2 \right) dz. 
\end{align}
Here, for $f \colon \Omega \rightarrow \mathbb{R}^3$, we denote $\nabla_h f$ as $(\tilde{\nabla} f |\frac{1}{h}\partial_3 f)$, which reflects the rescaling of $x_3$ by $1/h$.
 
Given these rescalings, we have:

\begin{theorem}[Compactness]\label{CompactnessTheorem}
Let $r > 0$. Let $n_0 \in W^{1,2}(\omega,\mathbb{S}^2)$ and let
\begin{align}\label{eq:varepsilonh}
c_l h \leq \varepsilon\equiv \varepsilon_h \leq c_u h, 
\end{align}
for some constants $c_u \geq  c_l > 0$. Moreover, let $M_0^h$ satisfy 
 \begin{align}\label{eq:M0h}
M_0^h(z) = n_0(\tilde{z}) + O(h), \quad \text{ for a.e. } z \in \Omega, \quad i.e., \;\; \|M_0^h- n_0\|_{L^{\infty}(\Omega)} \leq \tau h.
\end{align} For every sequence $\{V^h, M^h \} \subset W^{1,2}(\Omega,\mathbb{R}^3) \times W^{1,2}(\Omega,\mathbb{S}^2)$ with $\mathcal{J}_{M_0^h}^{h,\varepsilon_h}(V^h, M^h) \leq C$ as  $h \rightarrow 0$, there exists a subsequence (not relabeled) and a $y \in W^{2,2}(\Omega,\mathbb{R}^3)$ independent of $z_3$ such that as  $h \rightarrow 0$
\begin{align}\label{eq:compactnessFrank}
\left(V^h - \frac{1}{|\Omega|} \int_{\Omega} V^h dz\right) \rightarrow y \quad \text{ in } W^{1,2}(\Omega,\mathbb{R}^3) \quad \text{ with } \quad (\tilde{\nabla} y)^T \tilde{\nabla} y = \tilde{\ell}_{n_0} \text{ a.e. on } \omega.
\end{align}
Moreover as $h \rightarrow 0$, $\big\| M^h - \sigma \frac{(\nabla V^h)M_0^h}{|(\nabla V^h)M_0^h|} \big \|_{L^2(\Omega)} \rightarrow 0$ for some $\sigma$ a fixed constant from the set $\{1,-1\}$.
\end{theorem}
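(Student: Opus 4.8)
The plan is to run a standard rigidity/compactness argument adapted to the anisotropic step-length tensor, using the energy bound $\mathcal{J}^{h,\varepsilon_h}_{M_0^h}(V^h,M^h)\le C$ to control the scaled deformation gradient and then extract the limiting metric and director relations. First I would unpack what the energy bound gives us. Since $\widehat W = (\mu/2)^{-1}(W^e+W^{ni})\ge 0$, the bound $\int_\Omega \frac1{h^2}\widehat W(\nabla_h V^h, M^h, M_0^h)\,dz \le C$ forces (a) $\det \nabla_h V^h = 1$ a.e. (so in particular $\nabla_h V^h$ is finite a.e. and we are on the finite branch of $W^e$), and (b) the quantitative estimate $\int_\Omega |(\nabla_h V^h)^T (\ell^f_{M^h})^{-1}(\nabla_h V^h)\ell^0_{M_0^h} - I_{3\times 3}|^2 \lesssim \int_\Omega W^e \le C h^2$ — here I use that near its minimum $W^e$ is comparable to the squared distance of $\nabla_h V^h$ to the energy well $SO(3)\sqrt{\ell_{M_0^h,M^h}}$, which is the usual coercivity of the trace-form energy on $SL(3)$. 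Likewise the non-ideal term gives $\int_\Omega |(I-M_0^h\otimes M_0^h)(\nabla_h V^h)^T M^h|^2 \lesssim h^2$, and the Frank term gives $\varepsilon_h^2 \int_\Omega|\nabla_h M^h|^2 \lesssim h^2$, hence (using $\varepsilon_h \ge c_l h$) $\int_\Omega |\nabla_h M^h|^2 \le C$.

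Next I would invoke the geometric rigidity estimate of Friesecke--James--Müller in its $h$-dependent form (as used for non-Euclidean plate theory, e.g. Lewicka--Pakzad, Bhattacharya--Lewicka--Schäffner). Writing $\nabla_h V^h = R^h \sqrt{\ell_{M_0^h}}(I + O(\text{error}))$ heuristically: more precisely, since $\ell^0_{M_0^h} \to \ell^0_{n_0}$ and $\ell^f$ depends on $M^h$ which has bounded gradient, one first shows that the wells are close to the $z_3$-independent family $SO(3)\sqrt{\ell_{n_0}}$ up to $O(h)$ and then applies rigidity on $\Omega$ to obtain a single rotation field $R^h \in W^{1,2}(\Omega; SO(3))$ (in fact $z_3$-independent in the limit) with $\|\nabla_h V^h - R^h \sqrt{\ell_{n_0}}\|_{L^2(\Omega)} \le C h$ and $\|\nabla R^h\|_{L^2}\le C$. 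Combining with the Poincaré inequality applied to $V^h - \fint V^h$, one extracts a subsequence and a limit $y\in W^{2,2}$ with $V^h - \fint V^h \to y$ in $W^{1,2}$, $\partial_3 y = 0$, and $\tilde\nabla y = R\,\tilde\Pi\sqrt{\ell_{n_0}}$ for a limiting $R\in SO(3)$-valued field (here $\tilde\Pi$ denotes taking the first two columns), which upon squaring yields exactly $(\tilde\nabla y)^T\tilde\nabla y = \tilde\ell_{n_0}$ a.e. on $\omega$; the $W^{2,2}$ regularity of $y$ comes from the bound on $\nabla R^h$. The $z_3$-independence of $y$ follows because $\frac1h\partial_3 V^h \to \partial_3 y$ in $L^2$ while the rigidity estimate pins $\frac1h\partial_3 V^h$ close to the third column of $R^h\sqrt{\ell_{n_0}}$, which is $O(1)$ — so one has to be a little careful: actually the third column survives, and $z_3$-independence of $y$ should instead be read off from the fact that $\tilde\nabla V^h$ is, to leading order, $z_3$-independent (the $z_3$-dependence sits at order $h$), so $\tilde\nabla y$ and hence $y$ (up to the mean) is $z_3$-independent. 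For the director statement, the non-ideal bound says $(I - M_0^h\otimes M_0^h)(\nabla_h V^h)^T M^h \to 0$ in $L^2$; since $\nabla_h V^h$ is (up to $o(1)$) an invertible matrix times $\sqrt{\ell_{n_0}}$, this forces $M^h$ to become parallel to $(\nabla_h V^h)M_0^h$ in the limit, i.e. $M^h \mp (\nabla_h V^h)M_0^h/|(\nabla_h V^h)M_0^h| \to 0$ in $L^2$; the sign $\sigma\in\{1,-1\}$ is fixed along the subsequence because $M^h$ is in $W^{1,2}$ (so it cannot jump between $+$ and $-$ on a set of positive measure in the limit, the two choices being separated, once $\nabla_h V^h$ is nondegenerate which holds a.e. from $\det = 1$ plus the metric being uniformly positive-definite).

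The main obstacle I expect is making the rigidity step genuinely rigorous with the director entering the \emph{target} well through $\ell^f_{M^h}$: unlike the classical non-Euclidean setting where the metric is a fixed given function, here $M^h$ is itself an unknown of bounded-but-not-better regularity, so the family of wells $SO(3)\sqrt{\ell^f_{M^h}}$ fluctuates. One has to show these fluctuations cost energy of the right order — essentially that $\ell^f_{M^h}$ can be replaced by $\ell^f_{\sigma (\nabla_h V^h)M_0^h/|\cdots|}$ up to an $L^2$ error that is $o(1)$ or better, which is where the non-ideal term does real work, coupling $M^h$ back to $\nabla_h V^h$ — and then closing the loop (the estimate on $\nabla_h V^h$ uses the well, the well uses $M^h$, $M^h$ is controlled via $\nabla_h V^h$ and the non-ideal term) without circularity, presumably by a bootstrap: first get a crude bound on $\nabla_h V^h$ from $\det=1$ and boundedness of the trace form up to the fluctuating well being within a compact set of $SL(3)$, then upgrade. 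A secondary technical point is handling the $O(h)$ discrepancy between $M_0^h$ and $n_0$ uniformly (Assumption via \eqref{eq:M0h}), which only perturbs $\ell^0_{M_0^h}$ by $O(h)$ in $L^\infty$ and hence is harmless at the $h^2$ energy scale. I would organize the write-up as: (1) consequences of the energy bound; (2) rigidity and extraction of $R^h, y$; (3) passing to the limit in the metric identity; (4) the director convergence and sign selection.
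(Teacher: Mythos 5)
Your overall architecture matches the paper's proof closely: extract the consequences of the energy bound (incompressibility, $O(h^2)$ distance to the wells, $O(h^2)$ non-ideal term, $\int_\Omega|\nabla_h M^h|^2\le C$ from the Frank term via $\varepsilon_h\ge c_l h$), apply an $h$-dependent geometric rigidity estimate to $(\ell^f_{M^h})^{-1/2}(\nabla_h V^h)(\ell^0_{M_0^h})^{1/2}$ to produce an approximating matrix field with bounded gradient converging to an $SO(3)$-valued limit $R$, identify $R=(\ell^f_n)^{-1/2}(\tilde\nabla y|b)(\ell^0_{n_0})^{1/2}$, read off the metric constraint, and use the non-ideal term plus a no-jump argument for $W^{1,2}$ maps (de Giorgi) to fix the sign $\sigma$. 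The $O(h)$ discrepancy between $M_0^h$ and $n_0$ is indeed harmless at this scale, as you say, and your treatment of $z_3$-independence, though stated in a roundabout way, lands on the correct fact that $\partial_3 V^h=h\cdot(h^{-1}\partial_3 V^h)\to 0$ in $L^2$ while $h^{-1}\partial_3 V^h$ converges weakly to a separate field $b$.

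The one genuine gap is in your proposed resolution of the difficulty you correctly single out, namely that the target well $SO(3)(\ell^f_{M^h})^{1/2}$ fluctuates with the unknown $M^h$. You suggest that the non-ideal term "does the real work" here via a bootstrap coupling $M^h$ back to $\nabla_h V^h$. That is not the mechanism, and it is hard to see how it could be: the non-ideal bound controls only $(I-M_0^h\otimes M_0^h)(\nabla_h V^h)^T M^h$ in $L^2$, which pins down $M^h$ only up to sign and only after one already knows $\nabla_h V^h$ is close to a rotation times the spontaneous stretch — so the circularity you worry about is real and the bootstrap does not close. What actually tames the moving wells is the Frank term, which is precisely why the paper adds it (Remark \ref{FrankRemark}(ii)): in the localized rigidity estimate on cubes $Q_{\tilde x^*,h}$ of side $h$ one freezes $(\ell^f_{M^h})^{1/2}$ and $(\ell^0_{n_0})^{-1/2}$ at their cube averages, and the Poincar\'e inequality on each cube bounds the resulting error by $h^2(|\nabla M^h|^2+|\tilde\nabla n_0|^2+1)$, which the Frank bound and the hypothesis $n_0\in W^{1,2}$ render of the same $O(h^2)$ order as the elastic defect (this is Proposition \ref{GeomRigidPropAppend} feeding into Lemma \ref{GeomRigidityLemma}). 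You already have the ingredient $\int_\Omega|\nabla_h M^h|^2\le C$ in hand from Step (1) of your plan; you just need to deploy it inside the rigidity step rather than reaching for the non-ideal term, which is only needed later to derive $n=\sigma Rn_0$.
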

In the energy (\ref{eq:bigEnergy}) above, we introduced two new terms compared to the strain energy (\ref{eq:StrainEnergy}): the non-ideal term (\ref{eq:Wni}) (which replaces the hard kinematic constraint) and the Frank elastic term $|\nabla N^h|^2$. We now discuss the physical background behind these energetic contributions.


\begin{remark}[The non-ideal energy density]\label{remarkNonIdeal}
\begin{enumerate}[(i)]
\item The energy density (\ref{eq:Wni}) for this contribution is well-established in the physics literature \cite{bwb_prl_09,bwb_jmps_12,ns_arxiv_16} (though, in these works it is written out in a different but nevertheless completely equivalent form).   It has microscopic origins as detailed by Verwey and Warner \cite{vw_macro_97}, and a slight variant of this energy has been used to explain the semi-soft behavior of clamped-stretched nematic elastomer sheets \cite{cdd_pre_02,vwt_french_96}.

\item  The non-ideal term prevents the material from freely forming microstructure at low energy.  As we discussed in Remark \ref{RemarkModel}(iii), some control on microstructure is necessary for predictable shape actuation.   
Nematic elastomers heterogeneously patterned for actuation are typically cross-linked in the nematic phase (e.g., the samples of Ware et al.\ \cite{wetal_science_15}), and thus encode some \textit{memory} of their patterned director $n_0$.  The non-ideal term (\ref{eq:Wni}) is modeling this memory.  (This is in contrast to nematic elastomers which are cross-linked in the high temperature isotropic phase as in the samples of Kundler and Finkelmann \cite{kf_mrc_95}, and which do readily form microstructure.)

\item During thermal actuation, the entropic energy density $W^e$ is minimized (and equal to zero) when $F = (\ell_{\nu}^f)^{1/2}  R (\ell_{\nu_0}^0)^{-1/2}$ for any $R \in SO(3)$ and any $\nu \in \mathbb{S}^2$. That is, there is a degenerate set of shape changing \textit{soft} deformations since $n$ is unconstrained by the deformation.  Introducing the non-ideal term breaks this degeneracy.    Specifically, if $W^e$ and $W^{ni}$ are both minimized (and equal to zero), then $\nu = \sigma R\nu_0 = \sigma F\nu_0/|F\nu_0|$ for $\sigma \in \{-1,1\}$ in addition to the identity above  (we make this precise in Proposition \ref{WnonIdealProp} in the appendix).  That is, $\nu$ is no longer unconstrained, but instead the initial director $\nu_0$ gets convected by the deformation to $\nu$ (or $-\nu_0$ gets convected to $\nu$ since the energies are invariant under a change of sign of the director).   This observation underlies the fact that the director is \textit{approximately} convected by the deformation at low enough energies (and therefore, we recover the sharp kinematic constraint in (\ref{eq:StrainEnergy}) up to a trivial change in the sign in the limit $h \rightarrow 0$).  As a result, the metric constraint emerges rigorously at the bending scale.
\end{enumerate}
\end{remark}

\begin{remark}[Frank elasticity]
\begin{enumerate}[(i)]\label{FrankRemark}
\item Following de Gennes and Prost \cite{dgp_book_95}, Frank elasticity is a phenomenological continuum model for an energy penalizing distortions in the alignment of the current director $N$,
\begin{align*}
W_{Fr} = \frac{\kappa_1}{2}( \text{div } N)^2 + \frac{\kappa_2}{2}  (N \cdot \text{curl } N)^2 + \frac{\kappa_3}{2}  | N \times \text{curl } N|^2.
\end{align*}
Here, the three terms physically represent splay, twist and bend of the director field with respective moduli $\kappa_1,\kappa_2,\kappa_3 > 0$.   If the moduli are equal, i.e., $\kappa_i = \kappa$ for $i = 1,2,3$, then $W_{Fr}$ reduces to 
\begin{align}\label{eq:FrankApprox}
W_{Fr} \equiv \frac{\kappa}{2} |\text{grad } N|^2.  
\end{align}
More generally, since the moduli $K_i$ are positive, we have the estimate 
\begin{align}\label{eq:ulFrank}
\frac{1}{2}\kappa_l | \text{grad } N|^2 \leq W_{Fr} \leq  \frac{1}{2} \kappa_u |\text{grad } N|^2
\end{align}
for $\kappa_{l} = \min \{ \kappa_1,\kappa_2,\kappa_3\}$ and $\kappa_u = \max\{\kappa_1, \kappa_2,\kappa_3\}$.  

We are interested foremost in how Frank energy may compete with the entropic energy at the bending scale.  Thus, we consider only the simplified model (\ref{eq:FrankApprox}) since the detailed model is sandwiched energetically by models of this type (\ref{eq:ulFrank}).   We make a further assumption regarding how distortions in nematic alignment are accounted in the energetic framework.  To elaborate, a model for Frank elasticity should ideally penalize spatial distortions in the alignment of the director field, i.e., the $\text{div}, \text{curl}$ and $\text{grad}$ operators should be with respect to the current frame.  Unfortunately, this seems quite technical to capture in a variational setting, as notions of invertibility of Sobolev maps must be carefully considered.  It is, however, an active topic of mathematical research.  For instance, we refer the interested reader to the works of Barchiesi and DeSimone \cite{bd_esaim_15} and Barchiesi et al.\ \cite{bhmc_pp_2015} for Frank elasticity and nematic elastomers in this context.  Nevertheless, for our purpose in understanding whether the metric constraint (\ref{eq:2DMetric}) is necessitated by a smallness in the energy, we find it sufficiently interesting to consider the simplified model 
\begin{align}
\label{eq:FrankApprox2}
W_{Fr} \approx \frac{\kappa}{2}|\nabla N^h|^2, \quad N^h \colon \Omega_h \rightarrow \mathbb{S}^2
\end{align}
where $N^h$ refers to the current director field as a mapping from the initially flat sheet $\Omega_h$ and $\nabla$ is the gradient with respect  to this reference state. We normalize this energy by $\mu/2$ and set $\varepsilon^2 = \kappa/ \mu$ to obtain (\ref{eq:bigEnergy}).

\item  The presence of this Frank elastic term allows us to employ the geometric rigidity result of Friesecke, James and M\"{u}ller \cite{fjm_cpam_02}. Geometric rigidity is the central technical ingredient for deriving a compactness result for bending theories from three dimensional elastic energies (compare \cite{bls_arma_15,fjm_arma_06,lmp_prsa_11,lp_esaim_11}).  The choice $\varepsilon \propto h$ is dictated by the desire to have Frank elasticity be comparable to the entropic elasticity (and thus to get a non-trivial limit) at the bending scale.  We discuss this further below.

\item 
The parameter $\varepsilon = \sqrt{\kappa/\mu}$ is likely quite small in nematic elastomers.  Specifically, in liquid crystal fluids, the moduli $\kappa_i$ (which bound $\kappa$) have been measured in detail, and these moduli are likely similar for nematic elastomers (see, for instance, the discussion in Chapter 3 \cite{wt_lceboox_03}).  Further, the shear modulus $\mu$ of the rubbery network, which is distinct to elastomers, is much larger.  Substituting the typical values for these parameters, we find $\varepsilon \sim 10-100 nm$.  Thus, entropic elasticity will often dominate Frank elasticity in these elastomers.  However, a typical thin sheet  will have a thickness $h \sim 10-100 \mu m$.  So there are two small lengthscales to consider in this problem.   For mechanical boundary conditions which induce stretch and stress in these sheets, the entropic energy does appear to dominate the Frank term.  For instance, stripe domains of oscillating nematic orientation would be suppressed by a large Frank energy, and yet these have been observed by Kundler and Finkelmann \cite{kf_mrc_95} in the clamped stretch experiments on thin sheets.  Mathematically, this dominance under stretch is made precise, for instance, by Cesana et al.\ \cite{cpk_arma_15} in studying an energy which includes Frank and entropic elastic contributions. The resulting membrane theory does not depend on Frank elasticity.  These results notwithstanding, actuation of nematic sheets with controlled heterogeneity occurs at a much lower energy state.  Therefore, it is possible that the actuated configuration emerges from a non-trivial competition between entropic and Frank elasticity at these small energy scales.  Hence, we study this competition in an asymptotic sense by taking $h$ and $\varepsilon \rightarrow 0$.   
\end{enumerate}
\end{remark}

\addtocontents{toc}{\protect\setcounter{tocdepth}{1}}
\section{Low energy deformations}\label{sec:Constructions}

We  now prove Theorems \ref{NonIsoTheorem} and \ref{SmoothSurfTheorem}.

In Section \ref{ssec:incompSec}, we construct three dimensional {\it incompressible} deformations starting from sufficiently smooth two dimensional deformations.  These constructions cover all cases of idealized actuation considered in this work.  In Section \ref{ssec:liftSurf}, we use the construction presented in Section \ref{ssec:incompSec} to prove the $O(h^3)$ energy statement for smooth surfaces and lifted surfaces (Theorem \ref{SmoothSurfTheorem} and Corollary \ref{SmoothSurfCor}).  As part of the proof, we develop two dimensional approximations to the lifted surface ansatz as needed.  In Section \ref{ssec:nonIsoConst}, we follow analogous steps to prove the $O(h^2)$ energy statement for nonisometric origami (Theorem \ref{NonIsoTheorem}).
\subsection{Incompressible extensions}\label{ssec:incompSec}
We begin with extensions of the deformations of a planar domain to three dimensional incompressible deformations of a thin domain based on the techniques of Conti and Dolzmann \cite{cd_incomp_06,cd_cvpd_09}.  
\begin{lemma}\label{IncompressibleLemma}
Let $\alpha \in \{-1,0,1\}$.  Suppose for any $\delta > 0$ sufficiently small we have $y_{\alpha}^{\delta}  \in C^{3}(\bar{\omega},\mathbb{R}^3)$ and $b_{\alpha}^{\delta} \in C^2(\bar{\omega},\mathbb{R}^3)$ satisfying
\begin{equation}\label{eq:hypIncomp}
\begin{aligned}
&\det(\tilde{\nabla}y_{\alpha}^{\delta} | b_{\alpha}^{\delta}) = 1 \;\;\; \text{ on } \omega,  \quad \quad \quad \quad \quad \quad \quad \quad  \;  \|\tilde{\nabla} y_{\alpha}^{\delta}\|_{L^{\infty}} + \|b_{\alpha}^{\delta} \|_{L^{\infty}(\omega)} \leq M,\\
&\|\tilde{\nabla} \tilde{\nabla} y_{\alpha}^{\delta}\|_{L^{\infty}} + \|\tilde{\nabla} b_{\alpha}^{\delta} \|_{L^{\infty}} \leq M\delta^{\min\{-\alpha,0\}}, \quad \quad \|\tilde{\nabla}^{(3)} y_{\alpha}^{\delta}\|_{L^{\infty}} + \|\tilde{\nabla}\tilde{\nabla} b_{\alpha}^{\delta} \|_{L^{\infty}} \leq M\delta^{-\alpha - 1}
\end{aligned}
\end{equation}
for some uniform constant $M >0$.  Then there exists an $m \equiv m(M, \alpha) \geq 1$ such that for any $h > 0$ sufficiently small, there exists a unique $\xi^{h}_{\alpha} \in C^1(\overline{\Omega}_{h},\mathbb{R})$ and an extension $Y_{\alpha}^h \in C^{1}(\overline{\Omega}_{h},\mathbb{R}^3)$ satisfying 
\begin{align}\label{eq:yhDef}
Y_{\alpha}^{h} =y^{\delta_h}_{\alpha} + \xi^h_{\alpha} b^{\delta_h}_{\alpha} ,\quad \text{ with } \quad \delta_h = mh \quad \text{ and } \quad  \det \nabla Y_{\alpha}^h = 1 \;\; \text{ on } \Omega_h.
\end{align}
In addition, $\xi_{\alpha}^{h}$ satisfies the pointwise estimates
\begin{align}\label{eq:xihEst}
|\xi^{h}_{\alpha} - x_3| \leq Ch^{\min\{-\alpha,0\}}|x_3|^{2}, \;\; |\partial_3\xi_{\alpha}^{h} - 1| \leq Ch^{\min\{-\alpha,0\}}|x_3|, \;\; |\tilde{\nabla} \xi_{\alpha}^{h}| \leq Ch^{-\alpha-1} |x_3|^2
\end{align}
everywhere on $\Omega_{h}$.  Here, each $C \equiv C(M)$ and does not depend on $h$.    
\end{lemma}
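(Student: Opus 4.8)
The plan is to construct the scalar field $\xi^h_\alpha$ by solving, along each vertical fiber $\{\tilde{x}\} \times (-h/2, h/2)$, the ordinary differential equation that the incompressibility constraint $\det \nabla Y^h_\alpha = 1$ forces upon it. Writing $Y^h_\alpha = y^{\delta_h}_\alpha + \xi^h_\alpha b^{\delta_h}_\alpha$, we compute
$$
\nabla Y^h_\alpha = \big(\tilde{\nabla} y^{\delta_h}_\alpha + \xi^h_\alpha \tilde{\nabla} b^{\delta_h}_\alpha + b^{\delta_h}_\alpha \otimes \tilde{\nabla}\xi^h_\alpha \;\big|\; (\partial_3 \xi^h_\alpha) b^{\delta_h}_\alpha\big),
$$
and expanding the determinant along the last column (using multilinearity and the fact that $b^{\delta_h}_\alpha \otimes \tilde{\nabla}\xi^h_\alpha$ contributes nothing against the $b^{\delta_h}_\alpha$ in the last column) one obtains
$$
\det \nabla Y^h_\alpha = (\partial_3 \xi^h_\alpha)\, \det\!\big(\tilde{\nabla} y^{\delta_h}_\alpha + \xi^h_\alpha \tilde{\nabla} b^{\delta_h}_\alpha \;\big|\; b^{\delta_h}_\alpha\big) =: (\partial_3\xi^h_\alpha)\, g_{\delta_h}(\tilde{x},\xi^h_\alpha).
$$
The hypothesis $\det(\tilde{\nabla} y^\delta_\alpha \mid b^\delta_\alpha) = 1$ gives $g_{\delta_h}(\tilde{x},0) = 1$, and $g_{\delta_h}$ is affine-plus-quadratic in $\xi$ (in fact polynomial of degree $2$ in $\xi$) with coefficients controlled by the $L^\infty$ bounds in \eqref{eq:hypIncomp}. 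So the constraint becomes the separable scalar ODE $\partial_3 \xi = 1/g_{\delta_h}(\tilde{x},\xi)$ with initial condition $\xi = 0$ at $x_3 = 0$, and I would solve it by the standard Picard/inverse-function argument on the fiber.

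The key steps, in order: (1) Derive the determinant identity above and define $g_{\delta_h}$. (2) Since $g_{\delta_h}(\tilde{x},0) = 1$ and $|\partial_\xi g_{\delta_h}| \le C(M)$ uniformly (from the bound $\|b^\delta_\alpha\|_{L^\infty}, \|\tilde\nabla y^\delta_\alpha\|_{L^\infty} \le M$, noting the $\delta$-dependence sits only in higher derivatives, not in these zeroth/first-order quantities that enter $g$), choose $m \ge 1$ large enough — equivalently restrict to $|x_3| \le h/2$ with $h$ small — so that $g_{\delta_h}(\tilde{x},\xi) \ge 1/2$ for all $|\xi| \le 1$, say; this is where $\delta_h = mh$ enters, ensuring the fiber is short enough that $\xi$ stays in the region where $g$ is bounded below. (3) Define $G_{\delta_h}(\tilde{x},s) := \int_0^s g_{\delta_h}(\tilde{x},t)\,dt$; then $s \mapsto G_{\delta_h}(\tilde{x},s)$ is a $C^1$ diffeomorphism onto its image with derivative $\ge 1/2$, and the ODE solution is $\xi^h_\alpha(\tilde{x},x_3) = G_{\delta_h}(\tilde{x},\cdot)^{-1}(x_3)$. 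Uniqueness is immediate from strict monotonicity. (4) Regularity: $g_{\delta_h} \in C^2$ in $\tilde{x}$ (since $y^\delta_\alpha \in C^3$, $b^\delta_\alpha \in C^2$) and $C^\infty$ in $\xi$, so $G_{\delta_h}$ and hence $\xi^h_\alpha$ is $C^1$ on $\overline{\Omega}_h$ by the implicit function theorem; then $Y^h_\alpha \in C^1$ as a composition. (5) The pointwise estimates \eqref{eq:xihEst}: from $x_3 = G_{\delta_h}(\tilde{x},\xi) = \xi + \int_0^\xi (g_{\delta_h}(\tilde{x},t) - 1)\,dt$ and $|g_{\delta_h}(\tilde{x},t)-1| \le |\partial_\xi g_{\delta_h}|\,|t| \le C(M)|t|$ one gets $|x_3 - \xi| \le C|\xi|^2$, hence $|\xi - x_3| \le C|x_3|^2$ after inverting; differentiating the relation $G_{\delta_h}(\tilde{x},\xi^h_\alpha) = x_3$ in $x_3$ gives $\partial_3\xi^h_\alpha = 1/g_{\delta_h}(\tilde{x},\xi^h_\alpha)$, so $|\partial_3\xi^h_\alpha - 1| = |g_{\delta_h}-1|/g_{\delta_h} \le C|\xi| \le C|x_3|$; differentiating in $\tilde{x}$ gives $g_{\delta_h}\,\tilde\nabla\xi^h_\alpha = -\tilde\nabla_{\tilde{x}} G_{\delta_h}(\tilde{x},\xi^h_\alpha)$, and since $G_{\delta_h}(\tilde{x},0) \equiv 0$ its $\tilde{x}$-gradient also vanishes at $\xi = 0$, so $|\tilde\nabla_{\tilde{x}} G_{\delta_h}(\tilde{x},\xi)| \le \|\partial_\xi \tilde\nabla_{\tilde{x}} G_{\delta_h}\|_\infty |\xi|^2 \lesssim (\text{bound on } \tilde\nabla b^\delta, \tilde\nabla\tilde\nabla y^\delta)\cdot|\xi|^2 \le C\delta_h^{-\alpha-1}|x_3|^2$, giving the third estimate after dividing by $g_{\delta_h} \ge 1/2$. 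Note I need to double-check the exponent bookkeeping: $\tilde\nabla_{\tilde{x}} g_{\delta_h}$ involves $\tilde\nabla\tilde\nabla y^\delta$ and $\tilde\nabla b^\delta$, bounded by $M\delta^{\min\{-\alpha,0\}}$; but what enters the $|\xi|^2$ term is $\partial_\xi$ of this, which is $\tilde\nabla\tilde\nabla b^\delta$, bounded by $M\delta^{-\alpha-1}$ — consistent with the claimed $h^{-\alpha-1}$ after $\delta_h = mh$. Similarly $\tilde\nabla$ in the first-order-in-$\xi$ quantities matches $\delta^{\min\{-\alpha,0\}}$, and for $\alpha \in \{-1,0,1\}$ one checks these collapse correctly.

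The main obstacle I anticipate is item (2): making the choice of $m$ (equivalently, the smallness of $h$) \emph{uniform} in $\delta_h = mh$, which is circular-looking since $\delta_h$ depends on $m$. The resolution is that $g_{\delta_h}$ and $\partial_\xi g_{\delta_h}$ depend on $\delta$ only through $\tilde\nabla y^\delta_\alpha$ and $b^\delta_\alpha$ (and one $\tilde\nabla b^\delta_\alpha$ via $\partial_\xi$), all of whose $L^\infty$ bounds are $\delta$-\emph{independent} (they are the $\le M$ bounds, not the $\delta^{-1}$ or $\delta^{-2}$ ones) — except $\tilde\nabla b^\delta$ which is $\le M\delta^{\min\{-\alpha,0\}}$; for $\alpha \in \{0,1\}$ that's just $\le M$, and for $\alpha = -1$ it's $\le M\delta$, which only helps. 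So $\partial_\xi g_{\delta_h}$ is bounded by a constant $C(M)$ independent of $\delta$ and $h$, and then the requirement "$|x_3| \le h/2 \le 1/(4C(M))$ forces $|\xi| \le 1$ and $g_{\delta_h} \ge 1/2$" is a bootstrap that closes: on the maximal subinterval where $|\xi| \le 1$ we have $g \ge 1/2$, hence $|\xi| \le 2|x_3| \le h \le 1$, so the subinterval is all of $(-h/2,h/2)$. The constant $m$ is then chosen afterward (it need not interact with this bound — it can be taken $=1$ here, or larger if the later sections require $\delta_h$ bigger for the smoothing estimates to kick in). I will state $m \equiv m(M,\alpha) \ge 1$ and let downstream lemmas fix it. The remaining work is routine calculus once this uniformity is in hand.
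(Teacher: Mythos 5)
Your overall strategy is the paper's: reduce incompressibility to the fiber ODE $\partial_3\xi = 1/g_{\delta_h}(\tilde x,\xi)$ with $g_{\delta_h}(\tilde x,\xi)=\det(\tilde\nabla y^{\delta_h}_\alpha+\xi\,\tilde\nabla b^{\delta_h}_\alpha\,|\,b^{\delta_h}_\alpha)$, and your determinant identity and the derivation of the three pointwise estimates from the implicit relation $G_{\delta_h}(\tilde x,\xi)=x_3$ match the paper's (the paper proves existence by a contraction mapping on the fiber rather than by inverting the antiderivative $G_{\delta_h}$, but then uses the same implicit equation for the $\tilde\nabla\xi$ estimate, so this is a cosmetic difference). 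The exponent bookkeeping in your step (5) is also correct.

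There is, however, a concrete error in step (2) and in your closing paragraph: the claim that $|\partial_\xi g_{\delta_h}|\le C(M)$ \emph{uniformly in $\delta$}, and hence that $m$ ``need not interact with this bound'' and ``can be taken $=1$ here.'' You have misevaluated $\min\{-\alpha,0\}$: for $\alpha=1$ it equals $-1$, so the hypothesis gives only $\|\tilde\nabla b^{\delta}_1\|_{L^\infty}\le M\delta^{-1}$ (for $\alpha\in\{-1,0\}$ the bound is $M$, not $M\delta$ as you wrote). Since $\partial_\xi g_{\delta_h}$ contains $\tilde\nabla b^{\delta_h}_\alpha$, for $\alpha=1$ you only get $|\partial_\xi g_{\delta_h}|\le C(M)\delta_h^{-1}$, and your bootstrap ``$|g-1|\le C(M)|\xi|\le C(M)h\le 1/2$'' does not close independently of $m$. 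It does close once you use $|\xi|\le 2|x_3|\le h=\delta_h/m$, which gives $|g_{\delta_h}-1|\le C(M)\delta_h^{-1}\cdot 2h = 2C(M)/m$, and then choose $m\ge 4C(M)$; this is precisely the origin of the dependence $m=m(M,\alpha)$ in the statement, and it is exactly how the paper proceeds ($m\ge 1$ suffices for $\alpha\in\{-1,0\}$, while $\alpha=1$ forces $m\ge\max\{2C(M),1\}$). The fix lives entirely inside your own framework, but as written the uniformity claim is false for $\alpha=1$ and the lower bound $g_{\delta_h}\ge 1/2$ — on which everything else rests — would not be justified in the nonisometric-origami case.
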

\noindent We prove this lemma in Appendix \ref{sec:IncompAppendix}.
\begin{remark}\label{incompRemark}
\begin{enumerate}[(i)]
\item The $\alpha \in \{ -1,0,1\}$ dependent hypotheses (\ref{eq:hypIncomp}) is related to the (sufficiently) smooth approximations to midplane fields $y$ and $b$ which satisfy $(\tilde{\nabla} y|b)^T (\tilde{\nabla} y |b) = \ell_{n_0}$ a.e. on $\omega$ for idealized actuation.  These approximations depend on the regularity of the midplane field $y,b$ and $n_0$.  If the fields are smooth enough, then no approximation is required, and this is reflected in the hypotheses with $\alpha = -1$.   Lifted surfaces need not be smooth (i.e., we can have $y \in W^{2,\infty}(\omega,\mathbb{R}^3) \setminus C^{3}(\bar{\omega},\mathbb{R}^3)$).  Consequently, approximations in this case correspond to $\alpha = 0$.  Finally, nonisometric origami actuations are strictly Lipschitz continuous, and as such, the approximations correspond to $\alpha = 1$.  
\item We will show below that the three dimensional extensions $Y^h$ defined in (\ref{eq:yhDef}) have low energy for appropriate choices of $y_{\alpha}^{\delta}$ and $b_{\alpha}^{\delta}$.  Moreover, the estimates (\ref{eq:xihEst}) precisely quantify the approximation $\xi_{\alpha}^h \approx x_3$, and these are crucial for the energy argument.\
\item We can choose $m =1$ for $\alpha = \{-1,0\}$.  For $\alpha = 1$, we generally have to choose $m$ such that $m \geq \max\{C(M),1\}$ where $C(M)$ is a constant that depends on $M$ but is independent of $h$.
\end{enumerate}
\end{remark}

\subsection{Upper bound for  sufficiently smooth surfaces}\label{ssec:liftSurf}
We begin with the case of sufficiently smooth surfaces  and programs which satisfy the metric constraint.  In this case, we do not have to approximate the midplane fields associated to idealized actuation, and so the approach is straightforward.

We find it useful for the proofs (here and later on) to introduce the notation
\begin{align}\label{eq:WeWnH}
W^e(F,\nu,\nu_0) =\begin{cases}
 W_{nH}((\ell_{\nu}^f)^{-1/2} F (\ell_{\nu_0}^{0})^{1/2}) & \text{ if } \nu,\nu_0 \in \mathbb{S}^2 \\
 +\infty  &\text{ otherwise},
 \end{cases}
\end{align}
for $W_{nH} \colon \mathbb{R}^{3\times3} \rightarrow \mathbb{R} \cup \{ +\infty\}$ which denotes the standard incompressible neo-Hookean model
\begin{align}\label{eq:WnH}
W_{nH}(F) := \frac{\mu}{2}\begin{dcases}
|F|^2 - 3 &\text{ if } \det F =1\\
+\infty &\text{ otherwise}
\end{dcases}
\end{align} 
(as implied, the representation of the entropic energy density above $-$ (\ref{eq:WeWnH}) combined with (\ref{eq:WnH}) $-$ is equivalent to the representation (\ref{eq:We}) since $\det( \ell_{n}^f)$ and $\det ( \ell^0_{n_0})$ are both equal to $1$ for $n,n_0 \in \mathbb{S}^2$).   We turn now to the proof.

Let $r > 0$.  We suppose that $n_0 \in C^2(\bar{\omega},\mathbb{S}^2)$ and $y \in C^{3}(\bar{\omega},\mathbb{R}^3)$ such that $(\tilde{\nabla} y)^T\tilde{\nabla} y = \tilde{\ell}_{n_0}$ on $\omega$. 

\begin{proof}[Proof of Theorem \ref{SmoothSurfTheorem}.]
Following Proposition \ref{bPropDef}, there exists a $b \in C^2(\bar{\omega},\mathbb{R}^3)$ such that $(\tilde{\nabla} y|b)^T(\tilde{\nabla} y |b) = \ell_{n_0}$ and $\det (\tilde{\nabla} y|b) = 1$.  The smoothness is due to the regularity of $n_0$ and $y$ by explicit differentiation of the parameterization in (\ref{eq:rewritb}).  Now $y$ and $b$ satisfy the hypotheses of Lemma \ref{IncompressibleLemma} with $\alpha = -1$ since these fields are $\delta$-independent.  Hence, for $h > 0$ sufficiently small there exists a $\xi^h \in C^{1}(\overline{\Omega}_h,\mathbb{R})$ and an extension $Y^h \in C^{1}(\overline{\Omega}_h,\mathbb{R}^3)$ with the properties: 
\begin{align}\label{eq:PropSmooth}
\begin{cases}Y^h := y + \xi^{h} b, 
\quad \det \nabla Y^h = 1 \;\; \text{ on } \Omega_h, \\
|\xi^{h} - x_3| \leq C|x_3|^{2}, \;\; |\partial_3\xi^{h} - 1| \leq C|x_3|, \;\; |\tilde{\nabla} \xi^{h}| \leq C |x_3|^2 \;\; \text{ on } \Omega_h
\end{cases}
\end{align}
for $C = C(\tilde{\nabla} y) > 0$ independent of $h$. 

Now note that $Y^h(\tilde{x},0) = y(\tilde{x})$ for $\tilde{x} \in \omega$ since $\xi^h(\tilde{x},0) = 0$ by the first estimate for $\xi^h$ in (\ref{eq:PropSmooth}). So it remains to prove only the $O(h^3)$ scaling of the energy $\mathcal{I}_{N_0^h}(Y^h)$.  

We compute explicitly
\begin{align*}
\nabla Y^h &= (\tilde{\nabla} y|b) + x_3 (\tilde{\nabla} b|0) + (\xi^h - x_3)(\tilde{\nabla} b|0) \\
&\;\;\;\;\;\;\; + (\partial_3 \xi^h -1) b \otimes e_3 + b \otimes \tilde{\nabla} \xi^h. 
\end{align*}
Hence, by the estimates on $\xi^h$ in (\ref{eq:PropSmooth}), we conclude 
\begin{align}\label{eq:YhLiftScale}
\nabla Y^h = (\tilde{\nabla}y|b) + O(x_3). 
\end{align}

By hypothesis, $(\tilde{\nabla} y |b)^T (\tilde{\nabla} y|b) = \ell_{n_0}$, and so we find  that 
\begin{align}\label{eq:vanishEnergy1}
W^e((\tilde{\nabla} y|b),n,n_0) = W_{nH}((\ell_{n}^{f})^{-1/2} (\tilde{\nabla} y|b)(\ell^0_{n_0})^{1/2}) = 0 \quad \text{ on } \omega
\end{align}  
following Proposition \ref{EquivalenceProp} and the identity (\ref{eq:WeWnH}) where
\begin{align*}
n := \frac{(\tilde{\nabla} y |b) n_0}{|(\tilde{\nabla} y |b) n_0|}\;\; \text{ on } \omega.
\end{align*}
Since the energy density (\ref{eq:vanishEnergy1}) vanishes, we deduce from Proposition \ref{LBProp} that 
\begin{align}\label{eq:rotationEquality1}
(\ell_{n}^{f})^{-1/2} (\tilde{\nabla} y|b)(\ell^0_{n_0})^{1/2} =: R \in SO(3) \quad \text{ on } \omega.
\end{align}
Now, we let $N^h := (\nabla Y^h)N_0^h/|(\nabla Y^h)N_0^h|$ on $\Omega_h$, and observe that 
\begin{align}\label{eq:stepLengthN0h}
(\ell_{N_0^h}^0)^{1/2} &= (\ell_{n_0}^0)^{1/2} + O(h),
\end{align}
where the equality follows from the scaling of the non-ideal terms in (\ref{eq:n0tomidn0}). Additionally given (\ref{eq:YhLiftScale}), we conclude 
\begin{align}\label{eq:stepLengthNh}
(\ell_{N^h}^f)^{-1/2} = (\ell_{n}^f)^{-1/2} + O(x_3) +  O(h).
\end{align}
Hence, combining the estimates (\ref{eq:YhLiftScale}), (\ref{eq:stepLengthN0h}) and (\ref{eq:stepLengthNh}), we find
\begin{align*}
W^e(\nabla Y^h, N^h, N_0^h) &= W_{nH}((\ell_{n}^f)^{-1/2} (\tilde{\nabla}y|b)(\ell_{n_0}^0)^{1/2} + O(x_3) + O(h)) \\
&=W_{nH}(R^T(\ell_{n}^f)^{-1/2} (\tilde{\nabla}y|b)(\ell_{n_0}^0)^{1/2} + O(x_3) + O(h)) \\
&= W_{nH}(I_{3\times3} + O(x_3) + O(h)) = O(h^2) \quad \text{ on } \omega.
\end{align*}
For the last equality, we used the definition of $R$ in (\ref{eq:rotationEquality1}) and for the inequality, we used the estimate in Proposition \ref{DumbProp}. Since this inequality holds on all of $\omega$, 
\begin{align*}
\mathcal{I}_{N_0^h}^h(Y^h) = \int_{-h/2}^{h/2} \int_{\omega} W^e(\nabla Y^h,N^h,N_0^h) dx = O(h^3).  
\end{align*}
This completes the proof.  
\end{proof}

We now apply Lemma \ref{IncompressibleLemma} to the case of lifted surfaces.    

 Let $r > 1$.  We suppose $\{\varphi,y,n_0 \}$ are as in the lifted surface ansatz (i.e., $y$ satsifying (\ref{eq:ansatz}) and $n_0$ satisfying (\ref{eq:designRef}) for  $\varphi$ as in (\ref{eq:graphVarphi}) for some $m > 0$) and $N_0^h$ is as in (\ref{eq:n0tomidn0}). 

\begin{proof}[Proof of Corollary \ref{SmoothSurfCor}.]
For $\delta >0$ sufficiently small, there exist $\delta-$dependent functions $\{\varphi_\delta, y^\delta, n_0^\delta, b^\delta\}$ approximating this ansatz as detailed in Propositions \ref{VarphihLiftProp}-\ref{bhLiftProp} at the end of this section. The approximations satisfy $(\tilde{\nabla} y^{\delta} |b^{\delta})^T(\tilde{\nabla} y^{\delta}|b^{\delta}) \approx \ell_{n_0}$, and they are sufficiently smooth so that we can apply Lemma \ref{IncompressibleLemma} with $\alpha = 0$ when we set $\delta = h$ (see Remark \ref{incompRemark}(iii)).   Thus for $h >0$ sufficiently small, there exists a $\xi^h \in C^{1}(\overline{\Omega}_h,\mathbb{R})$ and an extension $Y^h \in C^{1}(\overline{\Omega}_h,\mathbb{R}^3)$ with the properties: 
\begin{align}\label{eq:PropLift}
\begin{cases}Y^h := y^{h} + \xi^{h} b^{h}, 
\quad \det \nabla Y^h = 1 \;\; \text{ on } \Omega_h, \\
|\xi^{h} - x_3| \leq C|x_3|^{2}, \;\; |\partial_3\xi^{h} - 1| \leq C|x_3|, \;\; |\tilde{\nabla} \xi^{h}| \leq Ch^{-1} |x_3|^2 \;\; \text{ on } \Omega_h
\end{cases}
\end{align}
for $C = C(\tilde{\nabla} y) > 0$ independent of $h$.  

Now note that $Y^h(\tilde{x},0) = y^h(\tilde{x})$ for $\tilde{x} \in \omega$ since $\xi^h(\tilde{x},0) = 0$ by the first estimate for $\xi^h$ in (\ref{eq:PropLift}).  Moreover, $\|y^h - y\|_{W^{1,\infty}} = O(h)$ is shown in Proposition \ref{SmoothLiftSurfProp}.  So it remains to prove only the $O(h^3)$ scaling of the energy $\mathcal{I}_{N_0^h}(Y^h)$.  

For this, we note that given the estimates and properties established in Proposition \ref{VarphihLiftProp}-\ref{bhLiftProp}, the fact that we are smoothing on a lengthscale $\delta = h$ and the estimates for $\xi^h$ in (\ref{eq:PropLift}), the proof here follows exactly the same line of arguments as in the theorem above by replacing $\{y,n_0,n,b,R\}$ with $\{y^h,n_0^h,n^h,b^h,R^h\}$.  
\end{proof}

It remains to construct the $\delta$-dependent smoothings $\{ \varphi_{\delta}, n_0^{\delta}, y^{\delta}, b^{\delta}\}$ asserted in the definition of three dimensional deformations for lifted surfaces.  

{\it Construction of $\varphi_{\delta}$}. Consider any $\varphi$ as in (\ref{eq:graphVarphi}) for $m >0$.  We extend $\varphi$ to all of $\mathbb{R}^2$ yielding $\varphi \in W^{2,\infty}(\mathbb{R}^2,\mathbb{R}^3)$ (the extension is not relabeled), and we set 
\begin{align}\label{eq:varphihLift}
\varphi_\delta := \eta_\delta \ast \varphi \quad \text{ on } r^{-1/6} \omega
\end{align}
for a standard mollifier $\eta_\delta$ supported on a ball of radius $\delta/2$.  For this mollification, we have:
\begin{proposition}\label{VarphihLiftProp}
For $\delta > 0$ sufficiently small, $\varphi_\delta$ in (\ref{eq:varphihLift}) belongs to  $C^{\infty}_0(r^{-1/6}\omega,\mathbb{R})$ and satisfies the estimates
\begin{align*}
&\|\varphi - \varphi_\delta\|_{W^{1,\infty}} = O(\delta), \quad \|\tilde{\nabla} \varphi_\delta\|_{L^{\infty}} < \lambda_{r}, \\ 
&\|\tilde{\nabla}^{(n)} \varphi_\delta \|_{L^{\infty}} = O(\delta^{2-n}), \quad \text{ for any integer } n \geq 2.   
\end{align*}
\end{proposition}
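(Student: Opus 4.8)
\noindent The plan is to obtain all four assertions from standard mollification estimates, paying attention only to two points: that the \emph{strict} gradient bound survives the smoothing, and that every constant is independent of $\delta$.

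First I would dispose of the extension and the support statement. Since $\mathrm{supp}\,\varphi \subset r^{-1/6}\omega_m$ is compactly contained in $r^{-1/6}\omega$ and $\varphi \in W^{2,\infty}$ embeds into $C^{1,1}$, both $\varphi$ and $\tilde{\nabla}\varphi$ vanish on the boundary of $\mathrm{supp}\,\varphi$; hence the extension of $\varphi$ by zero to all of $\mathbb{R}^2$ lies in $W^{2,\infty}(\mathbb{R}^2)$ with unchanged norm, and in particular $\|\tilde{\nabla}\varphi\|_{L^{\infty}(\mathbb{R}^2)} < \lambda_{r}$ is preserved. Then for $\delta < 2 r^{-1/6} m$ one has $\mathrm{supp}\,\varphi_\delta \subset \mathrm{supp}\,\varphi + B_{\delta/2}$, which is still compactly contained in $r^{-1/6}\omega$, while $\varphi_\delta = \eta_\delta \ast \varphi$ is smooth; this gives $\varphi_\delta \in C_0^{\infty}(r^{-1/6}\omega,\mathbb{R})$.

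For the $W^{1,\infty}$ bound I would use $\tilde{\nabla}\varphi_\delta = \eta_\delta \ast \tilde{\nabla}\varphi$ (legitimate since $\tilde{\nabla}\varphi \in W^{1,\infty}$) together with the elementary estimate $\|g - \eta_\delta \ast g\|_{L^{\infty}} \le C\delta \|\tilde{\nabla}g\|_{L^{\infty}}$ applied to $g = \varphi$ and to $g = \tilde{\nabla}\varphi$, which yields $\|\varphi - \varphi_\delta\|_{W^{1,\infty}} \le C\delta\big(\|\tilde{\nabla}\varphi\|_{L^{\infty}} + \|\tilde{\nabla}^{(2)}\varphi\|_{L^{\infty}}\big) = O(\delta)$. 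The strict bound $\|\tilde{\nabla}\varphi_\delta\|_{L^{\infty}} < \lambda_{r}$ I would get from the fact that convolution against a probability density is nonexpansive on $L^{\infty}$: since $\eta_\delta \ge 0$ and $\int \eta_\delta = 1$, the identity $\tilde{\nabla}\varphi_\delta(\tilde{x}) = \int \eta_\delta(\tilde{y})\,\tilde{\nabla}\varphi(\tilde{x}-\tilde{y})\,d\tilde{y}$ gives $|\tilde{\nabla}\varphi_\delta(\tilde{x})| \le \|\tilde{\nabla}\varphi\|_{L^{\infty}}$ for every $\tilde{x}$, hence $\|\tilde{\nabla}\varphi_\delta\|_{L^{\infty}} \le \|\tilde{\nabla}\varphi\|_{L^{\infty}} < \lambda_{r}$.

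Finally, for the higher derivatives, for integer $n \ge 2$ I would move $n-2$ derivatives onto the mollifier, $\tilde{\nabla}^{(n)}\varphi_\delta = (\tilde{\nabla}^{(n-2)}\eta_\delta) \ast \tilde{\nabla}^{(2)}\varphi$, and apply Young's inequality, $\|\tilde{\nabla}^{(n)}\varphi_\delta\|_{L^{\infty}} \le \|\tilde{\nabla}^{(n-2)}\eta_\delta\|_{L^{1}}\,\|\tilde{\nabla}^{(2)}\varphi\|_{L^{\infty}}$; the scaling $\eta_\delta = \delta^{-2}\eta(\cdot/\delta)$ gives $\|\tilde{\nabla}^{(k)}\eta_\delta\|_{L^{1}} = \delta^{-k}\|\tilde{\nabla}^{(k)}\eta\|_{L^{1}}$, so with $k = n-2$ one obtains $\|\tilde{\nabla}^{(n)}\varphi_\delta\|_{L^{\infty}} = O(\delta^{2-n})$. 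I do not expect any genuine obstacle; the only care required is to keep the constant $C$ dependent only on $\eta$ and $\|\varphi\|_{W^{2,\infty}}$ (never on $\delta$), and to record the third estimate as a strict inequality rather than a $\le$ — which is precisely why I would argue it via the averaging/nonexpansiveness viewpoint rather than via the cruder estimate used for the other bounds.
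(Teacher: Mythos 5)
Your proposal is correct and follows essentially the same route as the paper's proof: extension by zero, support control via the mollifier's support, the standard Lipschitz mollification estimate for the $W^{1,\infty}$ bound, nonexpansiveness of convolution against a probability density for the strict bound $\|\tilde{\nabla}\varphi_\delta\|_{L^{\infty}}\le\|\tilde{\nabla}\varphi\|_{L^{\infty}}<\lambda_r$, and Young's inequality with $n-2$ derivatives placed on $\eta_\delta$ for the higher-order estimates. You have simply written out in full the "standard manipulations" the paper invokes; no discrepancy.
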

\begin{proof}
$\varphi_\delta$ is smooth by mollification.  It vanishes on the boundary of $r^{-1/6} \omega$ for $\delta > 0$ sufficiently small since by (\ref{eq:graphVarphi}), $\spt \varphi \subset r^{-1/6}\omega_m := r^{-1/6}\{ \tilde{x} \in \omega \colon \dist(\tilde{x},\omega) > m\}$ and since $\eta_\delta$ is supported on a ball of radius $\delta/2$.  From standard manipulation of the mollification (\ref{eq:varphihLift}), the estimate on the $W^{1,\infty}$ norm follows from the Lipschitz continuity of $\varphi$ and $\tilde{\nabla} \varphi$,  the estimate on $\tilde{\nabla} \varphi_{\delta}$ follows from that fact that $\|\tilde{\nabla} \varphi\|_{L^{\infty}} < \lambda_{r}$ and the estimates on the higher derivatives follow from the fact that $\tilde{\nabla} \tilde{\nabla} \varphi \in L^{\infty}$.  
\end{proof}

{\it Construction of $n_0^{\delta}$ and $y^{\delta}$}.  We replace $\varphi$ in the lifted surface ansatz (\ref{eq:ansatz}) and (\ref{eq:designRef}) with $\varphi_{\delta}$ from the proposition above and define $n_0^\delta$ as in (\ref{eq:designRef}) and $y^\delta$ as in (\ref{eq:ansatz}) with this replacement.  We make the following observations:
\begin{proposition}\label{SmoothLiftSurfProp}
Let $\delta > 0$ sufficiently small.  Let $n_0^\delta$ and $y^\delta$ as defined above for $\varphi_\delta$ as in (\ref{eq:varphihLift}), $\varphi$ as in (\ref{eq:graphVarphi}), $n_0$ as in (\ref{eq:designRef}) and $y$ as in (\ref{eq:ansatz}).  Then $n_0^\delta \in C^{\infty}(\bar{\omega},\mathbb{S}^2)$ and $y^\delta \in C^{\infty}(\bar{\omega},\mathbb{R}^3)$ and they satisfy
\begin{align*}
&(\tilde{\nabla} y^\delta)^T (\tilde{\nabla} y^\delta) = \tilde{\ell}_{n_0^\delta} \;\; \text{ on } \omega, \quad \|n_0^\delta - n_0\|_{L^{\infty}}   = O(\delta), \quad \|y^\delta - y\|_{W^{1,\infty}} = O(\delta), \\
&\|\tilde{\nabla} y^\delta  \|_{L^{\infty}} + \|\tilde{\nabla} \tilde{\nabla} y^\delta\|_{L^{\infty}} + \|\tilde{\nabla} n_0^\delta\|_{L^{\infty}} \leq C , \quad \| \tilde{\nabla}^{(3)} y^\delta\|_{L^{\infty}} + \|\tilde{\nabla} \tilde{\nabla} n_0^\delta\|_{L^{\infty}} \leq C\delta^{-1}
\end{align*}
for $C$ independent of $\delta$.  
\end{proposition}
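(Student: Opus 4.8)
The plan is to push every assertion back to a corresponding statement about $\varphi_\delta$ furnished by Proposition \ref{VarphihLiftProp}, using that by construction both $y^\delta$ and $n_0^\delta$ are explicit functions of $\varphi_\delta$ and $\tilde{\nabla}\varphi_\delta$. First I would record the smoothness. Clearly $y^\delta(\tilde{x}) = r^{-1/6}(x_1 e_1 + x_2 e_2) + \varphi_\delta(r^{-1/6}\tilde{x})\,e_3 \in C^\infty(\bar{\omega},\mathbb{R}^3)$ since $\varphi_\delta \in C^\infty_0(r^{-1/6}\omega,\mathbb{R})$. For $n_0^\delta$, the point is that it is the composition $n_0^\delta(\tilde{x}) = \Phi(\tilde{\nabla}\varphi_\delta(r^{-1/6}\tilde{x}))$ with the fixed map $\Phi(p) := \lambda_r^{-1/2}(p_1, p_2, (\lambda_r - |p|^2)^{1/2})$, which is $C^\infty$ on $\{ p \in \mathbb{R}^2 : |p|^2 < \lambda_r \}$; because mollification does not increase $\|\tilde{\nabla}\varphi\|_{L^\infty}$, Proposition \ref{VarphihLiftProp} (with the constraint in (\ref{eq:graphVarphi})) gives $\lambda_r - |\tilde{\nabla}\varphi_\delta|^2 \geq 2c_0 > 0$ uniformly in $\delta$, so $\tilde{\nabla}\varphi_\delta$ takes values in a fixed compact subset of the domain of $\Phi$ and hence $n_0^\delta \in C^\infty(\bar{\omega},\mathbb{R}^3)$; pointwise membership in $\mathbb{S}^2$ is the identity $|\Phi(p)|^2 = \lambda_r^{-1}(|p|^2 + \lambda_r - |p|^2) = 1$.

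Next I would check the metric identity by direct differentiation, exactly as for the unmollified ansatz. From $\partial_i y^\delta = r^{-1/6} e_i + r^{-1/6}(\partial_i\varphi_\delta)(r^{-1/6}\tilde{x})\,e_3$ one obtains $(\tilde{\nabla} y^\delta)^T \tilde{\nabla} y^\delta = r^{-1/3}(I_{2\times 2} + \tilde{\nabla}\varphi_\delta \otimes \tilde{\nabla}\varphi_\delta)$, with $\tilde{\nabla}\varphi_\delta$ evaluated at $r^{-1/6}\tilde{x}$. On the other hand $\tilde{n}_0^\delta = \lambda_r^{-1/2}\,\tilde{\nabla}\varphi_\delta$, so $(r-1)\,\tilde{n}_0^\delta \otimes \tilde{n}_0^\delta = \tilde{\nabla}\varphi_\delta \otimes \tilde{\nabla}\varphi_\delta$ because $\lambda_r = r-1$, and therefore $\tilde{\ell}_{n_0^\delta} = r^{-1/3}(I_{2\times 2} + \tilde{\nabla}\varphi_\delta \otimes \tilde{\nabla}\varphi_\delta) = (\tilde{\nabla} y^\delta)^T\tilde{\nabla} y^\delta$ on $\omega$.

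For the distance estimates, everything reduces to $\|\varphi - \varphi_\delta\|_{W^{1,\infty}} = O(\delta)$ from Proposition \ref{VarphihLiftProp}: since $y^\delta - y = (\varphi_\delta - \varphi)(r^{-1/6}\cdot)\,e_3$, we have $\|y^\delta - y\|_{W^{1,\infty}} \leq C\|\varphi_\delta - \varphi\|_{W^{1,\infty}} = O(\delta)$; and since $n_0^\delta - n_0 = \Phi(\tilde{\nabla}\varphi_\delta(r^{-1/6}\cdot)) - \Phi(\tilde{\nabla}\varphi(r^{-1/6}\cdot))$ with $\Phi$ Lipschitz on the common compact set $\{|p|^2 \leq \lambda_r - 2c_0\}$, we have $\|n_0^\delta - n_0\|_{L^\infty} \leq C\|\tilde{\nabla}\varphi_\delta - \tilde{\nabla}\varphi\|_{L^\infty} = O(\delta)$. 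For the derivative bounds I would differentiate the explicit formulas and the composition $\Phi(\tilde{\nabla}\varphi_\delta(r^{-1/6}\cdot))$ by the chain rule. One has $\|\tilde{\nabla} y^\delta\|_{L^\infty} \leq r^{-1/6}(1 + \|\tilde{\nabla}\varphi_\delta\|_{L^\infty}) = O(1)$ and $\|\tilde{\nabla}^{(k)} y^\delta\|_{L^\infty} = r^{-k/6}\|\tilde{\nabla}^{(k)}\varphi_\delta\|_{L^\infty}$ for $k \geq 2$; and $\tilde{\nabla}^{(k)} n_0^\delta$ is a universal polynomial in $\tilde{\nabla}^{(j)}\varphi_\delta$, $1 \leq j \leq k+1$, with coefficients bounded by sup-norms of derivatives of $\Phi$ over the fixed compact set, the dominant contribution being linear in $\tilde{\nabla}^{(k+1)}\varphi_\delta$. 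Feeding in $\|\tilde{\nabla}^{(n)}\varphi_\delta\|_{L^\infty} = O(\delta^{2-n})$: $\tilde{\nabla}\tilde{\nabla} y^\delta$ and $\tilde{\nabla} n_0^\delta$ involve at most $\tilde{\nabla}^{(2)}\varphi_\delta$ and so are $O(1)$, while $\tilde{\nabla}^{(3)} y^\delta$ and $\tilde{\nabla}\tilde{\nabla} n_0^\delta$ involve $\tilde{\nabla}^{(3)}\varphi_\delta$ and so are $O(\delta^{-1})$, as claimed.

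I do not expect a real obstacle here; the only point needing any care is the uniform-in-$\delta$ lower bound $\lambda_r - |\tilde{\nabla}\varphi_\delta|^2 \geq 2c_0 > 0$, which is what lets me treat $\Phi$ and all its derivatives on one fixed compact set and thereby keep all constants independent of $\delta$. Everything else is bookkeeping of the harmless chain-rule factor $r^{-1/6}$ coming from the argument $r^{-1/6}\tilde{x}$.
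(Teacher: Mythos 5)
Your proposal is correct and follows essentially the same route as the paper: everything is reduced to the properties of $\varphi_\delta$ from Proposition \ref{VarphihLiftProp} together with the explicit formulas for $y^\delta$ and $n_0^\delta$, with the chain rule and the $\delta$-dependent bounds on $\tilde{\nabla}^{(n)}\varphi_\delta$ giving the derivative estimates. The only cosmetic difference is that you verify the metric identity by direct differentiation of the ansatz, whereas the paper cites the equivalence \eqref{eq:equivalence2DMet}; your explicit observation that $|\tilde{\nabla}\varphi_\delta|\leq\|\tilde{\nabla}\varphi\|_{L^\infty}$ keeps $\tilde{\nabla}\varphi_\delta$ in a fixed compact subset of the domain of the (smooth) map defining $n_0^\delta$ is exactly the point that makes all constants $\delta$-independent.
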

\begin{proof}
These properties are a consequence of the properties on $\varphi_\delta$ established in Proposition \ref{VarphihLiftProp}.  In particular, smoothness follows since $\varphi_\delta$ is a mollification; the metric constraint holds by the equivalence (\ref{eq:equivalence2DMet}) since $\|\tilde{\nabla} \varphi_\delta\|_{L^{\infty}} < \lambda_r$; the estimates on the approximations $n_0^\delta- n_0$ and $y^\delta- y$ follow from the $W^{1,\infty}$ estimate of $\varphi_\delta - \varphi$ using the explicit definition of each field; and the $\delta$-dependent derivative estimates follow from the $\delta$-dependent derivative estimates of $\varphi_\delta$ again using the explicit definition of each field.
\end{proof}

{\it Construction of $b^{\delta}$.} We construct the out-of-plane vector $b^{\delta} \colon \omega \rightarrow \mathbb{R}^3$ to ensure the metric constraint is satisfied at the midplane: 
\begin{proposition}\label{bhLiftProp}
Let $\delta > 0$ sufficiently small. Let $n_0^\delta$ and $y^\delta$ as in Proposition \ref{SmoothLiftSurfProp}.  There exists a $b^\delta \in C^{\infty}(\bar{\omega},\mathbb{R}^3)$ such that 
\begin{align}
&(\tilde{\nabla} y^\delta |b^\delta)^T (\tilde{\nabla} y^\delta |b^\delta) = \ell_{n_0^\delta}, \quad \det (\tilde{\nabla} y^\delta |b^\delta) = 1, \label{eq:bhIdents}\\ 
&\|b^\delta\|_{L^{\infty}} + \|\tilde{\nabla} b^\delta\|_{L^{\infty}} \leq C, \quad \| \tilde{\nabla} \tilde{\nabla} b^\delta\|_{L^{\infty}} \leq C\delta^{-1} \nonumber
\end{align}
for $C$ independent of $\delta$.  
\end{proposition}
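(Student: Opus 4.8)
The plan is to write down $b^\delta$ explicitly. Since $y^\delta$ is a lifted surface built from $\varphi_\delta$ via \eqref{eq:ansatz}, I would first compute $\tilde\nabla y^\delta$ directly: with $\psi_\delta(\tilde x):=\varphi_\delta(r^{-1/6}\tilde x)$, one has $\partial_i y^\delta = r^{-1/6} e_i + (\partial_i\psi_\delta)e_3$, so the induced metric is $(\tilde\nabla y^\delta)^T\tilde\nabla y^\delta = r^{-1/3}I_{2\times2} + \tilde\nabla\psi_\delta\otimes\tilde\nabla\psi_\delta$, which by the equivalence \eqref{eq:equivalence2DMet} (used in Proposition \ref{SmoothLiftSurfProp}) equals $\tilde\ell_{n_0^\delta}$. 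The task is then to find a single column vector $b^\delta$ making the $3\times3$ matrix $(\tilde\nabla y^\delta\,|\,b^\delta)$ satisfy the full relations \eqref{eq:bhIdents}, i.e.\ $b^\delta$ must be orthogonal to $\partial_1 y^\delta$ and $\partial_2 y^\delta$ with the correct length, and give determinant one. The natural candidate is $b^\delta = c_\delta\, \partial_1 y^\delta\times\partial_2 y^\delta$ for a suitable scalar $c_\delta>0$: orthogonality to both tangent vectors is automatic, and since $(\tilde\nabla y^\delta\,|\,\partial_1 y^\delta\times\partial_2 y^\delta)$ has determinant $|\partial_1 y^\delta\times\partial_2 y^\delta|^2$, the incompressibility condition forces $c_\delta = |\partial_1 y^\delta\times\partial_2 y^\delta|^{-2}$.

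Next I would verify the metric condition $(\tilde\nabla y^\delta\,|\,b^\delta)^T(\tilde\nabla y^\delta\,|\,b^\delta)=\ell_{n_0^\delta}$. The off-diagonal $2\times1$ blocks vanish by the orthogonality just noted, and the upper-left $2\times2$ block is $\tilde\ell_{n_0^\delta}$ by the computation above, so only the $(3,3)$ entry $|b^\delta|^2$ remains. One computes $|b^\delta|^2 = c_\delta^2|\partial_1 y^\delta\times\partial_2 y^\delta|^2 = c_\delta = |\partial_1 y^\delta\times\partial_2 y^\delta|^{-2}$. On the other side, $\ell_{n_0^\delta}$ is symmetric positive definite with $\det\ell_{n_0^\delta}=1$ (since $n_0^\delta\in\mathbb S^2$, see \eqref{eq:StepLength} and \eqref{eq:3DMetric}), and its upper-left $2\times2$ block is $\tilde\ell_{n_0^\delta}$; matching determinants via the Schur complement / cofactor expansion forces the $(3,3)$ entry of any positive-definite completion with determinant one to equal $(\det\tilde\ell_{n_0^\delta})^{-1}$. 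So it suffices to check $|\partial_1 y^\delta\times\partial_2 y^\delta|^2 = \det\tilde\ell_{n_0^\delta}$, which is the standard identity $|\partial_1 y\times\partial_2 y|^2 = \det((\tilde\nabla y)^T\tilde\nabla y)$ applied to $y^\delta$. This pins down $b^\delta$ uniquely up to sign, and I would fix the sign so that $\det(\tilde\nabla y^\delta\,|\,b^\delta)=+1$ (equivalently $c_\delta>0$), giving \eqref{eq:bhIdents}. (Alternatively, one could just invoke Proposition \ref{bPropDef} / \eqref{eq:rewritb} for existence and smoothness at the midplane, but the explicit cross-product formula makes the derivative bounds transparent.)

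Finally, the $L^\infty$ bounds. The key point is the uniform lower bound $|\partial_1 y^\delta\times\partial_2 y^\delta|\ge c>0$ independent of $\delta$: indeed $|\partial_1 y^\delta\times\partial_2 y^\delta|^2=\det\tilde\ell_{n_0^\delta} = r^{-2/3}(1 + r^{2/3}|\tilde\nabla\psi_\delta|^2)\ge r^{-2/3}$ using $\det(r^{-1/3}I+v\otimes v)=r^{-2/3}(r^{-1/3}+|v|^2)$… more precisely, directly from the metric formula $\det\tilde\ell_{n_0^\delta}\ge (\text{smallest eigenvalue of }\tilde\ell_{n_0^\delta})\cdot\text{stuff}$, which is bounded below since $\tilde\ell_{n_0^\delta}\ge r^{-1/3}I_{2\times2}$. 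Combined with $\|\tilde\nabla\psi_\delta\|_{L^\infty} = \|(\tilde\nabla\varphi_\delta)(r^{-1/6}\cdot)\|_{L^\infty}<\lambda_r$ bounded (Proposition \ref{VarphihLiftProp}), this shows $\partial_1 y^\delta,\partial_2 y^\delta$ and their cross product are bounded in $L^\infty$ and bounded away from $0$, so $c_\delta$ and hence $b^\delta=c_\delta\,\partial_1 y^\delta\times\partial_2 y^\delta$ are bounded in $L^\infty$, uniformly in $\delta$. For the derivative bounds, differentiate $b^\delta$: $\tilde\nabla b^\delta$ involves one derivative of $\partial_i y^\delta$, i.e.\ second derivatives of $\psi_\delta$, which are $O(1)$ by Proposition \ref{VarphihLiftProp} ($\|\tilde\nabla^{(2)}\varphi_\delta\|_{L^\infty}=O(1)$) — together with the fact that $c_\delta$ is a smooth function of the $\partial_i y^\delta$ bounded away from its singular locus — so $\|\tilde\nabla b^\delta\|_{L^\infty}\le C$; a second differentiation brings in $\|\tilde\nabla^{(3)}\varphi_\delta\|_{L^\infty}=O(\delta^{-1})$, yielding $\|\tilde\nabla\tilde\nabla b^\delta\|_{L^\infty}\le C\delta^{-1}$. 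I expect the only mildly delicate step to be bookkeeping the chain rule through the scalar $c_\delta = |\partial_1 y^\delta\times\partial_2 y^\delta|^{-2}$ — treating it as $G(\partial_1 y^\delta,\partial_2 y^\delta)$ for a fixed smooth $G$ on the region where the cross product is bounded and bounded away from zero makes these estimates routine.
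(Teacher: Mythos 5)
Your construction of $b^\delta$ is not the right one, and the error is precisely at the point where lifted surfaces differ from isometric immersions. You take $b^\delta = c_\delta\,\partial_1 y^\delta\times\partial_2 y^\delta$, so that $\partial_i y^\delta\cdot b^\delta=0$ for $i=1,2$, and you assert that the off-diagonal $2\times1$ block of the target metric vanishes. It does not: the $(i,3)$ entries of $\ell_{n_0^\delta}=r^{-1/3}(I_{3\times3}+(r-1)\,n_0^\delta\otimes n_0^\delta)$ are $r^{-1/3}(r-1)(n_0^\delta\cdot e_i)(n_0^\delta\cdot e_3)$, and for a lifted surface the director is deliberately tilted out of plane, with $n_0^\delta\cdot e_i\propto\partial_i\varphi_\delta$ and $n_0^\delta\cdot e_3\propto(\lambda_r-|\tilde\nabla\varphi_\delta|^2)^{1/2}>0$, so these entries are nonzero wherever $\varphi_\delta$ is non-constant. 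Consequently a purely normal $b^\delta$ cannot satisfy $(\tilde\nabla y^\delta\,|\,b^\delta)^T(\tilde\nabla y^\delta\,|\,b^\delta)=\ell_{n_0^\delta}$. Your Schur-complement argument for the $(3,3)$ entry fails for the same reason: for a positive-definite completion with determinant one, the $(3,3)$ entry equals $(\det\tilde\ell_{n_0^\delta})^{-1}+v^T\tilde\ell_{n_0^\delta}^{-1}v$ where $v$ is the (nonzero) off-diagonal block, not $(\det\tilde\ell_{n_0^\delta})^{-1}$.

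The fix is exactly the route you mention parenthetically and then set aside: the paper applies Proposition \ref{bPropDef} pointwise, whose parameterization \eqref{eq:rewritb} adds the tangential components $\bar b_1\,\partial_1 y^\delta+\bar b_2\,\partial_2 y^\delta$ with $(\bar b_1,\bar b_2)^T=(\tilde\ell_{n_0^\delta})^{-1}I_{2\times3}\ell_{n_0^\delta}e_3$, which is precisely what is needed to produce the nonzero $(i,3)$ entries, while the normal component $\bar b_3\,\partial_1 y^\delta\times\partial_2 y^\delta$ with $\bar b_3=|\partial_1 y^\delta\times\partial_2 y^\delta|^{-2}$ enforces $\det(\tilde\nabla y^\delta\,|\,b^\delta)=1$. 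Your remaining analysis (the uniform lower bound on $|\partial_1 y^\delta\times\partial_2 y^\delta|$ via $\det\tilde\ell_{n_0^\delta}\geq r^{-2/3}\min\{1,r\}^{1/3}>0$, smoothness by composition, and the chain-rule bookkeeping giving $\|\tilde\nabla b^\delta\|_{L^\infty}\leq C$ and $\|\tilde\nabla\tilde\nabla b^\delta\|_{L^\infty}\leq C\delta^{-1}$ from the $\delta$-dependent estimates on $y^\delta$ and $n_0^\delta$ in Proposition \ref{SmoothLiftSurfProp}) carries over verbatim to the corrected $b^\delta$, since the extra coefficients $\bar b_1,\bar b_2$ are smooth bounded functions of $\tilde\nabla y^\delta$ and $n_0^\delta$.
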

\begin{proof}
Since by Proposition \ref{SmoothLiftSurfProp}, we have $(\tilde{\nabla} y^\delta)^T\tilde{\nabla} y^\delta = \tilde{\ell}_{n_0^\delta}$ everywhere on $\omega$, we apply Proposition \ref{bPropDef} pointwise everywhere on $\omega$.  Thus, we define the vector $b^\delta \colon \omega \rightarrow \mathbb{R}^3$ as in (\ref{eq:rewritb}) with $\tilde{\nabla} y^\delta$ replacing $\tilde{F}$ and $n_0^\delta$ replacing $n_0$ in these relations.  Hence, (\ref{eq:bhIdents}) holds on $\omega$.  Smoothness follows since $n_0^\delta$, $y^\delta$ and the parameterization (\ref{eq:rewritb}) are each themselves smooth.  The estimates on the derivatives of $b^\delta$ follow from the estimates on the derivative of $y^\delta$ and $n_0^\delta$ in Proposition \ref{SmoothLiftSurfProp} by explicit differentiation of the parameterization in (\ref{eq:rewritb}).  
\end{proof}

\subsection{Upper bound for nonisometric origami}\label{ssec:nonIsoConst}
We now apply Lemma \ref{IncompressibleLemma} to the case of nonisometric origami, and we use it to prove Theorem \ref{NonIsoTheorem}. 

Let $r > 0$.  We suppose $\omega$ and $n_0$ satisfy Definition \ref{def:nonIsoDef}(i), $y$ satisfies Definition \ref{def:nonIsoDef}(ii) and $N^h_0$ satisfies (\ref{eq:n0tomidn0}).  In addition, we assume there exists a $\delta$-smoothing $y^{\delta} \in C^{3}(\bar{\omega},\mathbb{R}^3)$ as in Definition \ref{defn:delta}.  
 
\begin{proof}[Proof of Theorem \ref{NonIsoTheorem}.]
In Proposition \ref{bhProp} below, we prove that the existence of a $\delta$-smoothing $y^{\delta}$ also guarantees the existence of a vector field $b^{\delta}$ that complements $y^{\delta}$.  (By this, we mean that it satisfies $(\tilde{\nabla} y^{\delta} |b^{\delta})^T(\tilde{\nabla} y^{\delta}|b^{\delta}) \approx \ell_{n_0}$, and it is sufficiently smooth so that we can apply Lemma \ref{IncompressibleLemma} with $\alpha = 1$.) Thus by Lemma \ref{IncompressibleLemma}, there exists a $m = m(\tilde{\nabla} y)\geq 1$ such that for $h >0$ sufficiently small there exists a $\xi^h \in C^{1}(\overline{\Omega}_h,\mathbb{R})$ and an extension $Y^h \in C^{1}(\overline{\Omega}_h,\mathbb{R}^3)$ with the properties: 
\begin{align}\label{eq:PropOrigami}
\begin{cases}Y^h := y^{\delta_h} + \xi^{h} b^{\delta_h} \;\; \text{ with } \;\;  \delta_h = mh,
\quad \det \nabla Y^h = 1 \;\; \text{ on } \Omega_h, \\
|\xi^{h} - x_3| \leq Ch^{-1}|x_3|^{2}, \;\; |\partial_3\xi^{h} - 1| \leq Ch^{-1}|x_3|, \;\; |\tilde{\nabla} \xi^{h}| \leq Ch^{-2} |x_3|^2 \;\; \text{ on } \Omega_h
\end{cases}
\end{align}
for $C = C(\tilde{\nabla} y) > 0$ independent of $h$. 

Now note that $Y^h(\tilde{x},0) = y^{\delta_h}(\tilde{x})$ for every $\tilde{x} \in \omega$ since  $\xi^h(\tilde{x},0) = 0$ following the first  estimate for $\xi^h$ in (\ref{eq:PropOrigami}).  Further since $y^{\delta_h}$ is a $\delta_h$-smoothing of $y$ (recall definition \ref{defn:delta}), we find $\|y^{\delta_h} - y\|_{W^{1,2}} = O(h)$. Thus, it remains only to show that the energy scales as $O(h^2)$ for this deformation. 

To this end, we first compute $\nabla Y^h$ explicitly.  We find that 
\begin{align*}
\nabla Y^h = (\tilde{\nabla} y^{\delta_h}|0) + \xi^h (\tilde{\nabla} b^{\delta_h}|0) +  (\partial_1 \xi^h b^{\delta_h}|\partial_2 \xi^h b^{\delta_h} |\partial_3 \xi^h b^{\delta_h}),
\end{align*}
and note that from Proposition \ref{bhProp}, $\tilde{\nabla} b^{\delta_h} = 0$ on the set $\omega \setminus \tilde{\omega}_{\delta_h}$ where $|\tilde{\omega}_{\delta_h}| = O(\delta_h) = O(h)$.  It follows that $\xi^h = x_3$ on this set.  Indeed, since $\det \nabla Y^h = 1$, we find that on $\omega \setminus \tilde{\omega}_{\delta_h}$,
\begin{align*}
1 = \det( (\tilde{\nabla} y^{\delta_h}|0) + (\partial_1 \xi^h b^{\delta_h}|\partial_2 \xi^h b^{\delta_h} |\partial_3 \xi^h b^{\delta_h})) = \partial_3 \xi^h \det(\tilde{\nabla} y^{\delta_h}|b^{\delta_h}).
\end{align*}
Also from Proposition \ref{bhProp}, $\det (\tilde{\nabla} y^{\delta_h}|b^{\delta_h}) = 1$.  Thus, $\partial_3 \xi^h = 1$ on $\omega \setminus \tilde{\omega}_{\delta_h}$.  Consequently, $\xi^h = x_3$ on this set since we have the condition $\xi^h(\tilde{x},0)=0$.  Thus,
\begin{align}\label{eq:stFreeGradyh}
\nabla Y^h = (\tilde{\nabla} y^{\delta_h} |b^{\delta_h}) \quad \text{ on } \omega \setminus \tilde{\omega}_{\delta_h}.
\end{align}
On the exceptional set $\tilde{\omega}_{\delta_h}$, we find that 
\begin{align}\label{eq:gradyhOriEst}
|\nabla Y^h| &= |(\tilde{\nabla} y^{\delta_h}|b^{\delta_h}) + (\partial_3 \xi^h - 1)b^{\delta_h} \otimes e_3 + x_3 (\tilde{\nabla} b^{\delta_h}|0) + (\xi^h - x_3)(\tilde{\nabla} b^{\delta_h}|0) + b^{\delta_h} \otimes \tilde{\nabla} \xi^h| \nonumber \\
&\leq |(\tilde{\nabla} y^{\delta_h}|b^{\delta_h})| + |\partial_3 \xi^h -1||b^{\delta_h}| + |x_3||\tilde{\nabla} b^{\delta_h}| + |\xi^h - x_3||\tilde{\nabla} b^{\delta_h}| + |b^{\delta_h}| |\tilde{\nabla} \xi^h|  \nonumber \\
&\leq C\left(1+ h^{-1}|x_3| + h^{-2}|x_3|^2 \right) \leq C
\end{align}
where each $C = C(\tilde{\nabla} y, m(\tilde{\nabla} y)) >0$ is independent of $h$.  These estimates follow from the estimates (\ref{eq:KeySmoothingHyp}), (\ref{eq:bh3}) in Proposition \ref{bhProp}, and (\ref{eq:PropOrigami}).  

Now, from Proposition \ref{bhProp}  $(\tilde{\nabla} y^{\delta_h}|b^{\delta_h})^T(\tilde{\nabla} y^{\delta_h}|b^{\delta_h}) = \ell_{n_0}$ on $\omega \setminus \tilde{\omega}_{\delta_h}$.  Thus, 
\begin{align}\label{eq:vanishEnergy}
W^e((\tilde{\nabla} y^{\delta_h}|b^{\delta_h}),n^h,n_0) = W_{nH}((\ell_{n^h}^{f})^{-1/2} (\tilde{\nabla} y^{\delta_h}|b^{\delta_h})(\ell^0_{n_0})^{1/2}) = 0 \quad \text{ on } \omega \setminus \tilde{\omega}_{\delta_h}
\end{align}  
following Proposition \ref{EquivalenceProp} and the identity (\ref{eq:WeWnH}) where 
\begin{align*}
n^h := \frac{(\tilde{\nabla} y^{\delta_h} |b^{\delta_h}) n_0}{|(\tilde{\nabla} y^{\delta_h} |b^{\delta_h}) n_0|}\;\; \text{ on } \omega.
\end{align*}
Since the energy density (\ref{eq:vanishEnergy}) vanishes, we deduce from Proposition \ref{LBProp} that 
\begin{align}\label{eq:rotationEquality}
(\ell_{n^h}^{f})^{-1/2} (\tilde{\nabla} y^{\delta_h}|b^{\delta_h})(\ell^0_{n_0})^{1/2} =: R^h \in SO(3) \quad \text{ on } \omega \setminus \tilde{\omega}_{\delta_h}.
\end{align}

We have yet to account for the non-ideal terms on this set as $N_0^h$ in (\ref{eq:n0tomidn0}) is the appropriate argument for the energy density, not $n_0$.  To do this, we exploit the observation in (\ref{eq:rotationEquality}).  Indeed, we set 
\begin{align*}
N^h := \frac{(\nabla Y^h) N^h_0}{|(\nabla Y^h)N^h_0|} \;\; \text{ on } \Omega_h,
\end{align*}
and observe 
\begin{align}\label{eq:estStepLength}
(\ell_{N^h_0}^0)^{1/2} = (\ell_{n_0}^0)^{1/2} + O(h), \quad (\ell_{N^h}^f)^{-1/2} = (\ell_{n^h}^f)^{-1/2} + O(h) \quad \text{ on } \omega \setminus \tilde{\omega}_{\delta_h}
\end{align}
following (\ref{eq:stFreeGradyh}) and the scaling of the non-ideal term in (\ref{eq:n0tomidn0}).  Hence on $\omega \setminus \tilde{\omega}_{\delta_h}$, we find
\begin{align}\label{eq:nonIsoEst1}
W^e(\nabla Y^h, N^h,N^h_0) &=W_{nH}((\ell_{n^h}^{f})^{-1/2} (\tilde{\nabla} y^{\delta_h}|b^{\delta_h})(\ell^0_{n_0})^{1/2} + O(h)) \nonumber \\
&=W_{nH}((R^h)^T((\ell_{n^h}^{f})^{-1/2} (\tilde{\nabla} y^{\delta_h}|b^{\delta_h})(\ell^0_{n_0})^{1/2}  + O(h))) \nonumber  \\
&= W_{nH}(I_{3\times3} + O(h)) = O(h^2).
\end{align} 
For the equalities, we used (\ref{eq:stFreeGradyh}), (\ref{eq:estStepLength}), the frame invariance of $W_{nH}$, and (\ref{eq:rotationEquality}).  For the inequality, we used the estimate in Proposition \ref{DumbProp}.  

Now, on the exceptional set $\tilde{\omega}_{\delta_h}$, we have 
\begin{align}\label{eq:nonIsoEst2}
W^e(\nabla Y^h ,N^h,N^h_0) \leq c(|\nabla Y^h|^2 +1) \leq C 
\end{align}
given the estimate in Proposition \ref{UpLowProp} and (\ref{eq:gradyhOriEst}).  Thus, on the set $\tilde{\omega}_{\delta_h}$, the energy is $|O(1)|$ compared to $h$ but this set is small for nonisometric origami, i.e., $|\tilde{\omega}_{\delta_h}| = O(\delta_h) = O(h)$ given $\delta_h = mh$ in (\ref{eq:PropOrigami}).  Hence, combining the estimates (\ref{eq:nonIsoEst1}) and (\ref{eq:nonIsoEst2}), we conclude
\begin{align*}
\mathcal{I}_{N^h_0}^h(Y^h) &= \int_{-h/2}^{h/2} \int_{\tilde{\omega}_{\delta_h}} W^e(\nabla Y^h, N^h,N^h_0) dx + \int_{-h/2}^{h/2} \int_{\omega \setminus \tilde{\omega}_{\delta_h}} W^e(\nabla Y^h,N^h, N^h_0) dx \\
&\leq Ch|\tilde{\omega}_{\delta_h}| + O(h^3) = O(h^2).
\end{align*}
This completes the proof. 
\end{proof}

\begin{proposition}\label{bhProp}
Let $r > 0$. Let $\omega$ and $n_0$ satisfy Definition \ref{def:nonIsoDef}(i) and $y$ satisfy \ref{def:nonIsoDef}(ii).  If there exists a $\delta$-smoothing $y^\delta$ of $y$ as in definition \ref{defn:delta}, then for $\delta > 0$ sufficiently small, there exists a $b^\delta \in C^2(\bar{\omega},\mathbb{R}^3)$ such that 
\begin{equation}
\begin{aligned}\label{eq:bh1}
&(\tilde{\nabla} y^\delta |b^\delta)^T (\tilde{\nabla} y^\delta |b^\delta) = \ell_{n_0} \quad \text{ and }\quad \tilde{\nabla} b^\delta = 0 \quad \text{ on } \omega \setminus \tilde{\omega}_\delta \quad \text{ with } |\tilde{\omega}_\delta| = O(\delta), \\
&\det (\tilde{\nabla} y^\delta|b^\delta) = 1 \quad \text{ everywhere on } \omega. 
\end{aligned}
\end{equation}
Moreover, $b^\delta$ satisfies
\begin{align}\label{eq:bh3}
\|b^\delta\|_{L^{\infty}} \leq C, \quad \|\tilde{\nabla} b^\delta\|_{L^{\infty}} \leq C\delta^{-1}, \quad \|\tilde{\nabla} \tilde{\nabla} b^\delta\|_{L^{\infty}} \leq C\delta^{-2}
\end{align}
everywhere on $\omega$ for some $C> 0$ which can depend on $y$ and $n_0$ but is independent of $\delta$.  
\end{proposition}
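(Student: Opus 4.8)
The plan is to build $b^\delta$ in three moves: first identify the value that $b^\delta$ is forced to take on the region where $y^\delta$ coincides with the origami map $y$; then extend that value to a globally $C^2$ field on $\omega$ via a transition-layer interpolation; and finally add a rank-one term in the surface-normal direction to restore $\det(\tilde{\nabla} y^\delta|b^\delta)=1$ everywhere.

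Let $\Gamma\subset\omega$ be the (finite, polygonal) interface network between the patches $\omega_\alpha$, and set $\tilde{\omega}_\delta:=\omega_\delta\cup N_\delta(\Gamma)$, where $N_\delta(\Gamma)$ is the $\delta$-neighborhood of $\Gamma$; since $\Gamma$ has finite total length, $|\tilde{\omega}_\delta|=O(\delta)$. (The enlargement of $\omega_\delta$ by $N_\delta(\Gamma)$ is needed because $y$ has a genuine gradient jump across $\Gamma$, so $\omega_\delta$ must contain \emph{some} neighborhood of $\Gamma$, but the hypotheses give no lower bound on its width, only $|\omega_\delta|\le C\delta$.) On $\omega\setminus\omega_\delta$ one has $y^\delta=y$, so $\tilde{\nabla} y^\delta\equiv\tilde{F}_\alpha$ is constant on each $\omega_\alpha\setminus\omega_\delta$ and satisfies $\tilde{F}_\alpha^T\tilde{F}_\alpha=\tilde{\ell}_{n_{0\alpha}}$ by Definition \ref{def:nonIsoDef}(ii). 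Proposition \ref{bPropDef}, applied with $\tilde{F}=\tilde{F}_\alpha$ and director $n_{0\alpha}$, then yields a constant vector $b_\alpha\in\mathbb{R}^3$ with $(\tilde{F}_\alpha|b_\alpha)^T(\tilde{F}_\alpha|b_\alpha)=\ell_{n_{0\alpha}}$ and $\det(\tilde{F}_\alpha|b_\alpha)=1$, and this $b_\alpha$ is the unique such vector (the first identity fixes the projection of $b$ onto the tangent plane $\mathrm{range}(\tilde{F}_\alpha)$ and fixes $|b|$, and the determinant condition selects the sign of the normal component). Hence on $\omega\setminus\tilde{\omega}_\delta$ the field $b^\delta$ must coincide with the locally constant field $\tilde{x}\mapsto b_\alpha$, which already supplies $(\tilde{\nabla} y^\delta|b^\delta)^T(\tilde{\nabla} y^\delta|b^\delta)=\ell_{n_0}$ and $\tilde{\nabla} b^\delta=0$ there.

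Next I would construct $\hat{b}\in C^2(\bar{\omega},\mathbb{R}^3)$ with $\hat{b}\equiv b_\alpha$ on each $\omega_\alpha\setminus\tilde{\omega}_\delta$ and $\|\hat{b}\|_{L^\infty}\le C$, $\|\tilde{\nabla}\hat{b}\|_{L^\infty}\le C\delta^{-1}$, $\|\tilde{\nabla}\tilde{\nabla}\hat{b}\|_{L^\infty}\le C\delta^{-2}$: along an open edge of $\Gamma$ separating $\omega_\alpha$ and $\omega_\beta$ this is a one-dimensional smoothing of the jump $b_\alpha\to b_\beta$ across a slab of width comparable to $\delta$ using a fixed smooth profile (so the $k$-th derivative costs $\delta^{-k}|b_\alpha-b_\beta|\le C\delta^{-k}$), and near a vertex of $\Gamma$ one blends the finitely many adjacent constants by a smooth partition of unity rescaled to the $\delta$-ball about the vertex and matches it to the edge interpolations. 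With $\hat{b}$ at hand, set $p:=\partial_1 y^\delta$, $q:=\partial_2 y^\delta$; by (\ref{eq:KeySmoothingHyp}) we have $|p\times q|\ge c>0$ on $\omega$ and $\det(p\,|\,q\,|\,p\times q)=|p\times q|^2\ge c^2$. I would then define $g:=(1-\det(\tilde{\nabla} y^\delta|\hat{b}))/|p\times q|^2$ and $b^\delta:=\hat{b}+g\,(p\times q)$, so that $\det(\tilde{\nabla} y^\delta|b^\delta)=\det(\tilde{\nabla} y^\delta|\hat{b})+g|p\times q|^2=1$ on all of $\omega$. On $\omega\setminus\tilde{\omega}_\delta$ one has $\det(\tilde{F}_\alpha|b_\alpha)=1$, hence $g\equiv 0$ and $b^\delta=b_\alpha$ there, which gives (\ref{eq:bh1}); and $b^\delta\in C^2$ since $p,q\in C^2$ (as $y^\delta\in C^3$), $|p\times q|^2\in C^2$ is bounded below, and $\hat{b}\in C^2$. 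Finally, (\ref{eq:bh3}) propagates by the Leibniz rule: $p,q$ and $\hat{b}$ each obey the triple of bounds $(C,C\delta^{-1},C\delta^{-2})$ for their $0$th, $1$st and $2$nd derivatives (by Definition \ref{defn:delta} and the construction of $\hat{b}$ respectively), and this class is closed under products and under division by $|p\times q|^2\ge c^2$, so $p\times q$, $g$ and hence $b^\delta=\hat{b}+g(p\times q)$ inherit the same bounds.

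I expect the determinant correction and the derivative bookkeeping to be routine — the former being automatic once the non-degeneracy $|p\times q|\ge c$ from (\ref{eq:KeySmoothingHyp}) is invoked, and Step 1 being forced. The main obstacle should be the construction of the transition field $\hat{b}$: producing a genuinely $C^2$ interpolation whose first and second derivatives blow up only like $\delta^{-1}$ and $\delta^{-2}$, uniformly in $\delta$, requires some care in gluing the one-dimensional edge smoothings to the partition-of-unity blends at the vertices of the polygonal mesh.
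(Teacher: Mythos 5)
Your proposal is correct, but it takes a genuinely different route from the paper. The paper's proof defines $b^\delta := b(\tilde{\nabla} y^\delta, n_0^\delta)$ by composing the explicit algebraic parameterization (\ref{eq:rewritb}) of Proposition \ref{bPropDef} with a separately constructed smoothing $n_0^\delta$ of the piecewise constant director (Proposition \ref{ndProp}, built by stereographic projection plus mollification). This buys two things at once: the coefficient $\bar b_3 = |\tilde F\tilde e_1\times\tilde F\tilde e_2|^{-2}$ in (\ref{eq:rewritb}) makes $\det(\tilde{\nabla} y^\delta|b^\delta)=1$ hold identically, with no correction step, and the $C^2$ regularity and the $(C, C\delta^{-1}, C\delta^{-2})$ bounds drop out of the chain rule applied to a single smooth formula (smooth precisely because $|\partial_1 y^\delta\times\partial_2 y^\delta|\ge c$). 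You instead interpolate the forced constant vectors $b_\alpha$ directly across a $\delta$-collar of the interface network and then restore the determinant by the normal rank-one correction $g\,(p\times q)$ with $g=(1-\det(\tilde{\nabla} y^\delta|\hat b))/|p\times q|^2$; this avoids smoothing the director field altogether (you never need Proposition \ref{ndProp}), and your multilinearity identity for the determinant is exact, but it shifts the burden onto hand-building the transition field $\hat b$ — which, as you note, is the delicate part. That step can be made essentially trivial, in the spirit of the paper's own constructions: extend the piecewise constant field $\tilde x\mapsto b_\alpha$ to $\mathbb{R}^2$ and convolve with a standard mollifier supported at scale $\delta/2$; this is $C^\infty$, equals $b_\alpha$ outside a $\delta$-neighborhood of the interfaces, and satisfies the required derivative bounds, so the edge/vertex partition-of-unity bookkeeping you flag as the main obstacle can be bypassed. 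Your observation that $\omega_\delta$ must be enlarged to contain a uniform $\delta$-neighborhood of the interface network (since the hypotheses bound only its measure, not its width) is a legitimate point that the paper's route also implicitly handles, since $\tilde\omega_\delta$ there is the union of the set where $y^\delta\ne y$ with the set where $n_0^\delta\ne n_0$, the latter being a genuine $\delta/2$-collar.
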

\begin{proof}
From Proposition \ref{bPropDef}, if $\tilde{F} \in \mathbb{R}^{3\times2}$ and $n_0 \in \mathbb{S}^2$ such that $\tilde{F}^T \tilde{F} = \tilde{\ell}_{n_0}$, then there exists a $b \equiv b(\tilde{F},n_0) \in \mathbb{R}^3$ such that $(\tilde{F}|b)^T (\tilde{F}|b) = \ell_{n_0}$ and $\det (\tilde{F}|b) = 1$.  The parameterization is explicit, i.e., (\ref{eq:rewritb}).  Hence, we set 
\begin{align}\label{eq:bParam}
b^{\delta} := b(\tilde{\nabla} y^{\delta}, n_0^{\delta}) \quad \text{ on } \omega
\end{align}
for the $\delta$-smoothing $y^{\delta}$ and the director $n_0^{\delta} \in C^{\infty}(\bar{\omega},\mathbb{S}^2)$ given below in Proposition \ref{ndProp}.  The parameterization $b(\tilde{F},n_0)$ is smooth in its arguments when $|\tilde{F}\tilde{e_1} \times \tilde{F} \tilde{e}_2|$ is bounded away from zero.  Consequently, (\ref{eq:bh3}) holds by the chain rule given the properties of the $\delta$-smoothing $y^{\delta}$ and that $n_0^{\delta}$ satisfies (\ref{eq:ndProps}).  Further $\det(\tilde{\nabla} y^{\delta} |b^{\delta}) = 1$ everywhere on $\omega$ as the parameterization ensures this (even when the metric constraint is not satisfied). 

It remains to verify the first two properties in (\ref{eq:bh1}).  To this end, note for $\delta$ sufficiently small we have that $y^{\delta} = y$ except on a set of measure $O(\delta)$ (by hypothesis of a $\delta$-smoothing) and that $n_0^{\delta} = n_0$ except on (perhaps a different) set of measure $O(\delta)$ (Proposition \ref{ndProp} below).   Therefore, we conclude that there is a set $\tilde{\omega}_{\delta}$ of measure $O(\delta)$ such that $y^{\delta} = y$ and $n_0^{\delta} = n_0$ on $\omega \setminus \tilde{\omega}_{\delta}$.  Moreover, $\tilde{\nabla} y = const.$, $n_0 = const$. and $(\tilde{\nabla} y)^T \tilde{\nabla} y = \tilde{\ell}_{n_0}$ in any connected region in $\omega \setminus \omega_{\delta}$.  Hence, we conclude the first two properties in (\ref{eq:bh1}) given (\ref{eq:bParam}) for $b$ as in Proposition \ref{bPropDef}.
\end{proof}

To construct $b^{\delta}$, we utilized a smoothing approximation for the piecewise constant direction design $n_0 \colon \omega \rightarrow \mathbb{S}^2$ akin to a construction of DeSimone (Assertion 1 \cite{de_arma_93}).  Precisely:
\begin{proposition}\label{ndProp}
Let $r > 0$. Let $\omega$ and $n_0$ satisfy Definition \ref{def:nonIsoDef}(i).  For any $\delta > 0$ sufficiently small, there exists an $n_0^{\delta} \in C^{\infty}(\bar{\omega},\mathbb{S}^2)$ which satisfies 
\begin{equation}
\begin{aligned}\label{eq:ndProps}
&n_0^{\delta} = n_0 \quad \text{ on } \omega \setminus \omega_{\delta} \quad \text{ with } \quad |\omega_{\delta}| = O(\delta), \\
&\|\tilde{\nabla} n_{0}^{\delta}\|_{L^{\infty}} \leq C \delta^{-1} \quad \text{ and } \quad \|\tilde{\nabla} \tilde{\nabla} n_0^{\delta} \|_{L^{\infty}} \leq C \delta^{-2}.
\end{aligned}
\end{equation}
Here $C \equiv C(n_0) >0 $ is independent of $\delta$.  
\end{proposition}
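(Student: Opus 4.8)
The plan is to obtain $n_0^\delta$ by interpolating the piecewise-constant field $n_0$ across a $C\delta$-neighborhood of the \emph{skeleton} $\Sigma := \bigcup_{\alpha\neq\beta}\big(\bar\omega_\alpha\cap\bar\omega_\beta\big)$ (the union of the interface segments), leaving $n_0$ untouched elsewhere. Since $\Sigma$ is a finite union of line segments, its $C\delta$-neighborhood $\omega_\delta$ has $|\omega_\delta|\le C\delta$, which will give the claimed measure bound. Away from the finitely many \emph{vertices} (endpoints and crossing points of the segments of $\Sigma$) the construction is one-dimensional: near the relative interior of an interface $\Sigma_{\alpha\beta}$ between $\omega_\alpha$ and $\omega_\beta$, let $s(\tilde x)$ be the signed distance to the line containing $\Sigma_{\alpha\beta}$ and set $n_0^\delta := \gamma_{\alpha\beta}(\chi(s/\delta))$, where $\gamma_{\alpha\beta}\colon[0,1]\to\mathbb{S}^2$ is the short geodesic arc from $n_{0\alpha}$ to $n_{0\beta}$ (well defined, since $\tilde n_{0\alpha}\neq\pm\tilde n_{0\beta}$ forces $n_{0\alpha}\neq -n_{0\beta}$) and $\chi\in C^\infty(\mathbb{R})$ is a fixed profile with $\chi\equiv 0$ on $(-\infty,-\tfrac12]$, $\chi\equiv 1$ on $[\tfrac12,\infty)$, all derivatives vanishing at $\pm\tfrac12$. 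This is smooth, coincides with $n_{0\alpha}$ (to infinite order) for $s\le -\delta/2$ and with $n_{0\beta}$ for $s\ge \delta/2$, and by the chain rule satisfies $\|\tilde\nabla n_0^\delta\|_{L^\infty}\le C\delta^{-1}$, $\|\tilde\nabla\tilde\nabla n_0^\delta\|_{L^\infty}\le C\delta^{-2}$ with $C$ depending only on the finitely many angles $\angle(n_{0\alpha},n_{0\beta})$. (Equivalently, slightly more uniformly, one may take $n_0^\delta := (\eta_\delta * n_0)/|\eta_\delta*n_0|$ here, which is legitimate because near an interface the mollification is a convex combination of only $n_{0\alpha}$ and $n_{0\beta}$, hence of modulus $\ge c(n_0)>0$.) For $\delta$ small these edge regions are pairwise disjoint and contained in $\omega_\alpha\cup\omega_\beta\cup\Sigma_{\alpha\beta}$, so all overlaps are quarantined inside small balls around vertices.

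The only genuine difficulty is at the vertices, where $n_0$ jumps among $k\ge 3$ values whose convex hull may contain $0$, so naive mollify-and-project fails. Fix a vertex $v$, and a radius $\rho=C\delta$ small enough (for all small $\delta$, since the partition is fixed) that $B_\rho(v)$ is disjoint from the other vertex balls, meets $\partial\omega$ only along interfaces through $v$ — or, to sidestep this, first extend $n_0$ to a slightly larger polygonal domain — and that the interfaces through $v$ look like rays from $v$. On $\partial B_\rho(v)$ and $\partial B_{\rho/2}(v)$ the field $n_0^{\delta,\mathrm{out}}$ already built outside the vertex balls is smooth and has image contained in the union of the finitely many short geodesic arcs $\gamma_{\alpha\beta}([0,1])$ (for interfaces meeting $v$) together with the points $n_{0\alpha}$ — a compact set of two-dimensional measure zero in $\mathbb{S}^2$. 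Hence we may pick $p_v\in\mathbb{S}^2$ with $\dist(p_v,\mathrm{image})\ge c_0(n_0)>0$ and use the stereographic projection $\pi_{p_v}\colon\mathbb{S}^2\setminus\{p_v\}\to\mathbb{R}^2$, a diffeomorphism with all derivatives bounded on the region $\dist(\cdot,p_v)\ge c_0/2$. The point of this move is that $\pi_{p_v}$ turns the obstruction (a possibly vanishing convex combination on $\mathbb{S}^2$) into a harmless interpolation problem in the linear space $\mathbb{R}^2$: letting $G^{\mathrm{out}}:=\pi_{p_v}\circ n_0^{\delta,\mathrm{out}}$ on the collar $B_\rho(v)\setminus B_{\rho/2}(v)$ (where it is $C^\infty$, bounded by $C$, with tangential derivatives along circles of radius $\sim\delta$ of size $O(1)$, hence full spatial derivatives $O(\delta^{-1}),O(\delta^{-2})$), let $H$ be the harmonic extension of $G^{\mathrm{out}}|_{\partial B_{\rho/2}(v)}$ to $B_{\rho/2}(v)$, fix $\zeta_v\in C_0^\infty(B_{\rho/2}(v))$ with $\zeta_v\equiv1$ on $B_{\rho/4}(v)$, and set on $B_{\rho/2}(v)$ the $\mathbb{R}^2$-valued function $G:=\zeta_v H+(1-\zeta_v)G^{\mathrm{out}}$, then define $n_0^\delta:=\pi_{p_v}^{-1}\circ G$ there. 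Since $G,H,\zeta_v$ are smooth, $n_0^\delta$ matches $n_0^{\delta,\mathrm{out}}$ to infinite order across $\partial B_{\rho/2}(v)$ (where $\zeta_v\equiv 0$), is genuinely $\mathbb{S}^2$-valued, and — because $H,G^{\mathrm{out}},\zeta_v$ are $O(1)$ in $C^0$, $O(\delta^{-1})$ in $C^1$ and $O(\delta^{-2})$ in $C^2$ on $B_\rho(v)$ (standard harmonic estimates rescaled to the unit disk, plus the bounds on $G^{\mathrm{out}}$) and $\pi_{p_v}^{-1}$ has bounded derivatives on bounded sets — it inherits $\|\tilde\nabla n_0^\delta\|_{L^\infty}\le C\delta^{-1}$, $\|\tilde\nabla\tilde\nabla n_0^\delta\|_{L^\infty}\le C\delta^{-2}$.

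Assembling the pieces, $n_0^\delta\in C^\infty(\bar\omega,\mathbb{S}^2)$, it equals $n_0$ on $\omega\setminus\omega_\delta$ with $\omega_\delta$ the $C\delta$-neighborhood of $\Sigma$ (so $|\omega_\delta|\le C\delta$), and both gradient bounds hold with $C\equiv C(n_0)$, absorbing the dependence on the minimal inter-director angle, on $c_0(n_0)$, and on the fixed profiles $\chi,\eta_\delta,\zeta_v$. I expect the vertex step to be the main obstacle: one must produce a single smooth $\mathbb{S}^2$-valued interpolation of $k\ge 3$ directors on a $\delta$-ball that simultaneously (i) respects the prescribed boundary values coming from the edge construction, (ii) keeps the $\delta^{-1},\delta^{-2}$ derivative scaling, and (iii) glues smoothly onto the outside. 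All three are handled at once by pushing the problem through a stereographic chart and filling in a planar disk, the essential structural input being that $\mathbb{S}^2$ minus a point missed by the boundary data (existence of such a point using $\tilde n_{0\alpha}\neq\pm\tilde n_{0\beta}$ and finiteness) is contractible — the quantitative incarnation of DeSimone's Assertion~1 \cite{de_arma_93}.
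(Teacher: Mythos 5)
Your construction is correct, but it is considerably more elaborate than the paper's. The paper exploits the fact that the range of $n_0$ is a \emph{finite} subset of $\mathbb{S}^2$, so there is a single point $\nu\in\mathbb{S}^2$ with $B_\epsilon(\nu)\cap\mathrm{range}(n_0)=\emptyset$; it then sets $n_0^\delta=\Pi_\nu^{-1}\circ\big(\eta_\delta\ast(\Pi_\nu\circ n_0)\big)$ for the stereographic projection $\Pi_\nu$ from $\nu$. Since $\Pi_\nu\circ n_0$ is bounded (its mollification stays in the convex hull of finitely many points of $\mathbb{R}^2$) and $\Pi_\nu^{-1}$ is smooth with bounded derivatives on compact sets, the chain rule gives all the estimates at once, and edges and vertices are handled uniformly with no case distinction. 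You instead localize: geodesic (or normalized-mollification) interpolation across edges, and a separate vertex repair via a local stereographic chart, harmonic extension, and a cutoff. The structural input is the same in both arguments --- push the problem into a chart of $\mathbb{S}^2$ minus a point missed by the data, where interpolation is linear (DeSimone's trick) --- but you only use it locally, which forces you to (i) verify that the edge pieces glue, (ii) justify up-to-the-boundary derivative bounds for the rescaled harmonic extension (this needs a Schauder-type estimate with boundary data controlled in $C^{2,\alpha}$ after rescaling by $\delta$, which your edge construction does provide, though you should say so), and (iii) choose the radii $\rho/4>\!C'\delta$ carefully so that $G^{\mathrm{out}}$ is actually single-valued on the gluing annulus. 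None of these is a gap, but all of them evaporate if you observe that a single global chart suffices. The payoff of your version is that it would adapt to designs whose director values do not globally omit a point of $\mathbb{S}^2$, but for piecewise-constant $n_0$ with finitely many pieces that generality is vacuous.
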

\begin{proof}
Given that $\omega = \cup_{\alpha = 1,\ldots, N} \omega_{\alpha}$ for connected polygonal regions $\omega_{\alpha}$ and $n_0 \colon \omega \rightarrow \mathbb{S}^2$ satisfies  $n_0 = n_{0\alpha}$ on each $\omega_{\alpha}$, there exists a $\nu \in \mathbb{S}^2$ such that $B_{\epsilon}(\nu) \cap  \text{range}\{ n_0\} = \emptyset$ for some $\epsilon > 0$.  We let $\Pi_{\nu} \colon \mathbb{S}^2\setminus\{\nu\} \rightarrow \mathbb{R}^2$ denote the stereographic projection with projection point $\nu$.  This map is bijective (i.e., there exists a $\Pi_{\nu}^{-1} \colon \mathbb{R}^2 \rightarrow \mathbb{S}^2 \setminus \{\nu\}$).  Thus, we extend $n_0$ to all of $\mathbb{R}^2$ by setting $n_0 = n_{01}$ for $\mathbb{R}^2 \setminus \omega$ (we do not relabel) and we define 
\begin{align}\label{eq:n0dDef}
n_0^{\delta}(\tilde{x}) = (\Pi_{\nu}^{-1} \circ \;(\eta_{\delta} \ast (\Pi_{\nu} \circ \; n_0)))(\tilde{x}), \quad \tilde{x} \in \omega.
\end{align} 
Here $\eta_{\delta} \in C^{\infty}(\mathbb{R}^2,\mathbb{R})$ is the standard mollifier on $\mathbb{R}^2$ supported on a ball of radius $\delta/2$.

We claim that this map has all the properties stated in the proposition.  Indeed, $\Pi_{\nu} \circ n_0$ maps to a compact subset of $\mathbb{R}^2$ given that $\nu$ is at least $\epsilon$ away from any $n_{0\alpha}$.  Thus, $\|\Pi_{\nu} \circ n_0\|_{L^{\infty}} \leq C$ for $C \equiv C(n_0)>0$.  Consequently, $\eta_{\delta} \ast (\Pi_{\nu} \circ \; n_0) \in C^{\infty}(\mathbb{R}^2,\mathbb{R}^2)$ with 
\begin{align*}
\| \tilde{\nabla} (\eta_{\delta} \ast (\Pi_{\nu} \circ \; n_0))\|_{L^{\infty}}  \leq C \delta^{-1} \quad \text{ and }  \quad \| \tilde{\nabla} \tilde{\nabla} (\eta_{\delta} \ast (\Pi_{\nu} \circ \; n_0))\|_{L^{\infty}}  \leq C \delta^{-2}
\end{align*}
given that $\eta_{\delta}$ is the mollifier as above.  Here $C \equiv C(n_0) > 0$ is independent of $\delta> 0$. Now $\Pi_{\nu}^{-1}$ is smooth.  Thus, $n_0^{\delta} \in \mathbb{C}^{\infty}(\bar{\omega},\mathbb{S}^2)$ and by the chain rule, we deduce the estimates in (\ref{eq:ndProps}).

For the equality condition in (\ref{eq:ndProps}), we set $\omega_{\delta} = \{ \tilde{x} \in \omega \colon \dist(\tilde{x} , \partial \omega_{\alpha}) \leq \delta/2 \;\; \text{ for some } \alpha \in \{ 1,\ldots,N\} ) \}$.  Clearly this set has measure $O(\delta)$ for $\delta >0$ sufficiently small.  Moreover, we observe that 
\begin{align*}
\eta_{\delta} \ast (\Pi_{\nu} \circ \; n_0) = \II_{\nu} \circ \; n_0 \quad \text{ on } \omega \setminus \omega_{\delta}
\end{align*}
since $n_0 = const.$ on $B_{\delta/2}(\tilde{x})$ for any $\tilde{x} \in \omega \setminus \omega_{\delta}$.  Given this and the definition of $n_0^{\delta}$ in (\ref{eq:n0dDef}), we deduce the equality in (\ref{eq:ndProps}).  This completes the proof. 
\end{proof}

\section{Optimal energy scaling of nonisometric origami}\label{sec:LBOrigami}
In this section, we prove Theorem \ref{LBTheoremTrue}.  Specifically, we show that for the two-dimensional analog to the entropic energy given by $\tilde{\mathcal{I}}^h_{n_0}$ in (\ref{eq:2DEnergy}), a piecewise constant director design in the sense of Definition \ref{def:nonIsoDef}(i) necessarily implies an energy of at least $O(h^2)$ upon actuation.  In section \ref{ssec:unitCell}, we show that this estimate can be reduced to a canonical problem localized at a single interface.  Further, we show that a lower bound for this canonical problem is described by a one-dimensional Modica-Mortola type functional \cite{m_arma_87,mm_it_77}.  In Section \ref{ssec:mmSec}, we present a self-contained argument which shows that the minimum of our Modica-Mortola type functional is necessarily bounded away from zero for $h >0$ sufficiently small.  This is the key result we use to prove Theorem \ref{LBTheoremTrue}.  

\subsection{The canonical problem}\label{ssec:unitCell}
\begin{figure}
\centering
\begin{subfigure}{.5\textwidth}
  \centering
  \includegraphics[height = 2.3in]{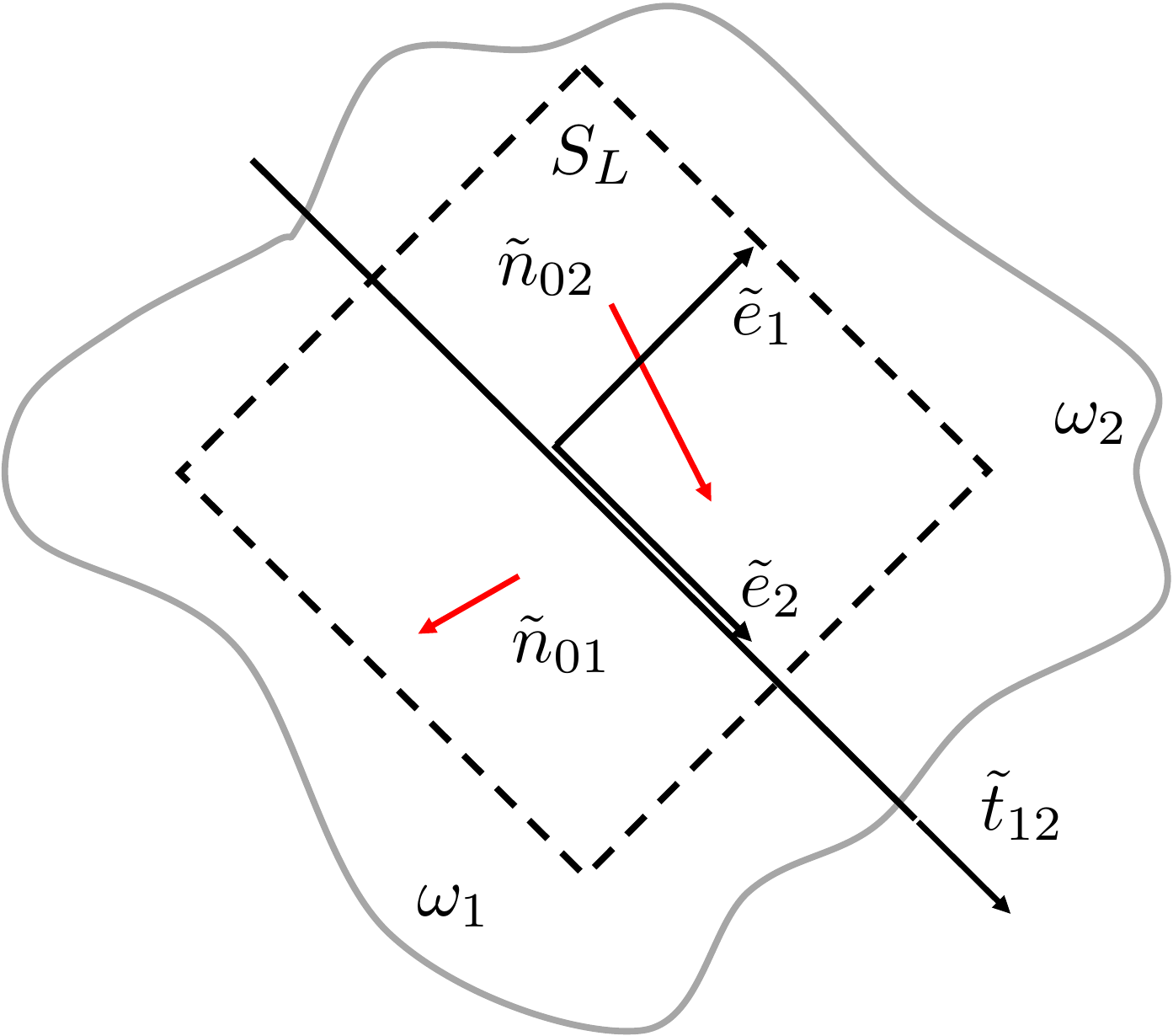}
  \caption{Case 1}
  \label{fig:C1LB}
\end{subfigure}%
\begin{subfigure}{.5\textwidth}
  \centering
  \includegraphics[width = 2.6in]{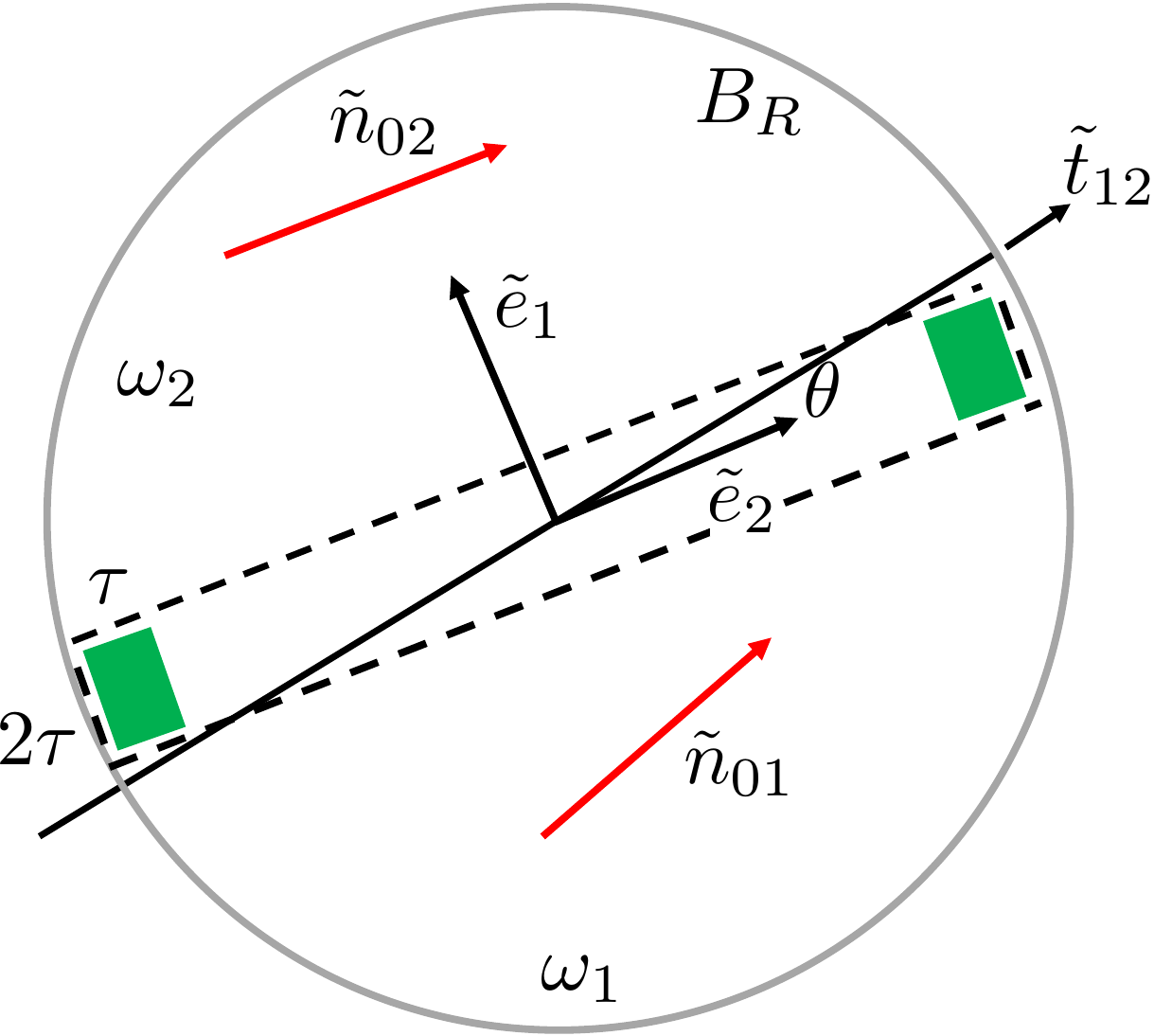}
  \caption{Case 2}
  \label{fig:C2LB}
\end{subfigure}
\caption{Schematic for canonical problem of Theorem \ref{LBTheoremTrue}}
\label{fig:LBSchematic}
\end{figure}

We assume $\omega$ and $n_0 \colon \omega \rightarrow \mathbb{S}^2$ satisfy Definition \ref{def:nonIsoDef}(i).  Then there exists a straight interface $\tilde{t}_{\alpha \beta}\in \mathbb{S}^1$ adjoining two regions $\omega_{\alpha}$ and $\omega_{\beta}$ such that $\tilde{n}_{0\alpha} \neq \pm \tilde{n}_{0\beta}$.  We let $\tilde{t}_{\alpha \beta}^{\perp} \in \mathbb{S}^1$ be the right-handed vector normal to $\tilde{t}_{\alpha \beta}$.  Focusing on this single interface, we have two cases to consider:
\begin{enumerate}
\item {\it Case 1.}  $(\tilde{n}_{0\alpha} \cdot \tilde{t}_{\alpha \beta})^2 \neq (\tilde{n}_{0\beta} \cdot \tilde{t}_{\alpha \beta})^2$ or $ (\tilde{n}_{0\alpha} \cdot \tilde{t}^{\perp}_{\alpha \beta})^2 \neq   (\tilde{n}_{0\beta} \cdot \tilde{t}^{\perp}_{\alpha \beta})^2$;
\item {\it Case 2.} $(\tilde{n}_{0\alpha} \cdot \tilde{t}_{\alpha \beta})^2 = (\tilde{n}_{0\beta} \cdot \tilde{t}_{\alpha \beta})^2$ and $ (\tilde{n}_{0\alpha} \cdot \tilde{t}^{\perp}_{\alpha \beta})^2 =  (\tilde{n}_{0\beta} \cdot \tilde{t}^{\perp}_{\alpha \beta})^2$.
\end{enumerate}

{\it Definition for Case 1:} In this case, we relabel so that $\alpha = 1$ and $\beta = 2$.  We fix a global frame so that $\tilde{e}_2$ lies on the $\tilde{t}_{12}$ interface and $\tilde{e}_1$ points in the direction of $\omega_2$.  We let the origin of this frame lie on the $\tilde{t}_{12}$ interface such that for some $L> 0$ there exists a $S_L := (-L,L)^2 \subset \omega_1 \cup \omega_2$.  A schematic of this description is provided in Figure \ref{fig:C1LB}.  We make the following observation in this case: 
\begin{proposition}\label{PropLBOri1}
If $\omega$ and $n_0$ have an interface as in the definition of Case 1 (see Figure \ref{fig:C1LB}), then for any $y \in W^{2,2}(\omega,\mathbb{R}^3)$, 
\begin{align*}
\tilde{\mathcal{I}}_{n_0}^h(y) \geq 2L^2 h M_1^h
\end{align*}
where 
\begin{align*}
M_1^h := &\inf \left \{ \int_{-1}^1 \left( (u^2 - \sigma(t))^2 + \frac{h^2}{L^2}(u')^2 \right ) dt  \colon u \in W^{1,2}((-1,1),\mathbb{R}) \;\; \text{ with} \;\; u \geq 0 \text{ a.e.} \right\} \\
&\alpha(t) = \begin{cases}
\alpha_1 &\text{ if } t < 0\\
\alpha_2 &\text{ if } t > 0.
\end{cases}
\end{align*}
Here $\alpha_1, \alpha_2 \geq 0$ and $\alpha_1 \neq \alpha_2$.
\end{proposition}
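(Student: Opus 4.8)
The plan is to restrict the modified energy to the square $S_L$ straddling the interface and then, by slicing in the direction normal to the interface, reduce the resulting two–dimensional bound to the one–dimensional functional defining $M_1^h$. Since the integrand of $\tilde{\mathcal{I}}_{n_0}^h$ is nonnegative and $S_L\subset\omega_1\cup\omega_2\subset\omega$,
\[
\tilde{\mathcal{I}}_{n_0}^h(y)\ \ge\ h\int_{S_L}\big(\,|(\tilde{\nabla}y)^T\tilde{\nabla}y-\tilde{\ell}_{n_0}|^2+h^2|\tilde{\nabla}\tilde{\nabla}y|^2\,\big)\,d\tilde{x}.
\]
In the chosen frame the interface is $\{x_1=0\}$; taking $L$ small and $S_L$ centered at an interior point of the interface segment, we have $S_L\cap\{x_1<0\}\subset\omega_1$ and $S_L\cap\{x_1>0\}\subset\omega_2$, so on $S_L$ both $n_0$ and $\tilde{\ell}_{n_0}$ depend on $x_1$ alone and are piecewise constant with a single jump at $x_1=0$. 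Because $(\tilde{\ell}_{n_0})_{11}$ is a strictly monotone (hence, since $r\neq 1$, injective) function of $(\tilde{n}_0\cdot\tilde{t}^{\perp}_{12})^2$ and $(\tilde{\ell}_{n_0})_{22}$ of $(\tilde{n}_0\cdot\tilde{t}_{12})^2$ in this frame, the Case 1 hypothesis furnishes an index $i\in\{1,2\}$ with $\alpha_1:=(\tilde{\ell}_{n_{01}})_{ii}\neq(\tilde{\ell}_{n_{02}})_{ii}=:\alpha_2$ (and $\alpha_1,\alpha_2>0$). Write $\sigma$ for the piecewise-constant function equal to $\alpha_1$ on $\{x_1<0\}$ and $\alpha_2$ on $\{x_1>0\}$.

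Dropping the off-diagonal entries and the other diagonal entry gives the pointwise bounds $|(\tilde{\nabla}y)^T\tilde{\nabla}y-\tilde{\ell}_{n_0}|^2\ge(|\partial_i y|^2-\sigma(x_1))^2$ and $|\tilde{\nabla}\tilde{\nabla}y|^2\ge|\partial_1\partial_i y|^2$. Now I would invoke Fubini: for $y\in W^{2,2}(\omega,\mathbb{R}^3)$ and a.e.\ $x_2\in(-L,L)$, the slice $x_1\mapsto\partial_i y(x_1,x_2)$ lies in $W^{1,2}((-L,L),\mathbb{R}^3)$, so $u(x_1):=|\partial_i y(x_1,x_2)|$ lies in $W^{1,2}((-L,L),\mathbb{R})$, is nonnegative, and satisfies $|u'(x_1)|\le|\partial_1\partial_i y(x_1,x_2)|$ a.e.\ by the standard estimate $|\nabla|v||\le|\nabla v|$.

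Integrating the two pointwise bounds over the slice and rescaling $x_1=Lt$ with $w(t):=u(Lt)\in W^{1,2}((-1,1),\mathbb{R})$, $w\ge0$, yields for a.e.\ $x_2$
\[
\int_{-L}^{L}\big(|(\tilde{\nabla}y)^T\tilde{\nabla}y-\tilde{\ell}_{n_0}|^2+h^2|\tilde{\nabla}\tilde{\nabla}y|^2\big)(x_1,x_2)\,dx_1\ \ge\ L\int_{-1}^{1}\Big((w^2-\alpha(t))^2+\frac{h^2}{L^2}(w')^2\Big)dt\ \ge\ L\,M_1^h,
\]
where $\alpha(t)=\sigma(Lt)$, and the last step holds since $w$ is admissible for the infimum defining $M_1^h$ (with $\alpha_1\neq\alpha_2$ being exactly the Case 1 input). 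Integrating in $x_2$ over $(-L,L)$ gives $\int_{S_L}(\cdots)\,d\tilde{x}\ge 2L^2M_1^h$, which combined with the first display yields $\tilde{\mathcal{I}}_{n_0}^h(y)\ge 2L^2hM_1^h$.

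The argument is elementary throughout; the points requiring care are (i) confirming that inside $S_L$ the design reduces to a single normal jump at $\{x_1=0\}$, with no other portion of $\partial\omega_1\cup\partial\omega_2$ intruding — this is where smallness of $L$ and the placement of the square are used — and (ii) the measurability/Fubini bookkeeping that lets one pass to a.e.\ one-dimensional slices and apply the chain-rule inequality for $|\partial_i y|$ on those slices. I expect (ii) to be the most delicate to write out carefully, although it is standard; the rescaling constants are routine.
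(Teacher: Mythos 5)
Your proposal is correct and follows essentially the same route as the paper: restrict to $S_L$, keep only the diagonal entry $|\partial_i y|^2-\sigma(x_1)$ and the second derivative $\partial_1\partial_i y$, slice in $x_1$, pass to $u=|\partial_i y|\ge 0$ via $(|w|')^2\le|w'|^2$, and rescale to the unit interval to recognize $M_1^h$. The paper compresses the Fubini/slicing step into ``replace $\partial_i y$ by a function of $x_1$ alone and take the infimum,'' whereas you spell out the a.e.-slice bookkeeping; the substance is identical.
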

\begin{proof}
Let $y \in W^{2,2}(\omega,\mathbb{R}^3)$.  Since $S_{L} \subset \omega_1 \cup \omega_2 \subset \omega$ and the integrand in (\ref{eq:2DEnergy}) is non-negative, we have
\begin{align}\label{eq:FirstLBIneq}
\tilde{\mathcal{I}}_{n_0}^h(y) &\geq h\int_{S_L} \left(|(\tilde{\nabla} y)^T \tilde{\nabla} y - \tilde{\ell}_{n_0}|^2 + h^2|\tilde{\nabla} \tilde{\nabla} y|^2 \right) d\tilde{x}  \nonumber \\
&\geq h\int_{S_L} \left(|\tilde{e}_i \cdot ((\tilde{\nabla} y)^T \tilde{\nabla} y - \tilde{\ell}_{n_0})\tilde{e}_i|^2 + h^2|\partial_1 \partial_i y|^2 \right) d\tilde{x}  \nonumber \\
&= h \int_{S_L} \left(||\partial_i y|^2 - \sigma(x_1)|^2 + h^2|\partial_1 \partial_i y|^2 \right) d\tilde{x}
\end{align}
where $i \in \{1,2\}$ is chosen such that $(\tilde{n}_{01} \cdot \tilde{e}_i)^2 \neq (\tilde{n}_{02} \cdot \tilde{e}_i)^2$.  We see then that $\sigma$ is given by
\begin{align*}
\sigma(t) = \begin{cases}
r^{-1/3}(1+(r-1) (\tilde{n}_{01}\cdot \tilde{e}_i)^2) & \text{ if } t < 0 \\
r^{-1/3}(1+(r-1) (\tilde{n}_{02}\cdot \tilde{e}_i)^2) & \text{ if } t > 0.
\end{cases}
\end{align*}
Thus, we set $\sigma_1 := r^{-1/3}(1+(r-1) (\tilde{n}_{01}\cdot \tilde{e}_i)^2)$ and $\sigma_2 := r^{-1/3}(1+(r-1)(\tilde{n}_{02}\cdot \tilde{e}_i)^2)$, and note that $\sigma_1 \neq \sigma_2$ by definition of this case, and $\sigma_1, \sigma_2 > 0$ since $r > 0$.  

Given the chain of inequalities (\ref{eq:FirstLBIneq}), we deduce that 
\begin{align*}
\tilde{\mathcal{I}}_{n_0}^h(y) &\geq 2Lh \inf \left \{\int_{-L}^{L} \left((|w|^2 - \sigma(t))^2 + h^2 |w'|^2\right) dt \colon w \in W^{1,2}((-L,L),\mathbb{R}^3)  \right\} \\
&\geq 2Lh \inf \left \{\int_{-L}^{L} \left((|w|^2 - \sigma(t))^2 + h^2 (|w|')^2\right) dt \colon w \in W^{1,2}((-L,L),\mathbb{R}^3)  \right\} \\
&= 2Lh \inf \left \{\int_{-L}^{L} \left((v^2 - \sigma(t))^2 + h^2 (v')^2\right) dt \colon v \in W^{1,2}((-L,L), \mathbb{R}) \;\; \text{ with } \;\; v \geq 0 \text{ a.e.} \right\} \\
&= 2 L^2h M_1^h.
\end{align*}
The first inequality follows by replacing $\partial_2 y$ with a function $w$ which depends only on $x_1$ and taking the infimum amongst $W^{1,2}$ functions, and the second follows by noting $(|w|')^2 \leq |w'|^2$.  Finally, we simply replace $|w|$ by a function $v \geq 0$ for the first equality, and the second equality follows by a change of variables $v(t) = u(t/L)$.  This completes the proof.  
\end{proof}

{\it Definition for Case 2:} In this case, we again relabel so that $\alpha = 1$ and $\beta = 2$.  We note that $\tilde{n}_{02} \neq 0$ (otherwise, following the definition of Case 2, $\tilde{n}_{01} = 0$ and therefore $\tilde{n}_{01} = \tilde{n}_{02}$ which is not allowed).  Hence, we again fix a global Cartesian frame so that $\tilde{e}_2 = \tilde{n}_{02}/|\tilde{n}_{02}|$ and $\tilde{e}_1$ points in the direction of region $\omega_2$.  Next, for some $R > 0$, we find a ball $B_R \subset \omega_1 \cup \omega_2$ whose center intersects the interface $\tilde{t}_{12}$.  Note that $R = R(\omega)$ depends only on $\omega$.  We set $\theta \in (0, \pi/2]$ to be the acute angle between $\tilde{n}_{02}$ and $\tilde{t}_{12}$ (which is non-zero by definition of this case) and define 
\begin{align*}
L_1 := R\cos(\theta), \quad \quad \tau := L_1 \frac{\tan(\theta) }{1+ \tan(\theta)}.
\end{align*}
We note that by their very definition, $L_1$ and $\tau$ depend only on $\omega$ and $n_0$. Further, $\tau \in (0,L_1]$.  In particular, it cannot be zero since $\theta \neq 0$.  A schematic of this case is provided in Figure \ref{fig:C2LB}.  We make the following observation for this case:
\begin{proposition}\label{PropLBOri2}
If $\omega$ and $n_0$ have an interface as in the definition of Case 2 (see Figure \ref{fig:C2LB}), then for any $y \in W^{2,2}(\omega,\mathbb{R}^3)$, 
\begin{align*}
\tilde{\mathcal{I}}_{n_0}^h(y) \geq L_1 h \int_{-\tau}^{\tau} M_2^h(s) ds
\end{align*}
where 
\begin{align*}
M_2^h(s) := &\inf \left \{ \int_{-1}^1 \left( (u^2 - \sigma(s,t))^2 + \frac{h^2}{L_1^2}(u')^2 \right ) dt  \colon u \in W^{1,2}((-1,1),\mathbb{R}) \;\; \text{ with} \;\; u \geq 0  \text{ a.e.} \right\} \\
&\sigma(s,t) = \begin{cases}
\sigma_1 &\text{ if } t  < \max \{ -1 + \tau/L_1 , (1 -\tau/L_1)(s/\tau)\} \\
\sigma_2 &\text{ if } t > \min \{1 - \tau/L_1, (1 - \tau/L_1)(s/\tau)\}.
\end{cases}
\end{align*}
Here $\sigma_1, \sigma_2 \geq 0$ and $\sigma_1 \neq \sigma_2$.
\end{proposition}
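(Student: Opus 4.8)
\textit{Proof proposal for Proposition~\ref{PropLBOri2}.} The argument parallels that of Proposition~\ref{PropLBOri1}: localize the energy near the interface, keep only one diagonal entry of the membrane term and the matching entry of the bending term, and slice to a family of one-dimensional problems. The genuinely new ingredient is that in Case~2 neither $\tilde t_{12}$ nor $\tilde t_{12}^{\perp}$ records a jump of the preferred metric $\tilde\ell_{n_0}$, so one must test in the \emph{director-adapted} frame, $\tilde e_2 := \tilde n_{02}/|\tilde n_{02}|$, as in the statement. Along this direction the preferred squared stretch in region $i$ is $\sigma_i := \tilde e_2\cdot\tilde\ell_{n_{0i}}\tilde e_2 = r^{-1/3}\bigl(1+(r-1)(\tilde n_{0i}\cdot\tilde e_2)^2\bigr)\in[r^{-1/3},r^{2/3}]$, so $\sigma_1,\sigma_2>0$. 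The first point to check is $\sigma_1\neq\sigma_2$. Case~2 says $\tilde n_{01}$ and $\tilde n_{02}$ have the same squared components in the orthonormal frame $\{\tilde t_{12},\tilde t_{12}^{\perp}\}$, and since $\tilde n_{01}\neq\pm\tilde n_{02}$, exactly one of these components changes sign between them; this forces the acute angle $\theta$ between $\tilde n_{02}$ and $\tilde t_{12}$ to lie in $(0,\pi/2)$ (either endpoint would give $\tilde n_{01}=\pm\tilde n_{02}$). Writing $\tilde n_{02}=a\tilde t_{12}+b\tilde t_{12}^{\perp}$ with $a,b\neq 0$ one then computes $(\tilde n_{01}\cdot\tilde e_2)^2=(a^2-b^2)^2/(a^2+b^2)$ and $(\tilde n_{02}\cdot\tilde e_2)^2=a^2+b^2$, which are unequal; together with $r\neq 1$ this gives $\sigma_1\neq\sigma_2$.

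Next I set up the geometry in the rotated Cartesian frame of the statement (origin on $\tilde t_{12}$, $\tilde e_1$ toward $\omega_2$, $\tilde e_2=\tilde n_{02}/|\tilde n_{02}|$). The interface inside $B_R$ is the straight segment through the origin of slope $\pm\cot\theta$, so each vertical segment $\{s\}\times(-L_1,L_1)$ with $|s|<\tau$ crosses it exactly once, at height $\pm s\cot\theta$, whose absolute value is $|s|\cot\theta<\tau\cot\theta=L_1-\tau<L_1$ (using $\tau=L_1\tan\theta/(1+\tan\theta)$); hence the crossing is strictly interior to the segment. Moreover the open rectangle $Q:=(-\tau,\tau)\times(-L_1,L_1)$ lies in $B_R\subset\omega_1\cup\omega_2$: since $L_1=R\cos\theta$ and $\tau=R\sin\theta/(1+\tan\theta)<R\sin\theta$, every $(s,x_2)\in Q$ satisfies $s^2+x_2^2<\tau^2+L_1^2<R^2$. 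Thus $Q$ is split by the interface into its portion in $\omega_1$ and its portion in $\omega_2$, and after the rescaling $x_2=L_1 t$ below this is exactly the function $\sigma(s,t)$ of the statement, the clamping in its definition being vacuous for $|s|<\tau$.

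For the estimate, I discard from $\tilde{\mathcal{I}}_{n_0}^h(y)$ all terms except $\bigl||\partial_2 y|^2-\tilde e_2\cdot\tilde\ell_{n_0}\tilde e_2\bigr|^2$ and $h^2|\partial_2\partial_2 y|^2$ and restrict to $Q$, so that
\begin{align*}
\tilde{\mathcal{I}}_{n_0}^h(y) \geq h\int_{-\tau}^{\tau}\int_{-L_1}^{L_1}\Bigl(\bigl||\partial_2 y(s,x_2)|^2-\sigma(s,x_2)\bigr|^2+h^2|\partial_2\partial_2 y(s,x_2)|^2\Bigr)\,dx_2\,ds,
\end{align*}
where $\sigma(s,\cdot)$ takes the value $\sigma_1$ on the $\omega_1$-side and $\sigma_2$ on the $\omega_2$-side of the crossing. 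Since $y\in W^{2,2}$, for a.e.\ $s$ the slice $w_s:=\partial_2 y(s,\cdot)$ lies in $W^{1,2}((-L_1,L_1),\mathbb{R}^3)$ with $w_s'=\partial_2\partial_2 y(s,\cdot)$; bounding the inner integral below by the infimum over all such $W^{1,2}$ vector fields, then passing to a scalar $v=|w|\geq 0$ using $(|w|')^2\leq|w'|^2$ (exactly as in Proposition~\ref{PropLBOri1}), and finally rescaling $x_2=L_1 t$, the inner integral is $\geq L_1 M_2^h(s)$. Integrating over $s\in(-\tau,\tau)$ gives $\tilde{\mathcal{I}}_{n_0}^h(y)\geq L_1 h\int_{-\tau}^{\tau}M_2^h(s)\,ds$, which is the claim.

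The only real obstacle is the bookkeeping of the second step: confirming that the director-adapted direction $\tilde e_2$ actually detects the metric jump (this is where $\theta\in(0,\pi/2)$ and $r\neq 1$ enter), and that $Q$ fits inside $B_R$ and is cleanly bisected by the slanted interface. Once these are in place the reduction is the same slicing argument as in Case~1, the sole difference being that the interface crossing now moves linearly along the slices — precisely the $s$-dependence carried by $\sigma(s,t)$ and $M_2^h(s)$.
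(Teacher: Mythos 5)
Your proof is correct and follows essentially the same route as the paper's: localize to the rectangle straddling the slanted interface, retain only the membrane and bending terms in the director-adapted direction $\tilde e_2=\tilde n_{02}/|\tilde n_{02}|$, and slice to the one-dimensional problem defining $M_2^h(s)$. You in fact supply more detail than the paper on two points it treats tersely — the verification that $(\tilde n_{01}\cdot\tilde e_2)^2\neq(\tilde n_{02}\cdot\tilde e_2)^2$ (hence $\sigma_1\neq\sigma_2$) and the geometric bookkeeping that $Q\subset B_R$ with each slice crossed once by the interface — and both computations check out.
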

\begin{proof}
Let $y \in W^{2,2}(\omega,\mathbb{R}^3)$.  Akin to the estimate in (\ref{eq:FirstLBIneq}), we reason that 
\begin{align}\label{eq:SecLBIneq}
\tilde{\mathcal{I}}_{n_0}^h(y) &\geq 2h \int_{-\tau}^\tau \int_{-L_1}^{L_1} \left( ||\partial_2 y|^2 - \sigma(\tilde{x})|^2 + h^2 |\partial_2^2 y|^2  \right) d\tilde{x}
\end{align}
for $\sigma(\tilde{x})$ depending on both coordinates and given by 
\begin{align*}
\sigma(\tilde{x}) = \begin{cases}
r^{-1/3}(1 + (r-1)|\tilde{n}_{02}|^2) & \text{ if } x_2 < \max\{ -L_1 + \tau, (L_1 -\tau)(x_1/\tau)\} \\
r^{-1/3}(1 + (r-1)\frac{(\tilde{n}_{01} \cdot \tilde{n}_{02})^2}{|\tilde{n}_{02}|^2}) & \text{ if } x_2 > \min\{ L_1 - \tau, (L_1 -\tau)(x_1/\tau)\}.
\end{cases}
\end{align*}
Since $\tilde{n}_{01} \neq \pm \tilde{n}_{02}$ by definition, $(\tilde{n}_{01} \cdot \tilde{n}_{02}) \neq |\tilde{n}_{02}|^2$.  Therefore, $\sigma_2 := r^{-1/3}(1 + (r-1)|\tilde{n}_{02}|^{-2}(\tilde{n}_{01} \cdot \tilde{n}_{02})^2)$ does not equal $\sigma_1 := r^{-1/3}(1 + (r-1)|\tilde{n}_{02}|^2)$.  Moreover, $\sigma_1 ,\sigma_2  > 0$ since $r > 0$.  

Now, given the inequality in (\ref{eq:SecLBIneq}), we again see that in this case 
\begin{align*}
\mathcal{I}_{n_0}^h(y) &\geq h \int_{-\tau}^{\tau} \inf \Bigg\{\int_{-L_1}^{L_1} \left((|w|^2 - \sigma(s,t))^2 + h^2 |w'|^2\right) dt \colon w \in W^{1,2}((-L_1,L_1),\mathbb{R}^3)  \Bigg\} ds \\
&\geq L_1 h \int_{-\tau}^{\tau} M_2^h(s) ds
\end{align*}
as desired.  This part of the argument is completely analogous to that of Proposition \ref{PropLBOri1}.  This completes the proof.
\end{proof}

\subsection{The Modica-Mortola analog  and proof of optimal scaling}\label{ssec:mmSec}
We have shown that given any design described by flat sheet $\omega \subset \mathbb{R}^2$ and $n_0 \colon \omega \rightarrow \mathbb{S}^2$ satisfying Definition \ref{def:nonIsoDef}(i), the problem of deducing a lower bound on the energy (\ref{eq:2DEnergy}) reduces to a canonical problem which has at most two flavors: Case 1 and Case 2 in Section \ref{ssec:unitCell}.  Actually though, following Proposition \ref{PropLBOri1} and \ref{PropLBOri2}, we find for the lower bound that one  only needs to consider the variational problem given by the one dimensional functionals
\begin{align*}
\mathcal{I}^h_s(u) := \int_{-1}^{1} \frac{1}{h}\left(( u^2 - \sigma(s,t) )^2 + c_1 h(u')^2 \right)dt , \quad s \in [-c_2, c_2]
\end{align*}
minimized amongst the functions $\{ u \in W^{1,2}((-1,1), \mathbb{R}) \colon u \geq 0  \}$ where 
\begin{align*}
\sigma(s,t) = \begin{cases}
\sigma_1 &\text{ if }   t < \max \{ -1+c_3, c_4 s\}  \\
\sigma_2 &\text{ if }  t >  \min\{ 1- c_3, c_4 s\} 
\end{cases}
\end{align*}
for $c_1, c_2 > 0 , c_3 \in (0,1]$ and  $c_4  \in [0, (1- c_3)/c_2]$.  In fact, the proof of Theorem \ref{LBTheoremTrue} follows from the observation that the infimum of $\mathcal{I}^h_s$ is bounded away from zero.  Precisely:
\begin{lemma}\label{LBLemma}
For any $c_1, c_2 >0 ,c_3 \in (0,1]$ and $c_4 \in [0,(1-c_3)/c_2]$, and for $h >0$ sufficiently small
\begin{align*}
\inf \left\{ \mathcal{I}^h_s(u) \colon u \in W^{1,2}((-1,1),\mathbb{R}) \text{ with } u \geq 0 \text{ a.e.} \right\} \geq c_L 
\end{align*}
where $c_L = c_L(c_1,c_2,c_3,c_4) > 0$ is independent of $s$ and $h$.
\end{lemma}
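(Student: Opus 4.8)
The first move is to trivialize the $s$-dependence of $\sigma$. Since $c_4\in[0,(1-c_3)/c_2]$ and $|s|\le c_2$ we have $|c_4 s|\le 1-c_3$, hence $\max\{-1+c_3,c_4 s\}=\min\{1-c_3,c_4 s\}=c_4 s$, so $\sigma(s,t)=\sigma_1$ for $t<c_4 s$ and $\sigma(s,t)=\sigma_2$ for $t>c_4 s$; that is, $\sigma(s,\cdot)$ is a single step whose jump point $c_4 s$ always lies in $[-1+c_3,1-c_3]$. Consequently $\sigma(s,\cdot)\equiv\sigma_1$ on the fixed interval $I_{-}:=(-1,-1+c_3)$ and $\sigma(s,\cdot)\equiv\sigma_2$ on the fixed interval $I_{+}:=(1-c_3,1)$, both of length $c_3$, \emph{independently of $s$}. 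Assume without loss of generality $\sigma_1<\sigma_2$ (the opposite case is symmetric), fix once and for all $\delta\in(0,(\sigma_2-\sigma_1)/2)$, set $g(w):=\min\{|w^2-\sigma_1|,|w^2-\sigma_2|\}$, and let $\kappa:=\int_{\sqrt{\sigma_1+\delta}}^{\sqrt{\sigma_2-\delta}}g(w)\,dw$; since $g\ge\delta$ on $[\sqrt{\sigma_1+\delta},\sqrt{\sigma_2-\delta}]$ we have $\kappa>0$. Put $c_L:=2\sqrt{c_1}\,\kappa$ and $h_0:=c_3\delta^2/c_L$; all of $\delta,\kappa,c_L,h_0$ depend only on $c_1,c_3,\sigma_1,\sigma_2$ and \emph{not} on $s$.

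\textbf{Transition-cost estimate.} Suppose $u\ge0$ is admissible and that there exist $t_{-}\in I_{-}$ and $t_{+}\in I_{+}$ with $|u(t_{-})^2-\sigma_1|\le\delta$ and $|u(t_{+})^2-\sigma_2|\le\delta$ (note $t_{-}<t_{+}$ since $I_{-}$ lies left of $I_{+}$, using $c_3\le1$). Restricting the integral to $[t_{-},t_{+}]$, using the pointwise bound $\tfrac1h a^2+c_1 h b^2\ge 2\sqrt{c_1}\,|a||b|$ with $a=u^2-\sigma$, $b=u'$, and $|u^2-\sigma|\ge g(u)$ (true whatever value $\sigma$ takes at $t$), the chain rule for the Sobolev function $G\circ u$ with $G'=g\ge0$ gives
\begin{align*}
\mathcal{I}^h_s(u)
&\ \ge\ 2\sqrt{c_1}\int_{t_{-}}^{t_{+}} g(u(t))\,|u'(t)|\,dt
\ \ge\ 2\sqrt{c_1}\,\Big|\int_{u(t_{-})}^{u(t_{+})} g(w)\,dw\Big| \\
&\ \ge\ 2\sqrt{c_1}\int_{\sqrt{\sigma_1+\delta}}^{\sqrt{\sigma_2-\delta}} g(w)\,dw\ =\ c_L,
\end{align*}
where the last inequality uses $u\ge0$, $u(t_{-})^2\le\sigma_1+\delta$, $u(t_{+})^2\ge\sigma_2-\delta$ and $\sqrt{\sigma_1+\delta}<\sqrt{\sigma_2-\delta}$, so that the range of $u$ on $[t_{-},t_{+}]$ contains $[\sqrt{\sigma_1+\delta},\sqrt{\sigma_2-\delta}]$.

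\textbf{Dichotomy and conclusion.} It remains to treat an admissible $u$ for which no such $t_{-}$ exists, i.e. $|u^2-\sigma_1|>\delta$ a.e.\ on $I_{-}$ (the case of no such $t_{+}$ is identical, using $\sigma_2$ on $I_{+}$). Then $\int_{I_{-}}(u^2-\sigma)^2\,dt=\int_{I_{-}}(u^2-\sigma_1)^2\,dt\ge c_3\delta^2$, while $\int_{-1}^1(u^2-\sigma)^2\,dt\le h\,\mathcal{I}^h_s(u)$; hence $\mathcal{I}^h_s(u)\ge c_3\delta^2/h$. Therefore, as soon as $h\le h_0$, every admissible $u$ satisfies $\mathcal{I}^h_s(u)\ge c_L$: either directly by the transition-cost estimate, or because $\mathcal{I}^h_s(u)\ge c_3\delta^2/h\ge c_3\delta^2/h_0=c_L$. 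This is the claimed bound, with $c_L$ and the smallness threshold $h_0$ depending on $c_1,c_3$ (and $\sigma_1,\sigma_2$) but uniform in $s$.

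\textbf{Main difficulty.} There is no deep obstacle once the picture is right; the one point requiring care is the \emph{uniformity in $s$}, both of the constant $c_L$ and of the threshold on $h$. This is precisely why one must pin $\sigma$ to the value $\sigma_1$ on an $s$-independent interval near $t=-1$ and to $\sigma_2$ on an $s$-independent interval near $t=+1$, which works because the hypothesis $c_4\le(1-c_3)/c_2$ confines the jump of $\sigma(s,\cdot)$ to $[-1+c_3,1-c_3]$. The remainder is the standard elementary lower bound for a Modica--Mortola/Bolza functional and serves as the self-contained substitute for the $\Gamma$-$\liminf$ inequality mentioned after Theorem~\ref{LBTheoremTrue}.
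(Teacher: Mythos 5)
Your proof is correct and follows essentially the same route as the paper's: pin $\sigma(s,\cdot)$ to the constant values $\sigma_1,\sigma_2$ on $s$-independent intervals near $t=\mp1$, then run the standard one-dimensional Modica--Mortola lower bound (AM--GM on the two terms plus the fundamental theorem of calculus for a primitive of the well-distance). The only differences are cosmetic: the paper first extracts a minimizer and locates the level crossings by contradiction with a bounded-energy assumption, whereas you argue directly on an arbitrary competitor via a dichotomy whose second branch gives the quantitative bound $c_3\delta^2/h$ — both rest on the same observation that staying away from a well on a fixed-length interval costs order $1/h$.
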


This is the crucial observation for the theorem.  Indeed:
\begin{proof}[Proof of Theorem \ref{LBTheoremTrue}]
We note following Section \ref{ssec:unitCell} that it suffices to restrict to the canonical problem given by the two cases in Figure \ref{fig:LBSchematic}.  From Proposition \ref{PropLBOri1} and \ref{PropLBOri2}, we have that for any $y \in W^{2,2}(\omega,\mathbb{R}^3)$,
\begin{align*}
\mathcal{I}^h_{n_0}(y) \geq \begin{cases}
2L^2h M_1^h &\text{ for Case 1}  \\
L_1 h\int_{-\tau}^\tau M_2^h(s)dx &\text{ for Case 2}.
\end{cases}
\end{align*}
In addition, we observe that 
\begin{align*}
M_1^h &= h \inf \left\{ \mathcal{I}^h_s(u) \colon u \in W^{1,2}((-1,1),\mathbb{R}) \text{ with } u \geq 0 \text{ a.e.}  \right\} \nonumber \\
&\text{ when } \;\; c_1 = L^{-1},\; c_3 = 1, \;c_4 = 0; \\
M_2^h(s) &= h \inf \left\{ \mathcal{I}^h_s(u) \colon u \in W^{1,2}((-1,1),\mathbb{R}) \text{ with } u \geq 0 \text{ a.e.}\right\} \nonumber \\
&\text{ when } \;\; c_1 = L_1^{-1},\;  c_2 = \tau, \; c_3 = \tau/L_1 , \; c_4 = (1/\tau - 1/L_1).
\end{align*}
Thus, by these observations and given Lemma \ref{LBLemma},
\begin{align*}
\mathcal{I}^h_{n_0}(y) \geq \begin{cases}
2L^2 c_L h^2 &\text{ for Case 1} \\
2L_1  \tau  c_L h^2 & \text{ for Case 2}
\end{cases}
\end{align*}
for $c_L = c_L(c_1,c_2,c_3,c_4) >0$ as in the lemma.  This completes the proof.  
\end{proof}

To close the argument, it remains to prove Lemma \ref{LBLemma}:
\begin{proof}[Proof of Lemma \ref{LBLemma}]
By the direct methods in the calculus of variations (see, for instance, Dacorogna \cite{d_dmcov_07}), we find that for any $s \in [-c_2,c_2]$ and $h > 0$, there exists a minimizer to $\mathcal{I}_s^h$ in the space $\{ u \in W^{1,2}((-1,1), \mathbb{R}) \colon u \geq 0 \text{ a.e.}  \}$.  For the lower bound, it suffices to restrict our attention to any such minimizer, which we label as $u_{s}^h$.  
Further, we may assume for some constant $M>0$ independent of $h$ and $s$ that 
\begin{align}\label{eq:boundedIhs}
\mathcal{I}^h_s(u_s^h)  < M.
\end{align}
Indeed, if for some $s \in [-c_2,c_2]$ and $h >0$ this does not hold, then we immediately establish a lower bound for this case since the reverse inequality holds.  

Now, since $c_4 \in [0,(1-c_3)/c_2]$, we have that  $\sigma(s, t) = \sigma_1$ when $t < -1+ c_3$ and $\sigma(s,t) = \sigma_2$ when $t > 1- c_3$.  Without loss of generality, we assume $\sigma_1 < \sigma_2$.  We let $\langle \sigma \rangle = (\sigma_1 + \sigma_2)/2$, and we claim that for any $h > 0$ sufficiently small,
\begin{align}\label{eq:conditionMin}
\begin{cases}
\text{ for some } \;\; t \in [-1, -1+ c_3/2], &u_{s}^h(t)^2 \in (\frac{1}{2}\sigma_1 , \frac{1}{2}(\sigma_1 + \langle \sigma \rangle) ); \\
\text{ for some } \;\; t \in [1- c_3/2, 1], &u_{s}^h(t)^2 \in (\frac{1}{2}(\sigma_2 + \langle \sigma \rangle) , \frac{3}{2}\sigma_2).
\end{cases}
\end{align}  
Indeed, suppose the first condition does not hold.  Then $(u_s^h(t)^2 - \sigma_1)^2 \geq \frac{1}{4}\min\{\sigma_1^2, (\langle \sigma \rangle - \sigma_1)^2\} > 0$ on the interval $[-1,-1+ c_3/2]$ which gives 
\begin{align*}
\mathcal{I}_s^h(u_s^h) \geq \int_{-1}^{-1+c_3/2} \frac{1}{h}((u_s^h)^2 - \sigma_1)^2 dt \geq \frac{c_3}{8h}\min\{\sigma_1^2, (\langle \sigma \rangle - \sigma_1)^2\}.
\end{align*}
Taking $h >0$ sufficiently small, we eventually arrive at a contradiction to (\ref{eq:boundedIhs}).  The second condition in (\ref{eq:conditionMin}) holds by an identical argument.  

Now, by the Sobolev embedding theorem $u_{s}^h \in W^{1,2}((-1,1),\mathbb{R})$ has a continuous representative.  This continuity and the observation that (\ref{eq:conditionMin}) holds leads to the non-zero lower bound on the energy.  Indeed, we have the estimate
\begin{align}\label{eq:lbpFinal}
\mathcal{I}^h_s(u_s^h) \geq 2\sqrt{c_1} \int_{-1}^{1} | (u_s^h)^2 - \sigma(s,t)| |(u_s^h)'| dt. 
\end{align}
Hence, we define 
\begin{align*}
a := \max \big\{ t \in [-1,1] \colon u_s^h(t)^2 = \frac{1}{2} (\sigma_1 + \langle \sigma \rangle ) \big \}, \quad b := \min \big \{ t \in (a,1] \colon u_s^h(t)^2 = \frac{1}{2} (\sigma_2 + \langle \sigma \rangle) \big \}. 
\end{align*}
By the continuity of $u_s^h$ and the observation (\ref{eq:conditionMin}), these quantities (as asserted) do, in fact, exist. Moreover,
\begin{align}\label{eq:lbFinal}
\int_{-1}^{1} | (u_s^h)^2 - \sigma(s,t)| |(u_s^h)'| dt &\geq \int_{a}^{b} | (u_s^h)^2 - \sigma(s,t)| |(u_s^h)'| dt \geq \frac{1}{2} \min_{1,2}\{ |\langle \sigma \rangle - \sigma_i|\}  \int_{a}^b |(u_s^h)'|dt \nonumber \\
&\geq \frac{1}{2}  \min_{1,2}\{ |\langle \sigma \rangle - \sigma_i|\} \Big|\int_{a}^{b} (u_s^h)' dt \Big| =\frac{1}{2}  \min_{1,2}\{ |\langle \sigma \rangle - \sigma_i|\} |u_s^h(b) - u_s^h(a)| \nonumber \\
& = \frac{1}{2 \sqrt{2}}  \min_{1,2}\{ |\langle \sigma \rangle - \sigma_i|\}|(\sigma_2 + \langle \sigma \rangle)^{1/2} - (\sigma_1 +\langle \sigma \rangle)^{1/2} |
\end{align} 
by the fundamental theorem of calculus.  Since this lower bound is positive and independent of $s$ and $h$, combining (\ref{eq:lbpFinal}) and (\ref{eq:lbFinal}) completes the proof.  
\end{proof}

\section{Compactness for bending configurations and the metric constraint}\label{sec:Compactness}
In this section, we prove that the metric constraint (\ref{eq:2DMetric}) is necessary for a configuration in pure bending when Frank elasticity is comparable to entropic elasticity at the bending scale (Theorem \ref{CompactnessTheorem}). In Section \ref{ssec:PreCompact}, we address some key preliminary results for this compactness, including a crucial lemma which is a consequence of geometric rigidity.  In Section \ref{ssec:CompactnessSec}, we prove Theorem \ref{CompactnessTheorem}.  

\subsection{Preliminaries for compactness}\label{ssec:PreCompact}
The key lemma which enables a proof of compactness in this setting is based on the result of geometric rigidity by Friesecke, James and M\"{u}ller \cite{fjm_cpam_02}, and generalization to non-Euclidean plates by Lewicka and Pakzad \cite{lp_esaim_11}.  

\begin{lemma}\label{GeomRigidityLemma}
Let $\omega \subset \mathbb{R}^2$ bounded and Lipschitz, and $r_f, r_0 >1$ and $\tau \geq 0$.  There exists a $C = C(\omega,r_f,r_0,\tau) > 0$ with the following property: For every $h > 0$, $\Omega_h := \omega \times (-h/2,h/2)$, $Y^h \in W^{1,2}(\Omega_h,\mathbb{R}^3)$, $N^h \in W^{1,2}(\Omega_h,\mathbb{S}^2)$ and $N_0^h$ as in (\ref{eq:n0tomidn0}) with $n_0 \in W^{1,2}(\omega,\mathbb{S}^2)$, there exists an associated matrix field $G^h: \omega \rightarrow \mathbb{R}^{3\times3}$ satisfying the estimates
\begin{align*}
&\frac{1}{h} \int_{\Omega_h} |G^h - (\ell^f_{N^h})^{-1/2}(\nabla Y^h)(\ell_{N_0^h}^0)^{1/2}|^2 dx \\
&\quad \quad  \leq  \frac{C}{h}\int_{\Omega_h} \left(\dist^2((\ell_{N^h}^f)^{-1/2}(\nabla Y^h)(\ell_{N_0^h}^0)^{1/2},SO(3)) + h^2 (|\nabla N^h|^2 + |\tilde{\nabla} n_0|^2 + 1) \right) dx , \\
&\int_{\omega} |\tilde{\nabla} G^h |^2 d\tilde{x} \\
&\quad \quad \leq \frac{C}{h^3}\int_{\Omega_h} \left(\dist^2((\ell_{N^h}^f)^{-1/2}(\nabla Y^h)(\ell_{N_0^h}^0)^{1/2},SO(3)) + h^2 (|\nabla N^h|^2 + |\tilde{\nabla} n_0|^2 + 1) \right) dx.
\end{align*} 
\end{lemma}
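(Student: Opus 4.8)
The plan is to deduce both bounds from the quantitative geometric rigidity estimate of Friesecke--James--M\"uller \cite{fjm_cpam_02}, in the non-Euclidean thin-plate form of Lewicka--Pakzad \cite{lp_esaim_11}, applied to the effective deformation gradient $\bar F^h := (\ell^f_{N^h})^{-1/2}(\nabla Y^h)(\ell^0_{N_0^h})^{1/2}$, whose distance to $SO(3)$ is exactly the term $\dist^2(\cdot,SO(3))$ appearing on the right-hand sides. Throughout I would use that, for fixed $r_0,r_f>1$, the tensors $(\ell^f_\nu)^{\pm1/2}$ and $(\ell^0_{\nu_0})^{\pm1/2}$ have eigenvalues bounded above and below by constants depending only on $r_0,r_f$ and depend Lipschitz-continuously on $\nu,\nu_0\in\mathbb{S}^2$; in particular $|\nabla Y^h|\leq C|\bar F^h|\leq C(1+\dist(\bar F^h,SO(3)))$ pointwise. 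For $h$ bounded below one may take $G^h\equiv 0$, the two estimates then following from this pointwise bound and the $h^2\cdot 1$ term on the right; so assume $h$ small. A first reduction replaces $(\ell^0_{N_0^h})^{1/2}$ by $(\ell^0_{n_0})^{1/2}$, a function of $\tilde{x}$ only: since $\|N_0^h-n_0\|_{L^\infty}\leq\tau h$, the field $\hat F^h:=(\ell^f_{N^h})^{-1/2}(\nabla Y^h)(\ell^0_{n_0})^{1/2}$ satisfies $\int_{\Omega_h}|\hat F^h-\bar F^h|^2\leq C\tau^2h^2\int_{\Omega_h}(1+\dist^2(\bar F^h,SO(3)))$, which for $h$ small is absorbed into both right-hand sides. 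It therefore suffices to prove the estimates with $\bar F^h$ replaced by $\hat F^h$; the left factor of $\hat F^h$ is still $x_3$-dependent, which is the source of the $h^2|\nabla N^h|^2$ term.

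The heart of the argument is a patchwise application of rigidity. Cover $\bar\omega$ by squares $\{Q_k\}$ of side $h$ with uniformly bounded overlap (truncating near $\partial\omega$, where the Lipschitz hypothesis on $\omega$ keeps the cells uniformly fat), put $C_k:=Q_k\times(-h/2,h/2)$, and let $B_k$, $A_k$ be the averages over $C_k$ of $(\ell^0_{n_0})^{1/2}$ and $(\ell^f_{N^h})^{-1/2}$. By Poincar\'e on the $h$-cube, $\int_{C_k}|(\ell^0_{n_0})^{1/2}-B_k|^2\leq Ch^2\int_{C_k}|\tilde{\nabla}n_0|^2$ and $\int_{C_k}|(\ell^f_{N^h})^{-1/2}-A_k|^2\leq Ch^2\int_{C_k}|\nabla N^h|^2$, so, using the pointwise bound on $|\nabla Y^h|$ and the uniform boundedness of the step-length factors, $\hat F^h$ agrees with $A_k(\nabla Y^h)B_k$ on $C_k$ up to an $L^2$ error already dominated by the right-hand sides. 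The key observation is that the map $z\mapsto A_k\,Y^h(B_k z)$ on $B_k^{-1}C_k$ has gradient $A_k(\nabla Y^h)(B_k z)B_k$, so on that domain $A_k(\nabla Y^h)B_k$ is a genuine gradient with distance to $SO(3)$ comparable to $\dist(\hat F^h,SO(3))$; moreover $B_k^{-1}C_k$ is, up to the dilation $z\mapsto z/h$, the image of a fixed cube under $B_k^{-1}$, a matrix of condition number bounded by a function of $r_0$. FJM rigidity---whose constant is invariant under dilation and stable under such bounded affine distortions---then produces $R_k\in SO(3)$ with $\int_{C_k}|\hat F^h-R_k|^2\leq C\int_{C_k}(\dist^2(\hat F^h,SO(3))+h^2(|\nabla N^h|^2+|\tilde{\nabla}n_0|^2+1))$, the constant depending only on $r_0,r_f$.

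It remains to patch the $R_k$ into a single $x_3$-independent field. For neighbouring cells, comparing $\hat F^h\approx R_k$ and $\hat F^h\approx R_j$ on $C_k\cap C_j$ (of measure $\sim h^3$) gives $|R_k-R_j|^2\leq Ch^{-3}\int_{C_k\cup C_j}(\dist^2(\hat F^h,SO(3))+h^2(\ldots))$. With a partition of unity $\{\psi_k\}$ on $\omega$ subordinate to $\{Q_k\}$ satisfying $\|\tilde{\nabla}\psi_k\|_\infty\leq C/h$, set $G^h:=\sum_k\psi_k R_k$. On each $C_k$ only finitely many neighbours contribute, so $\|\hat F^h-G^h\|_{L^2(C_k)}^2\leq C\sum_{j\sim k}(\|\hat F^h-R_k\|_{L^2(C_k)}^2+h^3|R_k-R_j|^2)$, and, using $\sum_k\tilde{\nabla}\psi_k=0$, $\int_{Q_k}|\tilde{\nabla}G^h|^2\leq C\sum_{j\sim k}|R_k-R_j|^2$. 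Summing over $k$ gives the two estimates for $\hat F^h$; undoing the first reduction and absorbing its error then yields the claimed estimates for $\bar F^h$.

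I expect the only real difficulty to be bookkeeping: keeping every constant uniform in $h$ and dependent only on $\omega,r_0,r_f,\tau$. Two classical facts must be invoked carefully: the dilation-invariance of the FJM rigidity constant (so rigidity on an $h$-cube is no worse than on a unit cube) together with its stability under the affine maps $B_k^{-1}$ (controlled only through the condition number of the step-length tensors, hence by $r_0$); and the fact that the grid $\{Q_k\}$---including its truncated boundary cells, where Lipschitz regularity of $\omega$ enters---has uniformly bounded overlap and a uniformly bounded number of neighbours, so the patching constant is geometric. Both are precisely the technical core of \cite{lp_esaim_11}; the only genuinely new ingredient is the localization of the $x_3$-dependent left factor $(\ell^f_{N^h})^{-1/2}$, which is harmless since it generates only the $h^2|\nabla N^h|^2$ contribution already built into the statement.
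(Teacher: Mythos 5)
Your proposal is correct and follows essentially the same route as the paper: the per-cube estimate (affine change of variables by the cell averages of the step-length tensors, FJM rigidity with its dilation- and bilipschitz-stable constant, and Poincar\'e to convert the deviation of $(\ell^f_{N^h})^{-1/2}$ and $(\ell^0_{n_0})^{1/2}$ from their averages into the $h^2(|\nabla N^h|^2+|\tilde\nabla n_0|^2)$ terms) is exactly the paper's Proposition \ref{GeomRigidPropAppend}, and your partition-of-unity patching of the cellwise rotations is precisely the argument of Theorem 10 in \cite{fjm_arma_06} that the paper invokes. The only difference is that you spell out the patching step and the replacement of $N_0^h$ by $n_0$ explicitly, whereas the paper delegates these to the cited references.
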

\noindent We address this result in Appendix \ref{sec:CompactAppendix}.  For similar results related to non-Euclidean plates in a different context, see Lewicka et al.\ \cite{bls_arma_15,lmp_prsa_11}.

Recall the rescaled variables $V^h$ and $M_0^h$ from Section \ref{ssec:metricNecessary}. We have:
\begin{proposition}\label{CompactProp}
Let $\omega \subset \mathbb{R}^2$ bounded and Lipschitz, $r_f, r_0 >1$, $\tau \geq 0$, and $\varepsilon_h$ as in (\ref{eq:varepsilonh}).  Let $V^h \in W^{1,2}(\Omega,\mathbb{R}^3)$, $M^h \in W^{1,2}(\Omega,\mathbb{S}^2)$ and $M_0^h$ as in (\ref{eq:M0h}) for $n_0 \in W^{1,2}(\omega,\mathbb{S}^2)$.  There exists an associated matrix field $G^h \colon \omega \rightarrow \mathbb{R}^{3\times3}$ such that 
\begin{align}\label{eq:CompactEst1}
&\int_{\Omega}|G^h - (\ell^f_{M^h})^{-1/2}(\nabla_h V^h)(\ell_{M_0^h}^0)^{1/2}|^2 dz \leq C h^2( \mathcal{J}_{M_0^h}^{h,\varepsilon_h}(V^h,M^h) + \|\tilde{\nabla} n_0\|_{L^2(\omega)}^2 + 1) \\
& \quad \quad \int_{\Omega} |\tilde{\nabla} G^h|^2 d\tilde{z} \leq C ( \mathcal{J}_{M_0^h}^{h,\varepsilon_h}(V^h,M^h) + \|\tilde{\nabla} n_0\|_{L^2(\omega)}^2 + 1) \label{eq:CompactEst2}
\end{align}
for $\varepsilon_h, c_l$ as in (\ref{eq:varepsilonh}) and some uniform $C = C(\omega,r_f, r_0,c_l,\tau)$ which is independent of $h$. 
\end{proposition}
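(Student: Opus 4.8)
The plan is to obtain Proposition \ref{CompactProp} by transporting the two geometric rigidity estimates of Lemma \ref{GeomRigidityLemma}, which live on the thin slab $\Omega_h$, onto the fixed domain $\Omega=\omega\times(-1/2,1/2)$ via the anisotropic rescaling $z(x)=(\tilde x,h^{-1}x_3)$ from \eqref{eq:chgVar2}, and then to bound the transported right-hand sides by the rescaled energy $\mathcal{J}_{M_0^h}^{h,\varepsilon_h}(V^h,M^h)$. First I would record the bookkeeping of the change of variables: since $z_3=h^{-1}x_3$ one has $dx=h\,dz$, and for any map with $\hat f(z(x))=f(x)$ the chain rule gives $\nabla f(x)=\nabla_h\hat f(z)$ with $\nabla_h\hat f=(\tilde\nabla\hat f\,|\,h^{-1}\partial_3\hat f)$; in particular $\nabla Y^h(x)=\nabla_hV^h(z)$ and $|\nabla N^h(x)|^2=|\nabla_hM^h(z)|^2$, while $N^h(x)=M^h(z)$ and $N_0^h(x)=M_0^h(z)$ make the step length tensors literally invariant, $\ell^f_{N^h(x)}=\ell^f_{M^h(z)}$ and $\ell^0_{N_0^h(x)}=\ell^0_{M_0^h(z)}$. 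Applying Lemma \ref{GeomRigidityLemma} (its hypotheses hold since $M_0^h$ satisfies \eqref{eq:M0h} with $n_0\in W^{1,2}$) produces a field $G^h\colon\omega\to\mathbb{R}^{3\times3}$; because $G^h$ is independent of $z_3$, the prefactor $1/h$ in the first lemma estimate exactly cancels the Jacobian $h$, so it becomes
\begin{align*}
\int_{\Omega}\big|G^h-(\ell^f_{M^h})^{-1/2}(\nabla_hV^h)(\ell^0_{M_0^h})^{1/2}\big|^2\,dz \;\le\; C\int_{\Omega}\Big(\dist^2\big(\Xi^h,SO(3)\big)+h^2\big(|\nabla_hM^h|^2+|\tilde\nabla n_0|^2+1\big)\Big)\,dz,
\end{align*}
where $\Xi^h:=(\ell^f_{M^h})^{-1/2}(\nabla_hV^h)(\ell^0_{M_0^h})^{1/2}$, and the second lemma estimate becomes $\int_{\Omega}|\tilde\nabla G^h|^2\,d\tilde z\le(C/h^2)\int_{\Omega}(\cdots)\,dz$ with the very same integrand.

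It then remains to bound $\int_{\Omega}\dist^2(\Xi^h,SO(3))\,dz$ and the lower-order terms by $C h^2(\mathcal{J}_{M_0^h}^{h,\varepsilon_h}(V^h,M^h)+\|\tilde\nabla n_0\|_{L^2(\omega)}^2+1)$. For the distance term I would use the elementary incompressible neo-Hookean bound $(\mu/2)^{-1}W_{nH}(G)\ge\dist^2(G,SO(3))$, which holds for every $G$ (trivially if $\det G\neq1$, and by AM--GM on the singular values if $\det G=1$); since $(\mu/2)^{-1}W_{nH}(\Xi^h)=(\mu/2)^{-1}W^e(\nabla_hV^h,M^h,M_0^h)\le\widehat W(\nabla_hV^h,M^h,M_0^h)$ (as $\widehat W=(\mu/2)^{-1}(W^e+W^{ni})$ and $W^{ni}\ge0$), this gives $\int_{\Omega}\dist^2(\Xi^h,SO(3))\,dz\le h^2\int_{\Omega}h^{-2}\widehat W\,dz\le h^2\mathcal{J}_{M_0^h}^{h,\varepsilon_h}(V^h,M^h)$. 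For the Frank contribution I would invoke $\varepsilon_h\ge c_lh$, so $h^2/\varepsilon_h^2\le c_l^{-2}$ and hence $\int_{\Omega}|\nabla_hM^h|^2\,dz\le c_l^{-2}\int_{\Omega}h^{-2}\varepsilon_h^2|\nabla_hM^h|^2\,dz\le c_l^{-2}\mathcal{J}_{M_0^h}^{h,\varepsilon_h}(V^h,M^h)$, so that $h^2\int_{\Omega}|\nabla_hM^h|^2\,dz\le c_l^{-2}h^2\mathcal{J}_{M_0^h}^{h,\varepsilon_h}(V^h,M^h)$. Finally, since $n_0$ is $z_3$-independent and $\Omega=\omega\times(-1/2,1/2)$, one has $h^2\int_{\Omega}|\tilde\nabla n_0|^2\,dz=h^2\|\tilde\nabla n_0\|_{L^2(\omega)}^2$ and $h^2\int_{\Omega}1\,dz=h^2|\omega|$. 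Summing these yields \eqref{eq:CompactEst1}, and feeding the same bounds into the $h^{-2}$-weighted second inequality (where the $h^{-2}$ cancels the explicit $h^2$'s, and $\dist^2\le Ch^2\mathcal{J}$ handles the first term) yields \eqref{eq:CompactEst2}. (If $\mathcal{J}_{M_0^h}^{h,\varepsilon_h}(V^h,M^h)=+\infty$ both bounds are vacuous.)

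This is essentially a bookkeeping argument built on Lemma \ref{GeomRigidityLemma}, so I do not expect a genuine obstacle. The only points demanding care are correctly tracking the powers of $h$ generated by $dx=h\,dz$ and by the $h^{-1}$ in the third column of $\nabla_h$, and remembering that the rigidity input here is the incompressibility-adapted version of geometric rigidity — which is where the restriction $r_f,r_0>1$ enters, through Lemma \ref{GeomRigidityLemma} — rather than the classical one.
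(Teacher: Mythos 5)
Your proposal is correct and follows essentially the same route as the paper: bound $\dist^2\big((\ell^f_{M^h})^{-1/2}(\nabla_h V^h)(\ell^0_{M_0^h})^{1/2},SO(3)\big)$ by $\widehat W$ via Proposition \ref{LBProp} and the identity \eqref{eq:WeWnH}, absorb the Frank term using $\varepsilon_h\ge c_l h$, then change variables and invoke Lemma \ref{GeomRigidityLemma}. The only difference is cosmetic ordering (you apply the rigidity lemma first and then estimate its right-hand side, whereas the paper assembles the energy bound first), so there is nothing substantive to flag.
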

\begin{proof}
Using Proposition \ref{LBProp} and the identity (\ref{eq:WeWnH}), we find that 
\begin{align*}
\int_{\Omega} \dist^2((\ell_{M^h}^f)^{-1/2}(\nabla_h V^h) (\ell_{M_0^h}^0)^{1/2},SO(3)) dz  \leq \int_{\Omega} \widehat{W}_{nH}((\ell_{M^h}^f)^{-1/2}(\nabla_h V^h) (\ell_{M_0^h}^0)^{1/2})dz \\
\qquad \leq \int_{\Omega} \widehat{W}^e(\nabla_h V^h, M^h, M_0^h)dz \leq \int_{\Omega} \widehat{W}(\nabla_h V^h, M^h, M_0^h)dz
\end{align*}
where $\widehat{(\cdot)} = (2/\mu)(\cdot)$ and the last inequality follows given $W^{ni}$ in (\ref{eq:Wni}) is $\geq 0$.  Since $\varepsilon_h$ as in (\ref{eq:varepsilonh}), we also find that 
\begin{align*}
\int_{\Omega} h^2 |\nabla_h M^h|^2 dz \leq \frac{1}{c^2_l} \int_{\Omega} \varepsilon_h^2 |\nabla_h M^h|^2 dz.
\end{align*}
Combining these two estimates, we find that 
\begin{align}\label{eq:upperBoundCompact}
\int_{\Omega} \left(\dist^2((\ell_{M^h}^f)^{-1/2}(\nabla_h V^h) (\ell_{M_0^h}^0)^{1/2},SO(3)) + h^2|\nabla_h M^h|^2\right) dz \leq  Ch^2\mathcal{J}_{M_0^h}^{h,\varepsilon_h}(V^h,M^h)
\end{align}
for some uniform $C = C(c_l)$.  

To obtain the desired estimates (\ref{eq:CompactEst1}) and (\ref{eq:CompactEst2}), we change variables via $z(x) := (x_1,x_2,x_3/h)$ for $x \in \Omega_h$, set the functions as defined in (\ref{eq:chgVar2}), apply Lemma \ref{GeomRigidityLemma}, and the estimates follow from the bound (\ref{eq:upperBoundCompact}). 
\end{proof}

\subsection{Compactness for comparable entropic and Frank elasticity in bending}\label{ssec:CompactnessSec}
We turn now to the proof of Theorem \ref{CompactnessTheorem}.  For clarity, we break up the proof into several steps.  

Recall that for this theorem, we suppose $M_0^h$ as in (\ref{eq:M0h}) with $n_0 \in W^{1,2}(\omega,\mathbb{S}^2)$ and $\varepsilon_h$ as in (\ref{eq:varepsilonh}).  We consider a sequence $\{V^h, M^h\} \subset W^{1,2}(\Omega,\mathbb{R}^3) \times W^{1,2}(\Omega,\mathbb{S}^2)$ such that 
\begin{align}\label{eq:hypEnergy}
\mathcal{J}_{M_0^h}^{h,\varepsilon_h}(V^h, M^h) \leq C, 
\end{align}
for all $h$ small and for some $C$ independent of $h$.  The convergences stated in each step are for a suitably chosen subsequence as $h \rightarrow 0$. 

\begin{proof}[Step 1.] {\it $M_0^h \rightarrow n_0$ in $L^2(\Omega,\mathbb{S}^2)$, and $(\ell_{M_0^h}^0)^{\pm1/2} \rightarrow (\ell_{n_0}^0)^{\pm1/2}$ in $L^2(\Omega,\mathbb{R}^{3\times3})$.} 

The first convergence is a trivial consequence of the definition of $M_0^h$ in (\ref{eq:M0h}).  The second follows from the estimate $|(\ell_{\nu_1}^0)^{\pm1/2} - (\ell_{\nu_2}^0)^{\pm1/2}| \leq C(r_0)|\nu_1- \nu_2|$  for $\nu_{1,2} \in \mathbb{S}^2$ and the first convergence.  
\end{proof}
\begin{proof}[Step 2.]{\it $M^h \rightharpoonup n$ in $W^{1,2}(\Omega,\mathbb{S}^2)$ for some $n$ independent of $z_3$ and $(\ell_{M^h}^f)^{\pm1/2} \rightarrow (\ell_{n}^f)^{\pm1/2}$ in $L^2(\Omega,\mathbb{R}^{3\times3})$.}

For $h$ sufficiently small, we have 
\begin{align}\label{eq:estMhGrad}
\frac{1}{c_l^2}\int_{\Omega} |\nabla M^h|^2 dz \leq \frac{1}{c_l^2}\int_{\Omega}\left( |\tilde{\nabla} M^h|^2 + \frac{1}{h^2} |\partial_3 M^h|^2 \right) dz \leq \mathcal{J}_{M_0^h}^{h,\varepsilon_h}(V^h,M^h) \leq C
\end{align}
for $C$ independent of $h$ by (\ref{eq:hypEnergy}).  Thus, up to a subsequence $M^h \rightharpoonup n$ in $W^{1,2}(\Omega,\mathbb{R}^3)$.  By Rellich's theorem, taking a further subsequence (if necessary), we have strong convergence, $M^h \rightarrow n$ in $L^2(\Omega,\mathbb{R}^3)$.  Since $M^h \in \mathbb{S}^2$ a.e., we deduce that $n \in \mathbb{S}^2$ a.e. by this strong convergence.  Further, $n$ is independent of $z_3$ since by the estimate (\ref{eq:estMhGrad}), we find $\partial_3 M^h \rightarrow 0$ in $L^2(\Omega,\mathbb{R}^3)$, and therefore $\partial_3 n = 0$ a.e. by the uniqueness of the weak $W^{1,2}$ limit.  The convergences of $(\ell_{M^h}^f)^{\pm1/2}$ follow by an argument similar to the convergences of $(\ell_{M_0^h}^0)^{\pm1/2}$ in Step 1.
\end{proof}

\begin{proof}[Step 3.]{\it $(V^h - \frac{1}{|\Omega|} \int_{\Omega} V^h dz) \rightharpoonup y$ in $W^{1,2}(\Omega, \mathbb{R}^3)$ for some $y$ independent of $z_3$.  Also, $h^{-1} \partial_3 V^h \rightharpoonup b$ in $L^2(\Omega,\mathbb{R}^3)$.  }

For $h$ sufficiently small, we have 
\begin{align}\label{eq:estStep3}
\frac{1}{c}\int_{\Omega} (|\tilde{\nabla} V^h|^2 +|h^{-1} \partial_3 V^h|^2 -1) dz \leq \int_{\Omega} \widehat{W}^e(\nabla_h V^h, M^h,M_0^h)dz \leq \mathcal{J}_{M_0^h}^{h,\varepsilon_h}(V^h,M^h) \leq C
\end{align}
by Proposition \ref{UpLowProp} and (\ref{eq:hypEnergy}).  Thus, since $|\nabla V^h| \leq |\nabla_h V^h|$ for $h$ small, we conclude the first convergence (up to a subsequence) given the estimate (\ref{eq:estStep3}) and an application of the Poincar\'{e} inequality.  We again use (\ref{eq:estStep3}) to conclude that up to a subsequence, $h^{-1} \partial_3 V^h \rightharpoonup b$ in $L^2(\Omega,\mathbb{R}^3)$ for some vector valued function $b$, and that the limit $y$ is independent of $z_3$ (exactly the same argument as in Step 2 for $n$ independent of $z_3$).  
\end{proof}

\begin{proof}[Step 4.] {\it There exists a sequence of matrix fields $\{G^h\}$ with $G^h \colon \omega \rightarrow \mathbb{R}^{3\times3}$ such that 
\begin{align}\label{eq:estStep4}
\int_{\Omega} | G^h  - (\ell_{M^h}^f)^{-1/2} (\nabla_h V^h)(\ell_{M_0^h}^0)^{1/2}|^2 dz \leq C h^2, \quad \quad  \int_{\omega} |\tilde{\nabla} G^h|^2 d\tilde{z} \leq C 
\end{align}
for $C$ independent of $h$.  Moreover, $G^h \rightharpoonup R$ in $W^{1,2}(\omega,\mathbb{R}^{3\times3})$ with $R \in SO(3)$ a.e.} 

To obtain the estimates in (\ref{eq:estStep4}), we first apply Proposition \ref{CompactProp} to obtain each matrix field $G^h$, and then observe that the estimates follow from the bound on the energy (\ref{eq:hypEnergy}) and the fact that by hypothesis $n_0 \in W^{1,2}(\omega,\mathbb{S}^2)$.

For the convergence, we note the first estimate in (\ref{eq:estStep4}) implies 
\begin{align*}
\int_{\omega} |G^h|^2 d\tilde{z} \leq Ch^2 + 2c(r_f,r_0) \int_{\Omega} |\nabla_h V^h|^2 dz.
\end{align*}
The constant $c(r_f,r_0)$ is from estimating the step-length tensors.  From Step 3, $\nabla_h V^h$ is bounded uniformly in $L^2$, and therefore using the above estimate and the second estimate in (\ref{eq:estStep4}), we conclude that up to a subsequence $G^h \rightharpoonup R$ in $W^{1,2}(\omega,\mathbb{R}^{3\times3})$.  Now, to deduce that $R \in SO(3)$ a.e., we estimate via two applications of the triangle inequality
\begin{align*}
\int_{\omega} \dist^2(R,SO(3)) d\tilde{z} &\leq 2 \int_{\Omega} \left(\dist^2(G^h,SO(3)) dz  + |G_h -R|^2\right) dz \\
&\leq C \left( h^2 + \int_{\Omega}( \dist^2( (\ell_{M^h}^f)^{-1/2} (\nabla_h V^h)(\ell_{M_0^h}^0)^{1/2},SO(3))  + |G_h -R|^2)dz\right)  \\
&\leq C \left( h^2 + h^2 \mathcal{J}_{M_0^h}^{h,\varepsilon_h}(V^h, M^h) + \int_{\omega} |G_h - R|^2 d \tilde{z}\right).  
\end{align*}
In the second estimate, we also use the first estimate in (\ref{eq:estStep4}).  For the third estimate, we recall (\ref{eq:upperBoundCompact}).  Now, by Rellich's theorem, we have $G^h \rightarrow R$ in $L^2(\Omega,\mathbb{R}^{3\times3})$ for a subsequence.  Thus, it is clear given (\ref{eq:hypEnergy}) that the upper bound above vanishes as $h \rightarrow 0$.  This implies $R \in SO(3)$ a.e. as desired. 
\end{proof}

\begin{proof}[Step 5.] {\it $(\ell_{M^h}^f)^{-1/2} (\nabla_h V^h) (\ell_{M_0^h}^0)^{1/2} \rightarrow R$ in $L^2(\Omega,\mathbb{R}^{3\times3})$ for $R$ from Step 4.}

Since 
\begin{align*}
&\int_{\Omega} |(\ell_{M^h}^f)^{-1/2} (\nabla_h V^h) (\ell_{M_0^h}^0)^{1/2} - R|^2 dz \\
&\quad  \quad \quad \quad \leq 2\int_{\Omega}\left(|G^h -R|^2 + |G^h-(\ell_{M^h}^f)^{-1/2} (\nabla_h V^h) (\ell_{M_0^h}^0)^{1/2}|^2\right) dz, 
\end{align*}
we conclude that $(\ell_{M^h}^f)^{-1/2} (\nabla_h V^h) (\ell_{M_0^h}^0)^{1/2} \rightarrow R$ in $L^2(\Omega,\mathbb{R}^{3\times3})$ using Step 4.
\end{proof}

\begin{proof}[Step 6.]{\it Actually, $R = (\ell_{n}^f)^{-1/2} (\tilde{\nabla} y |b) (\ell_{n_0}^0)^{-1/2}$ a.e.\ for the limiting fields above.  In particular, $(\ell_{n}^f)^{-1/2} (\tilde{\nabla} y |b) (\ell_{n_0}^0)^{1/2} \in W^{1,2}(\omega,SO(3))$.}

We observe that $\| (\ell_{M^h}^f)^{-1/2} (\nabla_h V^h)\|_{L^2(\Omega)} \leq c(r_f) \|\nabla_h V^h\|_{L^2(\Omega)} \leq C$ by the compactness of the step-length tensor on $\mathbb{S}^2$ and following Step 3.  So up to a subsequence $(\ell_{M^h}^f)^{-1/2} (\nabla_h V^h)$ converges weakly in $L^2(\Omega,\mathbb{R}^{3\times3})$.   In addition, the results of Step 2 and 3 imply $(\ell_{M^h}^f)^{-1/2} (\nabla_h V^h) \rightharpoonup (\ell_{n}^f)^{-1/2} (\tilde{\nabla} y|b)$ in $L^1(\Omega,\mathbb{R}^{3\times3})$.  Hence, in combination and by the uniqueness of the $L^1$ limit, we also have weak convergence to this limiting field in $L^2$ (rather than just $L^1$). 

Given the weak-$L^2$ convergence just established and the convergence in Step 1, we deduce 
\begin{align*}
(\ell_{M^h}^f)^{-1/2} (\nabla_h V^h)(\ell_{M_0^h}^{0})^{1/2} \rightharpoonup (\ell_{n}^f)^{-1/2}(\tilde{\nabla} y|b)(\ell_{n_0}^0)^{1/2} \quad \text{ in } L^1(\Omega,\mathbb{R}^{3\times3}).
\end{align*} 
By the convergence in Step 5 and the uniqueness of the weak-$L^1$ limit $R = (\ell_{n}^f)^{-1/2}(\tilde{\nabla} y|b)(\ell_{n_0}^0)^{1/2}$ a.e.  To complete the proof, we recall from Step 4 that $R \in W^{1,2}(\omega,\mathbb{R}^{3\times3})$ and that $R \in SO(3)$ a.e.
\end{proof}

\begin{proof}[Step 7.]{\it The sequences in Step 3 actually converge strongly in their respective spaces.  In addition, we have improved regularity: $y \in W^{2,2}(\omega,\mathbb{R}^3)$ and $b$ is independent of $z_3$ and in $W^{1,2}(\omega,\mathbb{R}^3)$. }

For the strong $L^2$ convergence, we have the estimate
\begin{align*}
&\int_{\Omega} |\nabla_h V^h - (\tilde{\nabla} y|b)|^2 dz\\
& \leq 2\int_{\Omega} \left( |\nabla_h V^h - (\ell_{M^h}^f)^{1/2}R (\ell_{M_0^h}^0)^{-1/2}|^2 + |(\ell_{M^h}^f)^{1/2}R (\ell_{M_0^h}^0)^{-1/2} - (\tilde{\nabla} y|b)|^2 \right) dz \\
&\leq C\int_{\Omega} \left( |(\ell_{M^h}^f)^{-1/2} (\nabla_h V^h)(\ell_{M_0^h}^0)^{-1/2} - R|^2 + |(\ell_{M^h}^f)^{1/2} - (\ell_{n}^f)^{1/2}|^2 + |(\ell_{M_0^h}^0)^{-1/2} - (\ell_{n_0}^0)^{-1/2}|^2 \right)dz
\end{align*}
using that $R = (\ell_{n}^f)^{-1/2}(\tilde{\nabla} y|b)(\ell_{n_0}^0)^{1/2}$ a.e. from Step 6, and that the step-length tensors are compact and invertible on $\mathbb{S}^2$.  It is clear that the upper bound $\rightarrow 0$ as $h \rightarrow 0$ due to the strong-$L^2$ convergences of each term (established in the previous steps). Thus, $\nabla_h V^h \rightarrow (\tilde{\nabla} y|b)$ in $L^2(\Omega,\mathbb{R}^{3\times3})$ as desired. 

For the improved regularity, we see that 
\begin{align*}
(\tilde{\nabla} y | b) = (\ell_{n}^f)^{1/2} R (\ell_{n_0}^0)^{-1/2} \quad \text{ a.e. on } \omega.
\end{align*}
Note that $R \in W^{1,2}$ from Step 4, $n \in W^{1,2}$ from Step 2, and $n_0 \in W^{1,2}$ by assumption. By the structure of the step-length tensors, we also have that $(\ell_{n_0}^0)^{-1/2}, (\ell_{n}^f)^{1/2} \in W^{1,2}$.  Thus, the improved regularity is clear from differentiating the right side using the product rule for these Sobolev functions. Finally, $b$ is independent of $z_3$ since $(\ell_{n}^f)^{1/2} R (\ell_{n_0}^0)^{-1/2} e_3$ is independent of $z_3$. 
\end{proof}

\begin{proof}[Step 8.]{\it Actually, 
\begin{align}\label{eq:nCompact}
&\Big\|M^h- \sigma \frac{(\nabla_h V^h) M_0^h}{|(\nabla_h V^h) M_0^h|} \Big\|_{L^2(\Omega)} \rightarrow 0, \quad \text{ and } \quad   n = \sigma \frac{(\tilde{\nabla} y|b) n_0}{|(\tilde{\nabla} y |b) n_0|}\text{ a.e. on } \omega.
\end{align}
for $\sigma$ a fixed constant from the set $\{1,-1\}$.}

Since $M_0^h \in \mathbb{S}^2$ a.e.\ by definition and $\|(\tilde{\nabla} y|b)n\|_{L^{\infty}} \leq C(r_f,r_0)$ given Step 6, 
\begin{align*}
\int_{\Omega}& \big|(I_{3 \times 3} - M_0^h \otimes M_0^h)(\nabla_h V^h)^T M^h - (I_{3\times 3} - n_0 \otimes n_0)(\tilde{\nabla} y|b)^Tn\big|^2 dz \\
& \leq C(r_f,r_0) \int_{\Omega} \Big(|(\nabla_h V^h)^T M^h - (\tilde{\nabla} y|b)^Tn|^2 + |M_0^h \otimes M_0^h - n_0 \otimes n_0|^2 \Big)dz
\end{align*}
Given the convergences from the previous steps, we conclude $(\nabla_h V^h)^T M^h \rightarrow (\tilde{\nabla} y|b)^T n$  and $M_0^h \otimes M_0^h \rightarrow n_0 \otimes n_0$ both in $L^2(\Omega)$.  Thus following the estimate above,
\begin{align*}
(I_{3 \times 3} - M_0^h \otimes M_0^h)(\nabla_h V^h)^T M^h \rightarrow (I_{3\times 3} - n_0 \otimes n_0)(\tilde{\nabla} y|b)^Tn \quad \text{ in } L^{2}(\Omega,\mathbb{R}^3).
\end{align*}
Notice also that 
\begin{align*}
\int_{\Omega} \widehat{W}^{ni}(\nabla V^h, M^h, M_0^h) dz \leq h^2 \mathcal{J}_{M_0^h}^{h,\varepsilon_h}(V^h, M^h) \leq Ch^2.
\end{align*}
Consequently, $(I_{3\times3}-M_0^h \otimes M_0^h)(\nabla_h V^h)^T M^h$ actually converges strongly to zero in $L^2$.   Hence, by the uniqueness of the $L^2$ limit, and using the identity for $(\tilde{\nabla} y|b)$ in Step 6, 
\begin{align*}
 r_f^{1/3} r_0^{1/6} (I_{3 \times 3} - n_0 \otimes n_0) R^T n = (I_{3\times 3} - n_0 \otimes n_0)(\tilde{\nabla} y|b)^Tn =  0 \quad \text{ a.e. on } \omega
\end{align*}
for $R \in SO(3)$ a.e.\ defined from Step 6.  Thus, it must be that $n = Rn_0$ or $n = - Rn_0$ a.e.\ on $\omega$ (note, the sign cannot flip since $R \in W^{1,2}$, $n_0 \in W^{1,2}$ and $n \in W^{1,2}$ from the previous steps).  This follows, for instance, from de Giorgi's lemma, which makes precise the intuition that $W^{1,2}$ functions cannot have jumps. We denote this sign by $\sigma$ as in the statement. Again using the identity for $R$ in Step 6, we conclude the a.e.\ equality in (\ref{eq:nCompact}).  As a consequence, $M^h \rightarrow \sigma (\tilde{\nabla} y |b) n_0 /|(\tilde{\nabla} y|b)n_0|$ in $L^2(\Omega,\mathbb{S}^2)$.  Further,  $(\nabla_hV^h)M_0^h/|(\nabla_h V^h)M_0^h| \rightarrow (\tilde{\nabla} y |b) n_0 /|(\tilde{\nabla} y|b)n_0|$ in $L^2(\Omega,\mathbb{S}^2)$ (using, for instance, the incompressibility of $\nabla_h V^h$, the fact that $M_0^h \in \mathbb{S}^2$ a.e., the $L^2$/pointwise a.e.\ convergence of $(\nabla V^h) M_0^h$ and the Lebesgue dominated convergence theorem).  The convergence in (\ref{eq:nCompact}) follows since $\sigma$ is $1$ or $-1$.
\end{proof} 

\begin{proof}[Step 9.]{\it Finally, 
\begin{align}\label{eq:Metric2DProof}
(\tilde{\nabla} y)^T \tilde{\nabla} y = \ell_{n_0} \text{ a.e. on } \omega.  
\end{align}}

From Step 6, $(\ell_{n}^f)^{-1/2} (\tilde{\nabla} y|b) (\ell^0_{n_0})^{1/2} \in W^{1,2}(\omega,SO(3))$, and from Step 7, $n$ as in (\ref{eq:nCompact}).  Hence,
\begin{align*}
\int_{\omega} W^e\Big((\tilde{\nabla} y|b),\frac{(\tilde{\nabla} y|b)n_0}{|(\tilde{\nabla} y|b)n_0|},n_0\Big) d\tilde{z} &= \int_{\omega} W^e\Big((\tilde{\nabla} y|b),\sigma \frac{(\tilde{\nabla} y|b)n_0}{|(\tilde{\nabla} y|b)n_0|},n_0\Big) d\tilde{z} \\
&= \int_{\omega} W_{nH}((\ell_{n}^f)^{-1/2} (\tilde{\nabla} y|b) (\ell^0_{n_0})^{1/2}) d \tilde{z} = 0
\end{align*}
using that $\sigma$ from Step 8 is either $1$ or $-1$, the definitions of $W^e$ and $W_{nH}$, and since $W_{nH}$ vanishes on $SO(3)$.  The conclusion (\ref{eq:Metric2DProof}) follows by Proposition \ref{EquivalenceProp}.
\end{proof}

\begin{proof}[Proof of Theorem \ref{CompactnessTheorem}.]
The proof follows by the collection of steps above.  In particular, Step 7 shows the strong convergence of $\{V^h\}$ and the desired regularity of the limiting field $y$ as a consequence of (\ref{eq:hypEnergy}).  Step 8 shows the convergence of the director $\{M^h\}$ as required.  Step 9 shows that the metric constraint must also be satisfied.  This is the proof.
\end{proof}

\section{On approximating origami deformations by $\delta$-smoothings}\label{sec:SmoothingSection}
When constructing three dimensional deformations $Y^h$ for nonisometric origami in Section \ref{ssec:nonIsoConst}, we assumed the existence of a $\delta$-smoothing.  We now supplement the proof of this existence in several exemplary cases.    The basic idea of our construction is (i) interfaces can be smoothed trivially and (ii) the existence problem at a junction can be reduced to a linear algebra constraint related to piecewise constant deformation gradients at each junction.  The linear algebra constraint can be found in Theorem \ref{DSmoothTheorem} below.   


\subsection{Formulation of a single junction}\label{ssec:IntroAffine}

We first consider the smoothing of a piecewise affine and continuous deformation of a polygonal region $\omega \subset \mathbb{R}^2$ containing a single junction.

{\it Definition of a single junction.} We fix a right-handed frame with standard basis $\{\tilde{e}_1, \tilde{e}_2 \} \subset \mathbb{R}^2$, and we set  $\tilde{x} := x_1 \tilde{e}_1 + x_2 \tilde{e}_2$.  We suppose $\omega$ contains $K$ interfaces  merging at a point $\tilde{p} \in \mathbb{R}^2$, each separating regions of distinct constant deformation gradient.  For each $\alpha \in \{ 1,\ldots, K\}$, the vector defining the interface (and pointing away from the junction $\tilde{p}$) is called $\tilde{t}_{\alpha} \in \mathbb{S}^1$ with $\tilde{t}_{\alpha}^{\perp} \in \mathbb{S}^1$ the right-handed vector normal to $\tilde{t}_{\alpha}$.  A schematic of this is shown in Figure \ref{fig:RefRegions} for (a) an exterior junction (i.e., where the junction $\tilde{p}$ lies on $\partial \omega$) and (b) an interior junction.  We write 
\begin{align}
\omega = \bigcup_{\alpha \in \{1,\ldots,K\}} S_{\alpha} 
\end{align}
where each $S_{\alpha}$ is a polygonal sector containing the $\tilde{t}_{\alpha}$ interface whose boundaries merging to $\tilde{p}$ either bisect the angle between $\tilde{t}_{\alpha-1}$ and $\tilde{t}_{\alpha}$, bisect the angle between $\tilde{t}_{\alpha}$ and $\tilde{t}_{\alpha + 1}$ or form the boundary of $\omega$.  A schematic of this is shown in Figure \ref{fig:RefRegions} for (c) an exterior junction and (d) an interior junction.  Finally, we let $\theta_{\alpha} >0$ denote the angle between $\tilde{t}_{\alpha}$ and $\tilde{t}_{\alpha + 1}$ for each $\alpha \in \{1,\ldots K-1\}$ (and $\theta_K>0$ the angle between $\tilde{t}_K$ and $\tilde{t}_1$ if $\tilde{p}$ is an interior junction) and  define 
\begin{align}\label{eq:thetaAst}
\theta^{\ast} := \min_{\alpha} \theta_{\alpha}.
\end{align}

\begin{figure}
\centering
\begin{subfigure}{\linewidth}
\centering
\includegraphics[width = 4.8 in]{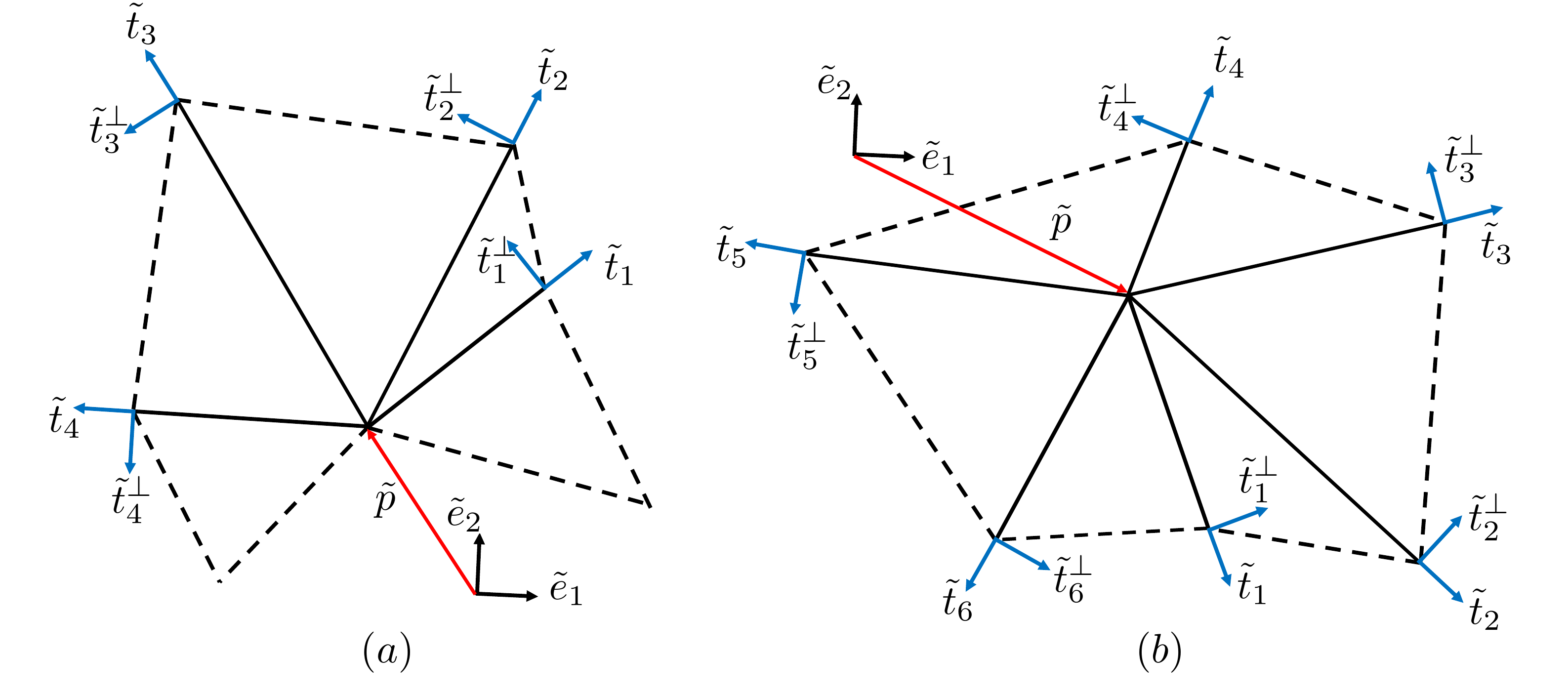}
\end{subfigure}
\begin{subfigure}{\linewidth}
\centering
\includegraphics[width = 4.8 in]{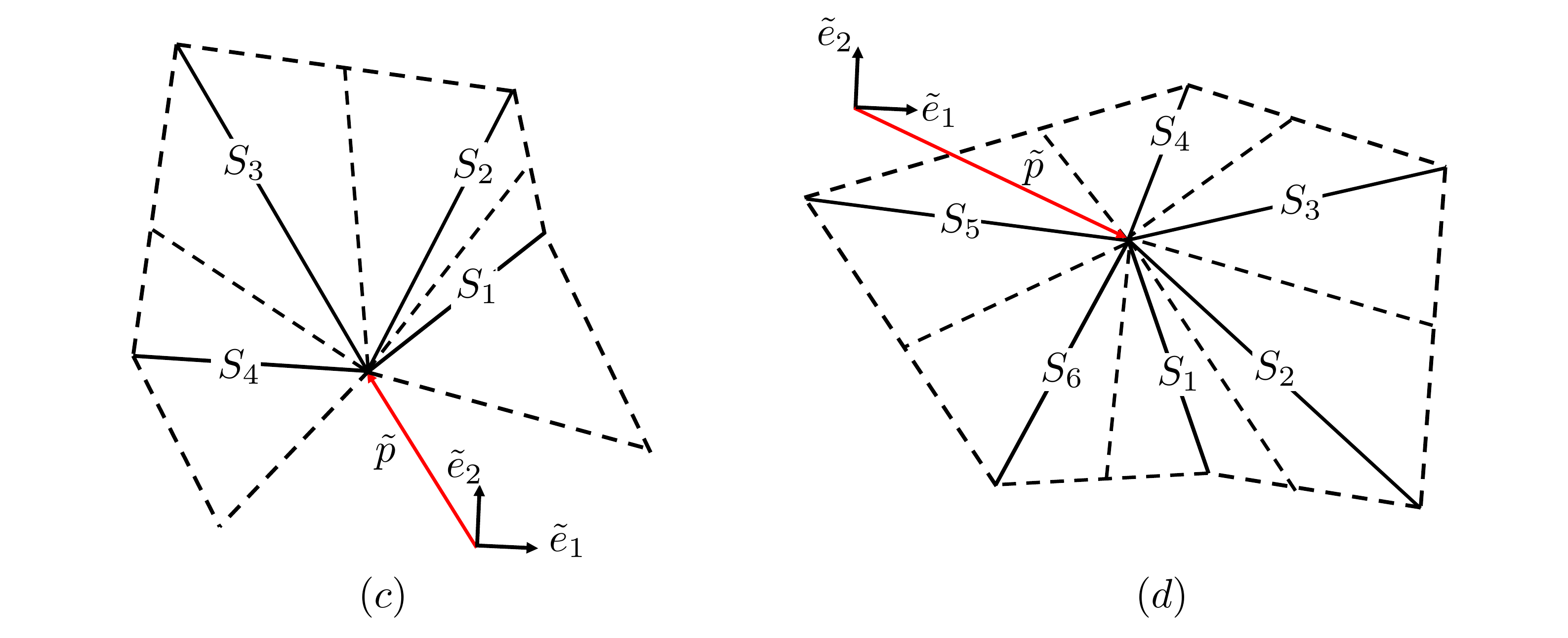}
\end{subfigure}
\caption{Schematic on a single junction at point $\tilde{p}$: Exterior (a)/(c) and interior (b)/(d). }
\label{fig:RefRegions}
\end{figure}

{\it Definition of origami deformation of a single junction.} We consider a general piecewise affine and continuous deformation $y \colon \omega \rightarrow \mathbb{R}^3$ of this single junction at point $\tilde{p}$.  This is defined as
\begin{align}\label{eq:yAffine}
y(\tilde{x}) = \gamma_{\alpha}((\tilde{x}-\tilde{p}) \cdot \tilde{t}_{\alpha}^{\perp}) + ((\tilde{x}-\tilde{p}) \cdot \tilde{t}_{\alpha}) \tilde{F}_{\alpha} \tilde{t}_{\alpha} + y(\tilde{p}) \quad \text{ if } \quad \tilde{x} \in S_{\alpha}, \quad \alpha \in \{ 1, \ldots K\},  
\end{align}
for some $y(\tilde{p}) \in \mathbb{R}^{3}$, for $\gamma_{\alpha} \colon \mathbb{R} \rightarrow \mathbb{R}^3$ satisfying
\begin{align}\label{eq:gammaAlpha}
\gamma_{\alpha}(s) = \begin{cases}
s \tilde{F}_{\alpha -1} \tilde{t}_{\alpha}^{\perp} & \text{ if  } s < 0 \\
s \tilde{F}_{\alpha} \tilde{t}_{\alpha}^{\perp} & \text{ if } s > 0
\end{cases}, \quad \alpha \in \{1, \ldots K\},
\end{align}
and for any set of matrices $\tilde{F}_0, \tilde{F}_1, \ldots, \tilde{F}_K \in \mathbb{R}^{3 \times 2}$ having the properties:
\begin{equation}
\begin{aligned}\label{eq:CompatCons}
&\tilde{F}_{\alpha-1} \neq \tilde{F}_{\alpha}, \quad (\tilde{F}_{\alpha} - \tilde{F}_{\alpha - 1}) \tilde{t}_{\alpha} = 0 \quad \text{ and }  \\
&\lambda \adj \tilde{F}_{\alpha-1} + (1-\lambda) \adj \tilde{F}_{\alpha} \neq 0, \quad \forall \;\;  \lambda \in [0,1]
\end{aligned}
\end{equation}
for each $\alpha \in \{ 1,\ldots K\}$ (if $\tilde{p}$ is an interior junction, then $\tilde{F}_0 = \tilde{F}_K$).  Here, we introduce the notation 
\begin{align}
\adj \tilde{F} := \tilde{F} \tilde{e}_1 \times \tilde{F} \tilde{e}_2.
\end{align} 

The first condition in (\ref{eq:CompatCons}) ensures that each $\tilde{t}_{\alpha}$ interface is a non-trivial (i.e., there is a jump in the deformation gradient across the interface).  The second condition in (\ref{eq:CompatCons}) is the rank-one compatibility condition which ensures that $y$ is continuous across each $\tilde{t}_{\alpha}$ interface.  Finally, the latter condition in (\ref{eq:CompatCons}) ensures that adjoining regions do not fold into themselves.

\subsection{On $\delta$-smoothings of a single junction}

We now show the existences of a $\delta$-smoothing for a special class of origami junctions where the $F_{\alpha}$ satisfy an algebraic condition.  The general problem of finding $\delta$-smoothings for any junction that satisfies (\ref{eq:yAffine})-(\ref{eq:CompatCons}) remains open.  However, our special class covers the examples of physical interest.  

To introduce our result, we recall that the convex hull of a finite collection of $\mathbb{R}^{3\times2}$ matrices is
\begin{align}
\co \{ \tilde{F}_{1}, \ldots, \tilde{F}_{N}\} := \Big\{ \sum_{i = 1}^{N} \lambda_{i} \tilde{F}_{i} \colon \lambda_{i} \geq 0 \;\; \text{ for each $i$ and }\;\; \sum_{i = 1}^{N} \lambda_{i} = 1\Big\}.
\end{align}
In addition, for any collection $\mathcal{S}$ of $\mathbb{R}^{3 \times 2}$ matrices, we denote the lower rank of the matrices in this set as 
\begin{align}
\rank_l \mathcal{S} := \min\{ \rank \tilde{F} \colon \tilde{F} \in \mathcal{S} \}. 
\end{align}
Our main result on $\delta$-smoothings of generic origami deformation of a single junction is as follows:
\begin{theorem}\label{DSmoothTheorem}
Let $\omega$ be a single junction (as defined above) and let $y$ be an origami deformation of this junction defined by (\ref{eq:yAffine}), (\ref{eq:gammaAlpha}) and (\ref{eq:CompatCons}).  Consider the set 
\begin{align}\label{eq:AySet}
\mathcal{A}_y := \left\{ \tilde{F} \in \mathbb{R}^{3\times2}  \colon \rank_{l} \left(\bigcup_{\alpha = 1}^{K} \co\{ \tilde{F}, \tilde{F}_{\alpha}, \tilde{F}_{\alpha-1}\} \right) = 2 \right\}.
\end{align}
If $\mathcal{A}_y$ is non-empty, then $y$ has a $\delta$-smoothing.  
\end{theorem}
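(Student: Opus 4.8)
Recall that a $\delta$-smoothing is a map $y^\delta\in C^3(\bar\omega,\mathbb{R}^3)$ that agrees with $y$ off a set of area $O(\delta)$, has $|\partial_1 y^\delta\times\partial_2 y^\delta|\geq c>0$, and satisfies $\|\tilde\nabla y^\delta\|_{L^\infty}\leq C$, $\|\tilde\nabla\tilde\nabla y^\delta\|_{L^\infty}\leq C\delta^{-1}$, $\|\tilde\nabla^{(3)}y^\delta\|_{L^\infty}\leq C\delta^{-2}$. The plan is to alter $y$ only inside a strip of width $\sim\delta$ about each of the $K$ interfaces and inside a ball $B_{\Lambda\delta}(\tilde p)$ about the junction, for a large $\delta$-independent constant $\Lambda$. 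Each alteration will agree with $y$ — and, on overlaps, with the others — near the boundary of its region, so the pieces assemble into a single $C^3$ map whose modified set has area $O(\delta)$; it then remains to verify the four bounds on each region in turn.

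\textbf{Interfaces.} Fix an interface $\tilde t_\alpha$ and write $s:=(\tilde x-\tilde p)\cdot\tilde t_\alpha^\perp$ for the signed distance to it. By \eqref{eq:CompatCons}, $\tilde F_\alpha-\tilde F_{\alpha-1}$ kills $\tilde t_\alpha$, so $\tilde F_\alpha-\tilde F_{\alpha-1}=a\otimes\tilde t_\alpha^\perp$ for some $a\in\mathbb{R}^3$. Fix once and for all a profile $g\in C^\infty(\mathbb{R};[0,1])$ with $g\equiv0$ on $(-\infty,-1]$ and $g\equiv1$ on $[1,\infty)$, and define $y^\delta$ on $\{|s|<\delta\}$ to have gradient $\tilde M(s):=(1-g(s/\delta))\tilde F_{\alpha-1}+g(s/\delta)\tilde F_\alpha$ and to match $y$ at $s=\pm\delta$. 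This is consistent — $\tilde M(\cdot)$ is a genuine, $C^\infty$ gradient field — exactly because $(\tilde F_\alpha-\tilde F_{\alpha-1})\tilde t_\alpha=0$ forces $\tilde M'(s)\tilde t_\alpha=0$. Since $\tilde F_{\alpha-1}$ and $\tilde F_\alpha$ agree on $\tilde t_\alpha$, the map $\lambda\mapsto\adj\!\big((1-\lambda)\tilde F_{\alpha-1}+\lambda\tilde F_\alpha\big)$ is affine, so the no-fold hypothesis in \eqref{eq:CompatCons} and compactness of $[0,1]$ give $|\partial_1 y^\delta\times\partial_2 y^\delta|=|\adj\tilde M(s)|\geq c>0$ throughout the strip. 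The bound $\|\tilde\nabla y^\delta\|_{L^\infty}\leq C$ is immediate, and the $\delta^{-1},\delta^{-2}$ bounds on the higher derivatives come from the one, resp.\ two, $s$-derivatives of $g(\cdot/\delta)$ that hit $a\otimes\tilde t_\alpha^\perp$.

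\textbf{The junction, reduced to a fixed-scale problem.} Take $\tilde p=0$ and $y(\tilde p)=0$, so that near $\tilde p$ the map $y$ from \eqref{eq:yAffine} is positively $1$-homogeneous. Suppose we can produce one $\delta$-independent map $z\in C^3(\overline{B_\Lambda},\mathbb{R}^3)$ such that: (i) on $B_\Lambda\setminus B_{\Lambda/2}$, $z$ equals $y$ away from the interfaces and the unit-scale interface smoothing of $y$ near them (so that after rescaling $z$ patches to the outer interface smoothings and to the far field $y$); (ii) on a fixed inner ball $B_{\rho_0}$, $z$ is affine with gradient the pivot $\tilde F$; (iii) $|\partial_1 z\times\partial_2 z|=|\adj\tilde\nabla z|\geq c>0$ on $B_\Lambda$. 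Then put $y^\delta(\tilde x):=\delta\,z(\tilde x/\delta)$ on $B_{\Lambda\delta}(\tilde p)$. By $1$-homogeneity and (i), $y^\delta=y$ near $\partial B_{\Lambda\delta}(\tilde p)$; by the chain rule $\tilde\nabla y^\delta(\tilde x)=\tilde\nabla z(\tilde x/\delta)$, $\tilde\nabla\tilde\nabla y^\delta=\delta^{-1}\tilde\nabla\tilde\nabla z(\tilde x/\delta)$ and $\tilde\nabla^{(3)}y^\delta=\delta^{-2}\tilde\nabla^{(3)}z(\tilde x/\delta)$, so all four estimates hold with $\delta$-independent constants, the non-degeneracy by (iii). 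Hence everything reduces to constructing the fixed map $z$.

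\textbf{Where $\tilde F\in\mathcal{A}_y$ is used, and the main obstacle.} Fix $\tilde F\in\mathcal{A}_y$. The map $z$ is built by interpolating, sector by sector, on the annulus $\rho_0\leq|\tilde\xi|\leq\Lambda/2$ between the affine map $\tilde F\tilde\xi$ near the centre and the (interface-smoothed) homogeneous data near the outer edge, arranged so that at every point $\tilde\nabla z$ stays in the compact set $\mathcal{K}:=\bigcup_{\alpha=1}^K\co\{\tilde F,\tilde F_\alpha,\tilde F_{\alpha-1}\}$. Since $\tilde F\in\mathcal{A}_y$ means $\rank_l\mathcal{K}=2$, every $M\in\mathcal{K}$ has $\adj M\neq0$, and by compactness $\inf\{|\adj M|:M\in\mathcal{K}\}=:c>0$, which is precisely (iii); (i) and (ii) are arranged by construction. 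The real difficulty — and the crux of the proof — is that $z$ must be an \emph{honest} deformation, so $\tilde\nabla z$ is automatically curl-free and cannot be prescribed freely: the naive radial interpolation $z=(1-\psi(|\tilde\xi|))\tilde F\tilde\xi+\psi(|\tilde\xi|)\tilde F_\alpha\tilde\xi$ has gradient $\tilde F+\psi(\tilde F_\alpha-\tilde F)+\psi'(|\tilde\xi|)\,|\tilde\xi|^{-1}\big((\tilde F_\alpha-\tilde F)\tilde\xi\big)\otimes\tilde\xi$, whose rank-one correction is of size $O(1)$ — uniformly in $\rho_0$, because the homogeneity of $y$ makes the whole construction scale-invariant — and it can push $\tilde\nabla z$ out of $\mathcal{K}$. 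The resolution is to refine the annular region into a finite fan of sub-sectors on which the interpolant is piecewise affine, with constant gradients chosen inside the individual hulls $\co\{\tilde F,\tilde F_\alpha,\tilde F_{\alpha-1}\}$ and with the gradients of adjacent sub-sectors rank-one compatible (so the piecewise affine map is continuous), and then to mollify at a scale $\ll\rho_0$: this restores $C^3$ regularity while, by convexity of the hulls, keeping the gradient in $\mathcal{K}$. Showing that a rank-one-compatible fan joining $\tilde F$ to each $\tilde F_\alpha$ within these hulls exists whenever $\tilde F\in\mathcal{A}_y$, and assembling the sub-sector data consistently around $\tilde p$, is exactly where the combinatorial content of the hypothesis is spent; the interface smoothing and the scaling bookkeeping are routine.
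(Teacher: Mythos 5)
Your treatment of the interfaces is correct and coincides with the paper's (Proposition \ref{1DMollProp}): mollify the one--dimensional profile in the direction $\tilde t_\alpha^\perp$, observe that the gradient stays on the segment $\co\{\tilde F_{\alpha-1},\tilde F_\alpha\}$, and use the no--fold condition in (\ref{eq:CompatCons}) together with the affinity of $\lambda\mapsto\adj((1-\lambda)\tilde F_{\alpha-1}+\lambda\tilde F_\alpha)$ to bound the adjugate below. The junction argument, however, has a genuine gap. Your proposed resolution of the "curl--free" difficulty is to build a finite fan of sub--sectors carrying a continuous piecewise affine interpolant whose constant gradients lie in the hulls $\co\{\tilde F,\tilde F_\alpha,\tilde F_{\alpha-1}\}$ and are rank--one compatible across sub--sector boundaries, and you explicitly defer the existence of such a fan to "where the combinatorial content of the hypothesis is spent." But that existence is precisely the hard step, and it is not established: the hypothesis $\tilde F\in\mathcal{A}_y$ only guarantees that every matrix in $\bigcup_\alpha\co\{\tilde F,\tilde F_\alpha,\tilde F_{\alpha-1}\}$ has rank two; it does not by itself produce rank--one connections between $\tilde F$ and the $\tilde F_\alpha$ inside those hulls, nor a consistent assignment of them around $\tilde p$. (There is also a secondary issue: mollifying the fan near a boundary between sector $\alpha$ and sector $\alpha+1$ produces averages in $\co\{\tilde F,\tilde F_{\alpha-1},\tilde F_\alpha,\tilde F_{\alpha+1}\}$, which is not contained in the \emph{union} of hulls your hypothesis controls, since a union of convex sets is not convex.) As written, the proof is incomplete.

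Moreover, the obstacle you identify is not actually fatal to the "naive" radial interpolation, and the paper's proof (Proposition \ref{propReduce} and Lemma \ref{cutoffLemma}) exploits exactly this. Writing $\tilde\nabla y^\delta=\tilde F^\delta+\tilde G^\delta$ with $\tilde F^\delta$ the convex--combination part and $\tilde G^\delta$ the rank--one correction, one has $|\tilde G^\delta|\le C\,\big| |\tilde x-\tilde p|\,\psi_\delta'(|\tilde x-\tilde p|)\big|$ because $y_c$ and $y_0^\delta$ agree at $\tilde p$ up to $O(\delta)$ and are uniformly Lipschitz, so the difference being interpolated is $O(\delta+|\tilde x-\tilde p|)$. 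Choosing a \emph{logarithmic} cutoff on an annulus $B_{M\delta}\setminus B_{m\delta}$ gives $\|s\psi_\delta'(s)\|_{L^\infty}\approx 1/\log(M/m)$, which can be made smaller than any $\epsilon$ by taking $M/m$ large, independently of $\delta$. Your claim that the correction is $O(1)$ "uniformly in $\rho_0$" is true only for a fixed transition profile; widening the annulus and using the logarithmic profile defeats the scale--invariance objection, so that $|\adj\tilde\nabla y^\delta|\ge c^\ast-C\epsilon(1+\epsilon)\ge c^\ast/2$, and no piecewise--affine fan is needed. You should either carry out this estimate or supply a complete construction and compatibility proof for your fan; in its current form the junction step does not go through.
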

\noindent We prove this result in two steps Proposition \ref{1DMollProp} and \ref{propReduce} below.  

First, consider a mollification of each $\gamma_{\alpha}$ in (\ref{eq:gammaAlpha}), i.e., $\gamma_{\alpha}^{\delta} \in C^{\infty}(\mathbb{R},\mathbb{R}^3)$ given by 
\begin{align}\label{eq:gammaAlphaDelta}
\gamma_{\alpha}^{\delta} := \gamma_{\alpha} \ast \eta_{\delta}, \quad \alpha \in \{ 1, \ldots, K\}
\end{align}
for $\eta_{\delta} \in C^{\infty}(\mathbb{R},\mathbb{R})$ the standard symmetric mollifier supported on the interval $(-\delta/2,\delta/2)$.   For any $\delta > 0$, we define the function $y_0^{\delta} \colon \omega \rightarrow \mathbb{R}^3$ given by 
\begin{align}\label{eq:y0delta}
y_0^{\delta}(\tilde{x}) := \gamma^{\delta}_{\alpha}((\tilde{x}-\tilde{p}) \cdot \tilde{t}_{\alpha}^{\perp}) + ((\tilde{x}-\tilde{p}) \cdot \tilde{t}_{\alpha}) \tilde{F}_{\alpha} \tilde{t}_{\alpha} + y(\tilde{p}) \quad \text{ if } \quad \tilde{x} \in S_{\alpha}, \quad \alpha \in \{ 1, \ldots K\}.
\end{align}
This is a $\delta$-smoothing of $y$ outside of a small neighborhood of the junction.  
\begin{proposition}\label{1DMollProp}
Let $\omega$ and $y$ be as in Theorem \ref{DSmoothTheorem}.  Set $m > 1/\sin(\theta^{\ast}/2)$ for $\theta^{\ast}$ in (\ref{eq:thetaAst}). For any $\delta > 0$, define $y_0^{\delta}$ as in (\ref{eq:y0delta}).  Then $y_0^{\delta}$ restricted to $\omega \setminus B_{m \delta}(\tilde{p})$ is a $\delta$-smoothing of $y$.  Moreover,
\begin{align}\label{eq:identGy0}
\tilde{\nabla} y_0^{\delta}(\tilde{x}) = (1- \lambda_{\delta}((\tilde{x}-\tilde{p}) \cdot \tilde{t}_{\alpha}^{\perp})) \tilde{F}_{\alpha-1} + \lambda_{\delta}((\tilde{x}- \tilde{p}) \cdot \tilde{t}_{\alpha}^{\perp}) \tilde{F}_{\alpha}, \quad \text{ if } \quad \tilde{x} \in S_{\alpha} \setminus B_{m\delta}(\tilde{p})
\end{align}
where $\lambda_{\delta} \in C^{\infty}(\mathbb{R}, [0,1])$ is given by $\lambda_{\delta}(s) := \int_{-\delta/2}^s \eta_{\delta}(t) dt$.
\end{proposition}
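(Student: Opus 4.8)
The plan is to verify, one by one, the clauses of Definition~\ref{defn:delta} for $y_0^\delta$ on the truncated domain $\omega\setminus B_{m\delta}(\tilde p)$: that $y_0^\delta$ is $C^3$ (in fact $C^\infty$) up to the boundary there, that it equals $y$ off a set of area $O(\delta)$, and that it satisfies the three a~priori bounds together with the non-degeneracy $|\partial_1y_0^\delta\times\partial_2 y_0^\delta|\ge c>0$. Everything will follow once the gradient identity~(\ref{eq:identGy0}) is in hand; the one genuinely geometric point is the $C^\infty$ gluing across the sector boundaries, and that is exactly where the hypothesis $m>1/\sin(\theta^\ast/2)$ is consumed. (Filling in a neighbourhood of the junction $\tilde p$ itself is a separate matter, deferred to Proposition~\ref{propReduce}/Theorem~\ref{DSmoothTheorem}.)

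First I would treat smoothness. On each open sector $S_\alpha$, $y_0^\delta$ is visibly $C^\infty$, since $\gamma_\alpha^\delta=\gamma_\alpha\ast\eta_\delta\in C^\infty(\mathbb{R},\mathbb{R}^3)$ and (\ref{eq:y0delta}) composes it with affine functions of $\tilde x$. The work is at the common boundary of $S_\alpha$ and $S_{\alpha+1}$, which bisects the angle $\theta_\alpha$ between $\tilde t_\alpha$ and $\tilde t_{\alpha+1}$: a point $\tilde x$ on this bisector at distance $\rho\ge m\delta$ from $\tilde p$ has perpendicular distance $\rho\sin(\theta_\alpha/2)$ from each of the lines $\tilde p+\mathbb{R}\tilde t_\alpha$ and $\tilde p+\mathbb{R}\tilde t_{\alpha+1}$, so $m>1/\sin(\theta^\ast/2)$ forces $|(\tilde x-\tilde p)\cdot\tilde t_\alpha^\perp|>\delta/2$ and $|(\tilde x-\tilde p)\cdot\tilde t_{\alpha+1}^\perp|>\delta/2$. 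Since $\eta_\delta$ is supported in $(-\delta/2,\delta/2)$ and each $\gamma_\beta$ is separately affine on $\{s>\delta/2\}$ and on $\{s<-\delta/2\}$, on a whole neighbourhood of such an $\tilde x$ we have $\gamma_\alpha^\delta=\gamma_\alpha$ and $\gamma_{\alpha+1}^\delta=\gamma_{\alpha+1}$. Substituting these affine expressions into (\ref{eq:y0delta}) and decomposing $\tilde x-\tilde p$ in the basis $\{\tilde t_\beta,\tilde t_\beta^\perp\}$, the $S_\alpha$-side formula collapses to $\tilde F_\alpha(\tilde x-\tilde p)+y(\tilde p)$, and the $S_{\alpha+1}$-side collapses to the very same thing after using the rank-one compatibility $(\tilde F_{\alpha+1}-\tilde F_\alpha)\tilde t_{\alpha+1}=0$ from (\ref{eq:CompatCons}). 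Thus near $\tilde x$ both pieces are one affine map, so $y_0^\delta$ and all its derivatives match across the interface and $y_0^\delta\in C^\infty(\overline{\omega\setminus B_{m\delta}(\tilde p)},\mathbb{R}^3)$.

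Next I would compute $\tilde\nabla y_0^\delta$. On $S_\alpha$, with $s_\alpha:=(\tilde x-\tilde p)\cdot\tilde t_\alpha^\perp$, differentiating (\ref{eq:y0delta}) gives $\tilde\nabla y_0^\delta=(\gamma_\alpha^\delta)'(s_\alpha)\otimes\tilde t_\alpha^\perp+(\tilde F_\alpha\tilde t_\alpha)\otimes\tilde t_\alpha$. From (\ref{eq:gammaAlpha}), $\gamma_\alpha'$ equals $\tilde F_{\alpha-1}\tilde t_\alpha^\perp$ on $\{s<0\}$ and $\tilde F_\alpha\tilde t_\alpha^\perp$ on $\{s>0\}$, hence $(\gamma_\alpha^\delta)'(s)=(1-\lambda_\delta(s))\tilde F_{\alpha-1}\tilde t_\alpha^\perp+\lambda_\delta(s)\tilde F_\alpha\tilde t_\alpha^\perp$ with $\lambda_\delta(s)=\int_{-\delta/2}^s\eta_\delta(t)\,dt$. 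Testing the resulting matrix against $\tilde t_\alpha^\perp$ (it returns $(1-\lambda_\delta)\tilde F_{\alpha-1}\tilde t_\alpha^\perp+\lambda_\delta\tilde F_\alpha\tilde t_\alpha^\perp$) and against $\tilde t_\alpha$ (it returns $\tilde F_\alpha\tilde t_\alpha$, which equals $(1-\lambda_\delta)\tilde F_{\alpha-1}\tilde t_\alpha+\lambda_\delta\tilde F_\alpha\tilde t_\alpha$ by $(\tilde F_\alpha-\tilde F_{\alpha-1})\tilde t_\alpha=0$) shows $\tilde\nabla y_0^\delta=(1-\lambda_\delta(s_\alpha))\tilde F_{\alpha-1}+\lambda_\delta(s_\alpha)\tilde F_\alpha$, i.e.\ (\ref{eq:identGy0}).

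Finally I would cash in (\ref{eq:identGy0}). Taking $\omega_\delta$ to be the union over $\alpha$ of $\{\tilde x\in S_\alpha:|(\tilde x-\tilde p)\cdot\tilde t_\alpha^\perp|\le\delta/2\}$ — $K$ strips of width $\delta$ through $\tilde p$ — gives $|\omega_\delta|=O(\delta)$, and off $\omega_\delta$ we have $\gamma_\alpha^\delta=\gamma_\alpha$, hence $y_0^\delta=y$. Since $\lambda_\delta\in[0,1]$, (\ref{eq:identGy0}) yields $\|\tilde\nabla y_0^\delta\|_{L^\infty}\le\max_\alpha(\|\tilde F_{\alpha-1}\|+\|\tilde F_\alpha\|)$, a $\delta$-independent constant; each further derivative of (\ref{eq:identGy0}) only introduces factors $\lambda_\delta^{(k)}(s_\alpha)=\eta_\delta^{(k-1)}(s_\alpha)$ contracted with copies of $\tilde t_\alpha^\perp$, and $\|\eta_\delta^{(k-1)}\|_{L^\infty}=O(\delta^{-k})$, so $\|\tilde\nabla\tilde\nabla y_0^\delta\|_{L^\infty}=O(\delta^{-1})$ and $\|\tilde\nabla^{(3)} y_0^\delta\|_{L^\infty}=O(\delta^{-2})$. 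For the non-degeneracy, $\partial_1y_0^\delta\times\partial_2y_0^\delta=\adj(\tilde\nabla y_0^\delta)$; the key observation is that $\tilde F_\alpha-\tilde F_{\alpha-1}$ annihilates $\tilde t_\alpha$, hence has rank $\le1$, say $\tilde F_\alpha-\tilde F_{\alpha-1}=a\otimes\tilde t_\alpha^\perp$, and the expansion of $\adj(\tilde F_{\alpha-1}+\lambda\,a\otimes\tilde t_\alpha^\perp)$ has vanishing quadratic term (it is proportional to $a\times a=0$), so $\adj$ is \emph{affine} along the segment from $\tilde F_{\alpha-1}$ to $\tilde F_\alpha$; combined with (\ref{eq:identGy0}) this gives
\begin{align*}
\partial_1y_0^\delta\times\partial_2y_0^\delta=(1-\lambda_\delta(s_\alpha))\adj\tilde F_{\alpha-1}+\lambda_\delta(s_\alpha)\adj\tilde F_\alpha\qquad\text{on }S_\alpha\setminus B_{m\delta}(\tilde p).
\end{align*}
The last line of (\ref{eq:CompatCons}) says this convex combination never vanishes, so minimizing its norm over the compact set $\lambda\in[0,1]$ and over the finitely many $\alpha$ produces a $\delta$-independent $c>0$, which is the desired lower bound. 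I expect the smoothness/gluing step to be the main obstacle, since it is the only place needing real geometric bookkeeping and is the reason the truncation radius $m\delta$ (tied to $\theta^\ast$) is forced on us; the remark that the cofactor $\adj$ interpolates affinely along rank-one segments is the small but essential algebraic point that lets the hypothesis (\ref{eq:CompatCons}) be applied verbatim to the full-rank requirement.
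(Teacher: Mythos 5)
Your proposal is correct and follows essentially the same route as the paper's proof: trivial mollification away from the interface strips (with the choice of $m$ guaranteeing the sector boundaries lie outside the mollification strips, so the pieces glue as a single affine map via rank-one compatibility), the gradient identity from $(\gamma_\alpha^\delta)'$, and the lower bound on $|\partial_1 y_0^\delta\times\partial_2 y_0^\delta|$ from the affine interpolation of $\adj$ along the rank-one segment together with the non-folding condition in (\ref{eq:CompatCons}). The only cosmetic difference is that you derive the affineness of $\adj$ from the rank-one increment $a\otimes\tilde t_\alpha^\perp$, whereas the paper expands the cross product $\tilde F_\alpha\tilde t_\alpha\times(\cdot)$ directly using $\tilde F_{\alpha-1}\tilde t_\alpha=\tilde F_\alpha\tilde t_\alpha$; these are the same algebra.
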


The issue with this construction, however, is that $y_0^{\delta}$ is not even continuous on $B_{m\delta}(\tilde{p})$, so we require a modification of this deformation in a neighborhood of $\tilde{p}$ for a $\delta$-smoothing on all of $\omega$.  From now on, we assume that the set $\mathcal{A}_y$ in (\ref{eq:AySet}) is non-empty. We replace $y_0^{\delta}$ on the $B_{m\delta}(\tilde{p})$ by 
\begin{align}\label{eq:yc}
y_c(\tilde{x}) = \tilde{F}_c (\tilde{x} - \tilde{p}) + y(\tilde{p}), \quad \tilde{F}_{c} \in \mathcal{A}_y.
\end{align} 
Then for any $\delta > 0$, we define $y^{\delta} \colon \omega \rightarrow \mathbb{R}^3$ as 
\begin{align}\label{eq:ydelta}
y^{\delta}(\tilde{x}) = y_0^{\delta}(\tilde{x}) + \psi_{\delta}(|\tilde{x}- \tilde{p}|) (y_c(\tilde{x}) - y_0^{\delta}(\tilde{x})) 
\end{align}
for some cutoff function $\psi_{\delta}$ such that  
\begin{equation}
\begin{aligned}\label{eq:cutoff}
\psi_{\delta} \in C^{3}(\mathbb{R}, [0,1]), \quad \psi_{\delta}(s) = \begin{cases}
1 & \text{ if } s < m \delta \\
0 & \text{ if } s > M \delta
\end{cases}, \quad M > m > 1/ \sin(\theta^{\ast}/2) 
\end{aligned}
\end{equation} 
We make the following observation about this construction:
\begin{proposition}\label{propReduce}
Let $\omega$ and $y$ be as in Theorem \ref{DSmoothTheorem}.  Let $\mathcal{A}_y$ be non-empty.  For any $\delta > 0$ define $y^{\delta}$ as (\ref{eq:ydelta}) for $y_0^{\delta}$ in (\ref{eq:y0delta}) with $y_c$ as in (\ref{eq:yc}).   There exists a $\psi_{\delta}$ satisfying (\ref{eq:cutoff}) such that $y^{\delta}$ is a $\delta$-smoothing.  
\end{proposition}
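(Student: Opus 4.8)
The plan is to produce the cutoff by \emph{rescaling the junction}. Fix $\tilde{\psi}\colon[0,\infty)\to[0,1]$ with $\tilde{\psi}\equiv 1$ on $[0,m]$, $\tilde{\psi}\equiv 0$ on $[M,\infty)$, $\tilde{\psi}\in C^{3}$ and flat (vanishing derivatives up to order $3$) near $m$ and near $M$, and set $\psi_{\delta}(s):=\tilde{\psi}(s/\delta)$; this satisfies \eqref{eq:cutoff}. Outside $B_{M\delta}(\tilde{p})$ we have $\psi_{\delta}\equiv 0$, so $y^{\delta}=y_{0}^{\delta}$ and Proposition \ref{1DMollProp} gives all properties of Definition \ref{defn:delta} there, including $y^{\delta}=y$ off a set of measure $O(\delta)$; adding the ball $B_{M\delta}(\tilde{p})$ of measure $\pi M^{2}\delta^{2}=O(\delta)$ to the exceptional set, it remains only to treat $B_{M\delta}(\tilde{p})$. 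Here the point is a scaling identity: since each $\gamma_{\alpha}$ in \eqref{eq:gammaAlpha} is positively $1$-homogeneous, $(\gamma_{\alpha}\ast\eta_{\delta})(\delta w)=\delta\,(\gamma_{\alpha}\ast\eta_{1})(w)$, and the sectors $S_{\alpha}$ are cones; consequently $\delta^{-1}\big(y_{0}^{\delta}(\tilde{p}+\delta\tilde{\xi})-y(\tilde{p})\big)=z_{0}(\tilde{\xi})$ with $z_{0}(\tilde{\xi}):=\gamma_{\alpha}^{1}(\tilde{\xi}\cdot\tilde{t}_{\alpha}^{\perp})+(\tilde{\xi}\cdot\tilde{t}_{\alpha})\tilde{F}_{\alpha}\tilde{t}_{\alpha}$ on $S_{\alpha}$ (where $\gamma_{\alpha}^{1}:=\gamma_{\alpha}\ast\eta_{1}$), and $\delta^{-1}\big(y_{c}(\tilde{p}+\delta\tilde{\xi})-y(\tilde{p})\big)=\tilde{F}_{c}\tilde{\xi}=:z_{c}(\tilde{\xi})$. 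Hence $y^{\delta}(\tilde{p}+\delta\tilde{\xi})=y(\tilde{p})+\delta\,z(\tilde{\xi})$ with $z:=z_{0}+\tilde{\psi}(|\cdot|)(z_{c}-z_{0})$, a fixed function on $\overline{B_{M}(0)}$ independent of $\delta$, so that $\tilde{\nabla}y^{\delta}=(\tilde{\nabla}z)(\tilde{\xi})$, $\tilde{\nabla}\tilde{\nabla}y^{\delta}=\delta^{-1}(\tilde{\nabla}\tilde{\nabla}z)(\tilde{\xi})$, $\tilde{\nabla}^{(3)}y^{\delta}=\delta^{-2}(\tilde{\nabla}^{(3)}z)(\tilde{\xi})$ on $B_{M\delta}(\tilde{p})$.

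Next I would check regularity and the derivative bounds. Since $m>1/\sin(\theta^{\ast}/2)$, Proposition \ref{1DMollProp} (in rescaled form) gives $z_{0}\in C^{3}$ on $\{|\tilde{\xi}|\geq m\}$; on $\{|\tilde{\xi}|\leq m\}$ we have $z=z_{c}$ (affine); and the flatness of $\tilde{\psi}$ at $m$ and $M$ makes $z\in C^{3}(\overline{B_{M}(0)})$, so $y^{\delta}\in C^{3}(\bar{\omega},\mathbb{R}^{3})$. Because $z$ and $M$ do not depend on $\delta$, the bounds $\|\tilde{\nabla}z\|_{L^{\infty}},\|\tilde{\nabla}\tilde{\nabla}z\|_{L^{\infty}},\|\tilde{\nabla}^{(3)}z\|_{L^{\infty}}\leq C$ on $\overline{B_{M}(0)}$ yield exactly the three estimates of \eqref{eq:KeySmoothingHyp} on $B_{M\delta}(\tilde{p})$, and they hold outside by Proposition \ref{1DMollProp}.

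The one substantial point is the full-rank condition $|\partial_{1}y^{\delta}\times\partial_{2}y^{\delta}|\geq c>0$, equivalently (and uniformly in $\delta$) $\adj\tilde{\nabla}z\neq 0$ on $\overline{B_{M}(0)}$. On $\{|\tilde{\xi}|<m\}$ we have $\tilde{\nabla}z=\tilde{F}_{c}$, which has rank $2$ since $\tilde{F}_{c}\in\mathcal{A}_{y}$. On the annulus $m\leq|\tilde{\xi}|\leq M$ write $\tilde{\nabla}z=\tilde{G}+(z_{c}-z_{0})\otimes\tilde{\nabla}[\tilde{\psi}(|\tilde{\xi}|)]$ with $\tilde{G}:=(1-\tilde{\psi})\tilde{\nabla}z_{0}+\tilde{\psi}\,\tilde{F}_{c}$. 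On $S_{\alpha}$ the mollification makes $\tilde{\nabla}z_{0}$ a convex combination of $\tilde{F}_{\alpha-1}$ and $\tilde{F}_{\alpha}$, so $\tilde{G}\in\bigcup_{\alpha}\co\{\tilde{F}_{c},\tilde{F}_{\alpha},\tilde{F}_{\alpha-1}\}$, which by $\tilde{F}_{c}\in\mathcal{A}_{y}$ consists of rank‑$2$ matrices; being a finite union of compact convex hulls, $|\adj\tilde{G}|\geq c_{\ast}>0$ there. Since $\tilde{\nabla}[\tilde{\psi}(|\tilde{\xi}|)]=\tilde{\psi}'(\rho)\hat{\xi}$ is radial ($\rho=|\tilde{\xi}|$, $\hat{\xi}=\tilde{\xi}/\rho$), the elementary identity $\adj(\tilde{A}+a\otimes b)=\adj\tilde{A}+a\times(\tilde{A}b^{\perp})$ for a rank‑one perturbation gives
\[
\adj\tilde{\nabla}z=\adj\tilde{G}+\tilde{\psi}'(\rho)\,(z_{c}-z_{0})\times(\tilde{G}\hat{\xi}^{\perp}),
\]
and the second term is bounded by $C\rho\,|\tilde{\psi}'(\rho)|$ (here $z_{0}$ and $z_{c}$ are globally Lipschitz, so $|z_{c}-z_{0}|\leq C\rho$ on $\{\rho\geq m\}$, and $\tilde{G}$ is bounded).

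The crux—and the main obstacle—is that this correction is \emph{not} small for a generic cutoff: $|z_{c}-z_{0}|$ grows like $\rho$, so the term is $O(1)$. It can nonetheless be made small by letting $\tilde{\psi}$ vary slowly on a logarithmic scale, which is permitted because $M>m$ is free. Taking $\tilde{\psi}(\rho)=\beta(\log\rho)$ for a fixed smooth decreasing cutoff $\beta$ on $[\log m,\log M]$ (flat at the endpoints) gives $\rho\,|\tilde{\psi}'(\rho)|=|\beta'(\log\rho)|\leq C_{\beta}/\log(M/m)$, so choosing $M/m$ large enough (depending only on the design through $c_{\ast}$ and the $\tilde{F}_{\alpha},\tilde{F}_{c}$, never on $\delta$) makes the correction $\leq c_{\ast}/2$ and hence $|\adj\tilde{\nabla}z|\geq c_{\ast}/2$ on the annulus. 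Combining this with $\{|\tilde{\xi}|<m\}$ and with Proposition \ref{1DMollProp} outside $B_{M\delta}(\tilde{p})$ yields a uniform lower bound $|\partial_{1}y^{\delta}\times\partial_{2}y^{\delta}|\geq c>0$, completing the verification that $y^{\delta}$ is a $\delta$‑smoothing. The hypothesis $\mathcal{A}_{y}\neq\emptyset$ enters precisely to guarantee $|\adj\tilde{G}|\geq c_{\ast}>0$, and the freedom in $M$ is used precisely to absorb the $O(1)$ rank‑one correction at the junction.
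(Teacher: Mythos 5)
Your proposal is correct and follows essentially the same route as the paper: the same decomposition of $\tilde{\nabla} y^{\delta}$ into a convex-combination part (full rank and bounded below by a compactness argument using $\mathcal{A}_y \neq \emptyset$) plus a rank-one correction, and the same key device of a logarithmically varying cutoff making $\rho\,|\psi_{\delta}'(\rho)| \leq C/\log(M/m)$ small by enlarging $M/m$ (this is exactly the paper's Lemma \ref{cutoffLemma}). Your rescaling of the junction to a fixed $\delta$-independent profile $z$ on $\overline{B_{M}(0)}$ is a clean way to read off the derivative bounds of Definition \ref{defn:delta}, but it does not change the substance of the argument.
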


\begin{proof}[Proof of Theorem \ref{DSmoothTheorem}.]
The theorem follows directly from Proposition \ref{propReduce}.
\end{proof}
\begin{figure}
  \centering
  \includegraphics[height = 2.3 in]{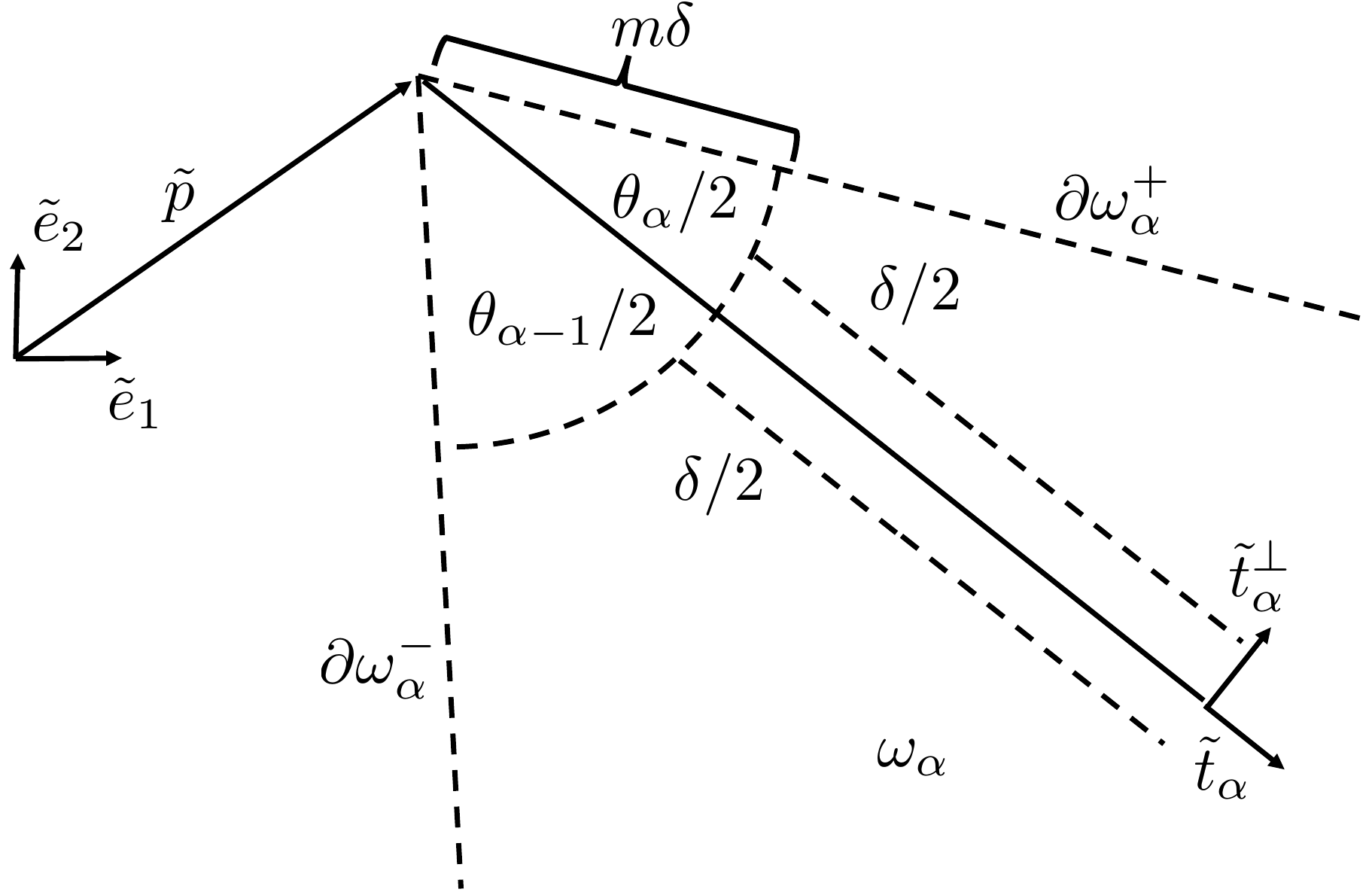}
  \caption{Schematic of an arbitrary $\tilde{t}_{\alpha}$ interface for a junction at point $\tilde{p}$.}
  \label{fig:1DMollifier}
\end{figure}

It remains to prove Propositions \ref{1DMollProp} and \ref{propReduce}:
\begin{proof}[Proof of Proposition \ref{1DMollProp}.]
Let $\omega_{m} := \omega \setminus B_{m\delta}(\tilde{p})$, and consider any $\delta$ sufficiently small so that $\omega_m$ is non-empty.  Since $m \geq 1/\sin(\theta^{\ast}/2) \geq 1/\sin(\theta_{\alpha}/2)$ for each $\alpha \in \{1,\ldots,K-1\}$ (and $K$ if $\omega$ is an interior junction), $y_0^{\delta}$ defined in (\ref{eq:y0delta}) is equal to $y$ across each $\partial S_{\alpha}^{\pm} \cap \omega_m$ (see the schematic in Figure \ref{fig:1DMollifier}) and $y$ is smooth across these interfaces.  Therefore, we need only to show that $y_0^{\delta}$ is a $\delta$-smoothing of $y$ on each $\omega_{m} \cap S_{\alpha}$ to prove it is a $\delta$-smoothing of $y$ on $\omega_m$.  

Fix an $S_{\alpha} \subset \omega$.  Since $\gamma_{\alpha}^{\delta}$ is a $\delta$-mollification of a piecewise affine and continuous function $\gamma_{\alpha}$ as defined by (\ref{eq:gammaAlpha}) and (\ref{eq:gammaAlphaDelta}), it follows that 
\begin{align*}
&y_0^{\delta} = y \quad \text{ except on a } \delta\text{-strip} \subset \omega_m \cap S_{\alpha} \quad \text{ and } \\
&|\tilde{\nabla} y_0^{\delta}| \leq C, \quad |\tilde{\nabla} \tilde{\nabla}  y_0^{\delta}| \leq C\delta^{-1}, \quad |\tilde{\nabla}^{(3)} y_0^\delta| \leq C \delta^{-2} \quad \text{ on } \omega_m \cap S_{\alpha}
\end{align*}
for some $C> 0$ independent of $\delta$.  For the lower bound constraint on the cross-product, we observe that 
\begin{align*}
\gamma_{\alpha}^{\delta}(s) =  \tilde{F}_{\alpha} \tilde{t}_{\alpha}^{\perp} \int_{-\delta/2}^{s} \eta_{\delta}(t) (s-t) dt  + \tilde{F}_{\alpha-1} \tilde{t}_{\alpha}^{\perp} \int_{s}^{\delta/2} \eta_{\delta}(t) (s-t) dt,
\end{align*}
and thus
\begin{align*}
(\gamma_{\alpha}^{\delta})'(s) = (1- \lambda_{\delta}(s)) \tilde{F}_{\alpha-1} \tilde{t}_{\alpha}^{\perp} + \lambda_{\delta}(s) \tilde{F}_{\alpha} \tilde{t}_{\alpha}^{\perp}
\end{align*}
for $\lambda_{\delta}$ defined in the proposition.  Since $\tilde{\nabla} y_0^{\delta}(\tilde{x}) = (\gamma_{\alpha}^{\delta})'((\tilde{x}-\tilde{p}) \cdot \tilde{t}_{\alpha}^{\perp}) \otimes \tilde{t}_{\alpha}^{\perp} + \tilde{F}_{\alpha} \tilde{t}_{\alpha} \otimes \tilde{t}_{\alpha}$ for  $\tilde{x} \in S_{\alpha}$, we obtain (\ref{eq:identGy0}) by direct substitution of $(\gamma_{\alpha}^{\delta})'$ above and using the fact that $\tilde{F}_{\alpha-1}\tilde{t}_{\alpha}= \tilde{F}_{\alpha} \tilde{t}_{\alpha}$.  Finally, noting that $\adj(\tilde{F}\tilde{R}) = \det(\tilde{R}) \adj(\tilde{F}) = \adj\tilde{F}$ for any $\tilde{F} \in \mathbb{R}^{3\times2}$ and $\tilde{R} \in SO(3)$, we find
\begin{align*}
\adj(\tilde{\nabla} y_0^{\delta}(\tilde{x})) &= \tilde{\nabla}y_0^{\delta}(\tilde{x}) \tilde{t}_{\alpha} \times \tilde{\nabla}y_0^{\delta}(\tilde{x}) \tilde{t}_{\alpha}^{\perp} \\
&= \tilde{F}_{\alpha} \tilde{t}_{\alpha} \times \left((1-\lambda_{\delta}((\tilde{x}-\tilde{p})\cdot \tilde{t}_{\alpha}^{\perp})) \tilde{F}_{\alpha-1} \tilde{t}_{\alpha}^{\perp}  + \lambda_{\delta}((\tilde{x}-\tilde{p}) \cdot \tilde{t}_{\alpha}^{\perp} ) \tilde{F}_{\alpha} \right) \\
&= (1- \lambda_{\delta}((\tilde{x}-\tilde{p}) \cdot \tilde{t}_{\alpha}^{\perp})) \adj \tilde{F}_{\alpha-1} + \lambda_{\delta}((\tilde{x}-\tilde{p}) \cdot \tilde{t}_{\alpha}^{\perp}) \adj \tilde{F}_{\alpha},
\end{align*}
for any $\tilde{x} \in \omega_{m} \cap S_{\alpha}$ making repeated use of the fact that $\tilde{F}_{\alpha-1} \tilde{t}_{\alpha} = \tilde{F}_{\alpha} \tilde{t}_{\alpha}$.   By hypothesis (\ref{eq:CompatCons}), this is bounded away from zero since $\lambda_{\delta} \in [0,1]$.  That is, we conclude $|\adj \tilde{\nabla} y_{0}^{\delta}| \geq c_{\alpha} > 0$  on $S_{\alpha}$.  Here, $\alpha \in \{1,\ldots,K\}$ was arbitrary and so the proof is complete.   
\end{proof}

\begin{proof}[Proof of Proposition \ref{propReduce}.]
We note that for any $m,M$ such that $M > m > 1/\sin(\theta^{\ast}/2)$ and for any $\delta> 0$ sufficiently small, 
\begin{align}\label{eq:gradyDelta}
\tilde{\nabla} y^{\delta} = \tilde{F}^{\delta} + \tilde{G}^{\delta} \quad \text{ on } \omega,
\end{align}
where $\tilde{F}^{\delta}, \tilde{G}^{\delta} \colon \omega \rightarrow \mathbb{R}^{3\times2}$ are given by 
\begin{equation}\label{eq:Gdelta}
\begin{aligned}
&\tilde{F}^{\delta}(\tilde{x}) := (1- \psi_{\delta}(|\tilde{x}- \tilde{p}|)) \tilde{\nabla} y_0^{\delta}(\tilde{x}) + \psi_{\delta}(|\tilde{x}- \tilde{p}|) \tilde{F}_c, \\
&\tilde{G}^{\delta}(\tilde{x}) := |\tilde{x}-\tilde{p}|^{-1} \psi_{\delta}'(|\tilde{x}-\tilde{p}|) (y_c(\tilde{x}) - y_0^{\delta}(\tilde{x})) \otimes (\tilde{x}-\tilde{p}).
\end{aligned}
\end{equation}

Focusing first on $\tilde{F}^{\delta}$, we note that for any $S_{\alpha} \subset \omega$, 
\begin{align*}
\tilde{F}^{\delta}(\tilde{x}) = (1- \psi_{\delta}(|\tilde{x} -\tilde{p}|))& \left((1-\lambda_{\delta}((\tilde{x}-\tilde{p}) \cdot  \tilde{t}_{\alpha}^{\perp})) \tilde{F}_{\alpha-1} + \lambda_{\delta}((\tilde{x} -\tilde{p}) \cdot \tilde{t}_{\alpha}^{\perp}) \tilde{F}_{\alpha} \right) \\
&+ \psi_{\delta}(|\tilde{x}-\tilde{p}|)\tilde{F}_{c}, \quad \tilde{x} \in S_{\alpha}
\end{align*}
using (\ref{eq:identGy0}) from Proposition \ref{1DMollProp}. Consequently, 
\begin{align}\label{eq:FdeltaProp}
\tilde{F}^{\delta}(\tilde{x}) \in \text{co}\{ \tilde{F}_c, \tilde{F}_{\alpha-1}, \tilde{F}_{\alpha}\}, \quad \tilde{x} \in S_{\alpha}, \quad \alpha \in \{ 1, \ldots, K-1\}
\end{align}
(and $K$ if $\omega$ is an interior junction) since $\psi_{\delta}, \lambda_{\delta}$ map to $[0,1]$.  We claim that  (\ref{eq:FdeltaProp}) implies 
\begin{align}\label{eq:adjFdIdent}
|\adj \tilde{F}^{\delta} | \geq c^{\ast} \quad \text{ on } \omega
\end{align}
for some $c^{\ast} > 0$.  

To see this, we define $f_{\alpha} \colon \Lambda_3 \rightarrow \mathbb{R}$ as 
\begin{align*}
f_{\alpha}(\lambda_1, \lambda_2, \lambda_3) := |\adj(\lambda_1 \tilde{F}_c + \lambda_2 \tilde{F}_{\alpha-1} + \lambda_3 \tilde{F}_{\alpha})| 
\end{align*}
where $\Lambda_3 := \{(\lambda_1, \lambda_2, \lambda_3) \in \mathbb{R}^3 \colon \lambda_i \geq 0 \;\; \text{ for each $i$ and } \sum_{i =1}^{3} \lambda_i = 1\}$. $\Lambda_3$ is a compact subset of $\mathbb{R}^3$ and each $f_{\alpha}$ is continuous on $\Lambda_3$.  Thus, the infimum of each $f_{\alpha}$ is attained.  We denote 
\begin{align*}
(\lambda_1^{\alpha}, \lambda_2^{\alpha}, \lambda_{3}^{\alpha}) := \arg \min_{\Lambda_3} f_{\alpha}, \quad \alpha \in \{1,\ldots,K-1\}. 
\end{align*}
(and $K$ if $\omega$ is an interior junction).  Since $\tilde{F}_c \in \mathcal{A}_y$, each $\lambda_1^{\alpha} \tilde{F}_c + \lambda_2^{\alpha} \tilde{F}_{\alpha-1} + \lambda_3^{\alpha} \tilde{F}_{\alpha} \in \mathbb{R}^{3\times2}$ is full rank and thus, we can take 
\begin{align}\label{eq:cast}
c^{\ast} := \min_{\alpha \in \{1,\ldots,K-1\} } \min_{\Lambda_3} f_{\alpha} > 0
\end{align}
(again minimizing over $K$ as well if $\omega$ is an interior junction)  to achieve the identity (\ref{eq:adjFdIdent}).

Now, for the lower bound estimate of a $\delta$-smoothing, we notice that given the representations (\ref{eq:gradyDelta}) and (\ref{eq:Gdelta}), 
\begin{align}\label{eq:lbAdj}
|\adj \nabla y^{\delta}| &\geq |\adj \tilde{F}^{\delta}| - |\tilde{G}^{\delta}|(2|\tilde{F}^{\delta}| + |\tilde{G}^{\delta}|)  \nonumber \\
& \geq c^{\ast} - C|\tilde{G}^{\delta}|(1+ |\tilde{G}^{\delta}|) \quad \text{ on } \omega.
\end{align}
The latter constant $C$ is independent of $M,m$ and $\delta$ since $\|\tilde{F}^{\delta}\|_{L^{\infty}}$ can be bounded uniformly independent of these quantities following (\ref{eq:FdeltaProp}).  

Now, for estimating $\tilde{G}^{\delta}$ in (\ref{eq:Gdelta}) with this cutoff function, we notice first that 
\begin{align*}
|y_0^{\delta} - y_c| \leq |y - y(\tilde{p})| + |y_c - y(\tilde{p})| + |y_0^{\delta} - y|  \leq C(\delta + |\tilde{x}-\tilde{p}|) \quad \text{ on }  B_{M \delta}(\tilde{p}) \setminus B_{m\delta}(\tilde{p}).
\end{align*}
To obtain this estimate, we used that $|y_0^{\delta} -y | = |\gamma_{\alpha}^{\delta} - \gamma_{\alpha}| \leq C\delta$ on each $S_{\alpha} \setminus B_{m\delta}(\tilde{p})$ since $\gamma_{\alpha}$ is Lipschitz continuous, and we used that both $y$ and $y_c$ are equal to $y(\tilde{p})$ at $\tilde{x} = \tilde{p}$ and Lipschitz continuous with uniform Lipschitz constant on $B_{M\delta}(\tilde{p}) \setminus B_{m\delta}(\tilde{p})$.  Moreover, $\tilde{G}^{\delta}$ is only non-zero on that annulus $B_{M\delta}(\tilde{p}) \setminus B_{m\delta}(\tilde{p})$ since $\psi'_{\delta}(|\tilde{x} - \tilde{p}|) = 0$ outside this annulus.  Hence, we observe that 
\begin{align}\label{eq:gdeltaEst}
|\tilde{G}^{\delta}(\tilde{x})| &\leq |\psi_{\delta}'(|\tilde{x}-\tilde{p}|)||y_{0}^{\delta}(\tilde{x}) - y_c(\tilde{x})| \leq C|\psi_{\delta}'(|\tilde{x}-\tilde{p}|)|(\delta + |\tilde{x}-\tilde{p}|) \nonumber \\
& \leq C(1/m + 1) ||\tilde{x}-\tilde{p}|\psi_{\delta}'(|\tilde{x}-\tilde{p}|)|  \leq C\|s \psi_{\delta}'(s)\|_{L^{\infty}}.
\end{align}
where in the second to last estimate we use that $|\tilde{x}-\tilde{p}|/m\delta > 1$ on the annulus $B_{M\delta}(\tilde{p})\setminus B_{m\delta}(\tilde{p})$, and all constants $C> 0$ above can be chosen uniform independent of $\delta$ and $M > m > 1/\sin(\theta^{\ast}/2)$. Hence, by applying Lemma \ref{cutoffLemma} (below), we suitably choose $m,M$ and the cutoff function $\psi_{\delta}$ to establish the estimate 
\begin{align*}
|\adj \tilde{\nabla} y^{\delta}| \geq c^{\ast}/2 \quad \text{ on } \omega 
\end{align*} 
for all $\delta > 0$ sufficiently small.  Here, we made use of (\ref{eq:lbAdj}) and (\ref{eq:gdeltaEst}).

With this lower bound established, the other properties which show $y^{\delta}$ is a $\delta$-smoothing of $y$ are easily verified: Indeed, $y = y^{\delta}$ except on a set of measure $O(\delta)$ for all $\delta$ sufficiently small since $y^{\delta}$ deviates from $y_0^{\delta}$ only on a set of $O(\delta^2)$.  Moreover, the derivative estimates follow from the chain rule and using the estimates for the cutoff function $\psi_{\delta}$ established in Lemma \ref{cutoffLemma} below.  This completes the proof.
\end{proof}

\begin{lemma}\label{cutoffLemma}
Fix $\epsilon >0$.  There is a $\Delta_{\epsilon} > 0$ such that for any $M> m>1/\sin(\theta^{\ast}/2)$ satisfying $M/m \geq \Delta_{\epsilon}$  and $M-m > 2$, there exists for all $\delta> 0$ a cutoff function $\psi_{\delta}$ satisfying (\ref{eq:cutoff}) with the properties
\begin{equation}\label{eq:lemmaCutoff}
\begin{aligned}
&\|s \psi_{\delta}'(s)\|_{L^{\infty}} \leq \epsilon \\
&\|\psi_{\delta}'\|_{L^{\infty}}  \leq C \delta^{-1}, \quad \|\psi_{\delta}''\|_{L^{\infty}} \leq C \delta^{-2}  \quad \text{ and }\quad \|\psi_{\delta}'''\|_{L^{\infty}} \leq C \delta^{-3}
\end{aligned}
\end{equation}
for $C \equiv C(M,m) > 0$ independent of $\delta$.  
\end{lemma}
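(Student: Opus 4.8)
The plan is to build $\psi_\delta$ from a fixed smooth profile composed with a \emph{logarithmic} rescaling of the radial variable. The point of the logarithm is that $s\psi_\delta'(s)$ is then exactly the derivative of $\psi_\delta$ with respect to $\log s$, so the transition from $1$ to $0$ gets spread over a logarithmic window of width $\log(M/m)$; this is precisely what makes $\|s\psi_\delta'(s)\|_{L^\infty}$ small once $M/m$ is large. (By contrast, any cutoff that transitions on the full annulus $\{m\delta\le s\le M\delta\}$ has $\|\psi_\delta'\|_{L^\infty}\sim\frac{1}{(M-m)\delta}$, hence only $\|s\psi_\delta'\|_{L^\infty}\sim\frac{M}{M-m}$, which need not be small under the stated hypotheses.)

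Concretely, I would first fix once and for all a profile $\phi\in C^\infty(\mathbb{R},[0,1])$ with $\phi\equiv 1$ on $(-\infty,1/4]$ and $\phi\equiv 0$ on $[3/4,\infty)$. Writing $L:=\log(M/m)>0$ and $t(s):=L^{-1}\log(s/(m\delta))$, I would then set
\[
\psi_\delta(s):=\phi\bigl(t(s)\bigr)\quad\text{for }s>0,\qquad \psi_\delta(s):=1\quad\text{for }s\le 0 .
\]
Since $s\mapsto t(s)$ is smooth on $(0,\infty)$ and $\phi$ is constant on neighborhoods of $0$ and of $1$, one checks immediately that $\psi_\delta\equiv 1$ on $(-\infty,m\delta(M/m)^{1/4})$ and $\psi_\delta\equiv 0$ on $(m\delta(M/m)^{3/4},\infty)$; hence $\psi_\delta\in C^\infty(\mathbb{R})\subset C^3(\mathbb{R})$, it takes values in $[0,1]$, and it satisfies (\ref{eq:cutoff}) for every admissible pair $M>m>1/\sin(\theta^{\ast}/2)$.

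Next I would estimate the derivatives by the chain rule. On the transition region one has $\psi_\delta'(s)=\phi'(t(s))/(Ls)$, whence $s\psi_\delta'(s)=\phi'(t(s))/L$ and therefore $\|s\psi_\delta'(s)\|_{L^\infty}\le\|\phi'\|_{L^\infty}/\log(M/m)$. Choosing $\Delta_\epsilon:=\exp\!\bigl(\|\phi'\|_{L^\infty}/\epsilon\bigr)$ makes this $\le\epsilon$ as soon as $M/m\ge\Delta_\epsilon$, which is the first estimate in (\ref{eq:lemmaCutoff}). For the remaining three, the key point is that $\psi_\delta'$ is supported in $[m\delta,M\delta]$, so there $s\ge m\delta$; iterating the chain rule, $\psi_\delta'$, $\psi_\delta''$ and $\psi_\delta'''$ are finite sums of terms of the form $(\text{bounded in }t)\cdot L^{-k}s^{-j}$ with $1\le j\le 3$, and on $\{s\ge m\delta\}$ each is $\le C(M,m)\,\delta^{-j}$ with $C(M,m)$ depending only on $m$, $\log(M/m)$ and $\|\phi'\|_{L^\infty},\|\phi''\|_{L^\infty},\|\phi'''\|_{L^\infty}$, hence not on $\delta$. (The hypothesis $M-m>2$ plays no role in this lemma; it is recorded only because it is used when this cutoff is inserted into the proof of Proposition \ref{propReduce}.)

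I do not expect a real obstacle: the only genuinely nonroutine ingredient is spotting the logarithmic change of variable in the definition of $\psi_\delta$. Everything after that is elementary — chain-rule bookkeeping, together with the trivial observation that $\psi_\delta$ is locally constant near $s=m\delta$ and near $s=M\delta$ so that the join with the constants $1$ and $0$ is automatically $C^3$ (indeed $C^\infty$), and that nothing degenerates as $s\downarrow 0$, where $t(s)\to-\infty$ but $\psi_\delta\equiv 1$.
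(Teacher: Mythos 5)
Your proof is correct and rests on the same key idea as the paper's: a cutoff that is affine in $\log s$ across the annulus, so that $s\psi_\delta'(s)$ is of order $1/\log(M/m)$ and hence small once $M/m$ is large. The implementations differ in how $C^3$ regularity is obtained. The paper takes the literal piecewise profile $\log\bigl(s/((M-1/2)\delta)\bigr)/\log\bigl((m+1/2)/(M-1/2)\bigr)$ on $[(m+1/2)\delta,(M-1/2)\delta]$, which is only Lipschitz, and then mollifies at scale $\delta/2$; this is exactly where the hypothesis $M-m>2$ enters (to keep that transition interval nonempty and the mollified transition inside $[m\delta,M\delta]$), and it costs a harmless factor of $2$ in the bound $\|s\psi_\delta'\|_{L^\infty}\le 2\epsilon_{M,m}$. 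You instead compose a fixed $C^\infty$ profile $\phi$ with the logarithmic coordinate $t(s)=L^{-1}\log(s/(m\delta))$, which gives smoothness for free, yields the cleaner bound $\|\phi'\|_{L^\infty}/\log(M/m)$ with $\Delta_\epsilon=\exp(\|\phi'\|_{L^\infty}/\epsilon)$, and genuinely dispenses with $M-m>2$. (Your parenthetical guess that this hypothesis is recorded for use in Proposition \ref{propReduce} is slightly off --- it is needed for the paper's own construction of $\psi_\delta$, not downstream --- but that is immaterial since you are free not to use a hypothesis.) The remaining chain-rule bookkeeping is right: on the support of $\psi_\delta'$ one has $s\ge m\delta$, so each term $\phi^{(k)}(t)L^{-k}s^{-j}$ with $j\le 3$ is bounded by $C(M,m)\delta^{-j}$, giving \eqref{eq:lemmaCutoff}.
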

\begin{proof}
Consider the cutoff function $\tilde{\psi}_{\delta} \colon \mathbb{R} \rightarrow [0,1]$ given by 
\begin{align*}
\tilde{\psi}_{\delta}(s) := \begin{cases}
1 & \text{ if } s < (m+1/2) \delta \\
\frac{\log(s/(M-1/2)\delta)}{\log((m+1/2)/(M-1/2))} & \text{ if } s \in [(m+1/2)\delta,(M-1/2)\delta] \\
0 & \text{ if } s > (M-1/2) \delta
\end{cases}.
\end{align*}
Here, $\tilde{\psi}_{\delta}$ is Lipschitz continuous since $M-m >2$ and equal to $1$ in a neighborhood of the origin since $m \geq 1/\sin(\theta^{\ast}/2) \geq 1$.   This is not a cutoff function with the properties (\ref{eq:cutoff}).  However, importantly 
\begin{align*}
|s \tilde{\psi}_{\delta}'(s)| = |\log\Big(\frac{M -1/2}{m+1/2}\Big)|^{-1} =: \epsilon_{M,m}
\end{align*}
which can be made arbitrarily small (independent of $\delta$) by choosing $M/m$ sufficiently large.  By mollification, we can retain a similar estimate for a $\psi_{\delta}$ as in (\ref{eq:cutoff}).  

Indeed, we let $\psi_{\delta} := \eta_{\delta} \ast \tilde{\psi}_{\delta}$ for $\eta_{\delta} \in C^{\infty}(\mathbb{R}, \mathbb{R})$ the standard symmetric mollifier supported on the interval $(-\delta/2,\delta/2)$.  Since $\tilde{\psi}_{\delta}$ is Lipschitz continuous, equal to $1$ for $s < (m+1/2) \delta$ and equal to $0$ for $s > (M-1/2)\delta$, $\psi_{\delta}$ is a cutoff function satisfying all the properties in (\ref{eq:cutoff}) and it satisfies the latter estimates in (\ref{eq:lemmaCutoff}).   

It remains to prove the first estimate in (\ref{eq:lemmaCutoff}).  To this end, note $\psi_{\delta}' = \eta_{\delta} \ast \tilde{\psi}_{\delta}'$, and so explicitly
\begin{align*}
|s \psi_{\delta}'(s)| \leq  \epsilon_{M,m}\begin{cases}
 \int_{(m+1/2) \delta}^{s+ \delta/2} \eta_{\delta}(s-t) \big| \frac{s}{t} \big| dt & \text{ if } s \in (m\delta, (m+1) \delta] \\
 \int_{s - \delta/2}^{s+ \delta/2} \eta_{\delta}(s-t) \big| \frac{s}{t} \big| dt & \text{ if } s \in ((m+1)\delta, (M-1) \delta] \\
 \int_{s - \delta/2}^{(M-1/2)\delta} \eta_{\delta}(s-t) \big| \frac{s}{t} \big| dt & \text{ if } s \in ((M-1)\delta, M \delta]\\ 
0 & \text{ otherwise}.  
\end{cases}
\end{align*}
From this, we deduce that 
\begin{align}\label{eq:tpsiD}
\|s \psi_{\delta}'(s)\|_{L^{\infty}} \leq \left(\frac{m+1}{m+1/2}\right) \epsilon_{M,m} \leq 2 \epsilon_{M,m}.
\end{align}
Thus, there is a $\Delta_{\epsilon}$ such that if $M/m \geq \Delta_{\epsilon}$, $\epsilon_{M,m} \leq \epsilon/2$.  This completes the proof given (\ref{eq:tpsiD}).
\end{proof}

\subsection{Examples of noniosmetric origami and their $\delta$-smoothings.}
In this section, we examine the nonisometric origami to actuate a box, rhombic dodecahedron and rhombic triacontahedron.  We will show that each of these designs has a corresponding $\delta$-smoothing.  In this direction, consider Figure \ref{fig:SolveExamples} showing for the case of cooling a nematic elastomer sheet: (a) the design to actuate a box, (b) part of the design to actuate the rhombic dodecahedron and (c) part of the design to actuate the rhombic triacontahedron.  In each case, there are only two non-trivial junctions to consider, each highlighted in red.  That is, once the deformation (both the origami and $\delta$-smoothing deformations) are constructed for these junctions, then the entire deformation can be built as rotations and translations of these constructions. 

\begin{figure}
\centering
\includegraphics[width = 6in]{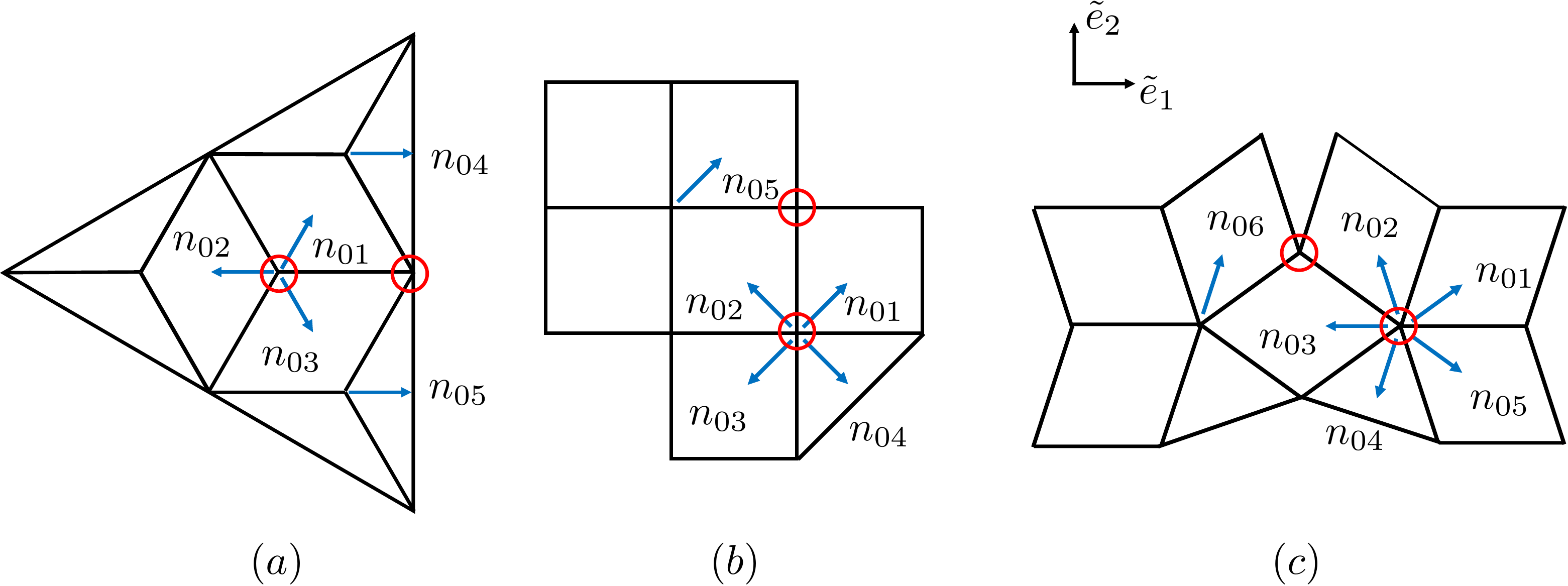}
\caption{Designs to actuate a box (a), rhombic dodecahedron (b) and rhombic triacontahedron (c) upon cooling (i.e., $r > 1$).  Only a portion of the design is shown in (b) and (c).}
\label{fig:SolveExamples}
\end{figure}  

As a first step towards constructing a $\delta$-smoothing for these actuations, we identify the deformation gradients associated with the origami.  This makes use of the notion of compatibility discussed in Appendix \ref{app:no}.
\begin{proposition}\label{identDefs}
Consider the designs depicted in Figure \ref{fig:SolveExamples}.  Up to a rigid body rotation, the deformation gradients corresponding to each region are given by
\begin{enumerate}[(a)]
\item For the box
\begin{align*}
 \tilde{F}_{\alpha} = \begin{cases} R_{n_{0\alpha}}( \ell_{n_{0\alpha}}^{1/2})_{3 \times 2}, &\alpha \in \{1,2,3\}  \\
R_{n_{01}}^2 R_{n_{04}} (\ell_{n_{04}}^{1/2})_{3\times2} & \alpha = 4\\
R_{n_{03}}^2 R_{n_{05}} (\ell_{n_{05}}^{1/2})_{3\times2} & \alpha = 5;
\end{cases}
\end{align*}
\item for the rhombic dodecahedron
\begin{align*}
 \tilde{F}_{\alpha} = \begin{cases} R_{n_{0\alpha}}( \ell_{n_{0\alpha}}^{1/2})_{3 \times 2} &\alpha \in \{1,\ldots,4\}  \\
R_{n_{02}}^2 R_{n_{05}} (\ell_{n_{05}}^{1/2})_{3\times2} & \alpha = 5;
\end{cases}
\end{align*}
\item for the rhombic triacontahedron
\begin{align*}
 \tilde{F}_{\alpha} = \begin{cases} R_{n_{0\alpha}}( \ell_{n_{0\alpha}}^{1/2})_{3 \times 2} &\alpha \in \{1,\ldots,5\}  \\
R_{n_{03}}^2 R_{n_{06}} (\ell_{n_{06}}^{1/2})_{3\times2} & \alpha = 6.
\end{cases}
\end{align*}
\end{enumerate}
Here, each $R_{n_{0\alpha}} \in SO(3)$ satisfies $R_{n_{0\alpha}} n_{0\alpha} = r^{-1/2} n_{0\alpha} + \sqrt{1-r^{-1}} e_3$ for $r \in [1,r_{max}]$.
\end{proposition}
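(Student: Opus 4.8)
The plan is to reduce the statement to the compatibility theory of Appendix~\ref{app:no} and then verify the three lists of gradients region by region on the reference patterns of Figure~\ref{fig:SolveExamples}. First I would record two facts. \emph{(1)} For any unit vector $n_0$, every $\tilde F\in\mathbb{R}^{3\times2}$ with $\tilde F^T\tilde F=\tilde\ell_{n_0}$ can be written $\tilde F=R(\ell_{n_0}^{1/2})_{3\times2}$ with $R\in SO(3)$: indeed the upper $2\times2$ block of $\ell_{n_0}$ equals $\tilde\ell_{n_0}$, so $\big((\ell_{n_0}^{1/2})_{3\times2}\big)^T(\ell_{n_0}^{1/2})_{3\times2}=\tilde\ell_{n_0}$, and $SO(3)$ acts transitively on the set of such factorizations. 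In particular every $\tilde F_\alpha$ listed in the proposition already satisfies $\tilde F_\alpha^T\tilde F_\alpha=\tilde\ell_{n_{0\alpha}}$, so only compatibility across the interfaces is at issue. \emph{(2)} Given the constant directors $n_{0\alpha},n_{0\beta}$ of two adjoining regions, a straight interface $\tilde t_{\alpha\beta}\in\mathbb{S}^1$ can carry compatible constant gradients ($(\tilde F_\alpha-\tilde F_\beta)\tilde t_{\alpha\beta}=0$ with $\tilde F_\gamma^T\tilde F_\gamma=\tilde\ell_{n_{0\gamma}}$) exactly when $(\tilde n_{0\alpha}\cdot\tilde t_{\alpha\beta})^2=(\tilde n_{0\beta}\cdot\tilde t_{\alpha\beta})^2$, in which case, for fixed $\tilde F_\alpha$, the partner $\tilde F_\beta=\tilde F_\alpha+a\otimes\tilde t_{\alpha\beta}^{\perp}$ is determined up to the discrete choice of the nonzero solution $a$ of the metric equation, and this choice is pinned down by the non-folding requirement in \eqref{eq:CompatCons}. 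Both are established in Appendix~\ref{app:no}.

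With these in hand, the proof becomes the exhibition of a globally consistent choice of rotations. I would fix the rigid motion by placing the base region (region~$1$) with gradient $\tilde F_1=R_{n_{01}}(\ell_{n_{01}}^{1/2})_{3\times2}$, $R_{n_{01}}$ being the rotation specified in the statement (the natural rotation in $\operatorname{span}\{n_{01},e_3\}$ tilting the planar director $n_{01}$ to $r^{-1/2}n_{01}+\sqrt{1-r^{-1}}\,e_3$, which is a unit vector precisely because $n_{01}\perp e_3$). Reading the interface directions $\tilde t_{\alpha\beta}$ and sector angles $\theta_\alpha$ off Figure~\ref{fig:SolveExamples}, I would then propagate the gradient across each interface in turn. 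For a region $\alpha$ in the first ring about the highlighted junction I take $\tilde F_\alpha=R_{n_{0\alpha}}(\ell_{n_{0\alpha}}^{1/2})_{3\times2}$ and verify $(\tilde F_\alpha-\tilde F_{\alpha-1})\tilde t_{\alpha-1,\alpha}=0$ together with the non-folding inequality by a direct computation using fact~(2) and the explicit $R_{n_{0\gamma}}$; for the ``reflected'' regions ($\alpha\in\{4,5\}$ for the box, $\alpha=5$ for the rhombic dodecahedron, $\alpha=6$ for the rhombic triacontahedron), which adjoin a first-ring region $\beta$ through one further interface, I take $\tilde F_\alpha=R_{n_{0\beta}}^{2}R_{n_{0\alpha}}(\ell_{n_{0\alpha}}^{1/2})_{3\times2}$ and verify compatibility and non-folding across that interface the same way — the extra factor $R_{n_{0\beta}}^{2}$ arises because the metric equation offers two nonzero solutions at that interface and the one consistent with the rest of the pattern (the one not folding the face back onto its neighbour) is the ``doubly rotated'' branch. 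It then remains to check, at an interior junction, that going once around all $K$ sectors the last interface is matched as well; since the rotations $R_{n_{0\alpha}}$ were chosen so that the maps carrying successive faces into position compose to the identity around the junction, this closure holds exactly for the angle data of the three polyhedra in question. Combined with fact~(1), this yields the proposition up to the rigid motion already fixed.

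The main obstacle is this closure step at an interior junction: proving that the composed rotations return to the identity, so that the piecewise-affine deformation is single-valued and continuous there. This is precisely where the special geometry enters — the sector angles read from Figure~\ref{fig:SolveExamples} together with the director design $n_0$ must match the dihedral-angle data of the target polyhedron — and it fails for a generic junction, which is why the proposition is stated only for these three solids. A secondary point requiring care is the bookkeeping for the reflected regions: one must check that the branch of the metric equation selected by the non-folding condition \eqref{eq:CompatCons} is indeed the $R^{2}$-branch appearing in the statement, rather than the ``single reflection'' branch. The remaining items — fact~(1), the first-ring compatibility identities, and the non-folding inequalities (which simply say that adjacent faces do not meet at dihedral angle $\pi$) — are routine once the reference patterns and the rigid motion are fixed.
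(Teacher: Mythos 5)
Your overall strategy coincides with the paper's: by the equivalence \eqref{eq:equivMetric} every listed $\tilde F_\alpha$ automatically satisfies its metric constraint, so the proof reduces to checking rank-one compatibility across each interface with the explicit rotations, and the condition $R_{n_{0\alpha}}n_{0\alpha}=r^{-1/2}n_{0\alpha}+\sqrt{1-r^{-1}}\,e_3$ is exactly what makes the jump vanish. Two points where your framing diverges are worth flagging. First, the ``closure'' step you single out as the main obstacle is not a separate holonomy argument in the paper: because every director at an interior junction has the same decomposition $\tilde n_{0\alpha}=\cos\theta\,\tilde t_\alpha+\sin\theta\,\tilde t_\alpha^{\perp}$ with $\theta\in\{\pi/3,\pi/4,\pi/5\}$, a single uniform computation gives $(R_{n_{0\alpha}}(\ell_{n_{0\alpha}}^{1/2})_{3\times2}-R_{n_{0(\alpha-1)}}(\ell_{n_{0(\alpha-1)}}^{1/2})_{3\times2})\tilde t_\alpha=2\cos\theta\sin\theta(r^{1/3}\cos\varphi_r-r^{-1/6})\tilde t_\alpha^{\perp}$ for \emph{every} adjacent pair, including the $K$th-to-first interface, so closure is subsumed in the interface check rather than being an extra identity-of-composed-rotations condition. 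Second, your explanation of the factor $R_{n_{0\beta}}^{2}$ for the reflected regions via branch selection ``pinned down by the non-folding requirement'' is not right: the non-folding condition in \eqref{eq:CompatCons} does not distinguish the two branches (cf.\ the up/down degeneracy discussed around Figure \ref{fig:Degenerate}), and what actually selects the listed gradient is consistency with the already-fixed neighbour. The paper obtains the factor by observing that the exterior junction, treated in isolation, is compatible with $R_{n_{0\beta}}^{T}(\ell_{n_{0\beta}}^{1/2})_{3\times2}$ on the shared face (the transpose because the director points toward rather than away from that junction), and then rigidly rotating the isolated junction by $R_{n_{0\beta}}^{2}$ so that this face carries the globally assigned $\tilde F_\beta$. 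Since your plan does call for direct verification of the interface identity for the stated formulas, this mis-stated heuristic does not break the proof, but the verification—not the non-folding condition—is what does the work there; note also that Appendix \ref{app:no} only establishes necessity of \eqref{eq:pieceNecessary}, not the single-interface sufficiency you cite as ``fact (2)''.
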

\begin{proof} 
Clearly in each region $\tilde{F}_{\alpha}^T \tilde{F}_{\alpha} = \tilde{\ell}_{n_{0\alpha}}$.  Thus, we need only to show that the deformation gradients are rank-one compatible at each interface.  Let $\tilde{t}_{\alpha} \in \mathbb{S}^1$ denote the outward normal (to the junction) for the interface separating $n_{0\alpha}$ and $n_{0(\alpha-1)}$ for the interior junctions as depicted in Figure \ref{fig:SolveExamples} (i.e., $\alpha \in \{1,\ldots,K\}$ with $K = 3$ (box), $4$ (rhombic dodecahedron), $5$ (rhombic triacontahedron) where we set $n_{00} = n_{0K}$ in each case).   We have that $\tilde{n}_{0\alpha} = \cos(\theta) \tilde{t}_{\alpha} + \sin(\theta) \tilde{t}_{\alpha}^{\perp}$ where $\theta \in \{\pi/3,\pi/4,\pi/5\}$ for the box, rhombic dodecahedron and rhombic triacontahedron respectively with $\tilde{t}_{\alpha}^{\perp}$ is the right-hand orthonormal vector to $\tilde{t}_{\alpha}$.  

Now, to verify interface compatibility, let us first assume only that $R_{n_{0\alpha}} n_{0\alpha} = \cos(\varphi_r) n_{0\alpha} + \sin(\varphi_r) e_3$.   By explicit computation, we find that 
\begin{align}
\left(R_{n_{0\alpha}} (\ell_{n_{0\alpha}}^{1/2})_{3\times2} - R_{n_{0(\alpha-1)}} (\ell_{n_{0(\alpha-1)}}^{1/2})_{3\times2}\right) \tilde{t}_{\alpha} = 2 \cos(\theta) \sin(\theta)( r^{1/3} \cos(\varphi_r) - r^{-1/6}) t_{\alpha}^{\perp}.
\end{align}
Thus, interface compatibility (i.e., this quantity being equal to zero) is achieved with $\cos(\varphi_r) = r^{-1/2}$, and this gives the condition on each $R_{n_{0\alpha}} \in SO(3)$ defined in the proposition.  

It remains to verify compatibility for the exterior junctions.  Let us focus on the $n_{04}$ case for the box in (a).  Notice that if we consider the interior junction which contains the $n_{04}$ sector without compatibility of the entire origami structure, by the previous argument on interior junctions, we find that the junction in isolation is compatible given
\begin{align*}
\tilde{F}^{\ast}_{1} = R_{n_{01}}^T (\ell_{n_{01}}^{1/2})_{3\times2}, \quad \tilde{F}^{\ast}_4 = R_{n_{04}} (\ell_{n_{04}}^{1/2})_{3 \times 2}.
\end{align*}
The transpose for $\tilde{F}^{\ast}_1$ is since $n_{01}$ points toward this junction and not away from it.   For compatibility of the whole structure, we notice that $\tilde{F}_1 = R_{n_{01}}^2 \tilde{F}_1^{\ast}$, and so rigidly rotating this isolated compatible junction by $R_{n_{01}}^2$ achieves a fully compatible structure.   This gives $\tilde{F}_4$ in the proposition for (a).  An analogous argument holds for all the other exterior junction cases. 
\end{proof}

Now, for a $\delta$-smoothing of the deformation, we claim first:
\begin{proposition}\label{InterProp}
Each interior junction in (a), (b) and (c) has a $\delta$-smoothing.
\end{proposition}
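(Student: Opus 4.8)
The plan is to apply Theorem \ref{DSmoothTheorem}: it suffices to exhibit, for each interior junction in (a), (b), (c), a matrix $\tilde F_c \in \mathbb{R}^{3\times 2}$ lying in the set $\mathcal{A}_y$, i.e. a matrix $\tilde F_c$ such that for every sector index $\alpha$ the whole convex hull $\co\{\tilde F_c,\tilde F_{\alpha-1},\tilde F_\alpha\}$ consists of rank-$2$ matrices. Since the deformation gradients $\tilde F_\alpha$ of each junction have been explicitly identified in Proposition \ref{identDefs} (up to a rigid rotation, which does not affect ranks since $\rank(\tilde R\tilde F)=\rank\tilde F$ for $\tilde R\in SO(3)$), this becomes a finite, explicit linear-algebra verification for each of the three polyhedra. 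The natural candidate for $\tilde F_c$ is the "reference contraction'' $\tilde F_c = r^{-1/6} I_{3\times 2}$ (the restriction of $r^{-1/6}I_{3\times 3}$ to its first two columns), i.e. the gradient of the purely homothetic map $\tilde x\mapsto r^{-1/6}\tilde x$; this is the deformation of the flat initial sheet that is rotationally symmetric and "central'' to all the sectors.

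First I would reduce the rank condition to a pointwise statement about $\adj$. Recall $\rank \tilde F = 2$ iff $\adj \tilde F = \tilde F\tilde e_1\times \tilde F\tilde e_2\neq 0$, and that, as shown in the proof of Proposition \ref{1DMollProp}, along a single interface the quantity $\adj$ of a convex combination $(1-\lambda)\tilde F_{\alpha-1}+\lambda\tilde F_\alpha$ equals $(1-\lambda)\adj\tilde F_{\alpha-1}+\lambda\adj\tilde F_\alpha$ because $\tilde F_{\alpha-1}\tilde t_\alpha = \tilde F_\alpha\tilde t_\alpha$ (rank-one compatibility). More generally, for a convex combination $\mu_1\tilde F_c + \mu_2\tilde F_{\alpha-1}+\mu_3\tilde F_\alpha$ with $\mu_1+\mu_2+\mu_3=1$, $\mu_i\ge 0$, the map $\adj$ is a quadratic form in the coefficients; I would write it out and show it is nonvanishing on the simplex $\Lambda_3$. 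For this the key structural facts are: (i) $\tilde F_\alpha^T\tilde F_\alpha = \tilde\ell_{n_{0\alpha}}$, hence $|\adj\tilde F_\alpha|^2 = \det\tilde\ell_{n_{0\alpha}} = r^{-2/3}(1+(r-1)|\tilde n_{0\alpha}|^2)\cdot r^{-2/3}(\cdots)$, which is bounded below by a positive constant depending only on $r$ and $\max|\tilde n_{0\alpha}|$; (ii) similarly $|\adj\tilde F_c|^2 = r^{-2/3}>0$; and (iii) the mixed terms $\adj(\tilde F_i,\tilde F_j):=\tilde F_i\tilde e_1\times \tilde F_j\tilde e_2 + \tilde F_j\tilde e_1\times\tilde F_i\tilde e_2$ can be controlled. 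Because all the $\tilde F_\alpha$ and $\tilde F_c$ in these three examples have their image planes tilted only slightly off the horizontal plane (each $R_{n_{0\alpha}}$ rotates $n_{0\alpha}$ by the fixed angle $\varphi_r$ with $\cos\varphi_r = r^{-1/2}$), all the vectors $\adj\tilde F_\alpha$, $\adj\tilde F_c$ point into the same half-space (roughly in the $+e_3$ direction), so no nontrivial convex combination of them can vanish. Making this "same half-space'' observation precise — e.g. by showing $\adj\tilde F\cdot e_3 \ge c(r)>0$ uniformly for all the relevant $\tilde F$ — is the cleanest route, and it immediately gives $\adj$ of any convex combination has positive $e_3$-component, hence is nonzero.

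The main obstacle I anticipate is the exterior-sector gradients ($\tilde F_4,\tilde F_5$ for the box and their analogues): these carry extra factors $R_{n_{01}}^2$, $R_{n_{03}}^2$, etc., which are $SO(3)$ rotations by twice the angle used for the others, so their image planes are tilted by a larger amount and the "all $\adj$'s in one half-space'' argument needs $r$ (equivalently $\varphi_r$) to be restricted — consistent with the hypothesis $r\in[1,r_{\max}]$ in Proposition \ref{identDefs}. I would therefore carry out the half-space computation for these tilted sectors explicitly, extracting the constraint on $r_{\max}$ under which $\adj\tilde F\cdot e_3$ stays positive for every sector and for $\tilde F_c = r^{-1/6}I_{3\times 2}$; if the homothety $\tilde F_c$ turns out not to work for the doubled-angle sectors, the fallback is to choose instead $\tilde F_c$ to be a slightly tilted matrix (the gradient of an appropriate affine map) interpolating the directions of the sectors at that junction. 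Once $\mathcal{A}_y\neq\emptyset$ is established for each of the (finitely many, by the symmetry remark preceding the proposition) interior junctions, Theorem \ref{DSmoothTheorem} yields the desired $\delta$-smoothing and the proof is complete.
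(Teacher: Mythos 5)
Your proposal follows essentially the same route as the paper: reduce to Theorem \ref{DSmoothTheorem} via the explicit gradients of Proposition \ref{identDefs}, and take $\tilde F_c = r^{-1/6}I_{3\times2}$. The one step you leave open --- verifying that $\adj$ of an arbitrary convex combination on the simplex is nonvanishing --- is closed in the paper by a linear observation that sidesteps the quadratic expansion entirely: computing $R_{n_{0\alpha}}(\ell_{n_{0\alpha}}^{1/2})_{3\times2}\tilde v = r^{-1/6}v + (\cdots)\,e_3$ shows that the projection $\mathbb{P}_{e_3}$ of \emph{every} interior-junction gradient (and of $\tilde F_c$) equals exactly $r^{-1/6}I_{2\times2}$; since $\mathbb{P}_{e_3}$ is linear, every convex combination also projects to $r^{-1/6}I_{2\times2}$ and is therefore automatically full rank. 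This is precisely the rigorous form of your ``same half-space'' heuristic, because $\adj\tilde F\cdot e_3 = \det(\mathbb{P}_{e_3}\tilde F)$. Your anticipated obstacle is moot for this proposition: the doubled-rotation gradients $R_{n_{01}}^2R_{n_{04}}(\ell^{1/2}_{n_{04}})_{3\times2}$ and their analogues occur only at the \emph{exterior} junctions (handled in the subsequent propositions), so at interior junctions no further restriction on $r$ and no tilted fallback choice of $\tilde F_c$ is needed.
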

\begin{proof}
By Proposition \ref{identDefs} and Theorem \ref{DSmoothTheorem}, we prove this result if we can find a $\tilde{F}_K \in \mathbb{R}^{3 \times 2}$ such that the set 
\begin{align*}
\mathcal{S}_{int}^K := \bigcup_{\alpha = 1}^{K} \co \big\{ \tilde{F}_K, R_{n_{0\alpha}} (\ell_{n_{0\alpha}}^{1/2})_{3 \times 2}, R_{n_{0(\alpha-1)}} (\ell_{n_{0(\alpha-1)}}^{1/2})_{3 \times 2}\big\}
\end{align*}
contains only matrices of full rank.  Here, $K \in \{ 3,4,5\} $ is for the box, rhombic dodecahedron and rhombic triacontahedron respectively and $n_{00} = n_{0K}$ for each $K$.  

The choice of $\tilde{F}_K$ which gives $\rank_{l} (\mathcal{S}_{int}^K) =2$ is facilitated by the following observation.  Consider $\tilde{v} := \beta_1 \tilde{n}_{0\alpha} + \beta_2 \tilde{n}_{0\alpha}^{\perp}$ for any $(\beta_1, \beta_2) \in \mathbb{R}^2$ (i.e., $\tilde{v}$ is an arbitrary vector on $\mathbb{R}^2$).  We observe that 
\begin{align*}
R_{n_{0\alpha}} (\ell_{n_{0\alpha}}^{1/2})_{3 \times 2} \tilde{v} &= R_{n_{0\alpha}} (r^{1/3} \beta_1 n_{0\alpha} + r^{-1/6} \beta_2 n_{0\alpha}^{\perp})  \\
&= r^{-1/6}(\beta_1 n_{0\alpha} + \beta_2 n_{0\alpha}^{\perp}) + r^{1/3} \sqrt{1-r^{-1}} e_3 \\
&=r^{-1/6} v + r^{1/3} \sqrt{1-r^{-1}} e_3.
\end{align*}
Thus, since $\tilde{v}$ was arbitrary and $\alpha$ was arbitrary, we conclude that 
\begin{align*}
\mathbb{P}_{e_3}( R_{n_{0\alpha}} (\ell_{n_{0\alpha}}^{1/2})_{3 \times 2}) = r^{-1/6} I_{2 \times 2}, \quad \text{ for each $\alpha$ and $K$}
\end{align*}
where $\mathbb{P}_{e_3} \colon \mathbb{R}^{3\times2} \rightarrow \mathbb{R}^{2\times2}$ projects any $\mathbb{R}^{3\times2}$ matrix to that plane normal to $e_{3}$.

Now, if we choose $\tilde{F}_{K} = r^{-1/6} I_{3\times2}$ for each $K \in \{3,4,5\}$, we notice that for any $\lambda, \mu \in [0,1]$
\begin{align*}
\mathbb{P}_{e_3}&\left(\lambda r^{-1/6} I_{3\times2} +  (1-\lambda) \left(\mu R_{n_{0\alpha}} (\ell_{n_{0\alpha}}^{1/2})_{3 \times 2} +  (1-\mu)R_{n_{0(\alpha-1)}} (\ell_{n_{0(\alpha-1)}}^{1/2})_{3 \times 2}\right)\right)  \\
&= \lambda r^{-1/6} \mathbb{P}_{e_3}(I_{3 \times2}) + (1-\lambda)\left(  \mu  \mathbb{P}_{e_3}(R_{n_{0\alpha}} (\ell_{n_{0\alpha}}^{1/2})_{3 \times 2}) +  \mathbb{P}_{e_3}(R_{n_{0(\alpha-1)}} (\ell_{n_{0(\alpha-1)}}^{1/2})_{3 \times 2})\right) \\
&= \lambda r^{-1/6} I_{2 \times 2} + (1-\lambda) (\mu r^{-1/6} I_{2\times2} + (1-\mu) r^{-1/6} I_{2\times2}) = r^{-1/6} I_{2\times2}.
\end{align*}
That is, this $\mathbb{R}^{2\times2}$ projection of any convex combination of these $\mathbb{R}^{3\times2}$ matrices is full-rank.  Therefore, any convex combination is also full-rank.  This result did not depend on $\alpha$ or $K$, so in particular, it shows that $\rank_l(\mathcal{S}_{int}^K) = 2$ for each $K$.  Thus, these interior junctions have a $\delta$-smoothing.
\end{proof}

Now, with regards to the exterior junctions, the case of only two interfaces is trivial.  In particular:
\begin{proposition}
For any $r \geq 1$ prior to self-intersection, each exterior junction for (b) the rhombic dodecahedron and (c) the rhombic triacontahedron has a $\delta$-smoothing.
\end{proposition}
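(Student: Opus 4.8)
The plan is to reduce the statement to Theorem~\ref{DSmoothTheorem} by producing, for each such exterior junction, an explicit element of the set $\mathcal{A}_y$ from \eqref{eq:AySet}. First I would record the local picture. An exterior junction with at most two interfaces $\tilde{t}_1,\tilde{t}_2 \in \mathbb{S}^1$ emanating from a boundary point $\tilde{p}\in\partial\omega$ carries three angular regions of constant deformation gradient $\tilde{F}_0,\tilde{F}_1,\tilde{F}_2$ (in order along the junction), with $\tilde{F}_1$ the unique region adjacent to \emph{both} interfaces; by Proposition~\ref{identDefs} and \eqref{eq:CompatCons} these gradients satisfy the rank-one connections $(\tilde{F}_1-\tilde{F}_0)\tilde{t}_1=0$, $(\tilde{F}_2-\tilde{F}_1)\tilde{t}_2=0$, together with $\lambda\adj\tilde{F}_0+(1-\lambda)\adj\tilde{F}_1\neq 0$ and $\lambda\adj\tilde{F}_1+(1-\lambda)\adj\tilde{F}_2\neq 0$ for all $\lambda\in[0,1]$. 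The hypothesis "$r\geq 1$ prior to self-intersection" (together with $\tilde{n}_{0\alpha}\neq\pm\tilde{n}_{0\beta}$) is exactly what makes these conditions hold for the gradients listed in Proposition~\ref{identDefs}, i.e. it keeps $r$ in the range $[1,r_{max}]$ there.

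The key step will be the observation that $\adj$ is \emph{affine} along a rank-one segment: if $(\tilde{F}-\tilde{F}')\tilde{t}=0$, then writing $\adj\tilde{G}=(\tilde{G}\tilde{t})\times(\tilde{G}\tilde{t}^{\perp})$ — which agrees with $\tilde{G}\tilde{e}_1\times\tilde{G}\tilde{e}_2$ since $\{\tilde{t},\tilde{t}^{\perp}\}$ is a right-handed orthonormal basis, exactly as in the proof of Proposition~\ref{1DMollProp} — and using $\tilde{F}\tilde{t}=\tilde{F}'\tilde{t}$ one gets $\adj(\mu\tilde{F}+(1-\mu)\tilde{F}')=\mu\adj\tilde{F}+(1-\mu)\adj\tilde{F}'$ for $\mu\in[0,1]$. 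Hence the two non-degeneracy conditions above say precisely that every matrix on the segment $\co\{\tilde{F}_0,\tilde{F}_1\}$ and on the segment $\co\{\tilde{F}_1,\tilde{F}_2\}$ has full rank.

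I would then take $\tilde{F}_c:=\tilde{F}_1$. For this choice the convex hulls in \eqref{eq:AySet} collapse to segments, $\co\{\tilde{F}_c,\tilde{F}_1,\tilde{F}_0\}=\co\{\tilde{F}_0,\tilde{F}_1\}$ and $\co\{\tilde{F}_c,\tilde{F}_2,\tilde{F}_1\}=\co\{\tilde{F}_1,\tilde{F}_2\}$, whose union consists only of full-rank matrices by the previous step; thus $\rank_l$ of that union equals $2$, so $\tilde{F}_1\in\mathcal{A}_y$ and in particular $\mathcal{A}_y\neq\emptyset$. Theorem~\ref{DSmoothTheorem} then yields the desired $\delta$-smoothing. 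The identical computation with $\tilde{F}_c=\tilde{F}_1$ covers a junction with a single interface ($K=1$), where only the segment $\co\{\tilde{F}_0,\tilde{F}_1\}$ appears — this is also just the remark that a lone interface can be mollified trivially.

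I do not anticipate a genuine obstacle: the whole argument rests on the affineness of $\adj$ along rank-one lines and on the fact that \eqref{eq:CompatCons} was imposed along precisely those segments. The only bookkeeping is to read off from Figures~\ref{fig:SolveExamples}(b),(c) that each exterior junction really has at most two interfaces and to identify the region playing the role of $\tilde{F}_1$; this is a finite check using the director angles $\theta\in\{\pi/4,\pi/5\}$ and the explicit deformation gradients of Proposition~\ref{identDefs}.
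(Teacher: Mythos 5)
Your proposal is correct and follows essentially the same route as the paper: the paper also chooses the auxiliary matrix in Theorem \ref{DSmoothTheorem} to be the deformation gradient of the region adjacent to both interfaces (there $\tilde{F}_b=\tilde{F}_2$, $\tilde{F}_c=\tilde{F}_3$), so that the convex hulls collapse to the two rank-one segments, whose full-rank property is exactly the non-degeneracy condition in (\ref{eq:CompatCons}) via the affineness of $\adj$ along rank-one lines. Your explicit justification of that affineness (which the paper only invokes implicitly via Proposition \ref{1DMollProp}) is a welcome clarification but not a different argument.
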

\begin{proof}
By Proposition \ref{identDefs} and Theorem \ref{DSmoothTheorem}, we prove this result if we can find a $\tilde{F}_{b,c} \in \mathbb{R}^{3 \times 2}$ such that the sets 
\begin{align*}
&\mathcal{S}_{ext}^{(b)} := \co \big\{ \tilde{F}_b, \tilde{F}_2, \tilde{F}_5 \big\}  \cup \co \big\{ \tilde{F}_b, \tilde{F}_2, \tilde{F}_1\}\\
&\mathcal{S}_{ext}^{(c)} := \co \big\{ \tilde{F}_c, \tilde{F}_6, \tilde{F}_3\big\} \cup \co \big\{\tilde{F}_c, \tilde{F}_3, \tilde{F}_2 \big\}
\end{align*}
contains only matrices of full rank where the $\tilde{F}_{\alpha}$ are as described in Proposition \ref{identDefs} for (b) and (c).  Hence, we choose $\tilde{F}_b = \tilde{F}_2$ and $\tilde{F}_c = \tilde{F}_3$.   With these choices actually 
\begin{align*}
&\mathcal{S}_{ext}^{(b)} = \co \big\{ \tilde{F}_2,\tilde{F}_5\big\} \cup \co \big\{\tilde{F}_2, \tilde{F}_1 \big\}  \\
&\mathcal{S}_{ext}^{(c)} = \co \big\{ \tilde{F}_3, \tilde{F}_6 \big\} \cup  \co \big\{ \tilde{F}_3,\tilde{F}_2\big\}.
\end{align*}
Focusing on  (b), we note that $\rank_{l} (\co \{ \tilde{F}_2, \tilde{F}_5\}) = 2$.   This follows from the fact that $\tilde{F}_2$ and $\tilde{F}_5$ are, by Proposition \ref{identDefs}, rank-one compatible (which implies, in actuating the rhombic dodecahedron prior to self intersection, $\lambda \adj \tilde{F}_2 \times (1-\lambda)\adj \tilde{F}_5 \neq 0$ for all $\lambda \in [0,1]$).  The same argument applies to other convexified set in $\mathcal{S}_{ext}^{(b)}$.  Thus, $\rank_l (\mathcal{S}_{ext}^{(b)}) = 2$.  The argument is the same also for $\mathcal{S}_{ext}^{(c)}$.  This completes the proof. 
\end{proof}

The exterior junction for the box has three interfaces separating four regions of distinct deformation gradient.  Therefore, the previous proof technique is not applicable.  Instead, we resort to explicit computation of the deformations gradients.  Nevertheless:
\begin{proposition}
For any $r \geq 1$ prior to self-intersection at $r >3$, the exterior junction for (a) the box also has a $\delta$-smoothing.
\end{proposition}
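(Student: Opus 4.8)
The plan is to reduce everything, exactly as for the other junctions, to Theorem~\ref{DSmoothTheorem}: it suffices to produce a single matrix $\tilde{F}_c\in\mathbb{R}^{3\times2}$ making the set $\mathcal{A}_y$ attached to the box's exterior junction non-empty, i.e.\ for which
\[
\rank_l\!\Big(\bigcup_{\alpha=1}^{3}\co\{\tilde{F}_c,\tilde{F}_\alpha,\tilde{F}_{\alpha-1}\}\Big)=2,
\]
where $\tilde{F}_0,\dots,\tilde{F}_3$ are the four constant sector gradients meeting at the junction (three interfaces, read off from Figure~\ref{fig:SolveExamples}(a) and Proposition~\ref{identDefs}(a)). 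The projection trick used for the interior junctions in Proposition~\ref{InterProp} fails here: the four sectors do not share a common $\mathbb{P}_{e_3}$-image, since two of them carry the extra rotation $R_{n_{01}}^2$ (resp.\ $R_{n_{03}}^2$). So I would proceed by explicit computation instead.

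First I would write out the four gradients. Using $\theta=\pi/3$, the explicit box directors, and the identity $R_{n_{0\alpha}}n_{0\alpha}=r^{-1/2}n_{0\alpha}+\sqrt{1-r^{-1}}\,e_3$ from Proposition~\ref{identDefs}, each $\tilde{F}_\alpha$ becomes a $3\times2$ matrix whose entries are elementary functions of $r^{1/3}$, $r^{-1/6}$ and $\sqrt{1-r^{-1}}$; in particular one computes $\adj\tilde{F}_\alpha=\tilde{F}_\alpha\tilde{e}_1\times\tilde{F}_\alpha\tilde{e}_2$ and records that its $e_3$-component is strictly positive for $r\in[1,3)$ and degenerates at $r=3$ — this is the quantitative meaning of ``prior to self-intersection.'' Note that within each sector $\tilde{\nabla}y_0^{\delta}$ lies on the segment $\co\{\tilde{F}_{\alpha-1},\tilde{F}_\alpha\}$ of rank-one-compatible matrices, along which $\adj$ is \emph{affine} (as in the last display of the proof of Proposition~\ref{1DMollProp}), so that segment never meets the rank-$\le1$ cone for $r<3$; the only genuinely new ingredient is to control the mixing with $\tilde{F}_c$ across all three sectors simultaneously.

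For the candidate I would first try the flat matrix $\tilde{F}_c=r^{-1/6}I_{3\times2}$ used for the interior junctions. With $\tilde{F}_c$ fixed, a matrix $\lambda\tilde{F}_c+(1-\lambda)\big(\mu\tilde{F}_\alpha+(1-\mu)\tilde{F}_{\alpha-1}\big)$ has rank $2$ if and only if the polynomial $p_\alpha(\lambda,\mu,r)$ equal to the sum of the squares of its three $2\times2$ minors is nonzero, and by compactness of $\Lambda_3$ it suffices to check $p_\alpha>0$ pointwise on $[0,1]^2\times[1,3)$ for $\alpha=1,2,3$. Taking $c^{\ast}:=\min_\alpha\min_{\Lambda_3}p_\alpha^{1/2}>0$ then gives $|\adj\tilde{F}|\ge c^{\ast}$ on the whole union, hence $\rank_l=2$, hence $\mathcal{A}_y\neq\emptyset$, and Theorem~\ref{DSmoothTheorem} delivers the $\delta$-smoothing.

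The main obstacle is precisely this final verification: one must carry through the three sector computations with the nested $SO(3)$ rotations and then certify that each minor-sum $p_\alpha$ stays strictly positive as a function of $(\lambda,\mu,r)$ on $[0,1]^2\times[1,3)$, with the degeneration occurring exactly at $r=3$. If the flat choice $\tilde{F}_c=r^{-1/6}I_{3\times2}$ should fail uniformly for one of the three sectors, the fallback is to let $\tilde{F}_c$ depend mildly on $r$ — for instance interpolating between $r^{-1/6}I_{3\times2}$ and the gradient of one of the two ``unrotated'' sectors — and re-run the same compactness argument; since $\mathcal{A}_y$ need only be non-empty, this extra freedom is more than enough.
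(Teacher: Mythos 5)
Your setup is exactly the paper's: reduce to Theorem~\ref{DSmoothTheorem}, take the flat candidate $\tilde{F}_c=r^{-1/6}I_{3\times2}$, note that the $\mathbb{P}_{e_3}$ projection trick of Proposition~\ref{InterProp} still disposes of the one convex hull not involving the rotated sectors $\tilde{F}_4,\tilde{F}_5$, and handle the remaining two hulls by explicit computation plus compactness of $\Lambda_3$. The difficulty is that you stop precisely where the proposition's content begins. You write that ``one must carry through the three sector computations \dots and then certify that each minor-sum $p_\alpha$ stays strictly positive,'' and you label this the main obstacle without resolving it; but that certification \emph{is} the proposition. Rephrasing full rank as ``the sum of squares of the $2\times2$ minors is positive'' and invoking compactness does not establish positivity anywhere in the interior of the triangles $\co\{\tilde{F}_c,\tilde{F}_\alpha,\tilde{F}_{\alpha-1}\}$, where $\adj$ is quadratic in $(\lambda,\mu)$ rather than affine (your edge observation only covers the segments $\co\{\tilde{F}_{\alpha-1},\tilde{F}_\alpha\}$). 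The fallback of letting $\tilde{F}_c$ depend on $r$ is likewise unexecuted.

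The missing idea in the paper's argument is structural, not brute force. After writing the four gradients explicitly in the junction frame (equation~(\ref{eq:explicitMat})), one observes that every convex combination $\tilde{F}^{\mp}(\lambda,\mu,r)=\lambda((1-\mu)r^{-1/6}I_{3\times2}+\mu\tilde{F}_{1,3})+(1-\lambda)\tilde{F}_{4,5}$ has entries of the fixed sign pattern $\xi$, $\mp\kappa$, $\mp3\kappa$, $\xi$, $\gamma$, $\pm\phi$ with $\xi,\kappa,\gamma,\phi\ge0$ on $[0,1]^2\times[1,3]$. Hence
\begin{align*}
\adj \tilde{F}^{\mp}=\left(\begin{array}{c} -(3\kappa\phi+\xi\gamma)\\ \mp(\gamma\kappa+\xi\phi)\\ \xi^2-3\kappa^2\end{array}\right),
\end{align*}
and vanishing of this vector forces $\xi=\sqrt{3}\,\kappa$ together with either $\xi=\kappa=0$ or $\phi=\gamma=0$. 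One then checks that $\xi=0$ only at $(\lambda,r)=(0,3)$, where $\tilde{F}^{\mp}=\tilde{F}_{4,5}$ is full rank, and that $\phi=(\lambda\mu/2+(1-\lambda)/r)\sqrt{3(r-1)}$ vanishes only at $(\lambda,\mu)=(1,0)$ or $r=1$, i.e.\ only at the full-rank matrices $\tilde{F}_{1,3}$ or $I_{3\times2}$. This contradiction argument is what converts your ``certify $p_\alpha>0$'' into a proof; without it (or an equivalent explicit verification) the proposal is incomplete.
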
 
\begin{proof}
By explicit computation in the $\{\tilde{e}_1, \tilde{e}_2\}$ basis shown (with $e_3$ the outward normal) in Figure \ref{fig:SolveExamples}, 
\begin{align}\label{eq:explicitMat}
\tilde{F}_{1,3} = r^{-1/6}\left( \begin{array}{cc} 1 & 0 \\ 0 & 1 \\ \frac{1}{2} \sqrt{r-1} & \pm \frac{\sqrt{3}}{2} \sqrt{r-1} \end{array}\right), \quad \tilde{F}_{4,5} = r^{-1/6}\left( \begin{array}{cc} \frac{3-r}{2r} & \mp \frac{\sqrt{3}}{2}\left(\frac{r-1}{r}\right) \\ \mp 3 \frac{\sqrt{3}}{2}\left(\frac{r-1}{r}\right)  & \frac{3-r}{2r} \\ (3-r)\frac{\sqrt{r-1}}{r} & \pm  \sqrt{3}\frac{\sqrt{r-1}}{r} \end{array}\right).
\end{align}
Now, we claim that the set 
\begin{align}\label{eq:SetA}
\mathcal{S}_{ext}^{(a)}:= \co \{ \tilde{F}_a, \tilde{F}_1, \tilde{F}_4\} \cup \co \{ \tilde{F}_a, \tilde{F}_3, \tilde{F}_5 \} \cup \co \{\tilde{F}_a, \tilde{F}_1, \tilde{F}_3\} 
\end{align}
contains only matrices of full-rank if $\tilde{F}_a = r^{-1/6} I_{3\times2}$.  

To see this, first we note that we need only consider the first two sets since $\rank_l( \co \{r^{-1/6} I_{3\times2}, \tilde{F}_1, \tilde{F}_3\}) = 2$ from Proposition \ref{InterProp}.  In addition, we notice by explicit calculation that
\begin{align*}
&\tilde{F}^{\mp}(\lambda,\mu,r) := \lambda \left((1-\mu) r^{-1/6} I_{3\times2} + \mu \tilde{F}_{1,3}\right) + (1-\lambda) \tilde{F}_{4,5}  = \left( \begin{array}{cc}\xi(\lambda,r) &  \mp \kappa(\lambda, r) \\ \mp 3 \kappa(\lambda,r) & \xi(\lambda,r) \\ \gamma(\lambda,\mu,r) & \pm \phi(\lambda,\mu,r) \end{array}\right)
\end{align*}
where $\xi,\kappa,\gamma,\phi \geq 0$ for all $\lambda,\mu \in [0,1]$ and $r \in [1,3]$.  Thus, 
\begin{align*}
\adj \tilde{F}^{\mp}  = \left(\begin{array}{c} - (3 \kappa \phi + \xi \gamma) \\ \mp (\gamma \kappa + \xi \phi) \\ \xi^2 - 3 \kappa^2 \end{array}\right)
\end{align*}
where we have suppressed the dependence on $\lambda,\mu$ and $r$.  We will simply require the  non-negativity of each parameter as stated above for our argument. 

Suppose for the sake of a contradiction that $\adj \tilde{F}^{\mp} = 0$.   Using the non-negativity of the parameters, we have 
\begin{align}\label{eq:Implications}
\adj \tilde{F}^{\mp} = 0 \quad \Rightarrow  \quad \xi = \sqrt{3} \kappa \quad \Rightarrow \quad  \text{either:}\;\;  \begin{cases} 
 \gamma = -\sqrt{3} \phi \quad \Rightarrow \quad  \phi,\gamma = 0 \\ 
 \xi, \kappa = 0 \end{cases}
 \end{align}
Let us assume it is the case $\xi, \kappa = 0$, and notice that $\xi=0$ implies $r = 3$ and $\lambda = 0$.  However, in this case $ \tilde{F}^{\mp} = \tilde{F}_{4,5}$, and $\tilde{F}_{4,5}$ is full-rank.  Thus if $\adj \tilde{F}^{\mp}= 0$, it must be that $\phi,\gamma =  0$. However, we find additionally that 
\begin{align*}
\phi(\lambda, \mu, r) = \left(\lambda \mu \frac{1}{2} + (1-\lambda)\frac{1}{r} \right)\sqrt{3(r-1)}.
\end{align*}
Thus, we see that $\phi = 0$ for $\lambda,\mu \in [0,1]$ and $r \in [1,3]$ in only two cases: if $\lambda = 1$ and $\mu = 0$ or if $r = 1$.  For the first case though, $\tilde{F}^{\mp} = \tilde{F}_{1,3}$ which is full-rank.  For the second case,  $\tilde{F}^{\mp} = I_{3\times2}$ which is also full-rank.  So $\phi$ is only equal to zero on full-rank matrices.  This is the desired contradiction.  Indeed given this fact, $\adj \tilde{F}^{\pm}$ can never be zero due to (\ref{eq:Implications}).
\end{proof}

\section{Applications}\label{sec:Applications}

\subsection{Nonisometric origami: Compatibility and examples}\label{app:no}

\begin{figure}[t!]
   \begin{subfigure}{0.5\textwidth}
   \centering
    \includegraphics[width=0.47\linewidth ]{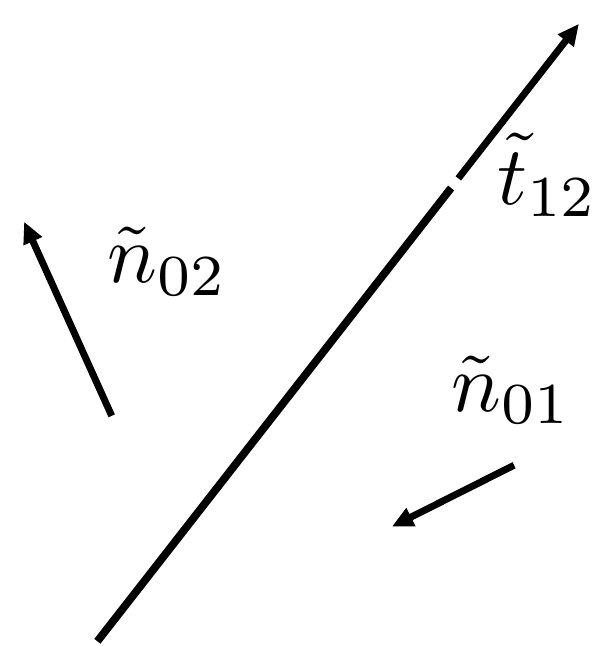} 
    \caption{A single interface}
     \label{fig:OneInterface}
     \end{subfigure}
     \begin{subfigure}{0.5\textwidth}
     \centering
\includegraphics[width=0.49\linewidth ]{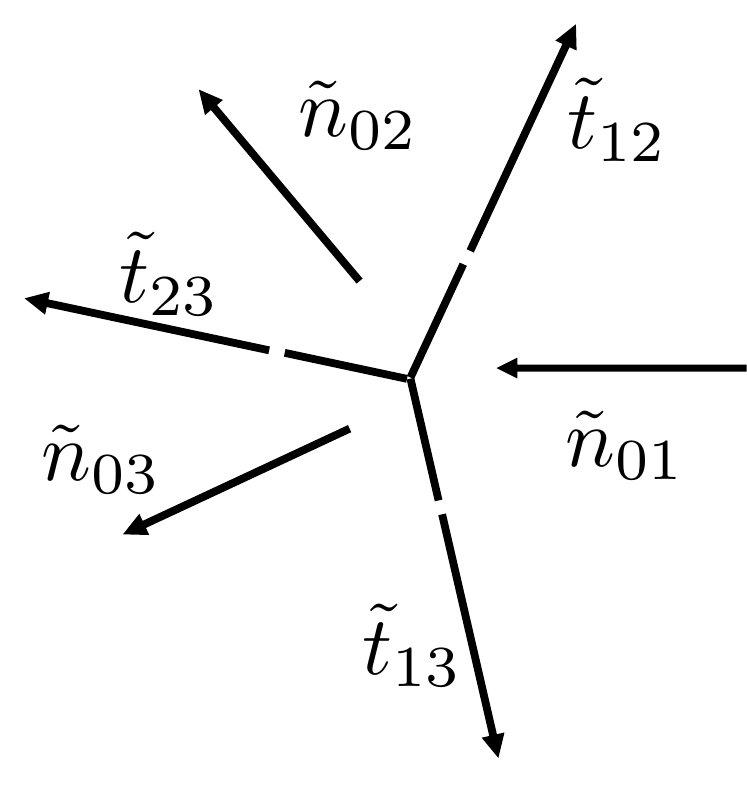}
\caption{Three-faced junction}
\label{fig:ThreeFacedJunc}
\end{subfigure}
\caption{Schematic of interfaces and junctions in nonisometric origami}
\label{fig:InterfaceFig}
\end{figure}

The actuation of complex shape stems from piecewise polygonal regions satisfying the nonisometric condition in Definition \ref{def:nonIsoDef}(ii), hence the term {\it nonisometric origami}.  In particular, the compatibility of interfaces separating regions of distinct constant director (Figure \ref{fig:OneInterface}) combined with the compatibility of junctions where these interfaces merge at a single point (Figure \ref{fig:ThreeFacedJunc}) play the key role in actuation.  To address this with mathematical precision, we note that the nonisometric condition in Definition \ref{def:nonIsoDef}(ii) is equivalent to 
\begin{align}\label{eq:equivMetric}
\tilde{F}_{\alpha} = R_{\alpha} (\ell_{n_{0\alpha}}^{1/2})_{3\times2} \quad \text{ for some $R_{\alpha} \in SO(3)$ }, \quad \alpha \in \{1,\ldots,N\}
\end{align} 
where $(\ell^{1/2}_{n_{0\alpha}})_{3\times2} = r^{-1/6}(I_{ 3 \times2} + (r^{1/2}-1) n_{0\alpha} \otimes \tilde{n}_{0\alpha})$ for the projection $\tilde{n}_{0\alpha} \in B_1(0) \subset \mathbb{R}^2$. (Clearly $\Leftarrow$ holds.  For $\Rightarrow$, since $(\tilde{F}_{\alpha})^T \tilde{F}_{\alpha} = \tilde{\ell}_{n_{0\alpha}}$ for each $\alpha$, there exists for each $\alpha$ a $b_{\alpha} \in \mathbb{R}^3$ as in Proposition \ref{bPropDef}. We deduce (\ref{eq:equivMetric}) using the polar decomposition theorem.) Thus for compatibility, the deformation $y$ in (\ref{eq:ybarAffine}) must be continuous across each interface separating regions of distinct constant director.  This occurs if and only if 
\begin{align}\label{eq:generalCompat}
R_{\alpha}  (\ell_{n_{0\alpha}}^{1/2})_{3\times2}\tilde{t}_{\alpha \beta}= R_{\beta} (\ell_{n_{0\beta}}^{1/2})_{3\times2} \tilde{t}_{\alpha \beta}
\end{align}
for every interface tangent $\tilde{t}_{\alpha \beta}\in \mathbb{S}^1$.  Explicitly, $\tilde{t}_{\alpha \beta}$ represents the tangent vector to the interface separating regions $\omega_{\alpha}$ with director $n_{0\alpha}$ and $\omega_{\beta}$ with director $n_{0\beta}$ as depicted in Figure \ref{fig:InterfaceFig}.  This condition is akin to the rank-one condition studied in the context of fine-scale twinning during the austenite martensite phase transition and actuation active martensitic sheets \cite{balljames_analcontmech_89, bhatta_03, bhattajames_jmps_99}.   More recently, this compatibility has been appreciated as a means of actuation for nematic elastomer and glass sheets \cite{mw_PRE_11,mw_phystoday_16} using planar programming of the director.  Here though, (\ref{eq:generalCompat}) describes the most general case of compatibility in thin nematic sheets as $n_{0\alpha}, n_{0\beta} \in \mathbb{S}^2$ need not be planar.  

While (\ref{eq:generalCompat}) encodes a complete characterization of nonisometric origami as defined in Definition \ref{def:nonIsoDef}, more useful criterion are gleamed from examining necessary and sufficient conditions associated with this constraint.  In particular, taking the norm of both sides of (\ref{eq:generalCompat}) yields, after some manipulation, a necessary condition for nonisometric origami, 
\begin{align}\label{eq:pieceNecessary}
|\tilde{n}_{0\alpha} \cdot \tilde{t}_{\alpha \beta}| = |\tilde{n}_{0\beta} \cdot \tilde{t}_{\alpha \beta}|
\end{align}
for every interface tangent $\tilde{t}_{\alpha \beta}$ (when $r \neq 1$).  We emphasize again that $\tilde{n}_{0\alpha} \in B_{1}(0) \subset \mathbb{R}^2$ is the projection of $n_{0\alpha}$ onto the tangent plane of $\omega$.  That this need not be a unit vector is a direct consequence of allowing for non-planar programming.  

A director program satisfying (\ref{eq:pieceNecessary}) is not, however, sufficient to ensure the existence of a deformation $y$ satisfying Definition \ref{def:nonIsoDef}(ii).  To illustrate this point, consider the design in Figure \ref{fig:NotCompatible}.  Here, we have a junction with three sectors of equal angle $2\pi/3$, and the director is programmed to bisect the sector angle (respectively, perpendicular to the bisector) on heating (respectively, cooling).  This program satisfies the necessary condition (\ref{eq:pieceNecessary}).  However in this case, due to the stretching part of the deformation upon actuation, the base of each triangle expands while the height contracts.  Thus, it is clear geometrically that no series of rotations and/or translations of the three deformed triangles can bring about a continuous piecewise affine deformation of the entire junction.  Conversely, if thermal actuation is reversed, as illustrated in Figure \ref{fig:Compatible} with the color change of the director program, then the base of each triangle contracts and the height expands.  In this case, a continuous piecewise affine deformation is realized by rotating each of the deformed triangles out-of-plane to form a 3-sided pyramid.  

\begin{figure}[t!]
\centering
\begin{subfigure}{.5\textwidth}
  \centering
  \includegraphics[width = 2.8 in]{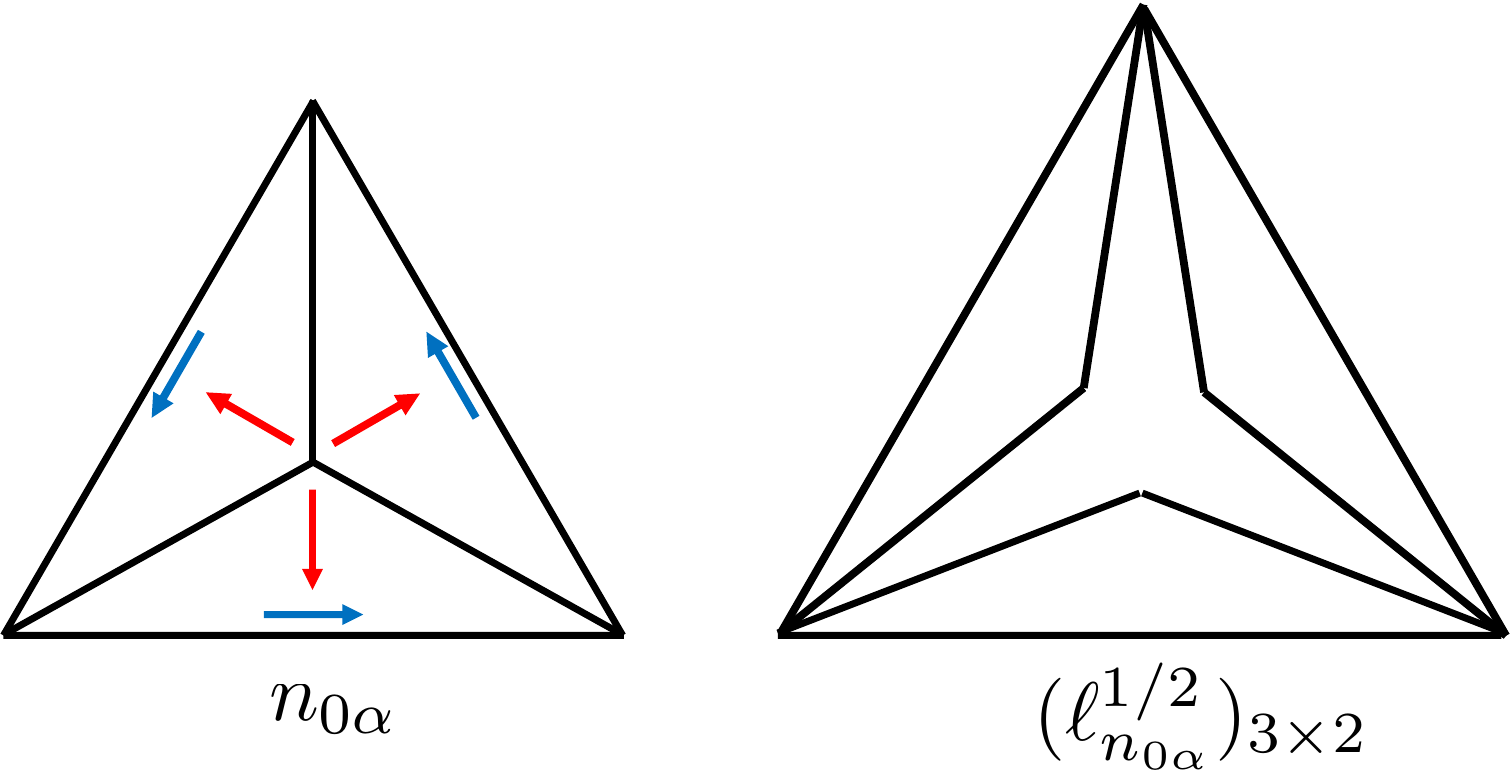}
  \caption{Incompatible Junction}
  \label{fig:NotCompatible}
\end{subfigure}%
\begin{subfigure}{.5\textwidth}
  \centering
  \includegraphics[width =2.8 in]{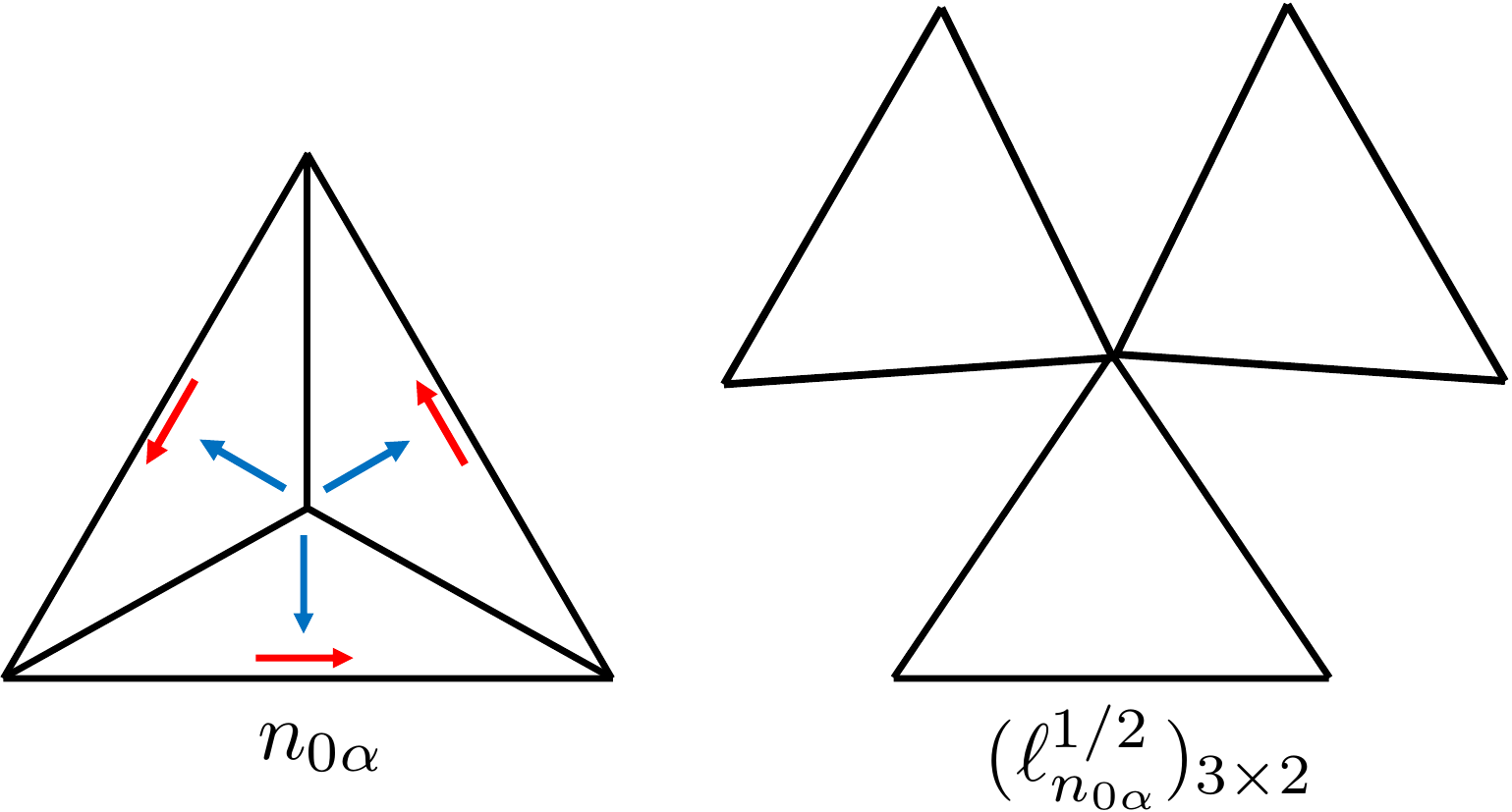}
  \caption{Compatible Junction}
  \label{fig:Compatible}
\end{subfigure}
\caption{Two junctions with director programs satisfying (\ref{eq:pieceNecessary}).  Blue represents the design for cooling and red represents the design for heating. The stretch part of the deformation upon thermal actuation is plotted.  }
\end{figure}

Figure \ref{fig:Compatible}, by way of example, also highlights a simple scheme to form a compatible pyramidal junction.  Indeed, if a junction has $K \geq 3$ sectors of equal angle $2\pi/K$  as in Figure \ref{fig:UnitCellEx}, then programming the director to bisect this angle upon cooling (respectively, perpendicular to the bisector on heating) alway leads to a compatible $K$-sided pyramid.  There are, of course, an infinite number of these types of junction, as emphasized with the designs in the right part of Figure \ref{fig:UnitCellEx}.  Most importantly though, these junctions can be used as unit cells to actuate more complex structures from nematic sheets.  This is shown in Figure \ref{fig:ExNonisometricOrigami} with designs for actuating a box (a), rhombic triacontehedron (b) and azimuthally periodic structures (c).   Each design incorporates a unit cell in Figure \ref{fig:UnitCellEx} as the building block.  

\begin{figure}[t!]
\centering
\includegraphics[width = 6 in]{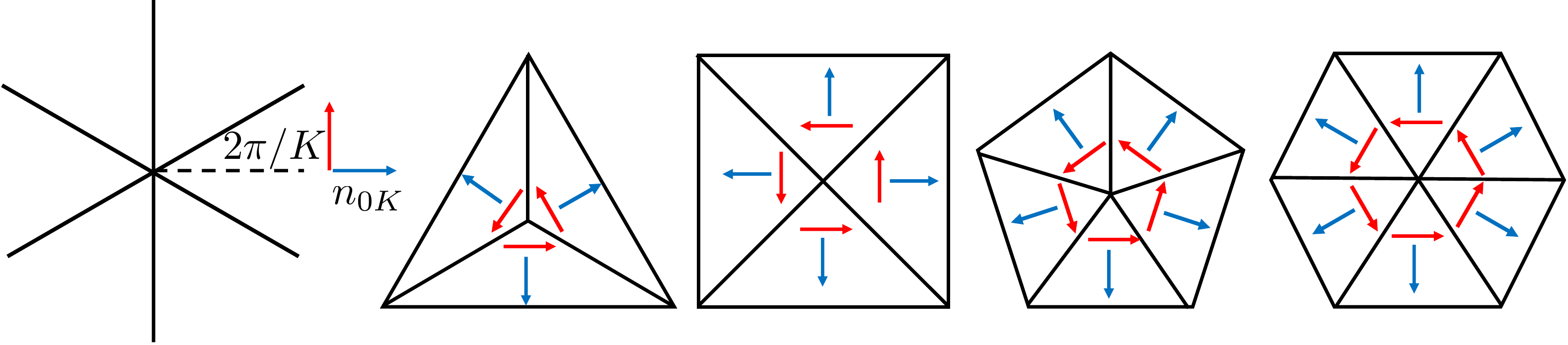}
\caption{Simple scheme for a compatible junction. Blue represents the design for cooling and red represents the design for heating.}  
\label{fig:UnitCellEx}
\end{figure}

\begin{figure}[t!]
  \begin{minipage}[b]{0.5\linewidth}
    \begin{subfigure}{0.95\textwidth}
    \centering
    \includegraphics[width=0.90\linewidth]{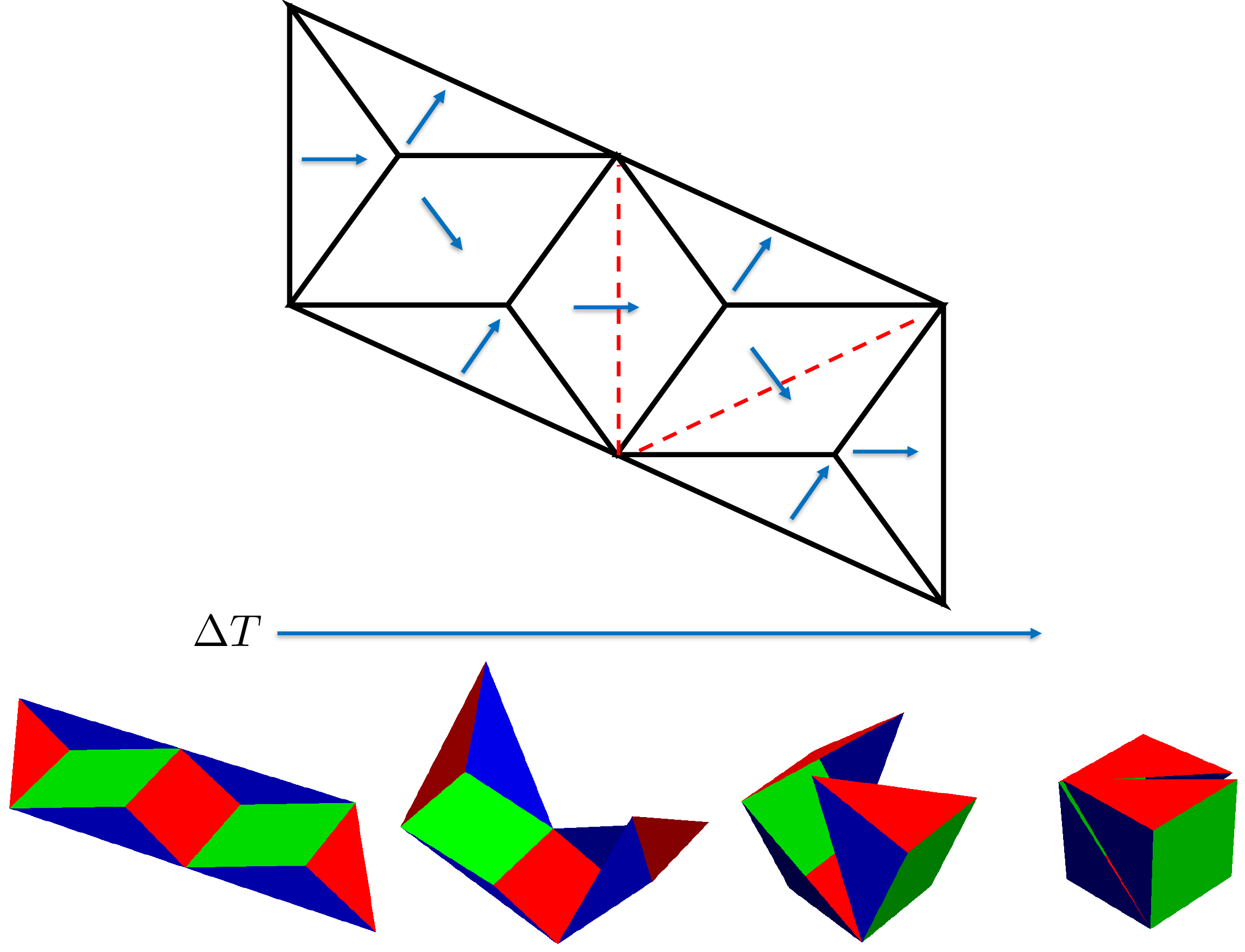} 
    \caption{Box construction} 
    \end{subfigure}
    \vspace{4ex}
  \end{minipage}
  \begin{minipage}[b]{0.5\linewidth}
    \begin{subfigure}{1\textwidth}
    \centering
    \includegraphics[width=1.0\linewidth]{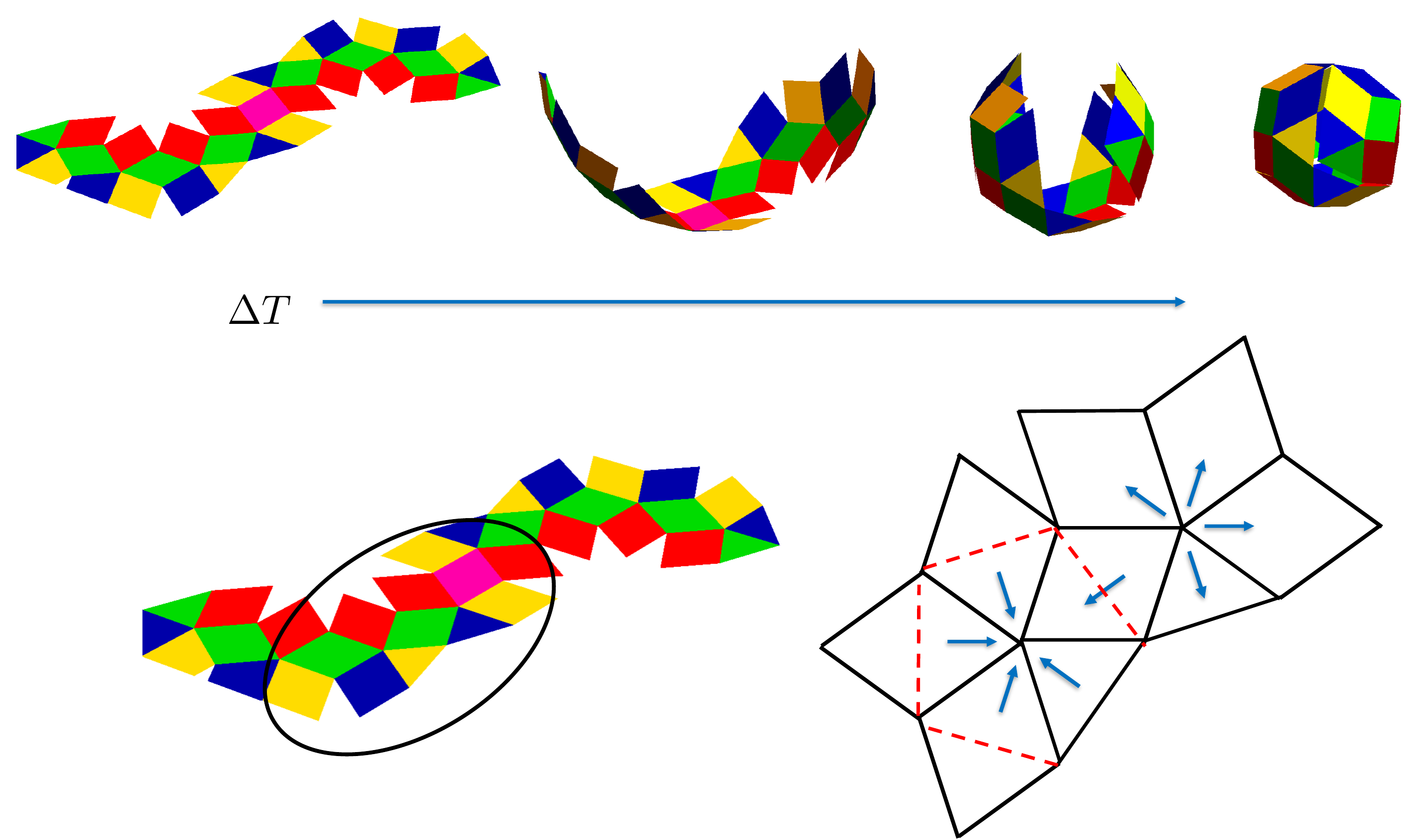} 
    \caption{Rhombic tricontahedron construction}
    \end{subfigure}
    \vspace{4ex}
  \end{minipage} 
  \begin{minipage}[b]{0.5\linewidth}
    \begin{subfigure}{0.92\textwidth}
    \centering
    \includegraphics[width=0.92\linewidth]{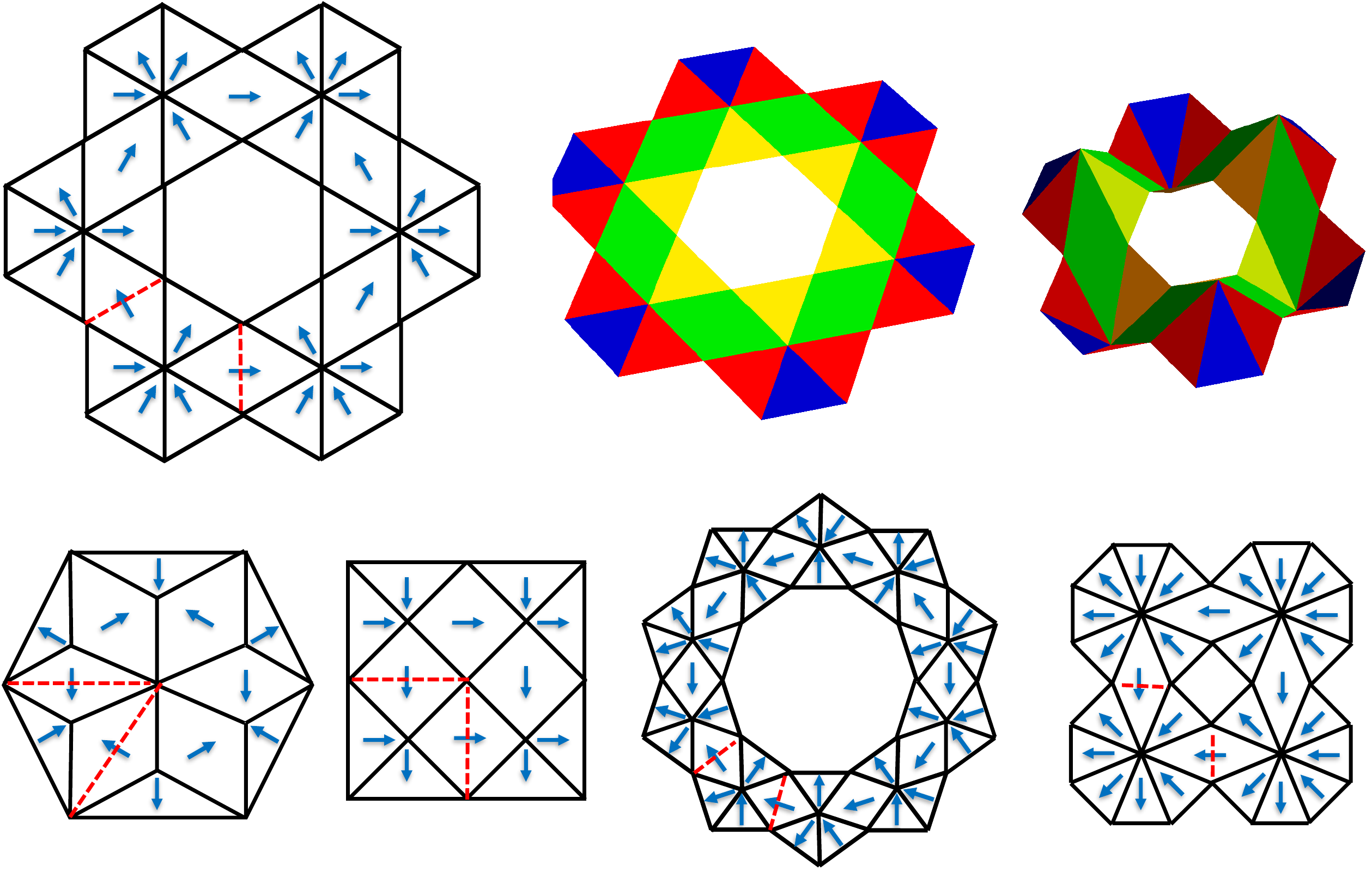} 
    \caption{Azimuthally Periodic Construction} 
    \label{fig:AzPeriodic}
    \end{subfigure}
    \vspace{4ex}
  \end{minipage}
  \begin{minipage}[b]{0.5\linewidth}
     \begin{subfigure}{1.0\textwidth}
    \centering
    \includegraphics[width=0.95\linewidth]{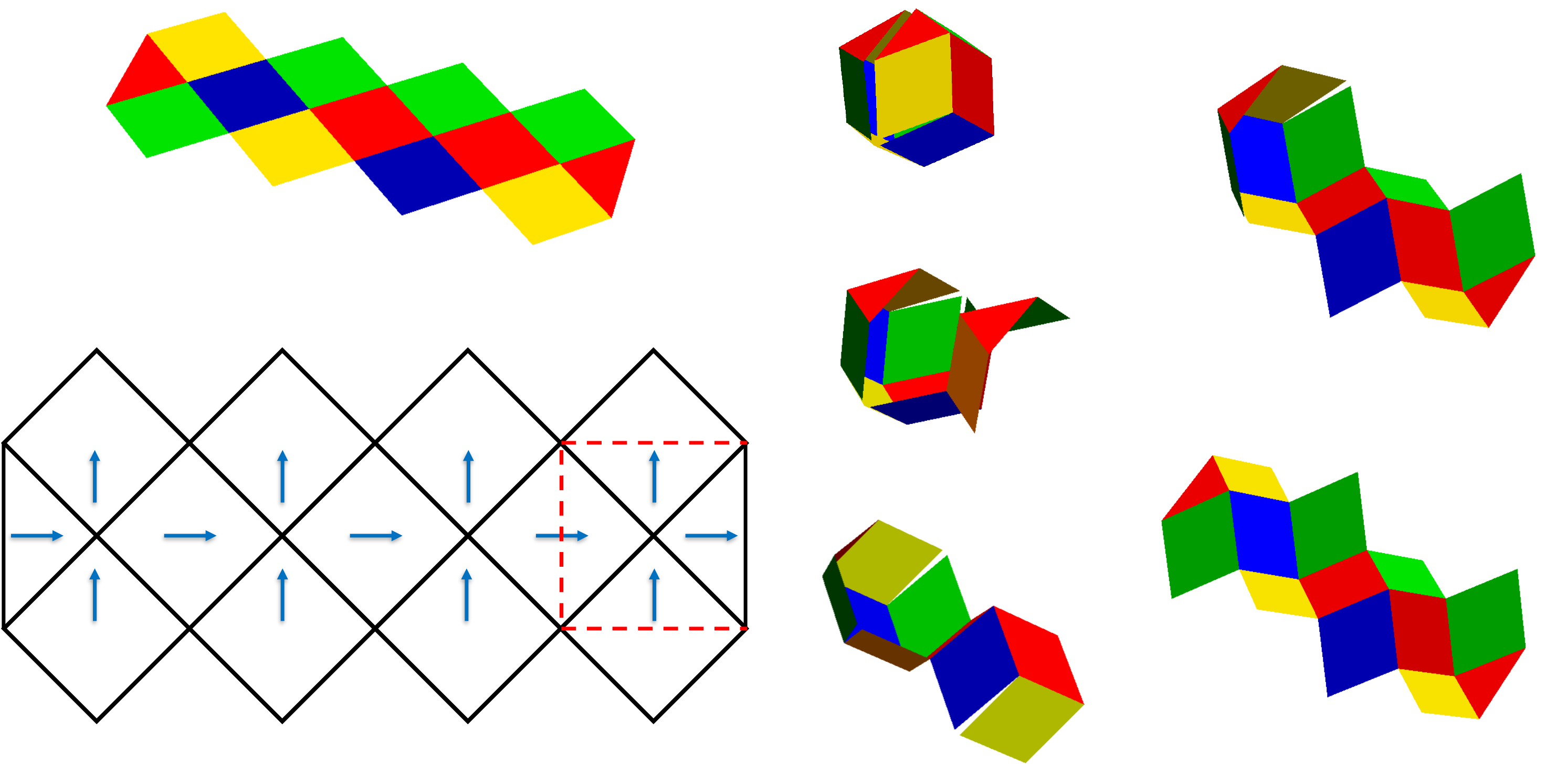} 
    \caption{Degeneracies}
    \label{fig:Degenerate}
    \end{subfigure}
    \vspace{4ex}
  \end{minipage} 
  \caption{Examples of nonisometric origami.  The design for cooling is shown.  For heating, each director field is replaced by its respective perpendicular in the plane. The unit cells which form the building blocks for these constructions are highlighted in red.}
  \label{fig:ExNonisometricOrigami} 
\end{figure}

The examples highlighted in Figure \ref{fig:ExNonisometricOrigami} illustrate that for even the simplest of building blocks, there is a richness of shape changing deformations of nematic elastomer sheets described by nonisometric origami. It should be noted, however, that these structure are in general degenerate.  This is shown in Figure \ref{fig:Degenerate} where we design a program to actuate a rhombic dodecahedron upon cooling. Here though, we have done nothing to break the reflection symmetry associated with the building block.  Thus, each interior junction is free to actuate either up or down.  Therefore, in addition to possibly actuating the rhombic dodecahedron, the actuation of four alternative surfaces is a completely equivalent outcome given this framework.  Such degeneracy was observed actuating conical defects by Ware et al.\ \cite{wetal_science_15}, where it was shown that each defect could actuate either up or down.  However, it may be possible to suppress these degeneracies by introducing a slight bias in the thru thickness director orientation via twisted nematic prescription.  This was seen, for instance, in Fuchi et al.\ \cite{fwbrvwj_sm_2015} (see also Gimenez-Pinto et al.\ \cite{getal_sr_17}), where actuation of a box like structure was achieved through folds biased in the appropriate direction using such prescription.  Thus, biasing would appear a promising means of breaking the reflection symmetry.  Nevertheless, we did not address this here as it is difficult to analyze to the level of rigor intended for this work.  

As a final comment on the design landscape for these constructions, recall that the relations associated with (\ref{eq:generalCompat}) provide a complete, but not particularly transparent, description  of nonisometric origami.  Further, the more useful condition (\ref{eq:pieceNecessary}) is only necessary as we provided a counterexample to sufficiency in Figure \ref{fig:NotCompatible}.  In fact, to our knowledge, a complete characterization of the geometry of configurations satisfying (\ref{eq:generalCompat}) remains open.  Nevertheless, we do expect an immense richness to such a characterization.  For instance,  in \cite{p_thesis_16} a more general, but by no means complete, characterization of compatible three-faced junctions is worked out, and numerous non-trivial examples of compatibility emerge from the analysis.  For these reasons, we feel a further pursuit in this direction appealing, though we did not delve deeper herein due to length considerations.  

\subsection{Lifted surfaces, and a recipe for design}

The idea for lifted surfaces (i.e., the ansatz (\ref{eq:ansatz}), (\ref{eq:graphVarphi}) and (\ref{eq:designRef})) is based on an equivalent rewriting of the metric constraint $(\tilde{\nabla} y)^T \tilde{\nabla} y = \tilde{\ell}_{n_0}$. (This equivalent form also yields a concrete design scheme for the actuation of nematic elastomers sheets in general.) Essentially, we take the picture of $y$ being a solution to $(\tilde{\nabla} y)^T \tilde{\nabla} y = \tilde{\ell}_{n_0}$ defined by a predetermined $n_0$ and turn it on its head. That is, we first identify the set of deformation gradients that are consistent with (\ref{eq:2DMetric}) for any director field and then we identify the director associated with that deformation gradient.

\be{theorem}
\label{prop:equivalent}
 Let $r>1$. The metric constraint (\ref{eq:2DMetric}) holds if and only if 
\begin{align}\label{eq:equivalence2DMet}
\tilde{\nabla} y(\tilde{x}) = (\partial_1y|\partial_2 y)(\tilde{x}) \in \mathcal{D}_{r}, \quad n_0(\tilde{x}) \in \mathcal{N}_{\tilde{\nabla} y(\tilde{x})}^{\; r} \quad \text{ a.e. } \tilde{x} \in \omega. 
\end{align}
Here,
\begin{equation}
\begin{aligned}\label{eq:Drg1}
\mathcal{D}_{r >1} := \Big\{ \tilde{F} \in \mathbb{R}^{3\times2} \colon &|\tilde{F}|^2 \leq r^{-1/3} + r^{2/3}, \;\; r^{-1/3} \leq |\tilde{F} \tilde{e}_{\alpha}|^2 \leq r^{2/3}, \alpha = 1,2, \\
& (\tilde{F} \tilde{e}_1 \cdot \tilde{F} \tilde{e}_2)^2 = (|\tilde{F}\tilde{e}_1|^2 - r^{-1/3})(|\tilde{F} \tilde{e}_2|^2 - r^{-1/3}) \Big \} 
\end{aligned}
\end{equation}
and 
\begin{equation}
\begin{aligned}\label{eq:Nrg1}
\mathcal{N}_{\tilde{F}}^{\;r>1} := \Big \{ \nu_0 \in \mathbb{S}^2 \colon &(\nu_0 \cdot e_{\alpha})^2 = \frac{|\tilde{F}\tilde{e}_{\alpha}|^2 -r^{-1/3}}{r^{2/3} - r^{-1/3}}\;\; \alpha = 1,2,  \\
& \text{sign}( (\nu_0 \cdot e_1) (\nu_0 \cdot e_2)) = \text{sign} (\tilde{F} \tilde{e}_1 \cdot \tilde{F} \tilde{e}_2) \Big\}
\end{aligned}
\end{equation}
for $\text{sign} \colon \mathbb{R} \rightarrow \{-1,0,1\}$ the sign function with $\text{sign}(0) = 0$.  (For $r < 1$, the inequalities in (\ref{eq:Drg1}) and the sign in (\ref{eq:Nrg1}) are reversed, i.e., for the latter: $\text{sign}(\tilde{F} \tilde{e}_1 \cdot \tilde{F} \tilde{e}_2) \mapsto -\text{sign}(\tilde{F} \tilde{e}_1 \cdot \tilde{F} \tilde{e}_2)$.)  

In addition, if $y \colon \omega \rightarrow \mathbb{R}^3$ such that $\tilde{\nabla} y(\tilde{x})  \in \mathcal{D}_{r}$ a.e., then there exists an $n \colon \omega \rightarrow \mathbb{S}^2$ such that $n(\tilde{x}) \in \mathcal{N}_{\tilde{\nabla} y(\tilde{x})}^{\;r}$ a.e.
\e{theorem}

\begin{figure}
\centering
\begin{subfigure}{.5\textwidth}
  \centering
  \includegraphics[height = 2.1 in]{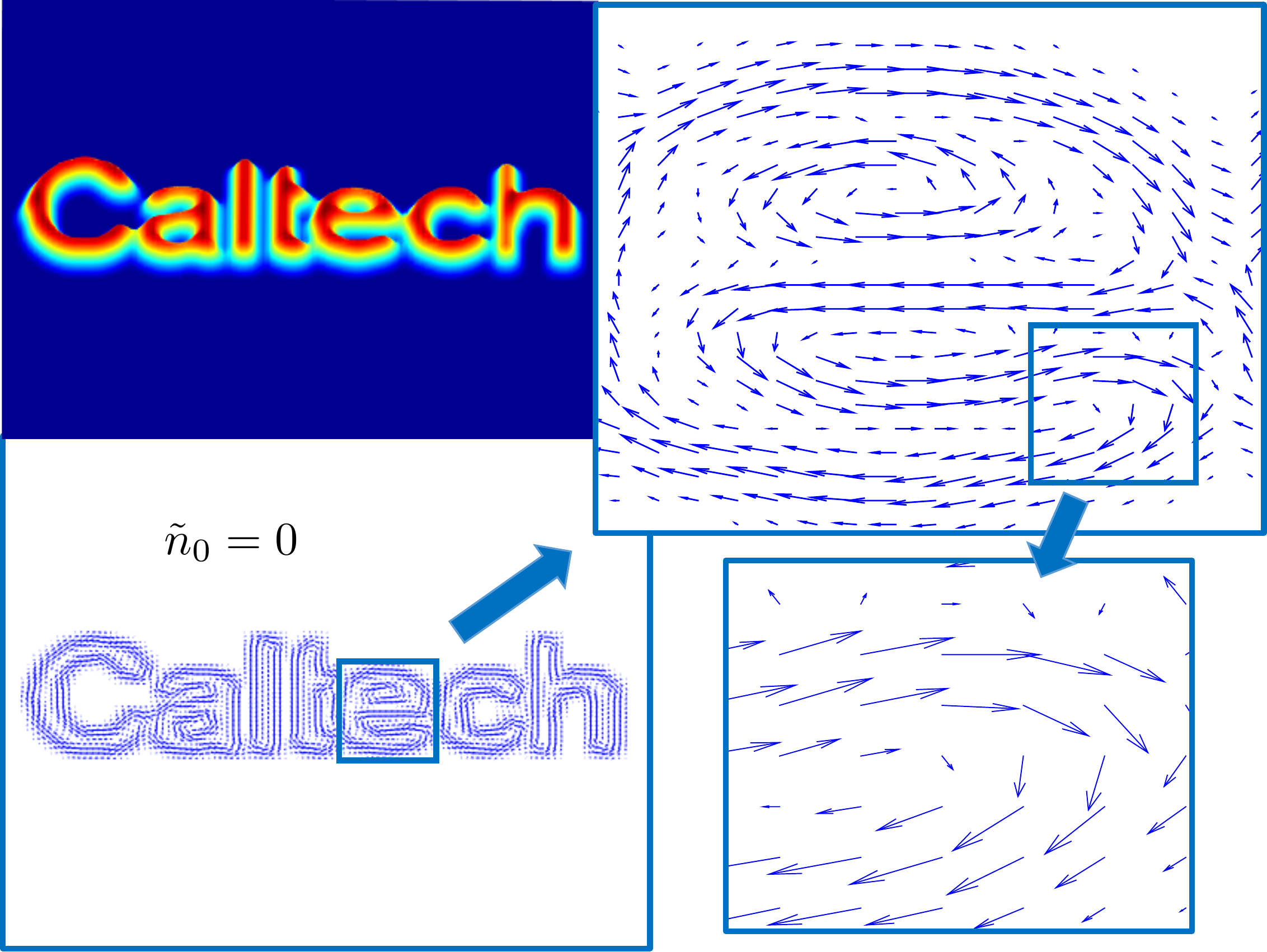}
  \caption{Caltech}
  \label{fig:Caltech}
\end{subfigure}%
\begin{subfigure}{.5\textwidth}
  \centering
  \includegraphics[height = 2.1 in]{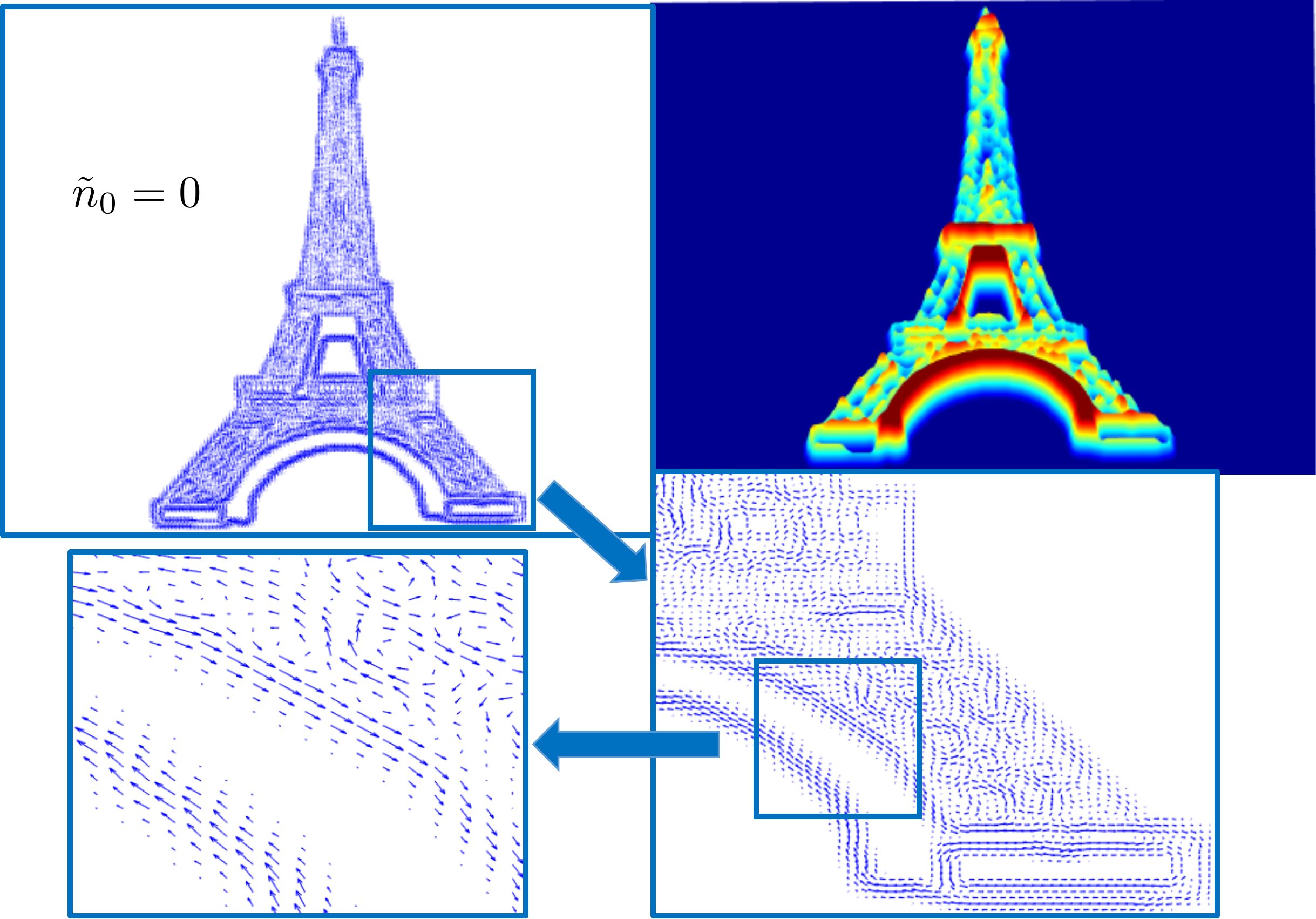}
  \caption{Eiffel Tower}
  \label{fig:Eiffel}
\end{subfigure}
\caption{The deformed shape and designs for lifted surfaces.  The  planar part of the director $\tilde{n}_0$ is plotted.}
\label{fig:ComplexShape}
\end{figure}

We prove this equivalence below.  Regarding the last point of the theorem (i.e., if $y \colon \omega \rightarrow \mathbb{R}^3$ such that  $\tilde{\nabla} y(\tilde{x}) \in \mathcal{D}_r \ldots$), we note that this means that for characterizing the geometry of surfaces which satisfy the metric constraint (\ref{eq:2DMetric}), we need only to consider the set of deformation gradients from a flat sheet $\omega$ which satisfy $\tilde{\nabla} y (\tilde{x}) \in \mathcal{D}_{r}$ a.e. $\tilde{x} \in \omega$.  Unfortunately, such a broad characterization remains open.  Of particular difficulty is the fact that this condition on the deformation gradient implies the equality
\begin{align}\label{eq:hardEqualDesign}
(\partial_1 y \cdot \partial_2 y)^2 = (|\partial_1 y|^2 -r^{-1/3})(|\partial_2 y|^2 -r^{-1/3}), \quad \text{ a.e. on } \omega. 
\end{align}
{\it Lifted surfaces} constitute a broad class of deformations such that this constraint holds trivially. 

Regarding applications, the lifted surfaces ansatz allows for actuation of a large variety of shapes, since the limitations imposed by (\ref{eq:graphVarphi}) are not very restrictive.  Since $r$ can be significantly different from 1 in nematic elastomers, one can form shapes with significant displacement like spherical caps and sinusoidally rough surfaces.   Figure \ref{fig:ComplexShape} shows two additional examples with complex surface relief.  These are but a small sample of the designs amenable to this framework.   Indeed, given any arbitrary greyscale image $\mathcal{G}$, we can program a nematic sheet so that the surface of the sheet upon cooling corresponds to this image.  We do this by smearing $\mathcal{G}$ (for instance by mollification or by averaging over a small square twice) and taking this as $\varphi$.

Nevertheless, the key ingredient to the design of lifted surfaces is the ability to program the director three dimensionally.  To our knowledge, experimental studies on nematic elastomer sheets such as Ware et al.\ \cite{wetal_science_15} have examined planar inscription of the director only. We hope that promising designs such as lifted surfaces will inspire future experimentation to realize three dimensional programming.   In any case, the theory and design scheme are easily adapted to the planar case.  Specifically in the case of a \textit{planar program}, the metric constraint (\ref{eq:2DMetric}) reduces to the metric underlying Aharoni et al.\ \cite{ask_prl_14}, and the spaces above reduce to 
\begin{align*}
&\mathcal{D}_{r>1} \equiv \tilde{\mathcal{D}}_{r>1}:=  \\
&\qquad  \Big\{ \tilde{F} \in \mathbb{R}^{3\times2} \colon |\tilde{F}|^2 = r^{2/3} + r^{-1/3}, \;\; r^{-1/3} \leq |\tilde{F} \tilde{e}_{1}|^2 \leq r^{2/3}, \;\; \det((\tilde{F})^T \tilde{F}) = r^{1/3}  \Big\}, \\
&\mathcal{N}^{\; r >1}_{\tilde{F}} \equiv \tilde{\mathcal{N}}^{\; r >1}_{\tilde{F}} :=  \\
&\qquad  \Big\{ \tilde{\nu}_0 \in \mathbb{S}^1 \colon (\tilde{\nu}_0 \cdot \tilde{e}_1)^2 = \frac{|\tilde{F} \tilde{e}_1|^2 - r^{1/3}}{r^{2/3} - r^{-1/3}}, \;\; \text{sign} ((\tilde{\nu}_0 \cdot \tilde{e}_1)(\tilde{\nu}_0 \cdot \tilde{e}_2)) = \text{sign}(\tilde{F} \tilde{e}_1 \cdot \tilde{F} \tilde{e}_2) \Big\},
\end{align*}
where again the inequalities above and the sign in $\tilde{\mathcal{N}}_{\tilde{F}}^{\;r}$ are reversed for $r <1$ (as in the theorem). 

\begin{proof}[Proof of Theorem \ref{prop:equivalent}.]
Let $\tilde{F} \in \mathbb{R}^{3\times2}$ and $\nu_0 \in \mathbb{S}^2$ satisfy $\tilde{F}^T \tilde{F} = \tilde{\ell}_m$.  Equivalently,
\begin{align}\label{eq:ExpandConst}
\left(\begin{array}{cc} |\tilde{F} \tilde{e}_1|^2 & (\tilde{F} \tilde{e}_1 \cdot \tilde{F} \tilde{e}_2) \\ (\tilde{F} \tilde{e}_1 \cdot \tilde{F} \tilde{e}_2) & |\tilde{F} \tilde{e}_2|^2 \end{array}\right) =  r^{-1/3} \left(\begin{array}{cc}  1 + (r - 1)(\nu_0 \cdot e_1)^2 & (r - 1)(\nu_0 \cdot e_1)(\nu_0 \cdot e_2) \\  (r -1)(\nu_0 \cdot e_1)(\nu_0 \cdot e_2) &  1 + (r - 1)(\nu_0 \cdot e_2)^2 \end{array}\right)
\end{align}
for $\{ \tilde{e}_1, \tilde{e}_2\}$ and $\{e_1, e_2, e_3\}$ the standard basis on $\mathbb{R}^2$ and $\mathbb{R}^3$ respectively.  Now, since $\nu_0 \in \mathbb{S}^2$, $(\nu_0 \cdot e_{\alpha})^2 \in [0,1]$ and  
\begin{align}\label{eq:1nem}
\begin{aligned}
|\tilde{F} \tilde{e}_{\alpha}|^2 &\in [r^{-1/3}, r^{2/3}] \;\; \text{ if } \;\; r > 1, \\
&\in [r^{2/3}, r^{-1/3}] \;\; \text{ if } \;\; r < 1,
\end{aligned}
\end{align}
from (\ref{eq:ExpandConst}) for $\alpha = 1,2$.  In addition, $(\nu_0 \cdot e_1)^2 + (\nu_0 \cdot e_2)^2 \leq 1$, and so 
\begin{align}\label{eq:2nem}
\begin{aligned}
|\tilde{F}|^2 = |\tilde{F} \tilde{e}_1|^2 + |\tilde{F} \tilde{e}_2|^2 &\leq r^{2/3} + r^{-1/3} \;\; \text{ if } \;\; r > 1, \\
&\geq r^{2/3} + r^{-1/3} \;\; \text{ if } \;\; r < 1,
\end{aligned}
\end{align}
also from (\ref{eq:ExpandConst}).  Now note that substituting the diagonal terms into the square of the off diagonal term in (\ref{eq:ExpandConst}) results in 
\begin{align}\label{eq:3nem}
(\tilde{F} \tilde{e}_1 \cdot \tilde{F} \tilde{e}_2)^2 =(|\tilde{F} \tilde{e}_1|^2 - r^{-1/3})(|\tilde{F} \tilde{e}_2|^2 - r^{-1/3}).
\end{align}
Combining (\ref{eq:1nem}), (\ref{eq:2nem}) and (\ref{eq:3nem}), we conclude $\tilde{F} \in \mathcal{D}_{r}$ as desired.  To prove $\nu_0 \in \mathcal{N}_{\tilde{F}}^{\;r}$, note that since $r \neq 1$, rearranging the diagonal terms in (\ref{eq:ExpandConst}) gives 
\begin{align}\label{eq:mFormula}
(\nu_0 \cdot e_{\alpha})^2 = \frac{ |\tilde{F} \tilde{e}_\alpha|^2 - r^{-1/3}}{r^{2/3} - r^{-1/3}}, \;\; \alpha = 1,2.
\end{align}
Further, since $r > 0$ and $\neq 1$, taking the sign of the off diagonal term in (\ref{eq:ExpandConst}) gives
\begin{align}\label{eq:mSign}
\text{sign}( (\nu_0 \cdot e_1) (\nu_0 \cdot e_2)) = \text{sign}(r -1) \text{sign}( \tilde{F} \tilde{e}_1 \cdot \tilde{F} \tilde{e}_2). 
\end{align}
Since $\nu_0 \in \mathbb{S}^2$, combining (\ref{eq:mFormula}) and (\ref{eq:mSign}) yields $\nu_0 \in \mathcal{N}_{\tilde{F}}^{\; r}$.  

Now, let $\tilde{F} \in \mathcal{D}_{r}$ and $\nu_0 \in \mathcal{N}_{\tilde{F}}^{\;r}$.  To prove $\tilde{F}^T \tilde{F} = \tilde{\ell}_{\nu_0}$, we need to show (\ref{eq:ExpandConst}).  By hypothesis, we have (\ref{eq:mFormula}).  By rearranging this formula, we obtain the diagonal terms in (\ref{eq:ExpandConst}).  For the off diagonal term, we note that in addition to (\ref{eq:mFormula}), we have (\ref{eq:3nem}) by hypothesis.  Combining these relations, we find
\begin{align*}
(\tilde{F} \tilde{e}_1 \cdot \tilde{F} \tilde{e}_2)^2 = (r^{2/3} - r^{-1/3})^2(\nu_0 \cdot e_1)^2(\nu_0 \cdot e_2)^2.  
\end{align*}
Taking the square root, we have the off diagonal term up to the sign.  The correct choice of sign is guaranteed since $\nu_0$ and $\tilde{F}$ satisfy (\ref{eq:mSign}), again by hypothesis.  

Finally, we let $\tilde{F} \in \mathcal{D}_{r}$, and show $\mathcal{N}_{\tilde{F}}^{\;r}$ is non-empty.  Indeed by definition, $\tilde{F}$ satisfies (\ref{eq:1nem}) and (\ref{eq:2nem}).  Thus, the right side of (\ref{eq:mFormula}) is non-negative.  From this, we may find an $\nu_0 \in \mathbb{R}^3$ satisfying (\ref{eq:mFormula}) and (\ref{eq:mSign}).  Further by (\ref{eq:2nem}), $(\nu_0 \cdot e_1)^2 + (\nu_0 \cdot e_2)^2 \leq 1$.  Thus, we can choose $(\nu_0 \cdot e_3)$ such that $\nu_0 \in \mathbb{S}^2$.  It follows that $\mathcal{N}^{\;r}_{\tilde{F}}$ is non-empty. 

For functions, $\tilde{\nabla} y(\tilde{x}) \equiv \tilde{F}$ and $n(\tilde{x}) \equiv \nu_0$, and so all these results should hold pointwise a.e.
\end{proof}

\appendix
\addcontentsline{toc}{section}{Appendices}
\section*{Appendices}
\section{Some facts about the entropic energy}

\subsection{Some estimates on the energy densities}
\begin{proposition}\label{UpLowProp}
If $F \in \mathbb{R}^{3\times3}$ such that $\det F = 1$ and $\nu_0, \nu \in \mathbb{S}^2$, then 
\begin{align}\label{eq:uplower boundWDe}
\frac{1}{c} (|F|^2 - 1) \leq W^e(F, \nu,\nu_0) \leq c(|F|^2 +1)
\end{align}
for some $c = c(r_f,r_0) > 0$.
\end{proposition}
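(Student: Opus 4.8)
The plan is to reduce $W^e$ to a standard incompressible neo-Hookean energy of a conjugated deformation gradient, and then exploit that the step-length tensors are uniformly (in the directions $\nu,\nu_0$) comparable to the identity. Concretely: for $\det F = 1$ and $\nu,\nu_0 \in \mathbb{S}^2$, put $G := (\ell_\nu^f)^{-1/2} F (\ell_{\nu_0}^0)^{1/2}$. Since $\det \ell_\nu^f = \det \ell_{\nu_0}^0 = 1$, one has $\det G = 1$, and expanding the trace in (\ref{eq:We}) (equivalently, using the representation (\ref{eq:WeWnH})--(\ref{eq:WnH})),
\[
W^e(F,\nu,\nu_0) \;=\; W_{nH}(G) \;=\; \frac{\mu}{2}\bigl(|G|^2 - 3\bigr).
\]
Thus it suffices to show that $|G|^2$ and $|F|^2$ are comparable, with constants depending only on $r_f, r_0$.

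First I would record the spectrum of the step-length tensors: from (\ref{eq:StepLength}), $\ell_{\nu_0}^0$ has eigenvalue $r_0^{-1/3}$ on $\nu_0^\perp$ and $r_0^{2/3}$ on $\mathrm{span}\,\nu_0$, and $\ell_\nu^f$ has eigenvalues $r_f^{-1/3}, r_f^{2/3}$; since $r_0,r_f \ge 1$ these lie in fixed positive intervals, and --- the key point --- they do not depend on the directions $\nu,\nu_0$. Hence $(\ell_\nu^f)^{\pm 1/2}$ and $(\ell_{\nu_0}^0)^{\pm 1/2}$ all have operator norms bounded above and below by positive constants $c(r_f,r_0)$. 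Applying submultiplicativity of the operator norm to $G = (\ell_\nu^f)^{-1/2} F (\ell_{\nu_0}^0)^{1/2}$ and to $F = (\ell_\nu^f)^{1/2} G (\ell_{\nu_0}^0)^{-1/2}$ yields constants $0 < c_1 \le c_2$, depending only on $r_f,r_0$, with $c_1 |F| \le |G| \le c_2 |F|$. The upper bound of the proposition is then immediate: $W^e = \tfrac{\mu}{2}(|G|^2 - 3) \le \tfrac{\mu}{2} c_2^2 |F|^2 \le c(|F|^2 + 1)$.

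For the lower bound I would combine $|G|^2 \ge c_1^2 |F|^2$ with the elementary bound $|G|^2 \ge 3$, which follows from AM--GM applied to the singular values of $G$ together with $\det G = 1$ (the same argument gives $|F|^2 \ge 3$). This produces $W^e \ge \tfrac{\mu}{2}\max\{0,\; c_1^2 |F|^2 - 3\}$; since $W^e \ge 0$ everywhere while $|F|^2 \ge 3$, a routine bookkeeping of constants lets one enlarge $c$ so that $\tfrac{1}{c}(|F|^2 - 1) \le W^e$ holds on the whole admissible set. The only step needing real attention is the uniformity in $\nu,\nu_0 \in \mathbb{S}^2$ of the spectral bounds underlying the comparison between $|G|$ and $|F|$; but because the eigenvalues of the step-length tensors are direction-independent, this is automatic, so no genuine obstacle arises and the rest is arithmetic.
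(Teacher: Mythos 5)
Your reduction to $W_{nH}(G)=\tfrac{\mu}{2}(|G|^2-3)$ with $G=(\ell_\nu^f)^{-1/2}F(\ell_{\nu_0}^0)^{1/2}$, followed by the direction-independent spectral bounds on the step-length tensors and submultiplicativity to obtain $c_1|F|\le |G|\le c_2|F|$, is exactly the paper's argument (the paper phrases the lower bound through $\sigma_{min}$ of the two factors rather than a symmetric two-sided comparison, but it is the same estimate). The upper bound is fine as you state it.

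The one step you defer --- ``routine bookkeeping of constants lets one enlarge $c$ so that $\tfrac1c(|F|^2-1)\le W^e$'' --- cannot actually be carried out, because the inequality in that literal, parenthesized form is false. Take any $R\in SO(3)$ and $F=(\ell_{\nu}^f)^{1/2}R(\ell_{\nu_0}^0)^{-1/2}$: then $\det F=1$ and $W^e(F,\nu,\nu_0)=W_{nH}(R)=0$, yet $|F|^2\ge 3$ by the AM--GM bound you yourself invoke, so $\tfrac1c(|F|^2-1)\ge \tfrac{2}{c}>0=W^e$ for every $c>0$. What your chain of estimates (and the paper's) actually proves is $W^e\ge \tfrac{\mu}{2}(c_1^2|F|^2-3)$, equivalently $|F|^2\le c\,(W^e+1)$, i.e.\ the lower bound with the constant taken outside the factor $\tfrac1c$: $W^e\ge \tfrac1c|F|^2-c$. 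That weaker version is all that is needed downstream (Step 3 of the compactness proof only uses an $L^2$ bound on $\nabla_h V^h$ in terms of the energy plus a constant), so nothing else is affected; but you should state the corrected inequality rather than assert that the parenthesized form follows by bookkeeping. The paper's own proof ends with the same unjustified leap, so this is a defect of the statement itself, not of your method.
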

\begin{proof}
Since $\det F =1$ and $\nu_0 \in \mathbb{S}^2$, we have $\nu = F\nu_0/|F \nu_0| \in \mathbb{S}^2$ and 
\begin{align*}
W^e(F,\nu,\nu_0) &= W_{nH}((\ell_{\nu}^f)^{-1/2} F (\ell_{\nu_0}^0)^{1/2}) \\
&\geq \frac{\mu}{2} \left( \frac{1}{3}\sigma_{min}^2((\ell_{\nu_0}^0)^{1/2})|(\ell_{\nu}^f)^{-1/2} F|^2 - 3\right) \\
& \geq \frac{\mu}{2} \left( \frac{1}{9} \sigma_{min}^2((\ell_{\nu_0}^0)^{1/2}) \sigma_{min}^2((\ell_n^f)^{-1/2})|F|^2 - 3 \right).
\end{align*}
We note that $\sigma_{min}((\ell_{\nu_0}^0)^{1/2})$ is nonzero and depends only on $r_0$.  Similarly, $\sigma_{min}((\ell_{\nu}^f)^{-1/2})$ is nonzero and depends only on $r_f$.  Thus, the lower bound in (\ref{eq:uplower boundWDe}) follows.  The upper bound is similar. 
\end{proof}

\begin{proposition}\label{DumbProp}
Let $G \in \mathbb{R}^{3\times3}$ such that $\det(I_{3\times3} + G) = 1$.   We find
\begin{align}
&W_{nH}(I_{3\times3} + G) \leq C( |G|^2 + |G|^3) \label{eq:Dumb2},
\end{align}
for $W_{nH}$ in (\ref{eq:WnH}) and for some uniform constant $C> 0$.  
\end{proposition}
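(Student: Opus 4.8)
The plan is to expand $W_{nH}(I_{3\times 3}+G) = \frac{\mu}{2}(|I_{3\times 3}+G|^2 - 3)$ directly, using the fact that on the set where $\det(I_{3\times 3}+G)=1$ the value is finite and given by this quadratic-plus-linear expression. First I would compute $|I_{3\times 3}+G|^2 = |I_{3\times 3}|^2 + 2\,I_{3\times 3}:G + |G|^2 = 3 + 2\Tr G + |G|^2$, so that $W_{nH}(I_{3\times 3}+G) = \frac{\mu}{2}(2\Tr G + |G|^2)$. The linear term $\Tr G$ is not obviously $O(|G|^2)$ on its own, so the key point is to exploit the constraint $\det(I_{3\times 3}+G)=1$ to control it.

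The main step is therefore the following: expand $\det(I_{3\times 3}+G) = 1 + \Tr G + \tfrac{1}{2}((\Tr G)^2 - \Tr(G^2)) + \det G$, which is the standard characteristic-polynomial expansion of the determinant in terms of the traces of $G$, $G^2$ and $G^3$ (equivalently, the three elementary symmetric functions of the eigenvalues). Setting this equal to $1$ gives $\Tr G = -\tfrac{1}{2}((\Tr G)^2 - \Tr(G^2)) - \det G$. Each term on the right is bounded: $|(\Tr G)^2| \leq 3|G|^2$, $|\Tr(G^2)| \leq |G|^2$ (or $\leq 3\sigma_{\max}(G)^2 \le 3|G|^2$), and $|\det G| \leq |G|^3$ (Hadamard's inequality, or $|\det G|\le \sigma_1\sigma_2\sigma_3 \le |G|^3$). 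Hence $|\Tr G| \leq C(|G|^2 + |G|^3)$ for a universal constant $C$. I expect this determinant expansion and the elementary norm bounds on its terms to be entirely routine — the only mild subtlety is keeping the constants universal, which they are since all inequalities used ($|AB|\le|A||B|$-type bounds, Hadamard) have dimension-dependent but not $G$-dependent constants in $\mathbb{R}^{3\times 3}$.

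Combining, $W_{nH}(I_{3\times 3}+G) = \frac{\mu}{2}(2\Tr G + |G|^2) \leq \mu|\Tr G| + \frac{\mu}{2}|G|^2 \leq C(|G|^2+|G|^3)$, absorbing $\mu$ into $C$, which is exactly \eqref{eq:Dumb2}. There is no real obstacle here; the one thing to be careful about is that the estimate is stated as an inequality (not an equality), so I only need the upper bound on $\Tr G$, and the lower bound on $W_{nH}$ is irrelevant — this means I don't need $\det(I_{3\times 3}+G)=1$ to pin down $\Tr G$ to leading order with a two-sided bound, just the one-sided control above. I would close by remarking that the constant $C$ depends only on the ambient dimension $3$ and on $\mu$, hence is uniform as claimed.
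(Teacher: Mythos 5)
Your proposal is correct and follows essentially the same route as the paper: both expand $W_{nH}(I_{3\times3}+G)=\frac{\mu}{2}(|G|^2+2\Tr G)$ and use $\det(I_{3\times3}+G)=1$ to write $\Tr G = -\Tr(\cof G)-\det G$ (your $\tfrac12((\Tr G)^2-\Tr(G^2))$ is exactly $\Tr(\cof G)$), then bound these terms by $C|G|^2$ and $C|G|^3$ respectively.
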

\begin{proof}
For the inequality on $W_{nH}$, we note that since $\det(I_{3\times3} + G) = 1$, 
\begin{align*}
\Tr(G) = - \Tr(\cof G) - \det(G)
\end{align*}
and $W_{nH}$ is finite.  Hence,
\begin{align}\label{eq:WnHIdent}
W_{nH}(I + G) &= \frac{\mu}{2} \left(|I + G|^2 - 3\right)  \nonumber \\
&= \frac{\mu}{2}\left( |G|^2 + 2 \Tr(G) \right) \nonumber  \\
&=\frac{\mu}{2}\left( |G|^2 - 2\Tr(\cof G) - 2 \det(G)\right).  
\end{align}
Since there exists a $C> 0$ independent of $G$ such that $|\Tr(\cof G)|\leq C|G|^2$ and $|\det(G)|\leq C|G|^3$, we conclude (\ref{eq:Dumb2}) following the identity (\ref{eq:WnHIdent}).  
\end{proof}

\subsection{Relating the metric and the step-length tensor}
\begin{proposition}\label{LBProp}
The energy density $W_{nH}$ in (\ref{eq:WnH}) satisfies $W_{nH}(F) \geq \frac{\mu}{2} \dist^2(F,SO(3))$ for all $F \in \mathbb{R}^{3\times3}$.  
\end{proposition}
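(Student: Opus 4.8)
The plan is first to dispose of the case $\det F \neq 1$ for free: by definition $W_{nH}(F) = +\infty$ in that case, so the asserted inequality holds vacuously. Hence we may assume $\det F = 1$, so that $W_{nH}(F) = \frac{\mu}{2}(|F|^2 - 3)$, and the claim becomes $|F|^2 - 3 \geq \dist^2(F, SO(3))$.

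Since $\det F = 1 > 0$, I would use the polar decomposition $F = RS$ with $R \in SO(3)$ and $S \in \mathbb{R}^{3\times3}_{sym}$ symmetric positive definite. Writing $\sigma_1,\sigma_2,\sigma_3 > 0$ for the eigenvalues of $S$, we have $\sigma_1\sigma_2\sigma_3 = \det S = \det F = 1$ and $|F|^2 = |S|^2 = \sigma_1^2 + \sigma_2^2 + \sigma_3^2$. The key observation is that, using invariance of the Frobenius norm under left multiplication by $R \in SO(3)$,
$$
\dist^2(F, SO(3)) \leq |F - R|^2 = |R(S - I_{3\times3})|^2 = |S - I_{3\times3}|^2 = \sum_{i=1}^{3}(\sigma_i - 1)^2.
$$
Thus it suffices to prove $|S|^2 - 3 \geq |S - I_{3\times3}|^2$; expanding $|S - I_{3\times3}|^2 = |S|^2 - 2\Tr S + 3$, this reduces to $\Tr S = \sigma_1 + \sigma_2 + \sigma_3 \geq 3$, which is immediate from the arithmetic–geometric mean inequality and the constraint $\sigma_1\sigma_2\sigma_3 = 1$.

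There is no real obstacle here; the argument is elementary. The only point deserving a line of care is checking that the polar rotation genuinely lies in $SO(3)$ rather than merely $O(3)$ — this is precisely where $\det F > 0$ is used, and without it the competitor $R$ for the distance to $SO(3)$ need not be admissible. An equivalent route, the one used by Friesecke, James and Müller, is to write $\dist^2(F, SO(3)) = |F|^2 - 2\max_{R \in SO(3)}\langle F, R\rangle + 3$, identify $\max_{R \in SO(3)}\langle F, R\rangle = \sigma_1 + \sigma_2 + \sigma_3$ via the singular value decomposition (again invoking $\det F > 0$ to ensure the maximum over $O(3)$ is attained within $SO(3)$), and finish with AM–GM.
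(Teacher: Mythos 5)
Your proof is correct and follows essentially the same route as the paper: polar decomposition $F=RU$ with $R\in SO(3)$ (valid since $\det F=1>0$), the bound $\dist^2(F,SO(3))\leq |U-I_{3\times3}|^2$, and the reduction to $\Tr U\geq 3$ via AM--GM under the constraint $\det U=1$ (the paper phrases this last step as the infimum of $\Tr G$ over positive definite $G$ with $\det G=1$ being attained at $G=I$, which is the same fact). No gaps.
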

\begin{proof}
We may assume $\det F = 1$ as the bound holds trivially otherwise.  Consequently and by the polar decomposition theorem, $F = R U$ for $R \in SO(3)$ and $U$ positive definite.  Hence, we find $\dist(F,SO(3)) = |U- I_{3\times3}|$.  In addition, since $\det U = 1$ we conclude
\begin{align*}
\frac{\mu}{2}\dist^2(F,SO(3)) &= \frac{\mu}{2}|U- I_{3\times3}|^2 \\
&= \frac{\mu}{2}(|U|^2 - 2\Tr(U) + 3) \\
&\leq \frac{\mu}{2}(|U|^2 - \inf\{\Tr(G)\colon G \text{ pos. def.}, \det G =1\} + 3) \\
&= \frac{\mu}{2}(|U|^2 - 3) = W_{nH}(F).
\end{align*}
Here, we used that the infimum above is attained at $G = I$.  
\end{proof}
\begin{proposition}\label{EquivalenceProp}
The energy density $W^{e}$ in (\ref{eq:We}) satisfies $W^{e}(F,F\nu_0/|F\nu_0|,\nu_0) = 0$ if and only if $\det F = 1$ and $F^T F = \ell_{\nu_0}$ for $\ell_{\nu_0}$ defined in (\ref{eq:3DMetric}).  In addition, if these identities hold, then $F \nu_0/ |F \nu_0| = R \nu_0$ where $R \in SO(3)$ is the unique rotation associated with the polar decomposition of $F$.  
\end{proposition}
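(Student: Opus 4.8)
The plan is to pass through the neo-Hookean reformulation (\ref{eq:WeWnH}) and use that $W_{nH}$ vanishes precisely on $SO(3)$. Write $\nu:=F\nu_0/|F\nu_0|$; since $W^e(F,\nu,\nu_0)<\infty$ forces $\det F=1$ (so $F$ is invertible, $F\nu_0\neq 0$, and $\nu\in\mathbb{S}^2$), I may set $G:=(\ell_\nu^f)^{-1/2}F(\ell_{\nu_0}^0)^{1/2}$ and note $\det G=\det F=1$ because $\det\ell_\nu^f=\det\ell_{\nu_0}^0=1$ for unit vectors. By Proposition~\ref{LBProp} (and the elementary fact $|G|^2\ge 3$ when $\det G=1$, with equality iff $G\in SO(3)$), one gets $W^e(F,\nu,\nu_0)=W_{nH}(G)=0$ iff $G\in SO(3)$, i.e.\ iff $G^TG=I$, i.e.\ iff $F^T(\ell_\nu^f)^{-1}F=(\ell_{\nu_0}^0)^{-1}$. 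Throughout I use the spectral forms $(\ell_{\nu_0}^0)^{\pm 1/2}=r_0^{\mp 1/6}\bigl(I_{3\times3}+(r_0^{\pm 1/2}-1)\,\nu_0\otimes\nu_0\bigr)$ and the analogue for $\ell_\nu^f$, together with the fact that all functions of a fixed rank-one projection $\nu_0\otimes\nu_0$ commute.

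For the forward implication, write $F=(\ell_\nu^f)^{1/2}G(\ell_{\nu_0}^0)^{-1/2}$ from $G\in SO(3)$. Applying this to $\nu_0$ and using $(\ell_{\nu_0}^0)^{-1/2}\nu_0=r_0^{-1/3}\nu_0$ gives $F\nu_0=r_0^{-1/3}(\ell_\nu^f)^{1/2}G\nu_0$, so $\nu$ is a positive multiple of $(\ell_\nu^f)^{1/2}G\nu_0$; applying $(\ell_\nu^f)^{-1/2}$ and taking norms (using $(\ell_\nu^f)^{-1/2}\nu=r_f^{-1/3}\nu$ and $|G\nu_0|=1$) forces $G\nu_0=\nu$, hence $G^T\nu=\nu_0$. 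Therefore
\[
F^TF=(\ell_{\nu_0}^0)^{-1/2}\,G^T\ell_\nu^f G\,(\ell_{\nu_0}^0)^{-1/2}=(\ell_{\nu_0}^0)^{-1/2}\,r_f^{-1/3}\bigl(I_{3\times3}+(r_f-1)\nu_0\otimes\nu_0\bigr)(\ell_{\nu_0}^0)^{-1/2},
\]
and expanding the commuting factors of $\nu_0\otimes\nu_0$ one reads off that $F^TF$ has eigenvalue $r^{2/3}$ along $\nu_0$ and $r^{-1/3}$ transverse to it, where $r=r_f/r_0$; that is $F^TF=\ell_{\nu_0}$, while $\det F=1$. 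Conversely, assume $\det F=1$ and $F^TF=\ell_{\nu_0}$. Polar-decompose $F=RU$ with $U=\ell_{\nu_0}^{1/2}=r^{-1/6}(I_{3\times3}+(r^{1/2}-1)\nu_0\otimes\nu_0)$ and $R\in SO(3)$ (the polar factors are unique since $F$ is invertible, and $\det R=1$ as $\det U=1$). Since $U\nu_0=r^{1/3}\nu_0$, we get $\nu=F\nu_0/|F\nu_0|=R\nu_0$, which is exactly the last assertion of the proposition. Then $R^T\nu=\nu_0$, so
\[
F^T(\ell_\nu^f)^{-1}F=U\,R^T(\ell_\nu^f)^{-1}R\,U=U\,r_f^{1/3}\bigl(I_{3\times3}+(r_f^{-1}-1)\nu_0\otimes\nu_0\bigr)U,
\]
and since $U=\ell_{\nu_0}^{1/2}$ commutes with the bracket, the same eigenvalue bookkeeping collapses this to $(\ell_{\nu_0}^0)^{-1}$. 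Hence $G^TG=I$ and $\det G=1$, so $G\in SO(3)$ and $W^e(F,\nu,\nu_0)=W_{nH}(G)=0$.

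The only genuinely delicate point is the coupling $\nu=F\nu_0/|F\nu_0|$: one must verify that in a zero-energy state the associated rotation sends $\nu_0$ exactly to $\nu$ (not merely into its $(\ell_\nu^f)^{1/2}$-image). This is what makes the conjugation $R^T\ell_\nu^f R$ produce a tensor built from $\nu_0\otimes\nu_0$ and lets all the factors commute; once that is in hand, everything else is routine spectral arithmetic in the exponent $r=r_f/r_0$.
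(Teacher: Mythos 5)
Your proof is correct and follows essentially the same route as the paper's: both pass to $G=(\ell_\nu^f)^{-1/2}F(\ell_{\nu_0}^0)^{1/2}$, use that $W_{nH}$ vanishes exactly on $SO(3)$ (via Proposition \ref{LBProp}), and then show the rotation convects $\nu_0$ to $\nu$ before reading off $F^TF=\ell_{\nu_0}$ by spectral arithmetic on the commuting step-length tensors. The only cosmetic difference is that in the converse direction you verify $G^TG=I$ directly, whereas the paper conjugates $(\ell_{R\nu_0}^f)^{-1/2}=R(\ell_{\nu_0}^f)^{-1/2}R^T$ to collapse the energy to $W_{nH}(R)$; these are the same computation.
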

\begin{proof}
We first assume $W^{e}(F,F\nu_0/|F\nu_0|,\nu_0) = 0$.  Then $\det F = 1$, $\nu_0\in \mathbb{S}^2$ and $|F\nu_0|\neq 0$.  We set $\nu := F\nu_0/|F\nu_0| \in \mathbb{S}^2$ and observe from  (\ref{eq:WeWnH}),
\begin{align*}
0=W^e(F, F\nu_0/|F\nu_0|,\nu_0) = W_{nH}((\ell_{\nu}^f)^{-1/2} F (\ell_{\nu_0}^{0})^{1/2}).
\end{align*}
Thus, we deduce from Proposition \ref{LBProp} that $(\ell_{\nu}^f)^{-1/2} F (\ell_{\nu_0}^{0})^{1/2} = R$ for some $R \in SO(3)$.  Evidently then, 
\begin{align}\label{eq:StepLengthR}
F = (\ell_\nu^f)^{1/2} R (\ell_{\nu_0}^0)^{-1/2}.  
\end{align}
Further,
\begin{align*}
r_f^{1/6} \nu = (\ell_\nu^f)^{-1/2} \nu &= (\ell_\nu^f)^{-1/2} \left( \frac{F \nu_0}{|F\nu_0|}\right) \\
& = (\ell_\nu^f)^{-1/2} \left(\frac{(\ell_\nu^f)^{1/2} R (\ell_{\nu_0}^0)^{-1/2} \nu_0}{|F\nu_0|}\right) = r_0^{-1/6} \left(\frac{R \nu_0}{|F\nu_0|}\right).
\end{align*}
Here, we used the definition of $\nu$, the result in (\ref{eq:StepLengthR}) and properties of the step-length tensors (\ref{eq:StepLength}).  Since both $\nu$ and $R \nu_0 \in \mathbb{S}^2$, it follows from this equality chain that actually $\nu = R\nu_0$.  Substituting this into (\ref{eq:StepLengthR}) yields
\begin{align*}
F = R (\ell_{\nu_0}^f)^{1/2} R^T R (\ell_{\nu_0}^0)^{-1/2} = R (\ell_{\nu_0}^f)^{1/2} (\ell_{\nu_0}^0)^{-1/2} =R \ell_{\nu_0}^{1/2}
\end{align*}
noting that $(\ell^f_{R\nu_0})^{1/2} = R (\ell_{\nu_0}^f)^{1/2} R^T$ and $(\ell_{\nu_0}^f)^{1/2} (\ell_{\nu_0}^0)^{-1/2}=  \ell_{\nu_0}^{1/2}$.  Consequently, $F^T F = \ell_{\nu_0}$ as desired.  

For the other direction, we assume $\det F = 1$ and $F^T F = \ell_{\nu_0}$.  This implies $F = R \ell_{\nu_0}^{1/2}$ for some $R \in SO(3)$ and $\nu := F \nu_0/|F\nu_0| \in \mathbb{S}^2$.  Thus,
\begin{align*}
\nu = \frac{F \nu_0}{|F \nu_0|} = \frac{R \ell_{\nu_0}^{1/2} \nu_0}{|F\nu_0|} = r^{-1/6} \frac{R \nu_0}{|F \nu_0|},
\end{align*}
and since both $\nu$ and $R \nu_0 \in \mathbb{S}^2$, we deduce $\nu = R \nu_0$.  Then by definition (\ref{eq:We}), $W^e(F,F\nu_0/|F\nu_0|,\nu_0) = W^e(F,R\nu_0,\nu_0)$ and clearly this is finite.  Further given (\ref{eq:WeWnH}), we find
\begin{align*}
W^e(F,R\nu_0,\nu_0) &= W_{nH}((\ell_{R\nu_0}^f)^{-1/2} F (\ell_{\nu_0}^{0})^{1/2}) \\
&=W_{nH}(R (\ell_{\nu_0}^f)^{-1/2} R^T R \ell_{\nu_0}^{1/2} (\ell_{\nu_0}^0)^{1/2}) \\
&=W_{nH}(R(\ell_{\nu_0}^f)^{-1/2} (\ell_{\nu_0}^f)^{1/2} (\ell_{\nu_0}^0)^{-1/2} (\ell_{\nu_0}^0)^{1/2}) = W_{nH}(R)
\end{align*}
For this, we have exploited properties of the step-length tensors (see previous paragraph).  Hence since $R \in SO(3)$, by Proposition \ref{LBProp} we find $ W_{nH}(R) = 0$ as desired.  

Finally for the implication, we note that in the proof of both directions, we found $F = R \ell_{\nu_0}^{1/2}$ and $\nu = R\nu_0$ for $R \in SO(3)$.  Consequently, since $\ell_{\nu_0}^{1/2}$ is positive definite, $R$ is actually the unique rotation in the polar decomposition of $F$.  
\end{proof}

\begin{proposition}\label{WnonIdealProp}
Set $\widehat{W}(F,\nu,\nu_0) := (\mu/2)^{-1}(W^e(F,\nu,\nu_0) + W^{ni}(F,\nu,\nu_0))$ for $W^e$ in (\ref{eq:We}) and $W^{ni}$ in (\ref{eq:Wni}).  $\widehat{W}$ is minimized (and equal to zero) if and only if 
\begin{align}\label{eq:nonIdealIdents}
\det F = 1, \quad F^T F = \ell_{\nu_0} \quad \text{ and } \quad \nu = \sigma  \frac{F\nu_0}{|F\nu_0|} \;\; \text{ for } \sigma \in \{-1,1\}.
\end{align}
\end{proposition}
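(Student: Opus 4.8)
The plan is to exploit the fact that $\widehat{W}$ is a (normalized) sum of two nonnegative densities, so its zero set is the intersection of the two zero sets, and then to characterize each zero set separately using results already in the paper. Concretely, $W^e \ge 0$ (Remark \ref{remarkNonIdeal} and the normalization $\min W^e = 0$ from Remark \ref{RemarkModel}(i)) and $W^{ni} \ge 0$ by inspection of (\ref{eq:Wni}); hence $\widehat{W} \ge 0$, and $\widehat{W}(F,\nu,\nu_0) = 0$ if and only if $W^e(F,\nu,\nu_0) = 0$ and $W^{ni}(F,\nu,\nu_0) = 0$. This already gives $\min \widehat{W} = 0$, since for any fixed $\nu_0 \in \mathbb{S}^2$ the triple $F = \ell_{\nu_0}^{1/2}$, $\nu = \nu_0$ satisfies $\det F = 1$, $F^T F = \ell_{\nu_0}$ and $F\nu_0/|F\nu_0| = \nu_0$, i.e.\ (\ref{eq:nonIdealIdents}) with $\sigma = 1$. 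So the proposition reduces to: the simultaneous vanishing of $W^e$ and $W^{ni}$ is equivalent to (\ref{eq:nonIdealIdents}). Throughout I will use the factored form (\ref{eq:WeWnH}) of $W^e$, Proposition \ref{LBProp} (which yields $W_{nH}(G) = 0 \iff G \in SO(3)$, since $\det G = 1$ already forces $|G|^2 \ge 3$ with equality only on $SO(3)$), Proposition \ref{EquivalenceProp}, and the elementary step-length identities $(\ell_\nu^f)^{1/2}\nu = r_f^{1/3}\nu$, $\ell_{\nu_0}^{1/2}\nu_0 = r^{1/3}\nu_0$, together with the fact that $\ell_{\nu_0}^f, \ell_{\nu_0}^0, \ell_{\nu_0}$ are simultaneously diagonalized by the axis $\nu_0$ with $\ell_{\nu_0} = \ell_{\nu_0}^f(\ell_{\nu_0}^0)^{-1}$.

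For the direction ``(\ref{eq:nonIdealIdents}) $\Rightarrow$ $\widehat{W} = 0$'': given $\det F = 1$ and $F^T F = \ell_{\nu_0}$, Proposition \ref{EquivalenceProp} gives $F = R\ell_{\nu_0}^{1/2}$ with $R \in SO(3)$ the polar rotation of $F$, and $F\nu_0/|F\nu_0| = R\nu_0$; hence the hypothesis $\nu = \sigma F\nu_0/|F\nu_0|$ reads $\nu = \sigma R\nu_0$ with $\sigma \in \{-1,1\}$. Since $W^e$ depends on $\nu$ only through $\nu \otimes \nu$, it is invariant under $\nu \mapsto -\nu$, so $W^e(F,\nu,\nu_0) = W^e(F, R\nu_0, \nu_0) = W^e(F, F\nu_0/|F\nu_0|, \nu_0) = 0$ by Proposition \ref{EquivalenceProp}. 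Moreover $F^T\nu = \ell_{\nu_0}^{1/2} R^T(\sigma R\nu_0) = \sigma\, \ell_{\nu_0}^{1/2}\nu_0 = \sigma r^{1/3}\nu_0$, which is parallel to $\nu_0$, so $(I_{3\times3} - \nu_0 \otimes \nu_0) F^T\nu = 0$ and $W^{ni}(F,\nu,\nu_0) = 0$. Thus $\widehat{W}(F,\nu,\nu_0) = 0$.

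For the converse: if $\widehat{W}(F,\nu,\nu_0) = 0$ then finiteness forces $\nu, \nu_0 \in \mathbb{S}^2$ and $\det F = 1$, and both $W^e(F,\nu,\nu_0) = 0$ and $W^{ni}(F,\nu,\nu_0) = 0$. From $W^e = 0$, the representation (\ref{eq:WeWnH}) and Proposition \ref{LBProp} give $(\ell_\nu^f)^{-1/2} F (\ell_{\nu_0}^0)^{1/2} =: R \in SO(3)$, i.e.\ $F = (\ell_\nu^f)^{1/2} R (\ell_{\nu_0}^0)^{-1/2}$. Using $(\ell_\nu^f)^{1/2}\nu = r_f^{1/3}\nu$ I compute $F^T\nu = r_f^{1/3}(\ell_{\nu_0}^0)^{-1/2} R^T\nu$. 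Now the key observation: from $(\ell_{\nu_0}^0)^{-1/2} = r_0^{1/6}(I_{3\times3} + (r_0^{-1/2}-1)\nu_0 \otimes \nu_0)$ one gets the projection identity $(I_{3\times3} - \nu_0 \otimes \nu_0)(\ell_{\nu_0}^0)^{-1/2} = r_0^{1/6}(I_{3\times3} - \nu_0 \otimes \nu_0)$, so the condition $W^{ni} = 0$, namely $(I_{3\times3} - \nu_0 \otimes \nu_0) F^T\nu = 0$, becomes $(I_{3\times3} - \nu_0 \otimes \nu_0) R^T\nu = 0$. Since $|R^T\nu| = 1$, this forces $R^T\nu = \pm\nu_0$, i.e.\ $\nu = \sigma R\nu_0$ for some $\sigma \in \{-1,1\}$. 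Consequently $\nu \otimes \nu = (R\nu_0)\otimes(R\nu_0)$, hence $\ell_\nu^f = R\,\ell_{\nu_0}^f\, R^T$ and $(\ell_\nu^f)^{1/2} = R(\ell_{\nu_0}^f)^{1/2} R^T$; substituting, $F = R (\ell_{\nu_0}^f)^{1/2}(\ell_{\nu_0}^0)^{-1/2}$. Because $\ell_{\nu_0}^f$ and $\ell_{\nu_0}^0$ commute, $(\ell_{\nu_0}^f)^{1/2}(\ell_{\nu_0}^0)^{-1/2}$ is the symmetric positive-definite square root of $\ell_{\nu_0}^f(\ell_{\nu_0}^0)^{-1} = \ell_{\nu_0}$, so $F = R\ell_{\nu_0}^{1/2}$, whence $\det F = 1$, $F^T F = \ell_{\nu_0}$, and $F\nu_0/|F\nu_0| = R\ell_{\nu_0}^{1/2}\nu_0/|R\ell_{\nu_0}^{1/2}\nu_0| = R\nu_0$, so that $\nu = \sigma F\nu_0/|F\nu_0|$. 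This is exactly (\ref{eq:nonIdealIdents}).

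The whole argument is essentially bookkeeping with the step-length tensors; the only genuinely non-routine step is extracting $\nu = \pm R\nu_0$ from the non-ideal condition, and this hinges on the observation that $(\ell_{\nu_0}^0)^{-1/2}$ acts as the scalar $r_0^{1/6}$ on the plane $\nu_0^\perp$, so composing with $I_{3\times3} - \nu_0 \otimes \nu_0$ annihilates the step-length factor entirely (the identity above handles $r_0 = 1$ and $r_0 > 1$ uniformly). The residual sign $\sigma \in \{-1,1\}$ is precisely the leftover of the soft-mode degeneracy of $W^e$ that the non-ideal term only partially breaks, as anticipated in Remark \ref{remarkNonIdeal}(iii).
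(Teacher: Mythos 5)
Your proposal is correct and follows essentially the same route as the paper's proof: split $\widehat{W}=0$ into $W^e=0$ and $W^{ni}=0$, extract the rotation $R=(\ell_\nu^f)^{-1/2}F(\ell_{\nu_0}^0)^{1/2}$ from Proposition \ref{LBProp}, and use the fact that the step-length tensor acts as a scalar on $\nu_0^\perp$ to reduce the non-ideal condition to $(I_{3\times3}-\nu_0\otimes\nu_0)R^T\nu=0$, whence $\nu=\sigma R\nu_0$. The only (immaterial) divergence is at the very end of the converse, where you re-derive $F=R\ell_{\nu_0}^{1/2}$ directly from $\ell_\nu^f=R\ell_{\nu_0}^fR^T$ rather than citing Proposition \ref{EquivalenceProp} as the paper does.
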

\begin{proof}
$(\Rightarrow.)$  Given $\widehat{W} = 0$, $W^e=0$ and $W^{ni} =0$ since both are non-negative.  The former equality implies $(\ell_{\nu}^f)^{-1/2} F (\ell_{\nu_0}^0)^{1/2} = R \in SO(3)$ give Proposition \ref{LBProp}.  Hence, we observe that 
\begin{align*}
W^{ni}(F,\nu,\nu_0) &= \frac{\mu \alpha}{2} r_f^{2/3}|(I_{3\times3}-\nu_0 \otimes  \nu_0) F^T (\ell_{\nu}^f)^{-1/2} \nu|^2 \\
&= \frac{\mu \alpha}{2} (r_0^{1/3} r_f^{2/3}) |(I_{3 \times 3}- \nu_0 \otimes \nu_0) (\ell_{\nu_0}^0)^{1/2} F^T (\ell_{\nu}^f)^{-1/2} \nu|^2 \\
&=\frac{\mu \alpha}{2} (r_0^{1/3} r_f^{2/3}) |(I_{3 \times 3}- \nu_0 \otimes \nu_0) R^T \nu|^2 ,
\end{align*} 
and this must vanish.  Consequently, $\nu = \sigma R\nu_0 = \sigma F\nu_0/|F\nu_0|$ for some $\sigma \in \{-1,1\}$ (the latter equality follows from $R=(\ell_{\nu}^f)^{-1/2} F (\ell_{\nu_0}^0)^{1/2}$).  Thus by Proposition \ref{EquivalenceProp}, $\det  F = 1$ and $F^T F = \ell_{\nu_0}$.  

$(\Leftarrow.)$  Given (\ref{eq:nonIdealIdents}), $W^e = 0$, $\nu = \sigma R\nu_0$ and $F = R(\ell_{\nu_0})^{1/2}$ by Proposition \ref{EquivalenceProp}.  Thus with $(I_{3 \times 3} - \nu_0 \otimes \nu_0) \ell_{\nu_0}^{1/2} = r^{1/6} (I_{3\times 3} - \nu_0 \otimes \nu_0)$, it is easy to see that $W^{ni}$ also vanishes.  This completes the proof.
\end{proof}

\begin{proposition}\label{bPropDef}
If $\tilde{F} \in \mathbb{R}^{3 \times2}$ and $\nu_0 \in \mathbb{S}^2$ such that $\tilde{F}^T \tilde{F} = \tilde{\ell}_{\nu_0}$, then there exists a $b \in \mathbb{R}^3$ such that 
\begin{align*}
(\tilde{F}|b)^T (\tilde{F}|b) = \ell_{\nu_0}, \quad \det(\tilde{F}|b) = 1.
\end{align*} 
In particular, 
\begin{equation}
\begin{aligned}\label{eq:rewritb}
&b = \bar{b}_1 \tilde{F} \tilde{e}_1 + \bar{b}_2 \tilde{F} \tilde{e}_2 + \bar{b}_3(\tilde{F} \tilde{e}_1 \times \tilde{F} \tilde{e}_2), \\
&\left(\begin{array}{c} \bar{b}_1 \\ \bar{b}_2 \end{array}\right) = (\tilde{\ell}_{\nu_0})^{-1} I_{2\times3} \ell_{\nu_0} e_3, \quad \bar{b}_3 =\frac{1}{|\tilde{F} \tilde{e}_1 \times \tilde{F} \tilde{e_2}|^2},  \\
&(\tilde{\ell}_{\nu_0})^{-1} = r^{1/3} \left(I_{2\times2} +\left(\frac{1- r}{1+|\tilde{\nu}_0|^2(r-1)} \right) \tilde{\nu}_0 \otimes \tilde{\nu}_0 \right). 
\end{aligned}
\end{equation}
for $\tilde{\nu}_0 = (\nu_0 \cdot e_1, \nu_0 \cdot e_2) \in B_1(0) \subset \mathbb{R}^2$.  
\end{proposition}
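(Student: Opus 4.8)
The strategy is direct and constructive: we write down $b$ explicitly in the basis $\{\tilde{F}\tilde{e}_1, \tilde{F}\tilde{e}_2, \tilde{F}\tilde{e}_1\times\tilde{F}\tilde{e}_2\}$ of $\mathbb{R}^3$ and verify that the two stated identities hold. First, I would note that $\{\tilde{F}\tilde{e}_1, \tilde{F}\tilde{e}_2\}$ is linearly independent (since $\tilde{F}^T\tilde{F} = \tilde{\ell}_{\nu_0}$ is positive definite, as $r > 0$), so the cross product $\tilde{F}\tilde{e}_1\times\tilde{F}\tilde{e}_2$ is nonzero and the three vectors form a basis. Any candidate $b$ can therefore be written as $b = \bar{b}_1 \tilde{F}\tilde{e}_1 + \bar{b}_2 \tilde{F}\tilde{e}_2 + \bar{b}_3(\tilde{F}\tilde{e}_1\times\tilde{F}\tilde{e}_2)$.

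The identity $(\tilde{F}|b)^T(\tilde{F}|b) = \ell_{\nu_0}$ decomposes into three scalar conditions: the top-left $2\times2$ block is $\tilde{F}^T\tilde{F} = \tilde{\ell}_{\nu_0}$, which holds by hypothesis; the off-diagonal block is $\tilde{F}^T b = I_{2\times3}\,\ell_{\nu_0}\,e_3$ (the first two entries of the last column of $\ell_{\nu_0}$); and the bottom-right entry is $|b|^2 = e_3^T\ell_{\nu_0}e_3$. Computing $\tilde{F}^T b$ using the basis expansion gives $\tilde{F}^T b = (\tilde{F}^T\tilde{F})(\bar{b}_1,\bar{b}_2)^T + \bar{b}_3 \tilde{F}^T(\tilde{F}\tilde{e}_1\times\tilde{F}\tilde{e}_2) = \tilde{\ell}_{\nu_0}(\bar{b}_1,\bar{b}_2)^T$, since $\tilde{F}\tilde{e}_i$ is orthogonal to the cross product. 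Solving, $(\bar{b}_1,\bar{b}_2)^T = (\tilde{\ell}_{\nu_0})^{-1} I_{2\times3}\,\ell_{\nu_0}\,e_3$, which matches the stated formula; the explicit inverse $(\tilde{\ell}_{\nu_0})^{-1}$ follows from the Sherman-Morrison formula applied to $r^{-1/3}(I_{2\times2} + (r-1)\tilde{\nu}_0\otimes\tilde{\nu}_0)$. Then $\bar{b}_3$ is determined (up to sign) by the norm condition $|b|^2 = e_3^T\ell_{\nu_0}e_3$, and I would check that the determinant condition $\det(\tilde{F}|b) = \bar{b}_3|\tilde{F}\tilde{e}_1\times\tilde{F}\tilde{e}_2|^2 = 1$ fixes the sign and magnitude, giving $\bar{b}_3 = |\tilde{F}\tilde{e}_1\times\tilde{F}\tilde{e}_2|^{-2}$.

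The one genuine consistency check — and the part I expect to be the only real obstacle — is showing that the $\bar{b}_3$ forced by $\det(\tilde{F}|b)=1$ is compatible with the $\bar{b}_3$ forced by the norm condition $|b|^2 = e_3^T\ell_{\nu_0}e_3$. That is, one must verify the algebraic identity
\begin{align*}
\bar{b}_1^2|\tilde{F}\tilde{e}_1|^2 + \bar{b}_2^2|\tilde{F}\tilde{e}_2|^2 + 2\bar{b}_1\bar{b}_2(\tilde{F}\tilde{e}_1\cdot\tilde{F}\tilde{e}_2) + \bar{b}_3^2|\tilde{F}\tilde{e}_1\times\tilde{F}\tilde{e}_2|^2 = e_3^T\ell_{\nu_0}e_3,
\end{align*}
using $|\tilde{F}\tilde{e}_1\times\tilde{F}\tilde{e}_2|^2 = \det(\tilde{\ell}_{\nu_0}) = r^{1/3}(1 + (r-1)|\tilde{\nu}_0|^2)$ and $\det(\ell_{\nu_0}) = 1$ together with the cofactor/Schur-complement relation $\det(\ell_{\nu_0}) = \det(\tilde{\ell}_{\nu_0})\,(e_3^T\ell_{\nu_0}e_3 - (\ell_{\nu_0}e_3)^T I_{3\times2}(\tilde{\ell}_{\nu_0})^{-1}I_{2\times3}(\ell_{\nu_0}e_3))$. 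Since $\det(\ell_{\nu_0})=1$, this Schur-complement identity gives precisely $e_3^T\ell_{\nu_0}e_3 - \bar{b}^T\tilde{\ell}_{\nu_0}\bar{b} = 1/\det(\tilde{\ell}_{\nu_0}) = \bar{b}_3^2|\tilde{F}\tilde{e}_1\times\tilde{F}\tilde{e}_2|^2$, which is exactly the required compatibility — so the two conditions are automatically consistent, and no miracle is needed, only the standard block-determinant formula. This closes the proof.
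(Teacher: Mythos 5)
Your proof is correct and takes essentially the same route as the paper's, which—after noting that $\det(\tilde{\ell}_{\nu_0})>0$ forces $\rank\tilde{F}=2$ so that $\{\tilde{F}\tilde{e}_1,\tilde{F}\tilde{e}_2,\tilde{F}\tilde{e}_1\times\tilde{F}\tilde{e}_2\}$ is a basis—simply states that "the proof follows by explicitly verifying the formula"; you have supplied that verification in full, correctly identifying the block decomposition and the Schur-complement identity $\det(\ell_{\nu_0})=\det(\tilde{\ell}_{\nu_0})\bigl(e_3^T\ell_{\nu_0}e_3-\bar{b}^T\tilde{\ell}_{\nu_0}\bar{b}\bigr)$ as the one nontrivial consistency check between the norm condition and the determinant condition. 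One minor slip: $\det(\tilde{\ell}_{\nu_0})=r^{-2/3}(1+(r-1)|\tilde{\nu}_0|^2)$, not $r^{1/3}(\cdots)$, but this numerical value is never used in your argument—only the Lagrange identity $|\tilde{F}\tilde{e}_1\times\tilde{F}\tilde{e}_2|^2=\det(\tilde{\ell}_{\nu_0})$ and the fact $\det(\ell_{\nu_0})=1$ matter.
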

\begin{proof} 
We remark that $\det(\tilde{\ell}_{\nu_0})  = r^{-2/3} ( 1 + (r - 1) |\tilde{\nu}_0|^2) > 0$ for $r > 0$.  Thus $\rank \tilde{F} = 2$ since by hypothesis $\tilde{F}^T \tilde{F} = \tilde{\ell}_{\nu_0}$.   Therefore, $\spn\{ \tilde{F} e_1, \tilde{F} e_2, \tilde{F} e_1 \times \tilde{F} e_2 \} = \mathbb{R}^3$.  Hence, (\ref{eq:rewritb}) simply rewrites $b \in \mathbb{R}^3$ equivalently in terms of $(\bar{b}_1,\bar{b}_2, \bar{b}_3) \in \mathbb{R}^3$.  The proof follows by explicitly verifying the formula.  
\end{proof}

\section{On the incompressibility of thin elastomers}\label{sec:IncompAppendix}
Here we prove Lemma \ref{IncompressibleLemma}, which develops (and catalogues properties associated to) explicit constructions of incompressible deformations for thin elastomer sheets.  
\begin{proof}[Proof of Lemma \ref{IncompressibleLemma}.]
We set $\delta_h = mh$ for $m\geq 1$ to be determined in Proposition \ref{ContractionProp} below.  We consider the function 
\begin{align}\label{eq:VhDef}
V_{\alpha}^h(\tilde{x},x_3) := y_{\alpha}^{\delta_h}(\tilde{x}) + x_3 b_{\alpha}^{\delta_h}(\tilde{x})
\end{align}
and assume $x_3 \in (-h/2, h/2)$.  Since $\nabla V_{\alpha}^{h}= (\tilde{\nabla} y_{\alpha}^{\delta_h}|b_{\alpha}^{\delta_h}) + x_3 (\tilde{\nabla}b_{\alpha}^{\delta_h}|0)$ and $\det (\tilde{\nabla} y_{\alpha}^{\delta_h} |b_{\alpha}^{\delta_h}) = 1$, we let $S_{\alpha}^{h} := (\tilde{\nabla} y_{\alpha}^{\delta_h} |b_{\alpha}^{\delta_h})^{-1} (\tilde{\nabla} b_{\alpha}^{\delta_h}|0)$ and find 
\begin{align}\label{eq:detValphah}
\det \nabla V_{\alpha}^h &= \det( (\tilde{\nabla} y_{\alpha}^{\delta_h} |b_{\alpha}^{\delta_h})^{-1} \nabla V_{\alpha}^{h})  = \det(I + x_3 S_{\alpha}^h) \nonumber \\
&=1 + x_3 \Tr(S_{\alpha}^{h}) + x_3^2 \Tr(\cof S_{\alpha}^{h}) + x_3^3 \det(S_{\alpha}^{h}). 
\end{align}
For the estimates below, $C \equiv C(M)$.  We note that $\|(\tilde{\nabla} y_{\alpha}^{\delta_h}|b_{\alpha}^{\delta_h})^{-1}\|_{L^{\infty}(\omega)} \leq C$ since the determinant is unity, and therefore $|S_{\alpha}^h| \leq C \delta_h^{\min\{-\alpha,0\}}$ by hypothesis and 
\begin{align}\label{eq:DetEst1}
|\det \nabla V_{\alpha}^{h} - 1| &\leq C \left( |x_3| \delta_h^{\min\{-\alpha,0\}} + |x_3|^2 \delta_h^{2\min\{-\alpha,0\}} + |x_3|^3 \delta_h^{3\min\{-\alpha,0\}}\right) \nonumber \\
& \leq  C|x_3| \delta_h^{\min\{-\alpha,0\}} \;\; \text{ for } \;\ \alpha \in \{-1,0,1\}, \;\; m \geq 1.
\end{align}
In addition for $\beta = 1,2$, since $\|\partial_{\beta} S_{\alpha}^{h}\|_{L^{\infty}(\omega)} \leq C(\delta_h^{2\min\{-\alpha,0\}} + \delta_h^{-\alpha-1}) \leq C \delta_h^{-\alpha-1}$ for $\alpha \in \{-1,0,1\}$ and since $|\partial_{\beta} \Tr (S_{\alpha}^h)| \leq |\partial_\beta S_{\alpha}^h|$, $|\partial_{\beta} \Tr(\cof S_{\alpha}^h)| \leq 2|S_{\alpha}^h||\partial_\beta S_{\alpha}^h|$ and $|\partial_{\beta} \det (S_{\alpha}^h)| \leq |S_{\alpha}^h|^2 |\partial_\beta S_{\alpha}^h|$, we conclude that 
\begin{align}\label{eq:DetEst2}
|\partial_{\beta} \det \nabla V_{\alpha}^h | &\leq C(|x_3|\delta_h^{-\alpha-1} + |x_3|^2\delta_h^{\min\{-\alpha,0\}}\delta_h^{-\alpha-1} + |x_3|^3 \delta_h^{2\min\{-\alpha,0\}}\delta_h^{-\alpha-1}) \nonumber \\
&\leq C|x_3| \delta_h^{-\alpha-1} \;\; 
 \text{ for} \;\; \beta \in \{1,2\}, \;\; \alpha \in \{-1,0,1\}, \;\; m \geq 1.
\end{align} 

Now since $V_{\alpha}^h$ is not incompressible, we modify it through a non-linear change in coordinates.  We let $\Xi^h(\tilde{x}, x_3) = (\tilde{x}, \xi^h(\tilde{x},x_3))$ for $\xi^h \in C^{1}(\overline{\Omega}_h,\mathbb{R})$ to be determined, and we define $Y_{\alpha}^h := V_{\alpha}^h \circ \Xi^h$.  Hence, by the column linearity of the determinant, we find that 
\begin{align*}
\det \nabla Y_{\alpha}^h = \det(\nabla V_{\alpha}^h \circ \Xi^h) \partial_3 \xi^h.
\end{align*}
Thus, satisfying the determinant constraint on $\nabla Y^h$ amounts to satisfying the ordinary differential equation 
\begin{align}\label{eq:ODE}
\partial_3 \xi^h = \frac{1}{\det (\nabla V_{\alpha}^h \circ \Xi^h)} \;\;\; \text{ on } \;\;\; \Omega_h
\end{align}
for some $\xi^h$.  There is an $m = m(\alpha,M) \geq 1$ such that for $h > 0$ sufficiently small, there is a solution to (\ref{eq:ODE}), i.e., $\xi^h \equiv \xi^h_{\alpha}$ for a $\xi_{\alpha}^h \in C^1(\overline{\Omega}_h,\mathbb{R})$ with the initial condition $\xi_{\alpha}^h(\tilde{x},0) = 0$, see Proposition \ref{ContractionProp}.  

It remains to prove the estimates in (\ref{eq:xihEst}).  By Proposition \ref{ContractionProp}, the map $\xi_{\alpha}^h$ satisfies pointwise
\begin{align}\label{eq:xihEst1}
|\xi_{\alpha}^h| \leq 2|x_3|, \;\;\; |\partial_3 \xi_{\alpha}^h| \leq 2  
\end{align}
everywhere on $\Omega_h$.  Thus, given (\ref{eq:ODE}),(\ref{eq:DetEst1}) and the estimates above, we deduce 
\begin{align*}
|\partial_3 \xi^h_{\alpha} - 1| \leq |\partial_3 \xi_{\alpha}^h||\det( \nabla V_{\alpha}^h \circ \Xi_{\alpha}^h) -1| \leq Ch^{\min\{-\alpha,0\}}|\xi_{\alpha}^h| \leq Ch^{\min\{-\alpha,0\}}|x_3|
\end{align*}
everywhere on $\Omega_h$.  Similarly, 
\begin{align*}
|\xi_{\alpha}^h - x_3| \leq | \int_{0}^{x_3}( \partial_3 \xi^h_{\alpha} - 1) d \bar{x}_3 | \leq \int_0^{|x_3|} |\partial_3 \xi^h_{\alpha} -1| d\bar{x}_3 \leq C h^{\min\{-\alpha,0\}}|x_3|^{2}
\end{align*}
everywhere on $\Omega_h$.   Finally, to estimate the first and second derivative of $\xi_h$, we define $F_h(\tilde{x},t) := \int_0^s \det(\nabla V_{\alpha}^h(\tilde{x},s)) ds$, and notice that the ordinary differential equation in (\ref{eq:ODE}) is equivalent to the implicit equation $F_h(\tilde{x},\xi_{\alpha}^h(x)) = x_3$.   Differentiating this equation with respect to $x_{\beta}$, $\beta = 1,$ or $2$, we find 
\begin{align*}
\int_{0}^{\xi_{\alpha}^h} \partial_{\beta} \det(\nabla V_{\alpha}^h) ds + \det(\nabla V_{\alpha}^h \circ \Xi_{\alpha}^h) \partial_\beta \xi^h_{\alpha} = 0.
\end{align*}
Hence using (\ref{eq:ODE}), (\ref{eq:DetEst2}) and (\ref{eq:xihEst1}), 
\begin{align*}
|\partial_\beta \xi_{\alpha}^h| \leq |\partial_3 \xi_{\alpha}^h| \int_0^{|\xi_{\alpha}^h|} |\partial_{\beta} \det \nabla V_{\alpha}^h| ds \leq C h^{-\alpha-1} \int_0^{|\xi_{\alpha}^h|}|s| ds \leq C h^{-\alpha-1}|x_3|^2
\end{align*}
everywhere on $\Omega_h$ for $\beta = 1,2$.  These are the desired estimates.  
\end{proof}
\begin{proposition}\label{ContractionProp}
Let $\alpha \in \{-1,0,1\}$.  Let $V_{\alpha}^h$ defined in (\ref{eq:VhDef}) with $y^{\delta_h}_{\alpha}$ and $b^{\delta_h}_{\alpha}$ as in Lemma \ref{IncompressibleLemma} with $\delta_h = mh$.  There is an $m = m(\alpha, M) \geq 1$ such that for any $h > 0$ sufficiently small, there exists a $\xi^h_{\alpha} \in C^{1}(\overline{\Omega}_h,\mathbb{R})$ such that 
\begin{align}\label{eq:ODECont}
\partial_3 \xi_{\alpha}^h = \frac{1}{\det(\nabla V_{\alpha}^h \circ \Xi_{\alpha}^h)} \quad \text{ on } \Omega_{h}, \quad \text{ with } \quad \xi_{\alpha}^h(\tilde{x},0) = 0.
\end{align}
Moreover $\xi_{\alpha}^h$ satisfies pointwise the estimate 
\begin{align}\label{eq:xihalphaEst}
|\xi_{\alpha}^h| \leq 2|x_3|, \quad |\partial_3 \xi_{\alpha}^h| \leq 2 \quad \text{ on } \Omega_{h}.
\end{align}
\end{proposition}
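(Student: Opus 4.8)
The plan is to set up the ODE \eqref{eq:ODECont} as a fixed-point problem and solve it via the contraction mapping principle on a suitable ball of $C^1$ functions, with the radius of the ball chosen to enforce the pointwise bounds \eqref{eq:xihalphaEst}. First I would record the basic estimate: from \eqref{eq:DetEst1} in the proof of Lemma \ref{IncompressibleLemma} we have $|\det \nabla V_\alpha^h - 1| \le C_0 |x_3| \delta_h^{\min\{-\alpha,0\}} = C_0 m^{\min\{-\alpha,0\}} h^{\min\{-\alpha,0\}} |x_3|$ on $\Omega_h$, where $C_0 = C_0(M)$. Since $|x_3| \le h/2$ on $\Omega_h$, for $\alpha \in \{0,1\}$ (so that $\min\{-\alpha,0\} = -\alpha$) this reads $|\det \nabla V_\alpha^h - 1| \le C_0 m^{-\alpha} h^{-\alpha} \cdot (h/2) \le (C_0/2) m^{-\alpha} h^{1-\alpha}$, which is $\le (C_0/2) m^{-1}$ when $\alpha = 1$ and $\le (C_0/2) h$ when $\alpha = 0$; for $\alpha = -1$ one simply has $|\det \nabla V_\alpha^h - 1| \le (C_0/2) h$. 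Thus by choosing $m \ge m(\alpha,M)$ large enough (for $\alpha=1$) and $h$ small enough, we can guarantee $|\det \nabla V_\alpha^h - 1| \le 1/4$ on $\Omega_h$, hence $1/2 \le \det \nabla V_\alpha^h \le 2$ and $1/\det \nabla V_\alpha^h \in [1/2, 2]$.

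Next I would define the map $T$ on the closed set $X := \{ \xi \in C^1(\overline{\Omega}_h,\mathbb{R}) : \xi(\tilde{x},0) = 0,\ |\xi(x)| \le 2|x_3| \text{ on } \Omega_h \}$ by
\[
(T\xi)(\tilde{x},x_3) := \int_0^{x_3} \frac{1}{\det \nabla V_\alpha^h(\tilde{x},\xi(\tilde{x},s))}\, ds.
\]
For this to make sense one needs $(\tilde{x},\xi(\tilde{x},s))$ to lie in $\Omega_h$ (or a slightly enlarged slab where $V_\alpha^h$ is still defined and the determinant bound still holds); this is where the bound $|\xi| \le 2|x_3| \le h$ is used — one works on the slab $\omega \times (-h,h)$ and notes that the mollified fields $y_\alpha^{\delta_h}, b_\alpha^{\delta_h}$ and the estimates \eqref{eq:hypIncomp} extend there after a harmless enlargement, or equivalently one reparametrizes so the argument stays inside $\Omega_h$. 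Granting that, $|(T\xi)(x)| \le \int_0^{|x_3|} 2\, ds = 2|x_3|$, so $T$ maps $X$ into $X$, and $(T\xi)(\tilde{x},0)=0$; moreover $\partial_3(T\xi) = 1/\det \nabla V_\alpha^h \in [1/2,2]$, giving the $C^1$ bound and in particular $|\partial_3(T\xi)| \le 2$. The fixed point $\xi_\alpha^h = T\xi_\alpha^h$ is then a $C^1$ solution of \eqref{eq:ODECont} satisfying \eqref{eq:xihalphaEst}.

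For the contraction property, I would estimate, for $\xi_1, \xi_2 \in X$,
\[
|(T\xi_1 - T\xi_2)(x)| \le \int_0^{|x_3|} \left| \frac{1}{\det\nabla V_\alpha^h(\tilde{x},\xi_1)} - \frac{1}{\det\nabla V_\alpha^h(\tilde{x},\xi_2)} \right| ds \le 4 \int_0^{|x_3|} |\partial_3 \det \nabla V_\alpha^h|\, |\xi_1 - \xi_2|\, ds,
\]
using $|1/a - 1/b| \le (1/ab)|a-b|$ with $a,b \ge 1/2$ and the mean value theorem in the $\xi$-slot; here $\partial_3 \det \nabla V_\alpha^h$ is controlled analogously to \eqref{eq:DetEst1}–\eqref{eq:DetEst2} by $C |S_\alpha^h| \cdot |\partial_3 S_\alpha^h|$-type terms, which are bounded by $C \delta_h^{\min\{-\alpha,0\}}$, hence $\le C m^{\min\{-\alpha,0\}} h^{\min\{-\alpha,0\}}$. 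Since $|x_3| \le h/2$, this gives a Lipschitz constant $\le C m^{\min\{-\alpha,0\}} h^{1+\min\{-\alpha,0\}}$ in the sup norm, which is $\le 1/2$ for $h$ small (and $m$ large, in the $\alpha=1$ case) — similarly for the $\partial_3$-derivatives one differentiates $T\xi_1 - T\xi_2$ and gets a contraction in the $C^1$ norm. The main obstacle, and the only genuinely delicate point, is bookkeeping the $m$- and $h$-dependence uniformly: one must verify that a \emph{single} choice $m = m(\alpha,M)$ (independent of $h$) simultaneously (i) keeps the composition $(\tilde{x},\xi(\tilde{x},s))$ inside the domain where all the Lemma \ref{IncompressibleLemma} hypotheses hold, (ii) makes $|\det \nabla V_\alpha^h - 1| \le 1/4$, and (iii) makes the Lipschitz constant $< 1$ — all for every sufficiently small $h$. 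The key arithmetic is that the "bad" factor $\delta_h^{\min\{-\alpha,0\}} = (mh)^{-\max\{\alpha,0\}}$ is always multiplied by at least one power of $|x_3| \le h/2$, so the net power of $h$ is $h^{1-\max\{\alpha,0\}} = h^{1-\alpha}$ for $\alpha \ge 0$, which is $h^0 = O(1)$ only when $\alpha = 1$; in that borderline case the compensating smallness comes from the factor $m^{-1}$, so $m$ must be taken at least $4C_0$ (and the analogous constant for the contraction estimate), which is exactly the content of "$m = m(\alpha,M) \ge 1$." For $\alpha \in \{-1,0\}$ one gets a genuine power of $h$ and $m=1$ suffices, recovering Remark \ref{incompRemark}(iii).
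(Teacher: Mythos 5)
Your proposal is correct and follows essentially the same route as the paper: a contraction mapping for the integral form of \eqref{eq:ODECont} on the set of functions satisfying $\xi(\tilde{x},0)=0$ and $|\xi|\leq 2|x_3|$, using the cubic polynomial expression for $\det\nabla V_{\alpha}^h$ in the third slot to get both the bound $\det\geq 1/2$ and the Lipschitz estimate $C\,m^{\min\{-\alpha,0\}}h^{1+\min\{-\alpha,0\}}$, with the same $m$-versus-$h$ bookkeeping in the borderline case $\alpha=1$. The only differences are cosmetic: the paper runs the contraction in $C(\overline{\Omega}_h)$ with the sup norm and recovers $C^1$ regularity a posteriori from the fixed-point identity (avoiding your extra work on a $C^1$ contraction, which would also have to handle the $\tilde{\nabla}$-components), and the domain worry is vacuous since $V_{\alpha}^h$ is affine in $x_3$ and hence defined, with all determinant estimates intact, for every real value of the third argument.
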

\begin{proof}
For $\alpha \in \{ -1,0,1\}$ and $h > 0$, we consider the mapping $T_{\alpha}^h \colon \mathcal{M}^{h}_{\alpha} \rightarrow C(\bar{\Omega}_{h})$ given by 
\begin{align*}
T_{\alpha}^h(\phi)(\tilde{x},x_3) = \int_{0}^{x_3}  \frac{1}{\det(\nabla V_{\alpha}^h(\tilde{x},\phi(\tilde{x},s)))} ds \quad \text{ for each } \;\; (\tilde{x}, x_3) \in \Omega_{h},
\end{align*}
where $\mathcal{M}^{h}_{\alpha}$ is given by
\begin{align*}
\mathcal{M}^{h}_{\alpha} &:= \left\{ \phi \in C(\overline{\Omega}_{h}) \colon \phi(\tilde{x},0)= 0, \;\; |\phi(\tilde{x},x_3)| \leq 2|x_3|,\right.\\
&\;\;\;\;\;\;\;\;\;\;\;\;\;\;\;\;\;\;\;\;\;\;\;\;\;\;\;\; \left. \det(\nabla V_{\alpha}^h(\tilde{x},\phi(\tilde{x},x_3))) \geq 1/2 \text{ for each } (\tilde{x},x_3) \in \Omega_{h} \right\}.  
\end{align*}
This is a (non-empty) complete space under the infinity norm.  Thus, we aim to show that there is an appropriate choice of $m = m(\alpha,M)$ in $\delta_h$ such that for $h >0$ sufficiently small, the mapping $T_{\alpha}^h$ is, in fact, a contraction map in the space $\mathcal{M}^h_{\alpha}$ under the infinity norm.  The proposition will follow by the equivalence of the integral representation of (\ref{eq:ODECont}).

We first prove that $T_{\alpha}^h$ is an operator (i.e., $T_{\alpha}^h \colon \mathcal{M}^h_{\alpha} \rightarrow \mathcal{M}^h_{\alpha}$) for an appropriate choice of $m = m(\alpha,M)$ and small enough $h$.  For the estimates below, $C \equiv C(M)$.  Since $\phi \in \mathcal{M}^h_{\alpha}$, we have 
\begin{align*}
|T_{\alpha}^h(\phi)(\tilde{x},x_3)| \leq 2|x_3|, \quad \text{ for each } \;\; (\tilde{x},x_3) \in \Omega_{h}.
\end{align*}
In addition, using a similar estimate to (\ref{eq:DetEst1}), we obtain
\begin{align*}
|\det \nabla V_{\alpha}^h(\tilde{x},T_{\alpha}^h(\tilde{x},x_3))  -1| \leq C |T_{\alpha}^h(\tilde{x},x_3)|\delta_h^{\min\{-\alpha, 0\}} \leq C|x_3|h^{\min\{-\alpha,0\}}m^{\min\{-\alpha,0\}}
\end{align*}
for $\alpha \in \{ -1,0,1\}$.  Thus, for $\alpha \in \{-1,0\}$, we need only enforce $m \geq 1$ and for $\alpha = 1$ we enforce $m = m(\alpha, M) \geq \max\{ 2C,1\}$ to ensure $T_{\alpha}^h$ is an operator for small $h$.  
 
It remains to prove that $T^h_{\alpha}$ is a contraction under the $L^{\infty}$ norm.  Observe for $\phi, \psi \in \mathcal{M}_{\alpha}^h$,
\begin{align*}
|T^h_{\alpha}(\phi)(\tilde{x},x_3) - T^h_{\alpha}(\psi)(\tilde{x},x_3)| &\leq 4\int_{0}^{|x_3|}| \det ( \nabla V^h_{\alpha} (\tilde{x}, \psi(\tilde{x},s)) - \det(\nabla V^h_{\alpha}(\tilde{x}, \phi(\tilde{x},s))| ds\\
&\leq C \delta_h^{\min\{-\alpha,0\}} \int_{0}^{|x_3|} |\psi(\tilde{x},s) - \phi(\tilde{x},s)| ds  \\
&\leq C m^{\min\{-\alpha,0\}} h^{\min\{-\alpha,0\}} h \|\psi - \phi\|_{L^{\infty}(\Omega_h)}
\end{align*}
for any $(\tilde{x},x_3) \in \Omega_h$.  Here the first inequality uses the determinant constraint on $\mathcal{M}_{\alpha}^h$, the second uses the equation (\ref{eq:detValphah}), and the third uses that $\delta_h = mh$.  Finally, from this estimate, it is clear that we can choose $m = m(\alpha,M) \geq 1$ independent of $h$ (in fact $m = 1$ suffices for $\alpha = -1,0$ as in the remark), such that for $h$ sufficiently small
\begin{align*}
\|T_{\alpha}^h(\phi) - T_{\alpha}^h(\psi)\|_{L^{\infty}(\Omega_h)} < \| \psi -\phi\|_{L^{\infty}(\Omega_h)},
\end{align*}
i.e., it is a contraction map.  

We now fix this $m = m(\alpha,M)$ and an $h>0$ sufficiently small.  Since $T^h_{\alpha}$ is a contraction map, there exists a $\xi_{\alpha}^h$ such that 
\begin{align*}
\xi_{\alpha}^h = T_{\alpha}^h(\xi_{\alpha}^h) = \int_{0}^{x_3}  \frac{1}{\det(\nabla V_{\alpha}^h(\tilde{x},\xi_{\alpha}^h(\tilde{x},s)))} ds \quad \text{ for each } \;\; (\tilde{x}, x_3) \in \Omega_{h}.
\end{align*}
This is equivalent to the ordinary differential equation (\ref{eq:ODECont}). The regularity $\xi_{\alpha}^h \in C^{1}(\bar{\Omega}_h,\mathbb{R})$ follows from the regularity of $y_{\alpha}^{\delta_h}$ and $b_{\alpha}^{\delta_h}$.  The estimates (\ref{eq:xihalphaEst}) follow from the fact that $\xi_h^{\alpha} \in \mathcal{M}_{\alpha}^h$. This completes the proof.  
\end{proof}

\section{Geometric rigidity and nematic elastomers}\label{sec:CompactAppendix}

First, we derive the key estimate which relates geometric rigidity \cite{fjm_cpam_02} to the setting of nematic elastomers.  

\begin{proposition}\label{GeomRigidPropAppend}
Let $\omega \subset \mathbb{R}^3$ bounded and Lipschitz.  There exists a constant $C = C(r_0,r_f,\tau)$ with the following property: for all $h>0$, $Q_{\tilde{x}^{\ast},h} := (-h/2,h/2)^3 \subset \Omega_h$, $V^h \in W^{1,2}(\Omega_h, \mathbb{R}^3)$, $N^h \in W^{1,2}(\Omega_h, \mathbb{S}^2)$ and $N_0^h$ as in (\ref{eq:n0tomidn0}) with $n_0 \in W^{1,2}(\omega,\mathbb{R}^3)$, there exists an associated constant rotation $R_{\tilde{x}^{\ast}}^h \in SO(3)$ such that 
\begin{align*}
\int_{Q_{\tilde{x}^{\ast},h}} &|(\ell^f_{N^h})^{-1/2} \nabla Y^h (\ell^0_{N_0^h})^{1/2} - R_{\tilde{x}^{\ast}}^h|^2 dx\\
 &\leq C \int_{Q_{\tilde{x}^{\ast},h}} \left( \dist^2((\ell^f_{N^h})^{-1/2} \nabla Y^h (\ell^0_{N_0^h})^{1/2}, SO(3)) + h^2(|\nabla N^h|^2 + |\tilde{\nabla} n_0|^2 + 1) \right) dx
\end{align*}
\end{proposition}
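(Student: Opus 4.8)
The plan is to reduce this to the classical geometric rigidity estimate of Friesecke--James--M\"uller \cite{fjm_cpam_02} on the cube $Q_{\tilde{x}^\ast,h}$, applied not to $\nabla V^h$ directly but to the ``relieved'' gradient $F^h := (\ell^f_{N^h})^{-1/2}\nabla V^h (\ell^0_{N_0^h})^{1/2}$. Write $\ell^f_{N^h}$ and $\ell^0_{N_0^h}$ as the pointwise fields $A(x) := (\ell^f_{N^h(x)})^{-1/2}$ and $B(x) := (\ell^0_{N_0^h(x)})^{1/2}$, and introduce their averages $\bar A, \bar B$ over the cube. The key observation is that $A, B$ are Lipschitz functions of $N^h, N_0^h \in \mathbb{S}^2$ respectively (this is the content of the estimates $|(\ell^0_\nu)^{\pm1/2} - (\ell^0_{\nu'})^{\pm1/2}| \le C(r_0)|\nu - \nu'|$ used repeatedly in Section~\ref{sec:Compactness}), so $A$ and $B$ are close in $L^2(Q_{\tilde x^\ast,h})$ to $\bar A, \bar B$ with error controlled by $\|\nabla N^h\|_{L^2}$, $\|\tilde\nabla n_0\|_{L^2}$ (plus the $O(h)$ defect from \eqref{eq:n0tomidn0}) times $h$ by the Poincar\'e inequality on the cube.

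First I would apply the FJM rigidity estimate to the map $x \mapsto \bar A\, V^h(x)\, $ --- more precisely, note $\bar A (\nabla V^h) \bar B$ differs from $\nabla(\bar A V^h \cdot)$ only by the constant right multiplication by $\bar B$, which does not affect rigidity up to constants depending on $\|\bar B^{\pm1}\|$ (uniformly bounded in terms of $r_0, r_f$ since $N_0^h, N^h \in \mathbb{S}^2$). Thus there is $R^h_{\tilde x^\ast}\in SO(3)$ with
\begin{align*}
\int_{Q_{\tilde x^\ast,h}} |\bar A (\nabla V^h)\bar B - R^h_{\tilde x^\ast}|^2\, dx \le C \int_{Q_{\tilde x^\ast,h}} \dist^2(\bar A(\nabla V^h)\bar B, SO(3))\, dx,
\end{align*}
with $C$ depending only on the (scale-invariant) geometry of the unit cube and on $r_0, r_f$. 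Then I would pass from $\bar A (\nabla V^h)\bar B$ to $F^h = A(\nabla V^h) B$ on both sides using the triangle inequality: $|F^h - \bar A(\nabla V^h)\bar B| \le (|A - \bar A||\nabla V^h||B| + |\bar A||\nabla V^h||B - \bar B|)$, and the $L^2$ norm of this is bounded, after Cauchy--Schwarz and the Poincar\'e/Lipschitz argument above, by $C h (\|\nabla N^h\|_{L^2(Q)} + \|\tilde\nabla n_0\|_{L^2} + 1)$ --- here the $\|\nabla V^h\|$ factor is absorbed because one can first reduce to the case $\|\dist^2(\cdots, SO(3))\|_{L^1}$ small, so $\|\nabla V^h\|_{L^\infty}$-type control is not available but $\|F^h\|_{L^2} \le \|R^h\|_{L^2} + (\text{rigidity error})$ is, and $\|\nabla V^h\| \le C\|F^h\|$. (This bootstrap is the one slightly delicate point.) The distance term transforms the same way: $\dist(\bar A(\nabla V^h)\bar B, SO(3)) \le \dist(F^h, SO(3)) + |F^h - \bar A(\nabla V^h)\bar B|$.

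The step I expect to be the main obstacle is the bootstrap controlling $\|\nabla V^h\|_{L^2(Q)}$ (or equivalently $\|F^h\|_{L^2}$) by the right-hand side without circularity: a priori the perturbation terms $|A - \bar A||\nabla V^h|$ carry an uncontrolled $\|\nabla V^h\|$. The resolution is standard but needs care: split into the regime where $\int_Q \dist^2(F^h, SO(3)) \le \epsilon_0 h^3$ for a dimensional $\epsilon_0$ (where FJM gives $\|F^h\|_{L^2(Q)}^2 \le C h^3$ directly, hence $\|\nabla V^h\|_{L^2(Q)}^2 \le Ch^3$, absorbable), versus the complementary regime where the claimed inequality holds trivially because its right-hand side dominates $\|F^h - R^h\|_{L^2}^2 \le C(\|F^h\|^2 + h^3) \le C(\|\nabla V^h\|^2 + h^3)$ and $\|\nabla V^h\|^2 \le C\|F^h\|^2 \le C\int_Q \dist^2(F^h,SO(3)) + C\|F^h - R^h\|^2$... one closes this by choosing constants appropriately. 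Once the per-cube estimate of Proposition~\ref{GeomRigidPropAppend} is in hand, Lemma~\ref{GeomRigidityLemma} follows by the usual covering-and-patching argument of \cite{fjm_cpam_02} (covering $\omega$ by cubes of side $h$, defining $G^h$ from the local rotations, controlling $\tilde\nabla G^h$ by differences of rotations on adjacent cubes), exactly as in Lewicka--Pakzad \cite{lp_esaim_11} for non-Euclidean plates; the $h^2(|\tilde\nabla n_0|^2+1)$ contributions propagate through unchanged since they are already in the per-cube bound.
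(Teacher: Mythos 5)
Your overall architecture matches the paper's: freeze the step-length tensors at their cube averages, apply Friesecke--James--M\"uller rigidity after an affine change of variables (the right multiplication by $\bar B$ is handled exactly as you suspect, by transforming the cube to a parallelepiped and invoking the uniformity of the FJM constant over Bilipschitz-equivalent domains, which the paper justifies via Theorem~9 of \cite{fjm_arma_06}), and then perturb back using the Lipschitz dependence of $(\ell^{0,f}_\nu)^{\pm 1/2}$ on $\nu$ together with Poincar\'e on the cube of diameter $h$. The difference lies in how the perturbation step is executed, and this is where your argument has a genuine gap. You perturb the \emph{argument} of the distance function, which produces the product $|A-\bar A|\,|\nabla V^h|\,|B|$, and your proposed bootstrap does not close: in the regime $\int_Q \dist^2(F^h,SO(3))\le \epsilon_0 h^3$ you obtain only $\|\nabla V^h\|_{L^2(Q)}^2\le Ch^3$ and $\||A-\bar A|\|_{L^2(Q)}^2\le Ch^2\|\nabla N^h\|_{L^2(Q)}^2$, but an $L^1$ bound on the product $|A-\bar A|^2|\nabla V^h|^2$ cannot be extracted from two $L^2$ bounds; and the crude alternative $|A-\bar A|\le C$ pointwise yields only $Ch^3$, which is \emph{not} dominated by the right-hand side $h^2\int_Q(|\nabla N^h|^2+|\tilde\nabla n_0|^2+1)\,dx$ (whose constant contribution is merely $h^5$).

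The paper avoids the product entirely by perturbing at the level of the \emph{target sets} rather than the argument: it passes from $\dist(\nabla Y^h,\, \bar A^{-1}SO(3)\bar B^{-1})$ to $\dist(\nabla Y^h,\, (\ell^f_{N^h})^{1/2}SO(3)(\ell^0_{N_0^h})^{-1/2})$, and the error is $\sup_{R\in SO(3)}|\bar A^{-1}R\bar B^{-1}-(\ell^f_{N^h})^{1/2}R(\ell^0_{N_0^h})^{-1/2}|\le C(r_0,r_f)\big(|(\ell^f_{N^h})^{1/2}-\bar A^{-1}|+|(\ell^0_{N_0^h})^{-1/2}-\bar B^{-1}|\big)$, in which no factor of $|\nabla Y^h|$ ever appears; Poincar\'e then gives the $h^2(|\nabla N^h|^2+|\tilde\nabla n_0|^2+1)$ terms directly. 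If you insist on your route, the step can be repaired by the pointwise bound $|\nabla V^h|\le C|F^h|\le C(\dist(F^h,SO(3))+\sqrt{3})$, which converts $\int_Q|A-\bar A|^2|\nabla V^h|^2\,dx$ into $C\int_Q\dist^2(F^h,SO(3))\,dx + Ch^2\int_Q|\nabla N^h|^2\,dx$ using the uniform bound $|A-\bar A|\le C(r_f)$ on the first piece and Poincar\'e on the second; but as written, the regime-splitting bootstrap you describe does not deliver this, and it is the one step that must be done correctly.
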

\begin{proof}
Let $Y^h \in W^{1,2}(\Omega_h,\mathbb{R}^3)$, $N^h \in W^{1,2}(\Omega_h, \mathbb{S}^2)$ and  $n_0 \in W^{1,2}(\omega,\mathbb{S}^2)$ with $N_0^h$ as in (\ref{eq:n0tomidn0}).  we fix $\tilde{x}^{\ast}$ such that $Q_{\tilde{x}^{\ast}, h} \subset \Omega_h$ and set 
\begin{align}\label{eq:AveragesPoincare}
A_h^f := \frac{1}{|Q_{\tilde{x}^{\ast},h}|} \int_{Q_{\tilde{x}^{\ast},h}} (\ell_{N^h}^f)^{1/2} dx, \quad A^0_h := \frac{1}{|Q_{\tilde{x}^{\ast},h}|} \int_{Q_{\tilde{x}^{\ast},h}} (\ell_{n_0}^0)^{-1/2} dx.
\end{align}
Because of the structure of the step-length tensors, these averages are positive definite, and each of the eigenvalues lives in a compact set of the positive real numbers depending only on $r_f$ and $r_0$ (in particular, this set does not depend on $h$).  Hence, these linear maps belong to a family of $h$-indepdent Bilipschtiz maps with controlled Lipschitz constant, and so we write $A^f \equiv A^f_h$ and $A^0 \equiv A^0_h$ in sequel.

Now, we set 
\begin{align*}
V^h(s) =  (A^f)^{-1} Y^h( (A^0)^{-1} s ), \quad s \in (A^0) Q_{\tilde{x}^{\ast},h}.
\end{align*}
We observe that $V^h \in W^{1,2}((A^0) Q_{\tilde{x},h},\mathbb{R}^3)$ by the regularity of $Y^h$.  Therefore by geometric rigidity \cite{fjm_cpam_02}, there exists a constant rotation $R_{\tilde{x}}^h \in SO(3)$ such that 
\begin{align}\label{eq:GREst1}
\int_{Q_{\tilde{x}^{\ast},h}} &|(A^f)^{-1} \nabla Y^h(x) (A^0)^{-1} - R_{\tilde{x}^{\ast}}^h|^2 dx = |\det A^0|^{-1} \int_{(A^0) Q_{\tilde{x}^{\ast},h}} |\nabla V^h(s) - R_{\tilde{x}^{\ast}}^h|^2 ds  \nonumber \\
&\quad \quad \leq |\det A^0|^{-1} C((A^0)Q_{\tilde{x}^{\ast},h})\int_{(A_{\omega}^0)Q_{\tilde{x}^{\ast},h}} \dist^2(\nabla V^h(s), SO(3)) ds  \nonumber \\
&\quad \quad =  C((A^0)Q_{\tilde{x}^{\ast},h})\int_{Q_{\tilde{x}^{\ast},h}} \dist^2((A^f)^{-1} \nabla Y^h(x) (A^0)^{-1}, SO(3)) dx
\end{align}
The constant $C((A^0)Q_{\tilde{x}^{\ast},h})$ can be chosen uniformly for a family of domains which are Bilipschitz equivalent with controlled Lipschitz constant.  Hence, actually we can choose $C(r_0,Q_{\tilde{x}^{\ast},h}) \geq  C((A^0)Q_{\tilde{x}^{\ast},h})$.  Moreover, the constant is invariant under translation and dilatation.  Hence, actually we have $C(r_0,Q_{\tilde{x}^{\ast},h}) = C(r_0)$ for any $Q_{\tilde{x}^{\ast},h} \subset \Omega_h$.   These properties are given in Friesecke, James and M\"{u}ller, Theorem 9 \cite{fjm_arma_06}.  Since $r_0$ is fixed in this calculation, we write $C(r_0) \equiv C$, and thus
\begin{align}\label{eq:GeomKeyEst}
\int_{Q_{\tilde{x}^{\ast},h}} |(A^f)^{-1} \nabla Y^h(x) (A^0)^{-1} - R_{\tilde{x}^{\ast}}^h|^2 dx \leq C\int_{Q_{\tilde{x}^{\ast},h}} \dist^2((A^f)^{-1} \nabla Y^h(x) (A^0)^{-1}, SO(3)) dx
\end{align}
from (\ref{eq:GREst1}).  

Since we will no longer be dealing with a change of variables in this proof, we now drop the explicit dependence on $x$ inside the integrals.  We observe by the key estimate (\ref{eq:GeomKeyEst}) that
\begin{align}\label{eq:GREst2}
\int_{Q_{\tilde{x}^{\ast},h}}& |\nabla Y^h - (A^f) R_{\tilde{x}^{\ast}}^h(A^0)|^2 dx \leq C \int_{Q_{\tilde{x}^{\ast},h}} |(A^f)^{-1} \nabla Y^h (A^0)^{-1} - R_{\tilde{x}^{\ast}}^h|^2 dx  \nonumber \\
&\leq  C\int_{Q_{\tilde{x}^{\ast},h}} \dist^2((A^f)^{-1} \nabla Y^h (A^0)^{-1}, SO(3)) dx \leq C \int_{Q_{\tilde{x}^{\ast},h}} \dist^2(\nabla Y^h, (A^f) SO(3) (A^0)) dx  \nonumber \\
&\leq C \int_{Q_{\tilde{x}^{\ast},h}} \left( \dist^2(\nabla Y^h, (\ell_{N^h}^f)^{1/2} SO(3) (\ell_{N^h_0}^0)^{-1/2}) + |(\ell_{N^h}^f)^{1/2} - A^f|^2 + |(\ell_{N_0^h}^0)^{-1/2} - A^0|^2\right)dx \nonumber  \\
&\leq C \int_{Q_{\tilde{x}^{\ast},h}} \Big( \dist^2((\ell_{N^h}^f)^{-1/2} \nabla Y^h (\ell_{N^h_0}^0)^{1/2},  SO(3)) +  h^2 |\nabla N^h|^2 \nonumber \\
&\quad \quad \quad \quad \quad \quad \quad \quad   + |(\ell_{n_0}^0)^{-1/2} - A^0|^2 + |(\ell_{N_0^h}^0)^{-1/2} -(\ell_{n_0}^0)^{-1/2}|^2  \Big) dx \nonumber \\
& \leq  C \int_{Q_{\tilde{x}^{\ast},h}} \left( \dist^2((\ell_{N^h}^f)^{-1/2} \nabla Y^h (\ell_{N^h_0}^0)^{1/2},  SO(3)) +  h^2( |\nabla N^h|^2 + |\tilde{\nabla} n_0|^2 + 1)  \right) dx
\end{align}
Here, the constant $C = C(r_0,r_f,\tau)$ is due to several applications of the triangle inequality and the fact that the norm of the step-length tensors, inverses and averages are compact and this depends only on $r_f$,$r_0$.  We have also applied the standard Poincar\'{e} inequality given the averages (\ref{eq:AveragesPoincare}), and used that the diameter of $Q_{\tilde{x}^{\ast},h}$ is $h$ and that the gradients of the step-length tensors are controlled by the gradients of the directors.  Finally, from the assumed control of non-idealities for $N_0^h$ in (\ref{eq:n0tomidn0}), we have the estimate $\| (\ell_{N_0^h}^0)^{-1/2} - (\ell_{n_0}^0)^{-1/2}\|_{L^{\infty}} \leq c(r_0)\tau h$.  This gives the dependence on $\tau$ in the constant.   

Now using (\ref{eq:GREst2}), we find that
\begin{align*}
\int_{Q_{\tilde{x}^{\ast},h}} & |(\ell_{N^h}^f)^{-1/2} \nabla Y^h (\ell_{N_0^h}^0)^{1/2} - R_{\tilde{x}^{\ast}}^h|^2 dx \leq C \int_{Q_{\tilde{x}^{\ast},h}} |\nabla Y^h - (\ell_{N^h}^f)^{1/2} R_{\tilde{x}^{\ast}}^h (\ell_{N_0^h}^0)^{-1/2}|^2 dx \\
&\leq C \int_{Q_{\tilde{x}^{\ast},h} } \left(|\nabla Y^h - (\ell_{N^h}^f)^{1/2} R_{\tilde{x}^{\ast}}^h (\ell_{n_0}^0)^{-1/2}|^2 + |(\ell_{N_0^h}^0)^{-1/2} - (\ell_{n_0}^0)^{-1/2}|^2\right) dx \\
&\leq C \int_{Q_{\tilde{x}^{\ast},h} }\left(|\nabla Y^h - (A^f) R_{\tilde{x}^{\ast}}^h(A^0)|^2 + h^2 + |(\ell_{N^h}^f)^{1/2} - A^f|^2 + |(\ell_{n_0}^0)^{-1/2} - A^0|^2\right) dx \\
&\leq C \int_{Q_{\tilde{x}^{\ast},h}} \left( \dist^2((\ell_{N^h}^f)^{-1/2} \nabla Y^h (\ell_{n_0}^0)^{1/2},  SO(3)) +  h^2( |\nabla N^h|^2 + |\tilde{\nabla} n_0|^2 +1)  \right) dx
\end{align*}
as desired.
\end{proof}

Now we note that the approximations in Lemma \ref{GeomRigidityLemma} are not new.  They essentially follow from the same argument as that of Theorem 10 in Friesecke et al.\ \cite{fjm_arma_06}, modified appropriately for nematic elastomers using the estimate in Proposition \ref{GeomRigidPropAppend}.  In the general context of non-Euclidean plates, there is a recent body of literature on such estimates (e.g., Lewicka and Pakzad  (Lemma 4.1) \cite{lp_esaim_11} and Lewicka et al.\ (Theorem 1.6) \cite{lmp_prsa_11},  (Lemma 2.3) \cite{bls_arma_15}). Thus briefly:
\begin{proof}[Proof of Lemma \ref{GeomRigidityLemma}]
We repeat steps 1-3 in the proof of Theorem 10 in \cite{fjm_arma_06} with some modification due to our nematic elastomer setting.  The lemma follows by the estimate in Proposition \ref{GeomRigidPropAppend}.  
\end{proof}  

\bibliographystyle{abbrv}
\bibliography{plbMath}

\begin{thebibliography}{10}

\bibitem{ask_prl_14}
H.~Aharoni, E.~Sharon, and R.~Kupferman.
\newblock {Geometry of Thin Nematic Elastomer Sheets}.
\newblock {\em Phys Rev Lett}, 113(25):257801, Dec. 2014.

\bibitem{balljames_analcontmech_89}
J.~M. Ball and R.~D. James.
\newblock Fine phase mixtures as minimizers of energy.
\newblock In {\em Analysis and Continuum Mechanics}, pages 647--686. Springer,
  1989.

\bibitem{bd_esaim_15}
M.~Barchiesi and A.~DeSimone.
\newblock Frank energy for nematic elastomers: a nonlinear model.
\newblock {\em ESAIM: Control, Optimisation and Calculus of Variations},
  21(2):372--377, 2015.

\bibitem{bhmc_pp_2015}
M.~Barchiesi, D.~Henao, and C.~Mora-Corral.
\newblock Local invertibility in sobolev spaces with applications to nematic
  elastomers and magnetoelasticity.
\newblock 2015.

\bibitem{b_french_97}
H.~B. Belgacem.
\newblock Une m{\'e}thode de $\gamma$-convergence pour un modele de membrane
  non lin{\'e}aire.
\newblock {\em Comptes Rendus de l'Acad{\'e}mie des Sciences-Series
  I-Mathematics}, 324(7):845--849, 1997.

\bibitem{bk_jns_14}
P.~Bella and R.~V. Kohn.
\newblock Metric-induced wrinkling of a thin elastic sheet.
\newblock {\em Journal of Nonlinear Science}, 24(6):1147--1176, 2014.

\bibitem{bk_cpam_14}
P.~Bella and R.~V. Kohn.
\newblock Wrinkles as the result of compressive stresses in an annular thin
  film.
\newblock {\em Communications on Pure and Applied Mathematics}, 67(5):693--747,
  2014.

\bibitem{bhatta_03}
K.~Bhattacharya.
\newblock {\em Microstructure of martensite: why it forms and how it gives rise
  to the shape-memory effect}, volume~2.
\newblock Oxford University Press, 2003.

\bibitem{bhattajames_jmps_99}
K.~Bhattacharya and R.~D. James.
\newblock A theory of thin films of martensitic materials withapplications to
  microactuators.
\newblock {\em Journal of the Mechanics and Physics of Solids}, 47(3):531--576,
  1999.

\bibitem{bls_arma_15}
K.~Bhattacharya, M.~Lewicka, and M.~Sch{\"a}ffner.
\newblock Plates with incompatible prestrain.
\newblock {\em Archive for Rational Mechanics and Analysis}, pages 1--39, 2015.

\bibitem{bwb_prl_09}
J.~Biggins, M.~Warner, and K.~Bhattacharya.
\newblock Supersoft elasticity in polydomain nematic elastomers.
\newblock {\em Physical review letters}, 103(3):037802, 2009.

\bibitem{bwb_jmps_12}
J.~Biggins, M.~Warner, and K.~Bhattacharya.
\newblock Elasticity of polydomain liquid crystal elastomers.
\newblock {\em Journal of the Mechanics and Physics of Solids}, 60(4):573--590,
  2012.

\bibitem{btw_pre_93}
P.~Bladon, E.~Terentjev, and M.~Warner.
\newblock Transitions and instabilities in liquid crystal elastomers.
\newblock {\em Physical Review E}, 47(6):R3838, 1993.

\bibitem{b_jelast_89}
J.~A. Blume.
\newblock Compatibility conditions for a left cauchy-green strain field.
\newblock {\em Journal of elasticity}, 21(3):271--308, 1989.

\bibitem{cpk_arma_15}
P.~Cesana, P.~Plucinsky, and K.~Bhattacharya.
\newblock Effective behavior of nematic elastomer membranes.
\newblock {\em Archive for Rational Mechanics and Analysis}, 218(2):863--905,
  2015.

\bibitem{cdd_pre_02}
S.~Conti, A.~DeSimone, and G.~Dolzmann.
\newblock Semisoft elasticity and director reorientation in stretched sheets of
  nematic elastomers.
\newblock {\em Physical Review E}, 66(6):061710, 2002.

\bibitem{cdd_jmps_02}
S.~Conti, A.~DeSimone, and G.~Dolzmann.
\newblock Soft elastic response of stretched sheets of nematic elastomers: a
  numerical study.
\newblock {\em Journal of the Mechanics and Physics of Solids},
  50(7):1431--1451, 2002.

\bibitem{cd_incomp_06}
S.~Conti and G.~Dolzmann.
\newblock Derivation of elastic theories for thin sheets and the constraint of
  incompressibility.
\newblock In {\em Analysis, modeling and simulation of multiscale problems},
  pages 225--247. Springer, 2006.

\bibitem{cd_cvpd_09}
S.~Conti and G.~Dolzmann.
\newblock {$\Gamma$-convergence for incompressible elastic plates}.
\newblock {\em Calculus of Variations and Partial Differential Equations},
  34:531--551, 2009.

\bibitem{cm_arma_08}
S.~Conti and F.~Maggi.
\newblock {Confining thin elastic sheets and folding paper}.
\newblock {\em Archive for Rational Mechanics and Analysis}, 187:1--48, 2008.

\bibitem{d_dmcov_07}
B.~Dacorogna.
\newblock {\em Direct methods in the calculus of variations}, volume~78.
\newblock Springer Science \& Business Media, 2007.

\bibitem{dgp_book_95}
P.~G. de~Gennes and J.~Prost.
\newblock {\em The physics of liquid crystals}.
\newblock Number~83. Oxford university press, 1995.

\bibitem{detal_12_ACIE}
L.~T. de~Haan, C.~S{\'a}nchez-Somolinos, C.~M. Bastiaansen, A.~P. Schenning,
  and D.~J. Broer.
\newblock Engineering of complex order and the macroscopic deformation of
  liquid crystal polymer networks.
\newblock {\em Angewandte Chemie International Edition}, 51(50):12469--12472,
  2012.

\bibitem{de_arma_93}
A.~DeSimone.
\newblock Energy minimizers for large ferromagnetic bodies.
\newblock {\em Archive for rational mechanics and analysis}, 125(2):99--143,
  1993.

\bibitem{dd_arma_02}
A.~DeSimone and G.~Dolzmann.
\newblock Macroscopic response of nematic elastomers via relaxation of a class
  of so (3)-invariant energies.
\newblock {\em Archive for rational mechanics and analysis}, 161(3):181--204,
  2002.

\bibitem{esk_sm_13}
E.~Efrati, E.~Sharon, and R.~Kupferman.
\newblock The metric description of elasticity in residually stressed soft
  materials.
\newblock {\em Soft Matter}, 9(34):8187--8197, 2013.

\bibitem{fjm_cpam_02}
G.~Friesecke, R.~D. James, and S.~M\"{u}ller.
\newblock {A theorem on geometric rigidity and the derivation of nonlinear
  plate theory from three-dimensional elasticity}.
\newblock {\em Communications on Pure and Applied Mathematics}, 55:1461--1506,
  2002.

\bibitem{fjm_arma_06}
G.~Friesecke, R.~D. James, and S.~M{\"u}ller.
\newblock A hierarchy of plate models derived from nonlinear elasticity by
  gamma-convergence.
\newblock {\em Archive for rational mechanics and analysis}, 180(2):183--236,
  2006.

\bibitem{fwbrvwj_sm_2015}
K.~Fuchi, T.~H. Ware, P.~R. Buskohl, G.~W. Reich, R.~A. Vaia, T.~J. White, and
  J.~J. Joo.
\newblock {Topology optimization for the design of folding liquid crystal
  elastomer actuators}.
\newblock {\em Soft matter}, 11(37):7288--7295, 2015.

\bibitem{getal_sr_17}
V.~Gimenez-Pinto, F.~Ye, B.~Mbanga, J.~V. Selinger, and R.~L. Selinger.
\newblock Modeling out-of-plane actuation in thin-film nematic polymer
  networks: From chiral ribbons to auto-origami boxes via twist and topology.
\newblock {\em Scientific Reports}, 7, 2017.

\bibitem{h_arma_11}
P.~Hornung.
\newblock Approximation of flat w 2, 2 isometric immersions by smooth ones.
\newblock {\em Archive for rational mechanics and analysis}, 199(3):1015--1067,
  2011.

\bibitem{kf_mrc_95}
I.~Kundler and H.~Finkelmann.
\newblock Strain-induced director reorientation in nematic liquid single
  crystal elastomers.
\newblock {\em Macromolecular rapid communications}, 16(9):679--686, 1995.

\bibitem{lmp_prsa_11}
M.~Lewicka, L.~Mahadevan, and M.~R. Pakzad.
\newblock The f{\"o}ppl-von k{\'a}rm{\'a}n equations for plates with
  incompatible strains.
\newblock In {\em Proceedings of the Royal Society of London A: Mathematical,
  Physical and Engineering Sciences}, volume 467, pages 402--426. The Royal
  Society, 2011.

\bibitem{lp_esaim_11}
M.~Lewicka and M.~R. Pakzad.
\newblock Scaling laws for non-euclidean plates and the w 2, 2 isometric
  immersions of riemannian metrics.
\newblock {\em ESAIM: Control, Optimisation and Calculus of Variations},
  17(04):1158--1173, 2011.

\bibitem{mbw_prsa_11}
C.~Modes, K.~Bhattacharya, and M.~Warner.
\newblock Gaussian curvature from flat elastica sheets.
\newblock In {\em Proceedings of the Royal Society of London A: Mathematical,
  Physical and Engineering Sciences}, volume 467, pages 1121--1140. The Royal
  Society, 2011.

\bibitem{mw_PRE_11}
C.~D. Modes and M.~Warner.
\newblock Blueprinting nematic glass: Systematically constructing and combining
  active points of curvature for emergent morphology.
\newblock {\em Physical Review E}, 84(2):021711, 2011.

\bibitem{mw_phystoday_16}
C.~D. Modes and M.~Warner.
\newblock {Shape-programmable materials}.
\newblock {\em Physics Today}, 69(1):32--38, Jan. 2016.

\bibitem{m_arma_87}
L.~Modica.
\newblock The gradient theory of phase transitions and the minimal interface
  criterion.
\newblock {\em Archive for Rational Mechanics and Analysis}, 98(2):123--142,
  1987.

\bibitem{mm_it_77}
L.~Modica and S.~Mortola.
\newblock Un esempio di $\gamma$-convergenza.
\newblock {\em Boll. Un. Mat. Ital. B (5)}, 14(1):285--299, 1977.

\bibitem{m_pre_15}
C.~Mostajeran.
\newblock Curvature generation in nematic surfaces.
\newblock {\em Physical Review E}, 91(6):062405, 2015.

\bibitem{metal_16_pra}
C.~Mostajeran, M.~Warner, T.~H. Ware, and T.~J. White.
\newblock Encoding gaussian curvature in glassy and elastomeric liquid crystal
  solids.
\newblock In {\em Proc. R. Soc. A}, volume 472, page 20160112. The Royal
  Society, 2016.

\bibitem{ns_arxiv_16}
T.-S. Nguyen and J.~V. Selinger.
\newblock Theory of liquid crystal elastomers: From polymer physics to
  differential geometry.
\newblock {\em arXiv preprint arXiv:1612.06486}, 2016.

\bibitem{p_jdg_04}
M.~R. Pakzad et~al.
\newblock On the sobolev space of isometric immersions.
\newblock {\em Journal of Differential Geometry}, 66(1):47--69, 2004.

\bibitem{p_thesis_16}
P.~Plucinsky.
\newblock {\em The Deformations of Thin Nematic Elastomer Sheets}.
\newblock PhD thesis, California Institute of Technology, 2017.

\bibitem{PLB}
P.~Plucinsky, M.~Lemm, and K.~Bhattacharya.
\newblock Programming complex shapes in thin nematic elastomer and glass
  sheets.
\newblock {\em Phys. Rev. E}, 94:010701, Jul 2016.

\bibitem{tt_01_LCE}
A.~Tajbakhsh and E.~Terentjev.
\newblock Spontaneous thermal expansion of nematic elastomers.
\newblock {\em The European Physical Journal E}, 6(2):181--188, 2001.

\bibitem{t_aa_06}
K.~Trabelsi.
\newblock Modeling of a membrane for nonlinearly elastic incompressible
  materials via gamma-convergence.
\newblock {\em Analysis and Applications}, 4(01):31--60, 2006.

\bibitem{vw_macro_97}
G.~Verwey and M.~Warner.
\newblock Compositional fluctuations and semisoftness in nematic elastomers.
\newblock {\em Macromolecules}, 30(14):4189--4195, 1997.

\bibitem{vwt_french_96}
G.~Verwey, M.~Warner, and E.~Terentjev.
\newblock Elastic instability and stripe domains in liquid crystalline
  elastomers.
\newblock {\em Journal de Physique II}, 6(9):1273--1290, 1996.

\bibitem{wetal_science_15}
T.~H. Ware, M.~E. McConney, J.~J. Wie, V.~P. Tondiglia, and T.~J. White.
\newblock Voxelated liquid crystal elastomers.
\newblock {\em Science}, 347(6225):982--984, 2015.

\bibitem{wmc_prsa_10}
M.~Warner, C.~Modes, and D.~Corbett.
\newblock Curvature in nematic elastica responding to light and heat.
\newblock In {\em Proceedings of the Royal Society of London A: Mathematical,
  Physical and Engineering Sciences}, volume 466, pages 2975--2989. The Royal
  Society, 2010.

\bibitem{wt_lceboox_03}
M.~Warner and E.~M. Terentjev.
\newblock {\em Liquid crystal elastomers}, volume 120.
\newblock OUP Oxford, 2003.

\end{thebibliography}

\end{document}